\title{Logical Characterizations of Recurrent Graph Neural Networks with Reals and Floats} 
\date{} 					
\author[1]{Veeti Ahvonen}
\author[1]{Damian Heiman}
\author[1]{Antti Kuusisto}
\author[2]{Carsten Lutz}
\affil[1]{Tampere University, Finland}
\affil[2]{Leipzig University, Germany}
\theoremstyle{plain}
\newtheorem{theorem}{Theorem}[section]
\newtheorem{lemma}[theorem]{Lemma}
\newtheorem{proposition}[theorem]{Proposition}
\newtheorem{corollary}[theorem]{Corollary}
\theoremstyle{definition}
\newtheorem{remark}[theorem]{Remark}
\newtheorem{example}[theorem]{Example}
\newtheorem{definition}[theorem]{Definition}
\newcommand{\N}{\mathbb N}
\newcommand{\Z}{\mathbb Z}
\newcommand{\R}{\mathbb R}
\newcommand{\bb}{\mathbf{b}}
\newcommand{\bd}{\mathbf{d}}
\newcommand{\bu}{\mathbf{u}}
\newcommand{\bv}{\mathbf{v}}
\newcommand{\bw}{\mathbf{w}}
\newcommand{\bx}{\mathbf{x}}
\newcommand{\by}{\mathbf{y}}
\newcommand{\cA}{\mathcal{A}}
\newcommand{\cB}{\mathcal{B}}
\newcommand{\cC}{\mathcal{C}}
\newcommand{\cG}{\mathcal{G}}
\newcommand{\cK}{\mathcal{K}}
\newcommand{\cL}{\mathcal{L}}
\newcommand{\cM}{\mathcal{M}}
\newcommand{\cN}{\mathcal{N}}
\newcommand{\cP}{\mathcal{P}}
\newcommand{\cQ}{\mathcal{Q}}
\newcommand{\cR}{\mathcal{R}}
\newcommand{\cS}{\mathcal{S}}
\newcommand{\cT}{\mathcal{T}}
\newcommand{\cX}{\mathcal{X}}
\newcommand{\fC}{\mathfrak{C}}
\newcommand{\fG}{\mathfrak{G}}
\newcommand{\fP}{\mathfrak{P}}
\newcommand{\fR}{\mathfrak{R}}
\newcommand*{\abs}[1]{\lvert#1\rvert}
\newcommand{\MSO}{\mathrm{MSO}}
\newcommand{\FO}{\mathrm{FO}}
\newcommand{\dom}{\mathrm{dom}}
\newcommand{\MSC}{\mathrm{MSC}}
\newcommand{\GMSC}{\mathrm{GMSC}}
\newcommand{\GMSCG}{\mathrm{GMSC+G}}
\newcommand{\GMSCA}{\mathrm{GMSC+A}}
\newcommand{\GMSCAG}{\mathrm{GMSC+AG}}
\newcommand{\GMSCN}{\mathrm{GMSC[1]}}
\newcommand{\GMSCGN}{\mathrm{GMSC[1]+G}}
\newcommand{\GMSCAN}{\mathrm{GMSC[1]+A}}
\newcommand{\GMSCAGN}{\mathrm{GMSC[1]+AG}}
\newcommand{\GML}{\mathrm{GML}}
\newcommand{\GMLG}{\mathrm{GML+G}}
\newcommand{\AGG}{\mathrm{AGG}}
\newcommand{\READ}{\mathrm{READ}}
\newcommand{\COM}{\mathrm{COM}}
\newcommand{\CMPA}{\mathrm{CMPA}}
\newcommand{\CMPAG}{\mathrm{CMPA+G}}
\newcommand{\CMPAA}{\mathrm{CMPA+A}}
\newcommand{\CMPAAG}{\mathrm{CMPA+AG}}
\newcommand{\FCMPA}{\mathrm{FCMPA}}
\newcommand{\FCMPAG}{\mathrm{FCMPA+G}}
\newcommand{\FCMPAAG}{\mathrm{FCMPA+AG}}
\newcommand{\GNN}{\mathrm{GNN}}
\newcommand{\GNNG}{\mathrm{GNN[\R] + G}}
\newcommand{\GNNA}{\mathrm{GNN[\R] + A}}
\newcommand{\GNNAG}{\mathrm{GNN[\R] + AG}}
\newcommand{\GNNF}{\mathrm{GNN[F]}}
\newcommand{\GNNFG}{\mathrm{GNN[F] + G}}
\newcommand{\GNNFA}{\mathrm{GNN[F] + A}}
\newcommand{\GNNFAG}{\mathrm{GNN[F] + AG}}
\newcommand{\VGML}{\omega\text{-}\mathrm{GML}}
\newcommand{\ordo}{\mathcal{O}}
\begin{document}

\maketitle

\begin{abstract} 
In pioneering work from 2019, Barceló and coauthors identified logics that precisely match the expressive power of constant iteration-depth graph neural networks (GNNs) relative to properties definable in first-order logic. In this article, we give exact logical characterizations of recurrent GNNs in two scenarios: (1) in the setting with floating-point numbers and (2) with reals. For floats, the formalism matching recurrent GNNs is a rule-based modal logic with counting, while for reals we use a suitable infinitary modal logic, also with counting. These results give exact matches between logics and GNNs in the recurrent setting without relativising to a background logic in either case, but using some natural assumptions about floating-point arithmetic. Applying our characterizations, we also prove that, relative to graph properties definable in monadic second-order logic (MSO), our infinitary and rule-based logics are equally expressive. This implies that recurrent GNNs with reals and floats have the same expressive power over MSO-definable properties and shows that, for such properties, also recurrent GNNs with reals are characterized by a (finitary!) rule-based modal logic. In the general case, in contrast, the expressive power  with floats is weaker than with reals. In addition to logic-oriented results, we also characterize recurrent GNNs, with both reals and floats, via distributed automata,  drawing links to distributed computing models.  
\end{abstract}

\section{Introduction}

Graph Neural Networks (GNNs) 
\cite{gori, scarcelli, zonghan-survey} 
have proven to be highly useful for processing graph data in numerous applications that span
a remarkable range of disciplines
including bioinformatics~\cite{biosurvey}, recommender systems \cite{recommendersurvey}, traffic forecasting \cite{trafficforecastingsurvey}, 
and a multitude of others. The success of GNNs in applications has 
stimulated lively research into their theoretical properties
such as expressive power. A landmark result is due to 
Barceló et al. \cite{DBLP:conf/iclr/BarceloKM0RS20} which was among
the first to characterize the expressive power of GNNs in terms of
\emph{logic}, see \cite{DBLP:conf/lics/Grohe21,DBLP:conf/kr/CucalaGMK23,benediktandothers,Pfluger_Tena_Cucala_Kostylev_2024} and references therein for related results.
More precisely, Barceló et al.\ show that a basic GNN model with a 
constant number of iterations has exactly the same expressive power
as graded modal logic $\GML$ in restriction to properties definable in first-order logic FO.

In this article, we advance the analysis of the expressive
power of GNNs in two directions. First, we study the relation between GNN models 
based on real numbers, as mostly studied in theory, and GNN models 
based on floating-point numbers, as mostly used in practice. And 
second, we focus on a family of basic recurrent GNNs while
previous research has mainly considered GNNs with a constant number
of iterations, with the notable exception of
 \cite{Pfluger_Tena_Cucala_Kostylev_2024}.
The GNNs studied in the current paper have a simple and natural termination (or acceptance) condition: termination
is signaled via designated feature vectors and thus the GNN ``decides''
by itself about when to terminate. 
We remark that some of our results also apply to constant iteration GNNs and to recurrent GNNs with a 
termination condition based on fixed points, as used in the inaugural work on GNNs~\cite{gori};
see the conclusion section for further details.

We provide three main results. The first one is that recurrent GNNs with floats, or $\GNNF$s, have the same expressive power as the \emph{graded modal substitution calculus} $\GMSC$ \cite{Kuusisto13, dist_circ_mfcs, ahvonen_et_al:LIPIcs.CSL.2024.9}. This is a rule-based modal logic  
that extends the \emph{modal substitution calculus} $\mathrm{MSC}$ \cite{Kuusisto13} with counting modalities. $\MSC$ has been shown to precisely correspond to distributed computation models based on automata \cite{Kuusisto13} and Boolean circuits \cite{dist_circ_mfcs}. 
$\GMSC$ is related to the graded modal $\mu$-calculus, but orthogonal in expressive power. 
The correspondence between $\GNNF$s and $\GMSC$ is as follows.

\textbf{Theorem \ref{theorem: k-GNN[F] = k-FCMPA = GMSC}.}
\emph{The following have the same expressive power: $\GNNF$s, $\GMSC$, and R-simple aggregate-combine $\GNNF$s.}

Here \emph{R-simple aggregate-combine} $\GNNF$s
mean $\GNNF$s that use basic aggregate-combine functions as specified by Barceló et al.\ \cite{DBLP:conf/iclr/BarceloKM0RS20} and the truncated $\mathrm{ReLU}$ as the non-linearity function, see Section \ref{gnndefsjooj} for the formalities.  
The theorem shows that an R-simple model of $\GNNF$s suffices, and in
fact $\GNNF$s with a more complex architecture can be turned into
equivalent R-simple ones. We emphasize that the characterization provided by Theorem~\ref{theorem: k-GNN[F] = k-FCMPA = GMSC} is \emph{absolute}, that is, not relative to a background logic. In contrast, the characterization by Barceló et al.\ \cite{DBLP:conf/iclr/BarceloKM0RS20} is relative to first-order logic. Our characterization does rely, however, on an assumption about floating-point arithmetics. We believe that this assumption is entirely natural as it reflects practical implementations of floats.

Our second result shows that recurrent GNNs with reals, or $\GNN[\R]$s, have the same expressive power as the infinitary modal logic $\VGML$ that consists of infinite disjunctions of $\GML$-formulas.

\textbf{Theorem \ref{omega-GML = GNN = CMPAs}.} $\GNN[\R]$s have the same expressive power as  $\VGML$.

Again, this result is absolute. 
As we assume no restrictions on the arithmetics used in  $\GNN[\R]$s, they are very powerful: 
with the infinitary logic $\VGML$
it is easy to define even undecidable graph properties.
We regard the theorem as an interesting theoretical upper bound on the expressivity of GNNs operating in an unrestricted, recurrent message passing scenario with messages flowing to neighbouring nodes. 
We note that $\GNN[\R]$s can 
easily be shown more expressive
than $\GNNF$s.

Our third result considers $\GNN[\R]$s and $\GNNF$s relative to a very expressive background logic, probably the most natural choice in the recurrent GNN context: \emph{monadic second-order logic} $\MSO$.

\textbf{Theorem \ref{theorem: under MSO: GNN[R] = GNN[F]}.}
\emph{Let $\cP$ be a property expressible in $\MSO$. Then $\cP$ is expressible as a $\GNN[\R]$ if and only if it is expressible as a $\GNNF$.}

This result says that, remarkably, for the very significant and large class of $\MSO$-expressible properties, using actual reals with unrestricted arithmetic gives no more expressive power than using floats, by Theorem~\ref{theorem: k-GNN[F] = k-FCMPA = GMSC} even in the R-simple aggregate-combine setting. Thus, for this class of properties, the theoretical analyses
from the literature do not diverge from the practical implementation!
Taken together, the above results also imply that in restriction to
$\mathrm{MSO}$-expressible properties, $\GNN[\R]$s are equivalent to the
(finitary!) graded modal substitution calculus $\GMSC$.

We also develop characterizations of GNNs in terms of distributed
automata. These are in fact crucial tools in our proofs, but the
characterizations are also interesting in their own right as they
build links between GNNs and distributed computing.  
We study a class of distributed automata called counting message passing 
automata
($\CMPA$s) that may have a countably infinite number of states. Informally, these distributed automata update the state of each node according to the node's own state and the \emph{multiset} of states received from its out-neighbours. We also study their restriction that admits only a finite number of states
($\FCMPA$s) and, furthermore, bounded $\FCMPA$s.  
In the bounded case the multiplicities of states in the received multisets are bounded by some constant $k \in \N$.
A summary of our main results, both absolute and relative to $\MSO$, is given in Table~\ref{introtable}.
\begin{table}[h]
  \centering
  \caption{A summary of our main results. 
  The first row contains the results obtained without relativising to a background logic and the second row contains results relative to $\MSO$. Here $x \equiv y$ means that $x$ and $y$ have the same expressive power while $x < y$  means that $y$ is strictly more expressive than $x$. Further, ``bnd.'' stands for ``bounded'' and ``AC'' for ``aggregate-combine''.}
  \label{introtable}
  {\renewcommand{\arraystretch}{1.25}
  \small
  \resizebox{\textwidth}{!}{\begin{tabular}[t]{rc}
    \hline 
    Absolute:\!\! &    $\GNNF\! \equiv\! \text{R-simple\ AC-} \GNNF \! \equiv\! \GMSC \! \equiv\! \text{bnd. } \FCMPA
    \! < \! \GNN[\R] \! \equiv\! \VGML \equiv\! \CMPA \!$
     \\ 
    \hline 
    $\MSO$:\!\! &  $\GNNF \! \equiv\! \text{R-simple\ AC-} \GNNF \! \equiv\! \GMSC \! \equiv\! \text{bnd. } \FCMPA
    \! \equiv\! \GNN[\R] \! \equiv\! \VGML \equiv\! \CMPA \!$\\ 
    \hline 
  \end{tabular}}}
\end{table}

In Appendix \ref{appendix: On accepting}, we extend Theorem \ref{theorem: k-GNN[F] = k-FCMPA = GMSC} for GNNs with global readout and/or basically any other acceptance condition, which involves augmenting $\GMSC$ with a counting global modality and modifying the acceptance condition of $\GMSC$ analogously. In particular, we show that the result holds for the acceptance conditions studied in~\cite{Pfluger_Tena_Cucala_Kostylev_2024}: one where the number of rounds is determined by graph-size and one where a node is accepted when it reaches an accepting fixed-point. We also prove the result for the condition in the seminal GNN articles \cite{scarcelli, gori}, where the feature vector of each node is required to converge.

\textbf{Related Work.} 
Barceló et al. \cite{DBLP:conf/iclr/BarceloKM0RS20} study aggregate-combine GNNs with a constant number of iterations. They characterize these $\GNN$s---in restriction to properties expressible in first-order logic---in terms of graded modal logic $\GML$ and show that a \emph{global readout mechanism} leads to a model equivalent to the two-variable fragment of first-order logic with counting quantifiers $\mathrm{FOC}^2$. Our work extends the former
result to include 
recurrence in a natural way while we leave studying global readouts as future work; see the conclusion section for further details.
Grohe \cite{DBLP:conf/lics/Grohe23} connects the \emph{guarded fragment of first-order logic with counting} $\mathrm{GFO}+\mathrm{C}$ and polynomial-size bounded-depth circuits, 
linking (non-recurrent) GNNs to the circuit complexity class $\mathrm{TC}^0$. Grohe's characterization utilises dyadic rationals rather than floating-point numbers.
Benedikt et al.\ \cite{benediktandothers} use logics with a generalized form of counting via \emph{Presburger quantifiers} to obtain characterizations for (non-recurrent) GNNs with a bounded activation function. The article also investigates questions of decidability concerning GNNs---a topic we will not study here. 
As a general remark on related work, it is worth mentioning that the expressive power of (basic) recurrent GNN-models is invariant under the Weisfeiler-Leman test. This link has been recently studied in numerous articles \cite{morris19, DBLP:conf/iclr/XuHLJ19, DBLP:conf/lics/Grohe21, DBLP:conf/iclr/BarceloKM0RS20}.

Pfluger et al.\ \cite{Pfluger_Tena_Cucala_Kostylev_2024} investigate recurrent GNNs with two kinds of termination conditions: one based on reaching a fixed point when iteratively generating 
feature vectors, and one where termination occurs after a  number of rounds determined by the size of the input graph. They concentrate on the case of unrestricted aggregation and combination functions, even including all
uncomputable ones. Their main result is relative to a logic LocMMFP introduced specifically
for this purpose, extending first-order logic with a least fixed-point operator over unary monotone formulas. The characterization itself is given in 
terms of the graded two-way $\mu$-calculus. We remark that $\MSO$ significantly
generalizes LocMMFP and that the graded two-way $\mu$-calculus is
incomparable in expressive power to our $\GMSC$.
In contrast to our work and to Barcelo et al.\ \cite{DBLP:conf/iclr/BarceloKM0RS20}, Pfluger et al.\ do not 
discuss the case where the aggregation and combination functions of the $\GNN$s are R-simple or restricted in any other way.
We view our work as complementing yet being in the spirit of both Barceló et al. \cite{DBLP:conf/iclr/BarceloKM0RS20} and Pfluger et al.~\cite{Pfluger_Tena_Cucala_Kostylev_2024}.

GNNs are essentially distributed systems, and logical characterizations for distributed systems have been studied widely. A related research direction begins with Hella et al. \cite{hella2012weak}, Kuusisto \cite{Kuusisto13} and Hella et al. \cite{weak_models} by results linking distributed computation models to modal logics. 
The articles \cite{hella2012weak} and \cite{weak_models} give characterizations of constant-iteration scenarios with modal logics, and \cite{Kuusisto13} lifts the approach to recurrent message-passing algorithms via showing that the modal substitution calculus $\MSC$ captures the expressivity of finite distributed message passing automata. This generalizes the result from \cite{hella2012weak} that characterized the closely related class $\mathsf{SB(1)}$ of local distributed algorithms with modal logic. 
Later Ahvonen et al. \cite{dist_circ_mfcs} showed a match between $\MSC$ and circuit-based distributed systems. Building on the work on $\MSC$, Reiter showed in \cite{reiter17} that the $\mu$-fragment of the modal $\mu$-calculus captures the expressivity of finite message passing automata in the asynchronous scenario.

\section{Preliminaries}

We let $\N$, $\Z_+$ 
and $\R$ denote the sets of non-negative integers, positive
integers, and real numbers respectively.
For all $n \in \Z_+$, we let $[n] \colonequals
\{1, \dots, n\}$ and for all $n \in \N$, we let $[0;n] \colonequals \{0, \dots,
n\}$.
With $\abs{X}$ we denote the cardinality of the set $X$, with $\cP(X)$ the power set of $X$ and with $\cM(X)$ the set of multisets over $X$, i.e., the set of functions $X \to \N$. With $\cM_k(X)$ we denote the set of $k$-multisets over $X$, i.e., the set of functions $X \to [0;k]$.
Given a $k$-multiset $M \in \cM_k(X)$ and $x \in X$, intuitively $M(x) = n < k$ means that there are exactly $n$ copies of $x$ and $M(x) = k$ means that there are $k$ \emph{or more} copies of $x$.

We work with node-labeled \textbf{directed graphs} (possibly with self-loops), and \emph{simply refer to them as \textbf{graphs}}.
Let  $\mathrm{LAB}$ 
denote a countably infinite set of \textbf{node label symbols}, representing features. We denote finite sets of node label symbols by $\Pi \subseteq \mathrm{LAB}$.
Given any $\Pi \subseteq \mathrm{LAB}$, 
a \textbf{$\Pi$-labeled graph} is a triple $G = (V, E, \lambda)$ where $V$ is a set of \textbf{nodes}, 
$E \subseteq V \times V$ is a set of \textbf{edges} and $\lambda \colon V \to \cP(\Pi)$ 
is a \textbf{node labeling function}. Note that a node can carry multiple
label symbols. 
A \textbf{pointed graph} is a pair $(G, v)$ with $v \in V$. 
Given a graph $(V, E, \lambda)$, 
the set of \textbf{out-neighbours} of $v \in V$ is $\{\, w \mid (v, w) \in E
\,\}$. 
Unless stated otherwise, we only consider \emph{finite} graphs, i.e., graphs where the set of nodes is finite.
A \textbf{node property over $\Pi$} is a class of pointed $\Pi$-labeled graphs.
A \textbf{graph property over} $\Pi$ is a class of $\Pi$-labeled graphs. A graph property $\cG$ over $\Pi$ corresponds to a node property $\cN$ over $\Pi$ if the following holds for all $\Pi$-labeled graphs $G$: $G \in \cG$ iff $(G, v) \in \cN$ for every node $v$ of $G$.
Henceforth a property means a node property.
We note that many of our results hold even with infinite graphs and infinite sets of node label symbols. Our results easily extend to graphs that admit labels on both nodes and edges.

\subsection{Graph neural networks}\label{gnndefsjooj}

A graph neural network (GNN) is a neural network architecture for graph-structured data. It may be 
viewed as a distributed system where the nodes of the (directed, node-labeled) input graph calculate with real numbers and communicate synchronously in discrete rounds. More formally,  a \textbf{recurrent graph neural network} $\GNN[\R]$ over $(\Pi, d)$, with $\Pi \subseteq \mathrm{LAB}$ and $d \in \Z_{+}$, is a tuple $\cG = (\R^{d}, \pi, \delta, F)$. 
A recurrent graph neural network computes in a (node-labeled) directed graph as follows.
In any $\Pi$-labeled graph $(V, E, \lambda)$, the \textbf{initialization function} $\pi \colon \cP(\Pi) \to \R^{d}$ assigns to each node $v$ an initial \textbf{feature vector} or \textbf{state} $x_{v}^{0} = \pi(\lambda(v))$.\footnote{In \cite{DBLP:conf/iclr/BarceloKM0RS20, Pfluger_Tena_Cucala_Kostylev_2024} an initialization function is not explicitly included in GNNs; instead 
each node is labeled with an initial feature vector in place of node label symbols.
However, these two approaches are essentially the same.}
In each subsequent round $t = 1, 2, \dots$, every node computes a new feature vector $x_{v}^{t}$ using a \textbf{transition function} $\delta \colon \R^{d} \times \cM(\R^{d}) \to \R^{d}, \delta(x, y) = \COM(x, \AGG(y))$, which is a composition of an \textbf{aggregation function} $\AGG \colon \cM(\R^{d}) \to \R^{d}$ (typically sum, min, max or average) and a \textbf{combination function} $\COM \colon \R^{d} \times \R^{d} \to \R^{d}$ such that 
\[
    x_{v}^{t} = \COM \left( x_{v}^{t-1}, \AGG \left( \{\!\{\, x_{u}^{t-1} \mid (v, u) \in E \,\}\!\} \right)\right)
\]
where double curly brackets $\{\{ ... \}\}$ denote multisets.\footnote{In GNNs here, messages flow in the direction opposite to the edges of graphs, i.e., an edge $(v, u) \in E$ means that $v$ receives messages sent by $u$. This is only a convention that could be reversed via using a modal logic that scans the \emph{inverse relation} of $E$ instead of $E$.} 
The recurrent $\GNN$ $\cG$ \textbf{accepts} a pointed $\Pi$-labeled graph $(G, v)$ if $v$ visits (at least once) a state in the set $F \subseteq \R^{d}$ of \textbf{accepting feature vectors}, i.e., $x_{v}^{t} \in F$ for some $t \in \N$. 
When we do not need to specify~$d$, we may refer to a $\GNN[\R]$ over $(\Pi, d)$ as a $\GNN[\R]$ over $\Pi$.
A \textbf{constant-iteration} $\GNN[\R]$ is a pair $(\cG, N)$ where $\cG$ is a $\GNN[\R]$ and $N \in \N$. It \textbf{accepts} a pointed graph~$(G, v)$ if $x^{N}_{v} \in F$. Informally, we simply run a $\GNN[\R]$ for $N$ iterations and accept (or do not accept) based on the last iteration.
We say that $\cG$ (resp., $(\cG, N)$) \textbf{expresses} a node property $\cP$ over $\Pi$, if for each pointed $\Pi$-labeled graph $(G, w)$: $(G, w) \in \cP$ iff $\cG$ (resp., $(\cG, N)$) accepts~$(G, w)$. A node property $\cP$ over $\Pi$ is \textbf{expressible} as a $\GNN[\R]$ (resp. as a constant-iteration $\GNN[\R]$) if there is a $\GNN[\R]$ (resp. constant-iteration $\GNN[\R]$) expressing $\cP$.

One common, useful and simple possibility for the aggregation and combination functions, which is also used by Barceló et al.\ (see \cite{DBLP:conf/iclr/BarceloKM0RS20}, and also the papers \cite{morris19, DBLP:conf/nips/HamiltonYL17})  is defined by
\[
    \COM \left( x_{v}^{t-1}, \AGG \left( \{\{\, x_{u}^{t-1} \mid (v, u) \in E \,\}\} \right)\right) = f(x_v^{t-1} \cdot C + \sum_{(v, u) \in E} x_u^{t-1} \cdot A + \bb), 
\]
where $f \colon \R^d \to \R^d$ is a non-linearity function (such as the truncated $\mathrm{ReLU}$  also used by Barceló et al. in \cite{DBLP:conf/iclr/BarceloKM0RS20}, defined by $\mathrm{ReLU}^*(x) = \min(\max(0,x),1)$ and applied separately to each vector element), $C, A \in \R^{d \times d}$ are matrices and $\bb \in \R^d$ is a bias vector. 
We refer to $\GNN$s that use aggregation and combination functions of this form and $\mathrm{ReLU}^*$ as the non-linearity function as \textbf{R-simple aggregate-combine} $\GNN$s (here `R' stands 
for $\mathrm{ReLU}^*$).

\begin{example}\label{example: GNN definable properties}
Given $\Pi$ and $p \in \Pi$, \emph{reachability of node label symbol $p$} is the property~$\cP$ over~$\Pi$ that contains those pointed $\Pi$-labeled graphs $(G,v)$ where a path exists from $v$ to some $u$ with $p \in \lambda(u)$. An R-simple aggregate-combine $\GNN[\R]$ over $(\Pi, 1)$ (where $C = A = 1$, $\bb = 0$ and $1$ is the only accepting feature vector) can express $\cP$: In round $0$, a node $w$'s state is $1$ if $p \in \lambda(w)$ and else $0$. In later rounds, $w$'s state is $1$ if $w$'s state was $1$ or it gets $1$ from its out-neighbours; else the state is $0$.
\end{example}

\begin{remark}\label{GNN uncomputable}
Notice that unrestricted $\GNN[\R]$s can express, even in a single iteration, node properties such as that the number of immediate out-neighbours
of a node is a prime number. In fact, for any $U\subseteq \mathbb{N}$, including any undecidable set $U$, a $\GNN[\R]$ can express the property that the number $l$ of immediate out-neighbours is 
in the set $U$. See \cite{benediktandothers} for related undecidability
results.
\end{remark}

Informally, a floating-point system contains a finite set of rational numbers and arithmetic operations $\cdot$ and $+$.
Formally, if $p \in \Z_{+}$, $n \in \N$ and $\beta \in \Z_{+}
\setminus \{1\}$, then a \textbf{floating-point system} is a tuple $S
= ((p, n, \beta), +, \cdot)$ that consists of the set $D_{S}$ of all
rationals accurately representable in the form $0.d_{1} \cdots d_{p}
\times \beta^{e}$ or $-0.d_{1} \cdots d_{p} \times \beta^{e}$ where $0
\leq d_{i} \leq~\beta-1$ and $e \in~\{-n, \dots, n\}$. It also consists of arithmetic operations $+$ and $\cdot$ of type $D_{S} \times~D_{S} \to~D_{S}$. We adopt the common convention where $+$ and $\cdot$ are defined by taking a precise sum/product w.r.t. reals and then rounding to the closest float in $D_S$, with ties rounding to the float with an even least significant digit,
e.g., $0.312 + 0.743$ evaluates to $1.06$ if the real sum $1.055$ is not in the float system. 
Thus, our float systems handle overflow by capping at the maximum value instead of wrapping around.
We typically just write $S$ for $D_{S}$.

Consider GNNs using floats in the place of reals. In GNNs, sum is a common aggregation function 
(also used in R-simple $\GNN$s), and the sum of floats can depend on the ordering of floats, since it is not associative. In real-life implementations, the set $V$ of nodes of the graph studied can typically be associated with an implicit linear order relation $<^{V}$ (which is not part of the actual graph). It is then natural to count features of out-neighbours in the order $<^V$. However, this allows float GNNs to distinguish isomorphic nodes, which violates the desire that GNNs should be invariant under isomorphism. 
For example, summing $1$, $-1$ and $0.01$ in two orders in a system where the numbers must be representable in the form $0.d_{1}d_{2} \times 10^{e}$ or $-0.d_{1}d_{2} \times 10^{e}$: first $(1 + (-1)) + 0.01 = 0 + 0.01 = 0.01$ while $(1 + 0.01) + (-1)  = 1 + (-1) = 0$, since $1.01$ is not representable in the system. A float GNN could distinguish two isomorphic nodes with such ordering of out-neighbours.

To ensure isomorphism invariance for GNNs with floats that use sum,
it is natural to order the floats instead of the nodes. For example, adding floats in increasing order (of the floats) is a natural and simple choice. 
Summing in this increasing order is also used widely in applications, being a reasonable choice w.r.t.\ accuracy (see, e.g., \cite{wilkinson},\cite{floats_robertazzi}, \cite{higham_article}). Generally, floating-point sums in applications have been studied widely, see for example \cite{higham_book}. 
Summing multisets of floats in increasing order leads to a bound in the multiplicities of the elements of the sum; see Proposition \ref{floating-point sum bound} for the formal statement.
Before discussing its proof, we define the \textbf{$k$-projection} of a multiset $M$ as $M_{|k}$ where $M_{|k}(x) = \min\{M(x), k\}$. 
Given a multiset $N$ of floats in float system $S$, we let $\mathrm{SUM_{S}}(N)$ denote the output of the sum $f_{1} + \dots + f_{\ell}$ where \textbf{(1)} $f_{i}$ appears $N(f_{i})$ times (i.e., its multiplicity) in the sum, \textbf{(2)}~the floats appear and are summed in increasing order and \textbf{(3)} $+$ is according to $S$. 
\begin{proposition}\label{floating-point sum bound}
For all floating-point systems $S$, there exists a $k \in \N$ such that for all multisets $M$ over floats in $S$, we have $\mathrm{SUM}_{S}(M) = \mathrm{SUM}_{S}(M_{|k})$.
\end{proposition}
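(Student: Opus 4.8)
The plan is to exploit the finiteness of the float set $D_S$ together with the monotonicity of rounding. Since $\mathrm{SUM}_S$ adds the floats in increasing order, all copies of any single float are added consecutively, so it suffices to understand what happens when one fixed float $f$ is added repeatedly to a running sum. Writing $s \oplus f$ for the float addition of $S$ (that is, $s \oplus f = \mathrm{round}(s + f)$, where the real sum is rounded to the nearest element of $D_S$ with the stated ties-to-even and overflow-capping conventions), I would first establish the key lemma: for every $f \in D_S$ the sequence $s_0 = s$, $s_{j+1} = s_j \oplus f$ stabilizes after at most $\abs{D_S} - 1$ steps, uniformly in the starting value $s$.

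To prove the lemma I would use that round-to-nearest, with ties to even and with capping at the extremal floats, is monotone non-decreasing as a function of its real argument. Consequently $\oplus$ is monotone in its first argument, and, comparing the real inequality $s + f \gtrless s$ with the identity $\mathrm{round}(s) = s$ for $s \in D_S$, one obtains $s \oplus f \geq s$ when $f \geq 0$ and $s \oplus f \leq s$ when $f \leq 0$. Hence the sequence $(s_j)_j$ is monotone (non-decreasing or non-increasing according to the sign of $f$) and lives in the finite chain $D_S$. A monotone sequence in a chain of size $m = \abs{D_S}$ takes at most $m$ distinct values and is therefore constant from index $m - 1$ on; in particular, adding $f$ any number of times $\geq m - 1$ to any starting value $s$ yields one and the same float.

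With $k \colonequals \abs{D_S}$ I would then finish by induction over the distinct floats $f_1 < \cdots < f_m$ of $D_S$, processed in increasing order as in the definition of $\mathrm{SUM}_S$. At the start of the block of copies of $f_i$, the running sums produced from $M$ and from $M_{|k}$ agree by the induction hypothesis. Within the block, if $M(f_i) < k$ then $M_{|k}(f_i) = M(f_i)$ and the two computations add $f_i$ equally often; if $M(f_i) \geq k$, then both $M(f_i) \geq m - 1$ and $k \geq m - 1$, so by the lemma adding $f_i$ either number of times to the common running sum yields the same float. In both cases the running sums again coincide after the block, the induction goes through, and we conclude $\mathrm{SUM}_S(M) = \mathrm{SUM}_S(M_{|k})$.

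The main obstacle is the monotonicity of float addition: one must verify that the rounding rule — nearest float, ties to even, overflow capped at the largest and smallest representable values — is genuinely monotone, since this is precisely what rules out the running sum cycling rather than stabilizing within the finite set $D_S$. Once monotonicity is secured, the uniform stabilization bound $m - 1$ and the block-wise induction are routine.
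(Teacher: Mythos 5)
Your proof is correct and takes essentially the same route as the paper's: both rest on the observation that repeatedly adding a fixed float to a running sum produces a monotone sequence inside the finite set $D_S$, which must therefore stabilize, so that $k = \abs{D_S}$ suffices; you merely make explicit the monotonicity of rounding and the block-wise induction that the paper dismisses with ``it is clear''. The paper's appendix additionally refines the threshold down to $\beta^{p+1}+\beta^{p}+\beta^{p-1}$ by analysing the digit structure, but that sharper bound is not needed for the proposition as stated.
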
 
\begin{proof}
    (Sketch) See also in Appendix \ref{appendix: floating-point sum bound}.     Let $u = 0.0\cdots01 \times \beta^{e}$ and $v = 0.10\cdots0 \times \beta^{e+1}$. Now notice that for a large enough $\ell$, summing $u$ to itself $m > \ell$ times will always give~$v$.
\end{proof}

Due to Proposition \ref{floating-point sum bound}, GNNs with floats using sum in increasing order are bounded in their ability to fully count out-neighbours. 
Thus, it is natural to assume that floating-point GNNs are \textbf{bounded} GNNs, i.e., 
the aggregation function can be written as $\cM_{k}(U^{d}) \to~U^{d}$ for some \textbf{bound} $k \in \N$, 
i.e., for every multiset $M \in \cM(U^d)$, we have $\AGG(M) = \AGG(M_{|k})$, where $M_{|k}$ is the $k$-projection of $M$ (and $U^d$ is the set of states of the $\GNN$).
We finally give a formal definition for floating-point GNNs: a \textbf{floating-point graph neural network} ($\GNNF$) is simply a bounded GNN where the set of states and the domains and co-domains of the functions are restricted to some floating-point system $S$ instead of $\R$ (note that $S$ can be any floating-point system).
In R-simple $\GNNF$s, $\mathrm{SUM}_{S}$ replaces the sum of reals as the aggregation function, and their bound is thus determined by the choice of $S$.
A $\GNNF$ 
obtained by removing the condition on boundedness 
is called an \textbf{unrestricted} $\GNNF$. Note that by default and unless otherwise stated, a $\GNN[\R]$ 
is unbounded, whereas a $\GNNF$ is bounded. 
Now, it is immediately clear that unrestricted $\GNNF$s (with an unrestricted aggregation function) are more expressive than $\GNNF$s: expressing the property that a node has an even number of out-neighbours is easy with unrestricted $\GNNF$s, 
but no bounded $\GNNF$ with bound $k$ can distinguish 
the centers of two star graphs with $k$ and $k+1$ out-neighbours.

\subsection{Logics}\label{sec: logics}

We then define the logics relevant for this paper. 
For $\Pi \subseteq \mathrm{LAB}$, the set of \textbf{$\Pi$-formulae of graded modal logic} ($\GML$) is given by the grammar
\[
    \varphi \coloncolonequals \top \,\mid\, p \,\mid\, \neg \varphi \,\mid\, \varphi \land \varphi \,\mid\, \Diamond_{\geq k} \varphi,
\]
where $p \in \Pi$ and $k \in \N$. The connectives $\lor$, $\rightarrow$, $\leftrightarrow$ are considered abbreviations in the usual way. Note that node label symbols
serve as propositional symbols here. 
A $\Pi$-formula of $\GML$ is interpreted in pointed $\Pi$-labeled graphs. In
the context of modal logic, these are often called (pointed) Kripke models.
Let $G = (V, E, \lambda)$ be a $\Pi$-labeled graph and $w \in V$. The truth of a formula $\varphi$ in a pointed graph $(G, w)$ (denoted $G, w \models \varphi$) is defined  
as usual for the Boolean operators and $\top$, while for $p\in \Pi$ and $\Diamond_{\geq k} \varphi$, we define that $G,w \models p$ iff $p \in \lambda(w)$, and
\[
    G,w \models \Diamond_{\geq k} \varphi
\text{ iff $G, v \models \varphi$ for at least $k$ out-neighbours $v$ of~$w$.}
\]
We use the abbreviations $\Diamond \varphi \colonequals \Diamond_{\geq 1} \varphi$, $\Box \varphi \colonequals \neg \Diamond \neg \varphi$ and $\Diamond_{=n} \varphi \colonequals \Diamond_{\geq n} \varphi \land \neg \Diamond_{\geq n+1} \varphi$. %
The set of \textbf{$\Pi$-formulae of $\VGML$} is given by the grammar
\[
    \varphi \coloncolonequals \psi \,\mid\, \bigvee_{\psi \in \Psi} \psi,
\]
where $\psi$ is a $\Pi$-formula of $\GML$ and $\Psi$ is an at most countable set of $\Pi$-formulae of $\GML$. The truth of infinite disjunctions is defined in the obvious way:
\[
    G, w \models \bigvee_{\psi \in \Psi} \psi \iff G, w \models \psi \text{ for some $\psi \in \Psi$}.
\]

We next introduce the graded modal substitution calculus (or $\GMSC$),
which extends the modal substitution calculus \cite{Kuusisto13,dist_circ_mfcs, ahvonen_et_al:LIPIcs.CSL.2024.9} with counting capabilities.
Define the countable set $\mathrm{VAR} = \{\, V_i \mid i \in \N\,\}$ %
of \textbf{schema variable symbols}.
Let $\cT = \{X_1, \ldots, X_n\} \subseteq \mathrm{VAR}$. The set of \textbf{$(\Pi, \cT)$-schemata} of $\GMSC$ is defined by the grammar
\[
\varphi \coloncolonequals \top \,\mid\, p \,\mid\, X_{i} \,\mid\, \neg \varphi \,\mid\, \varphi \land \varphi \,\mid\, \Diamond_{\geq k} \varphi 
\]
where $p \in \Pi$, $X_{i} \in \cT$ and $k \in \N$. A \textbf{$(\Pi, \cT)$-program} $\Lambda$ of $\GMSC$ consists of two lists of expressions
\[
\begin{aligned}
    &X_1 (0) \colonminus \varphi_1\qquad &&X_1 \colonminus \psi_1 \\
    &\vdots &&\vdots \\
    &X_n (0) \colonminus \varphi_n &&X_n \colonminus \psi_n
\end{aligned}
\]
where $\varphi_1, \ldots, \varphi_n$ are $\Pi$-formulae of $\GML$ and $\psi_1, \ldots, \psi_n$ are $(\Pi, \cT)$-schemata of $\GMSC$. Moreover, each program is associated with a set $\cA \subseteq \cT$ of \textbf{appointed predicates}. A program of \textbf{modal substitution calculus} $\MSC$ is a program of $\GMSC$ that may only use diamonds $\Diamond$ of the standard modal logic. The expressions $X_i (0) \colonminus \varphi_i$ are called \textbf{terminal clauses} and $X_i \colonminus \psi_i$ are called \textbf{iteration clauses}. The schema variable $X_i$ in front of the clause is called the \textbf{head predicate} and the formula $\varphi_i$ (or schema $\psi_i$) is called the \textbf{body} of the clause. The terminal and iteration clauses are the rules of the program. When we do not need to specify $\cT$, we may refer to a $(\Pi, \cT)$-program as a \textbf{$\Pi$-program} of $\GMSC$.
Now, the \textbf{$n$th iteration formula} $X^{n}_{i}$ of a head predicate $X_{i}$ (or the iteration formula of $X_{i}$ in \textbf{round} $n \in \N$) (w.r.t. $\Lambda)$ is defined as follows.
%
%
    The $0$th iteration formula $X^0_i$ is $\varphi_i$ and
    the $(n+1)$st iteration formula $X^{n+1}_i$ is $\psi_{i}$ where each head predicate $Y$ in $\psi_{i}$ is replaced by the formula $Y^n$.
%
%
We write $G, w \models \Lambda$ and say that $\Lambda$ \textbf{accepts} $(G, w)$ iff $G, w \models X^n$ for some appointed predicate $X \in \cA$ and some $n \in \N$. 
Moreover, for all $(\Pi, \cT)$-schemata $\varphi$ that are not head predicates and for $n \in \N$, we let $\varphi^n$ denote the formula (w.r.t. $\Lambda$) where each $Y \in \cT$ in $\varphi$ is replaced by $Y^n$.

Recall that \textbf{monadic second-order logic} $\mathrm{MSO}$ is obtained as an extension of \textbf{first-order logic} $\mathrm{FO}$ by allowing quantification of unary relation variables $X$, i.e., if $\varphi$ is an $\MSO$-formula, then so are $\forall X\varphi$ and $\exists X\varphi$, see e.g. \cite{ebbinghaus1996mathematical} for more details. 
Given a set $\Pi \subseteq \mathrm{LAB}$ of node label symbols, an $\FO$- or $\MSO$-formula $\varphi$ over $\Pi$ is an $\FO$- or $\MSO$-formula over a vocabulary which contains exactly a unary predicate for each $p \in \Pi$ and the edge relation symbol $E$.
Equality is admitted.

Let $\varphi$ be an
$\VGML$-formula, $\GMSC$-schema, $\GMSC$-program, or a rule of a program. The
\textbf{modal depth} (resp. the \textbf{width}) of $\varphi$ is the maximum number of nested diamonds in $\varphi$ (resp. the maximum number $k \in \N$ that appears in a diamond $\Diamond_{\geq k}$ in $\varphi$). If an $\VGML$-formula has no maximum depth (resp., width), its modal depth (resp., width) is~$\infty$. If an $\VGML$-formula has finite modal depth (resp., width), it is \textbf{depth-bounded} (resp., \textbf{width-bounded}). The \textbf{formula depth} of a $\GML$-formula or $\GMSC$-schema is the maximum number of nested operators $\neg$, $\land$ and $\Diamond_{\geq k}$.
Given a $\Pi$-program of $\GMSC$ or a $\Pi$-formula of $\VGML$ $\varphi$ (respectively, a formula $\psi(x)$ of $\MSO$ or $\FO$ over $\Pi$, where the only free variable is the first-order variable $x$), 
we say that $\varphi$ (resp., $\psi(x)$) \textbf{expresses} a node property $\cP$ over $\Pi$, if for each pointed $\Pi$-labeled graph $(G, w)$: $(G, w) \in \cP$ iff $G, w \models \varphi$ (or resp. $G \models \psi(w)$).
A node property $\cP$ over $\Pi$ is \textbf{expressible} in $\GMSC$ (resp., in $\VGML$, $\FO$ or $\MSO$) if there is a $\Pi$-program of $\GMSC$ (resp., a $\Pi$-formula of $\VGML$, $\FO$ or $\MSO$) expressing $\cP$. 

\begin{example}
Recall the property
\emph{reachability of node label symbol $p$} over $\Pi$ defined in Example~\ref{example: GNN definable properties}. It is expressed by the $\Pi$-program of $\GMSC$
$
X(0) \colonminus p$,
$X \colonminus \Diamond X$,
where $X$ is an appointed predicate. The $i$th iteration formula of $X$ is $X^{i} = \Diamond \cdots \Diamond p$ where there are exactly $i$ diamonds.
\end{example}

\begin{example}[\cite{Kuusisto13}]\label{example: centre-point property}
    A pointed $\Pi$-labeled graph $(G,w)$ has the \emph{centre-point property} $\cP$ over $\Pi$ if there exists an $n \in \N$ such that each directed path starting from $w$ leads to a node with no out-neighbours in exactly $n$ steps. It is easy to see that the $\Pi$-program $X (0) \colonminus \Box \bot$, $X \colonminus \Diamond X \land \Box X$, where $X$ is an appointed predicate, expresses $\cP$. In \cite{Kuusisto13}, it was established that the centre-point property is not expressible in $\MSO$ and that there are properties expressible in $\mu$-calculus and $\MSO$ that are not expressible in $\MSC$ (e.g., non-reachability), with the same proofs applying to $\GMSC$. 
\end{example}

\begin{proposition}\label{proposition: GMSC < GML}
    Properties expressible in $\GMSC$ are expressible in $\VGML$, but not vice versa.
\end{proposition}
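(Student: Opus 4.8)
The plan is to prove the two halves separately: the containment $\GMSC \subseteq \VGML$, and then strictness via a concrete separating property.

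For containment I would exploit the fact that every iteration formula of a $\GMSC$-program is itself a $\GML$-formula. Fix a $\Pi$-program $\Lambda$ with appointed predicates $\cA$. By definition each $X_i^0 = \varphi_i$ is a $\GML$-formula, and each $X_i^{n+1}$ is obtained from the schema $\psi_i$ by replacing every head predicate $Y$ with the (already $\GML$) formula $Y^n$; a routine induction on $n$ shows that every $X_i^n$ is a $\GML$-formula. Since $\Lambda$ accepts $(G,w)$ exactly when $G,w \models X^n$ for some $X \in \cA$ and some $n \in \N$, the program expresses precisely the same node property as $\bigvee_{X \in \cA}\bigvee_{n \in \N} X^n$. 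The index set $\{(X,n) : X \in \cA,\ n \in \N\}$ is countable because $\cA$ is finite, so this is a legitimate $\VGML$-formula, giving $\GMSC \subseteq \VGML$.

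For the strict direction I would take the node property $\cP = $ ``the current node has an even number of out-neighbours''. It is $\VGML$-expressible by $\bigvee_{n \text{ even}} \Diamond_{=n}\top$, a countable disjunction of $\GML$-formulas. It remains to show $\cP$ is not $\GMSC$-expressible, and this is the main obstacle. The key structural observation is that every $\GMSC$-program has a finite width $k$ (the largest threshold occurring in any $\Diamond_{\geq k}$ in its clauses), and, since substitution never introduces new thresholds, every iteration formula $X^n$ has width at most $k$ as well.

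I would then run a star-graph argument mirroring the one already used in the paper for bounded $\GNNF$s. Let $G_m$ denote the star whose centre $w$ has exactly $m$ unlabeled, out-edge-free leaves as out-neighbours. I would prove by induction on the round $t$, with an inner induction on formula structure, that for all $m, m' \geq k$ the truth value of every width-$\leq k$ iteration formula agrees at the two centres (and, trivially, at the leaves, which are indistinguishable sinks). The crux is the diamond case: since all leaves are mutually indistinguishable, any subformula holds either at every leaf or at no leaf, so $\Diamond_{\geq j}\theta$ with $j \leq k \leq m, m'$ takes the same truth value at the centre of $G_m$ and of $G_{m'}$. Consequently a width-$k$ program returns the same verdict on every star centre of out-degree at least $k$. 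Since $\cP$ accepts $G_{2k}$ but rejects $G_{2k+1}$, both of out-degree at least $k$, no $\GMSC$-program expresses $\cP$. (A quicker, non-constructive alternative for this half is a cardinality count: there are only countably many $\GMSC$-programs, but the formulas $\bigvee_{n \in S}\Diamond_{=n}\top$ for $S \subseteq \N$ define uncountably many distinct $\VGML$-properties, so some $\VGML$-property must escape $\GMSC$.)
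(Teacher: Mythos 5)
Your proof is correct. The containment half is essentially identical to the paper's: both form the countable disjunction $\bigvee_{X \in \cA,\, n \in \N} X^n$ of iteration formulas of the appointed predicates, which is legitimate since each $X^n$ is a $\GML$-formula. For strictness, however, you take a genuinely different route. The paper argues via computability: $\VGML$ can express undecidable properties (cf.\ Remark~\ref{GNN uncomputable}, encoding an arbitrary $U \subseteq \N$ as ``the out-degree lies in $U$''), whereas every $\GMSC$-program defines a decidable property because on a finite graph the truth values of the iteration formulas are eventually periodic. You instead isolate a structural invariant --- every program, and hence every iteration formula, has some finite width $k$ --- and show by the star-graph argument that width-$k$ formulas cannot separate centres of out-degree $2k$ and $2k+1$, so parity of out-degree escapes $\GMSC$. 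Your argument is slightly longer but more informative: it exhibits a concrete, decidable (indeed width-bounded-$\VGML$-definable) separating property, showing the gap is caused by bounded counting width rather than by computability; it also mirrors the paper's own observation that bounded $\GNNF$s cannot distinguish star centres of degree $k$ and $k+1$, which transfers to $\GMSC$ via Proposition~\ref{theorem: GMSC = k-FCMPA}. Your parenthetical cardinality argument is likewise valid and is in fact the closest in spirit to the paper's, since both rest on $\VGML$ defining uncountably many out-degree properties while $\GMSC$ has only countably many programs.
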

\begin{proof}
A property over $\Pi$ expressed by a $\Pi$-program $\Gamma$ of $\GMSC$ is expressible in $\VGML$ by the $\Pi$-formula $\bigvee_{X_{i} \in \cA, \, n \in \N} X_{i}^{n}$, where $X_{i}^{n}$ is the $n$th iteration formula of appointed predicate $X_{i}$ of $\Gamma$. Like $\GNN[\R]$s, $\VGML$ can express undecidable properties (cf. Remark~\ref{GNN uncomputable}). Clearly $\GMSC$-programs $\Lambda$ cannot, as configurations defined by $\Lambda$ in a finite graph eventually loop, i.e., the truth values of iteration formulae start repeating cyclically.
\end{proof}

While $\GMSC$ is related to the graded modal $\mu$-calculus ($\mu\mathrm{GML}$), which originates from \cite{DBLP:conf/cade/KupfermanSV02} and is
used in \cite{Pfluger_Tena_Cucala_Kostylev_2024} to characterize a recurrent GNN model, $\mu\mathrm{GML}$ and $\GMSC$ are orthogonal in expressivity. Iteration in $\GMSC$ need \emph{not} be over a monotone function
and does not necessarily yield a fixed point, and there are no syntactic restrictions that would, e.g., force schema variables to be used only positively as in $\mu\mathrm{GML}$.
The centre-point property from Example \ref{example: centre-point property} is a simple property not expressible in $\mu\mathrm{GML}$ (as it is not even expressible in $\MSO$, into
which $\mu\mathrm{GML}$ translates). 
Conversely, $\GMSC$ offers neither greatest fixed points nor fixed point alternation. In particular, natural properties expressible in the $\nu$-fragment of $\mu\mathrm{GML}$ such as non-reachability are
not expressible in $\GMSC$; this is proved in \cite{Kuusisto13} for the non-graded
version, and the same proof applies to $\GMSC$. 
However, the $\mu$-fragment of the graded modal $\mu$-calculus translates into $\GMSC$ (by essentially the same argument as the one justifying Proposition 7 in \cite{Kuusisto13}). 
We also note that $\GMSC$ translates into partial fixed-point logic with choice \cite{DBLP:conf/csl/Richerby04}, but it is not clear whether the same holds without
choice.

\section{Connecting GNNs and logics via automata}\label{section: GNNs = logics}

In this section we establish exact matches between classes of $\GNN$s and our logics.
The first main theorem is Theorem \ref{theorem: k-GNN[F] = k-FCMPA = GMSC}, showing that $\GNNF$s, R-simple aggregate-combine $\GNNF$s
and $\GMSC$ are equally expressive.
Theorem \ref{omega-GML = GNN = CMPAs} is the second main result, showing that $\GNN[\R]$s
and $\VGML$ are equally expressive.
We begin by defining the concept of distributed automata
which we will mainly use as a tool for our arguments but they also lead to nice additional characterizations. 
Informally, we consider a model of distributed automata called counting message passing automata and its variants which operate similarly to $\GNN$s. 
These distributed automata update the state of each node according to the node's own state and the \emph{multiset} of states of its out-neighbours.

Formally, given $\Pi \subseteq \mathrm{LAB}$, a \textbf{counting message passing automaton} $(\CMPA)$
over $\Pi$ is a tuple $(Q, \pi, \delta, F)$ where
$Q$ is an at most countable set of \emph{states} and $\pi$, $\delta$ and
$F$ are defined in a similar way as for $\GNN$s:
$\pi \colon \cP(\Pi) \to Q$ is an \emph{initialization function},
$\delta \colon Q \times \cM(Q) \to Q$ a \emph{transition function} and
$F \subseteq Q$ a set of \emph{accepting states}.  Computation of a
$\CMPA$ over~$\Pi$ is defined in a $\Pi$-labeled graph $G$ analogously to
GNNs: for each node $w$ in $G$, the initialization function gives the
initial state for $w$ based on the node label symbols true in $w$, and the
transition function is applied to the previous state of $w$ and the
multiset of states of out-neighbours of $w$.  Acceptance is
similar to $\GNN$s: the $\CMPA$ \textbf{accepts} $(G, w)$ if the
$\CMPA$ visits (at least once) an accepting state at $w$ in $G$.  A
\textbf{bounded} $\CMPA$ is a $\CMPA$ whose transition function can be
written as $\delta \colon Q \times \cM_{k}(Q) \to Q$ for some $k \in \N$ (i.e.,
$\delta(q, M) = \delta(q, M_{|k})$ for each multiset $M \in \cM(Q)$
and state $q \in Q$).
A \textbf{finite} $\CMPA$ ($\FCMPA$) is a $\CMPA$ with finite $Q$. We define bounded $\FCMPA$s similarly to bounded $\CMPA$s. 
A $\CMPA$ $\cA$ 
over $\Pi$ \textbf{expresses} a node property $\cP$ over $\Pi$ if $\cA$ accepts $(G, w)$ iff $(G, w) \in \cP$. We define whether a node property $\cP$ is expressible by a $\CMPA$ in a way analogous to $\GNN$s.

For any $\Pi$, a $\Pi$-object refers to a GNN over $\Pi$, a $\CMPA$ over $\Pi$, a $\Pi$-formula of $\VGML$ or a $\Pi$-program of $\GMSC$.
Let $\cC$ be the class of all $\Pi$-objects for all $\Pi$.
Two $\Pi$-objects in~$\cC$ are \textbf{equivalent} if they express the same node property over $\Pi$. Subsets $A, B \subseteq \cC$  \textbf{have the same expressive power}, if 
each 
$x \in A$ has an equivalent 
$y \in B$ and vice versa.
It is easy to obtain the following. 
\begin{proposition}\label{theorem: GMSC = k-FCMPA}
    Bounded $\FCMPA$s have the same expressive power as $\GMSC$.
\end{proposition}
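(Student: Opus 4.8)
The plan is to prove the two inclusions separately, exhibiting for each bounded $\FCMPA$ an equivalent $\GMSC$-program and vice versa. In both directions the bound $k$ of the automaton will match the width of the program, and the head predicates of the program will track the states of the automaton.

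For the direction from automata to logic, let $\cA = (Q,\pi,\delta,F)$ be a bounded $\FCMPA$ over $\Pi$ with $\delta \colon Q \times \cM_k(Q) \to Q$ and $Q$ finite. I would introduce one head predicate $X_q$ for each $q \in Q$, intended to hold at a node $w$ in round $t$ exactly when $\cA$ is in state $q$ at $w$ after $t$ rounds. Since $\pi$ depends only on $\lambda(w)$ and $\Pi$ is finite, the terminal clause $X_q(0) \colonminus \bigvee_{\pi(S)=q}\big(\bigwedge_{p\in S}p \wedge \bigwedge_{p\in\Pi\setminus S}\neg p\big)$ is a genuine $\GML$-formula of modal depth $0$. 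For the iteration clauses I would encode the (bounded) transition by a finite disjunction over pairs $(q',M)$ with $\delta(q',M)=q$, using $X_{q'}$ to read the node's own state and graded modalities $\Diamond_{\geq m}X_r$ (together with their negations) to assert that the $k$-projection of the multiset of out-neighbour states is exactly $M$; this works because the $X_r$ are mutually exclusive, so counting out-neighbours satisfying $X_r$ is the same as counting out-neighbours in state $r$. Declaring $\{X_q : q\in F\}$ appointed, a routine induction on $t$ (using boundedness of $\delta$ to pass from the true multiset to its $k$-projection) shows the program accepts exactly when $\cA$ does. The resulting program has modal depth $1$ and width $k$.

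For the converse I would take the automaton state at a node to record the node's label set together with the current truth values of all head predicates, i.e. $Q = \cP(\Pi)\times\{0,1\}^n$, and keep the label component fixed throughout (the labels are unavailable to $\delta$ after initialisation, so they must be carried in the state). If every iteration clause had modal depth at most $1$ and every terminal clause were propositional, the update would be immediate: the depth-$0$ part of a body $\psi_i$ is read from the node's own state, while each top-level $\Diamond_{\geq m}\eta$ is a bounded count of out-neighbours whose state satisfies the propositional formula $\eta$, so the new truth values are a function of the node's own state and the $k$-projection of the multiset of neighbour states, with $k = \width$. This is exactly a bounded $\FCMPA$, whose accepting states are those in which some appointed predicate is true.

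The main obstacle is precisely that $\GMSC$-bodies (and terminal clauses) may have modal depth greater than $1$, whereas one automaton round only inspects immediate out-neighbours; a depth-$D$ body genuinely requires $D$ rounds of message passing to evaluate, so there is a real speed mismatch between a program round and an automaton round. I would remove it by a normalisation step: introduce a fresh head predicate for each subformula occurring under a modality and unfold each body into $D = \max_i \md(\psi_i)$ layers of modal depth $1$ (padding shorter clauses with copy rules and treating the terminal clauses analogously), so that the original head predicates recover their round-$n$ values every $D$ rounds. A modulo-$D$ counter, easily definable in $\GMSC$ by cyclic copy rules $C_i \colonminus C_{i-1}$, then gates acceptance to these synchronised rounds, ensuring the slowed-down program expresses the same property as the original. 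This unfolding is routine and is the graded analogue of the $\MSC$-to-automaton construction of Kuusisto \cite{Kuusisto13}; once it is in place, the round-by-round simulation above yields the equivalent bounded $\FCMPA$.
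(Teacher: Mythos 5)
Your proposal is correct and follows essentially the same route as the paper: the automaton-to-logic direction introduces one head predicate per state with terminal clauses enumerating $\pi$ and iteration clauses enumerating the $k$-bounded multiset transitions via $\Diamond_{\geq m}$ and its negations, and the logic-to-automaton direction first normalises the program to modal depth $0$ terminal clauses and depth-$\leq 1$ iteration clauses using fresh predicates for subschemata and a cyclic clock, then reads off a bounded $\FCMPA$ whose states record the label set and the current truth values of the head predicates. The paper packages these as Lemmas \ref{lemma: k-FCMPA to GMSC}, \ref{lemma: GMSC to GMSC[1]} and \ref{lemma: GMSC to k-FCMPA}, but the constructions coincide with yours in all essentials.
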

\begin{proof}
    (Sketch) Details in Appendix \ref{appendix: GMSC = k-FCMPA}.
    To obtain a bounded $\FCMPA$ equivalent to a $\GMSC$-program $\Lambda$, we first turn $\Lambda$ into an equivalent program $\Gamma$ where the modal depth of terminal clauses (resp., iteration clauses) is $0$ (resp., at most $1$). Then from $\Gamma$ with head predicate set $\cT'$, we construct an equivalent $\FCMPA$ $\cA$ as follows. 
    The set of states of $\cA$ is $\cP(\Pi \cup \cT')$ 
    and $\cA$ enters in round $n \in \N$ in node $w$ into a state that contains precisely the node label symbols true in $w$ and the predicates $X$ whose iteration formula $X^n$ is true at $w$.
    For the converse, 
    we create a head predicate for each state in $\cA$, and let predicates for accepting states be appointed. The terminal clauses simulate $\pi$ using 
    disjunctions of conjunctions of non-negated and negated node label symbols. 
    The iteration clauses simulate $\delta$ using, for each pair $(q,q')$, a subschema specifying the multisets that take $q$ to $q'$.
\end{proof}

We are ready to show equiexpressivity of $\GMSC$
and $\GNNF$s. This applies \emph{without relativising to any background logic}.
The direction from $\GNNF$s to $\GMSC$ is trivial. The other
direction is more challenging, in particular when going all the way to
R-simple $\GNNF$s. 
While size issues were not a concern in this work, we observe that the translation from $\GNNF$s to $\GMSC$ involves only polynomial blow-up in size; the related definitions and proofs are in appendices \ref{appendix: sizes}, \ref{appendix: GMSC = k-FCMPA} and \ref{sec appendix: k-GNNF = k-FCMPA = GMSC}. We also conjecture that a polynomial translation from $\GMSC$ to R-simple $\GNNF$s is possible by 
the results and techniques
in \cite{ahvonen_et_al:LIPIcs.CSL.2024.9}, taking into account differences between $\GNNF$s and R-simple $\GNNF$s w.r.t. the definition of size. A more serious examination of blow-ups would require a case-by-case analysis taking other such details into account.

\begin{theorem}\label{theorem: k-GNN[F] = k-FCMPA = GMSC}
The following have the same expressive power: $\GNNF$s, $\GMSC$, and R-simple aggregate-combine $\GNNF$s.
\end{theorem}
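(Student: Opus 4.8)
The plan is to establish a cycle of translations among the three formalisms, where two of the three directions are essentially immediate and the third carries all the work. First I would note that R-simple aggregate-combine $\GNNF$s are by definition a special case of $\GNNF$s, so that inclusion needs no argument. For the direction from $\GNNF$s to $\GMSC$, I would observe that the state set of a $\GNNF$ is $S^d$ for a floating-point system $S$, hence finite, and that a $\GNNF$ is bounded by definition; thus every $\GNNF$ is literally a bounded $\FCMPA$ (its transition $\COM(q,\AGG(\cdot))$ is exactly a bounded $\FCMPA$ transition $\delta\colon Q\times\cM_k(Q)\to Q$). By Proposition~\ref{theorem: GMSC = k-FCMPA}, bounded $\FCMPA$s and $\GMSC$ have the same expressive power, which yields $\GNNF \to \GMSC$ and, a fortiori, the same for R-simple aggregate-combine $\GNNF$s. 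It then remains only to translate $\GMSC$ into R-simple aggregate-combine $\GNNF$s, which closes the cycle.

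For the hard direction I would first put the given $\GMSC$-program into the normal form from (the proof of) Proposition~\ref{theorem: GMSC = k-FCMPA}, so that terminal clauses have modal depth $0$ and iteration clauses have modal depth at most $1$. I would fix a feature dimension with one coordinate per head predicate, storing the $\{0,1\}$-valued truth of its current iteration formula at the node, together with auxiliary coordinates for intermediate subformula values and a small block implementing a cyclic phase counter. Since the initialization function of an R-simple $\GNNF$ depends only on the node's own labels, the initial feature vector directly encodes the modal-depth-$0$ terminal clauses. Each round of the $\GMSC$-program is then simulated by a fixed-length block of GNN micro-rounds: in the local micro-rounds the aggregation matrix $A$ is arranged to contribute nothing, and I compute the inner Boolean combinations $\chi$ occurring under the single diamond layer using the standard encodings of $\wedge,\vee,\neg$ by $\mathrm{ReLU}^*$ of affine maps (e.g. $x\wedge y=\mathrm{ReLU}^*(x+y-1)$ and $\neg x = 1-x$ on $\{0,1\}$-valued coordinates). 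One designated communication micro-round then evaluates all graded diamonds simultaneously: the indicator coordinate of each $\chi$ is summed over out-neighbours by $\mathrm{SUM}_S$, and $\Diamond_{\geq k}\chi$ is read off as $\mathrm{ReLU}^*(\mathrm{count}-(k-1))$, which is $1$ exactly when the count is at least $k$. The remaining micro-rounds compute the outer Boolean combinations and commit the new predicate values, after which the phase counter resets.

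To make the counting exact I would choose the floating-point system $S$ with enough precision and range that repeatedly adding the float $1$ yields the true integer counts up to the width $K$ of the program (the largest $k$ appearing in a $\Diamond_{\geq k}$); beyond $K$ the value of $\mathrm{SUM}_S$ saturates, which is harmless since each threshold only needs to separate counts below $k\le K$ from counts at least $k$, and is in fact precisely the boundedness guaranteed by Proposition~\ref{floating-point sum bound}. Synchronization across nodes is automatic: all nodes begin in the same phase and apply the same transition, so the global phase counter stays aligned everywhere, and the aggregation term is neutralized via $A$ outside the designated communication micro-round. Finally I would declare the accepting feature vectors $F$ to be exactly those whose appointed-predicate coordinates are $1$, so that the $\GNNF$ accepts iff some appointed predicate becomes true.

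I expect the main obstacle to be this hard direction, and within it the tension between the very restricted R-simple architecture---a single affine map followed by one application of $\mathrm{ReLU}^*$, with a single summation aggregation, per round---and the arbitrary nested Boolean structure of $\GMSC$ clause bodies. Resolving it requires the multi-micro-round unfolding controlled by a global phase counter, together with the careful choice of the float system so that graded counting is exact up to the program's width while respecting the boundedness forced by floating-point summation.
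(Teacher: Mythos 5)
Your decomposition is exactly the paper's: the two easy inclusions (R-simple aggregate-combine $\GNNF$s are $\GNNF$s; a $\GNNF$ is literally a bounded $\FCMPA$, hence translates to $\GMSC$ via Proposition~\ref{theorem: GMSC = k-FCMPA}), with all the work in the direction from $\GMSC$ to R-simple aggregate-combine $\GNNF$s, carried out by unfolding one program iteration into a fixed block of GNN rounds governed by a cyclic clock, using the Barceló-style $\mathrm{ReLU}^*$ gadgets for the Boolean connectives and a saturating threshold $\mathrm{ReLU}^*(\mathrm{count}-(k-1))$ for $\Diamond_{\geq k}$, with the float system chosen to count exactly up to the program's width. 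The only structural difference is cosmetic: you normalize to modal depth at most one first, whereas the paper instead balances the \emph{formula} depth of all iteration clauses and conjuncts (Lemma~\ref{lem: simplify GMSC-program}) and evaluates subschemata one depth level per round; both amount to the same bottom-up, clock-synchronized evaluation.

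Two steps would fail as literally written, though both are repairable with machinery you already have. First, an R-simple $\GNNF$ applies one fixed triple $(C,A,\bb)$ in every round, so the aggregation term cannot be ``neutralized via $A$ outside the designated communication micro-round'': the diamond coordinates are recomputed (from stale inputs) in every round. The correct fix, which is what the paper does, is to accept this and arrange matters so that each coordinate holds its intended value precisely at its own phase, with the garbage produced in the other rounds never being read. Second, and consequently, declaring $F$ to be ``exactly those vectors whose appointed-predicate coordinates are $1$'' is unsound: at non-commit micro-rounds the head-predicate coordinates are overwritten by $\mathrm{ReLU}^*$ of partially updated source coordinates and can transiently equal $1$ even when the appointed predicate is never true in any round of the program, yielding false accepts. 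You must additionally require the phase-counter block to be in its commit position (the paper's condition $\bv_{N+D+1}=1$), which is exactly why the clock bits have to be part of the feature vector rather than an informal bookkeeping device.
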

%
\begin{proof}
    (Sketch) Details in Appendix \ref{sec appendix: k-GNNF = k-FCMPA = GMSC}. 
    By definition, a $\GNNF$ is just a bounded $\FCMPA$ and translates to a $\GMSC$-program by Proposition~\ref{theorem: GMSC = k-FCMPA}. 
    To construct an R-simple aggregate-combine $\GNNF$ $\cG$
    for a $\GMSC$-program $\Lambda$ with formula depth $D$, we first turn~$\Lambda$ into an
    equivalent program $\Gamma$, where each terminal clause has the
    body $\bot$, the formula depth of each body of an iteration clause is $D'$ (linear in $D$) and 
    for each subschema of $\Gamma$ that is a conjunction, both conjuncts 
    have the same formula depth if neither conjunct is $\top$. 
    We define binary feature vectors that are split into two halves: 
    the 1st half intuitively calculates the truth values of the head predicates and subschemata of $\Gamma$ one formula depth at a time in the style of Barceló et al. \cite{DBLP:conf/iclr/BarceloKM0RS20}. The 2nd half records the current formula depth under evaluation. $\cG$
    simulates one round of $\Gamma$ in $D'+1$ rounds, using the 2nd half of the features to accept nodes only every $D'+1$ rounds: the truth values of head predicates are correct in those rounds. 
    Note that the choice of floating-point system in $\cG$
    depends on $\Lambda$ and thus no single floating-point system is used by all GNNs resulting from the translation. In fact, fixing a single floating-point system would trivialize the computing model as only
    finitely many functions could be defined.
\end{proof} 

To show that $\GNN[\R]$s and $\VGML$ are equally expressive, 
we first prove a useful theorem.
\begin{theorem}\label{thrm: equi omega-GML CMPA}
    $\CMPA$s have the same expressive power as $\omega$-$\GML$.
\end{theorem}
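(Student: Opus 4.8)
The plan is to prove both inclusions by routing through the notion of graded bisimulation types. For $n \in \N$ and a finite $\Pi$, call two pointed graphs \emph{$n$-equivalent} if they agree on all $\GML$-formulae of modal depth at most $n$; the equivalence classes are the \emph{depth-$n$ (graded) types} over $\Pi$. First I would establish four facts by induction on $n$: (a) there are only countably many depth-$n$ types; (b) each depth-$n$ type $\tau$ is definable by a single \emph{finitary} $\GML$-formula $\theta_\tau$ of modal depth at most $n$; (c) the depth-$(n+1)$ type of a node is a function of its depth-$n$ type together with the multiset of depth-$n$ types of its out-neighbours; and (d) the state that any fixed $\CMPA$ assigns to a node after round $n$ depends only on that node's depth-$n$ type.

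The subtle point in (b) — and the main obstacle of the whole argument — is that already for $n=1$ there are infinitely many depth-$n$ types, so when describing a depth-$(n+1)$ type we cannot simply conjoin over all possible neighbour-types. The fix is the \emph{total-degree trick}: a depth-$(n+1)$ type records, for each depth-$n$ type $\sigma$, the exact number $c(\sigma)$ of out-neighbours of type $\sigma$, and this profile has finite support in any finite graph. Writing $|c| = \sum_\sigma c(\sigma)$, I would set $\theta_\tau := \theta^0_L \wedge \bigwedge_{\sigma \in \mathrm{supp}(c)} \Diamond_{=c(\sigma)}\theta_\sigma \wedge \Diamond_{=|c|}\top$, where $\theta^0_L$ pins down the label $L$ of $\tau$ and the $\theta_\sigma$ are the finitary formulae supplied by the induction hypothesis. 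Since the depth-$n$ types partition all nodes, the conjunct $\Diamond_{=|c|}\top$ (exactly $|c|$ out-neighbours in total) forces that no out-neighbour realises a type outside the support, so $\theta_\tau$ is a finite conjunction of finitary formulae and hence finitary, of modal depth $n+1$. Countability in (a) then follows since the types are indexed by a label together with a finitely supported $\N$-valued function on the countable set of depth-$n$ types, while (c) and (d) are routine once one notes that the $\CMPA$ transition reads only the node's own state and the multiset of neighbour states.

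For the direction from $\CMPA$s to $\VGML$, given a $\CMPA$ $\cA$ I would let, for each $n$, $T_n$ be the (countable) set of depth-$n$ types whose associated round-$n$ state is accepting, and put $\Phi := \bigvee_{n \in \N}\bigvee_{\tau \in T_n}\theta_\tau$. By (b) and (d) this is a countable disjunction of finitary $\GML$-formulae, hence an $\VGML$-formula, and $G,w \models \Phi$ holds exactly when some round-$n$ state of $w$ is accepting, i.e.\ when $\cA$ accepts $(G,w)$. For the converse, given $\Phi = \bigvee_{i \in I} \psi_i$ with each $\psi_i$ a finitary $\GML$-formula, I would build a single canonical $\CMPA$ whose state set is the countable set of all graded types of all finite depths: the initialization function sends a node to its depth-$0$ type, and by (c) the transition function computes the depth-$(n+1)$ type from a node's current (depth-$n$) state and the multiset of its neighbours' (depth-$n$) states, sending any off-synchronisation input to a harmless sink. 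I would declare a depth-$n$ type accepting precisely when it satisfies some $\psi_i$ with $\md(\psi_i) \le n$ (a well-defined condition, since a depth-$n$ type determines the truth of every formula of modal depth at most $n$). Then $w$ reaches an accepting state in some round iff some $\psi_i$ holds at $(G,w)$, which is exactly $G,w\models\Phi$; the only remaining things to verify are that nested countable disjunctions flatten to a countable disjunction and that every $\psi_i$ is eventually unlocked once the round number reaches its modal depth.
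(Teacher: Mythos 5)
Your proposal is correct and follows essentially the same route as the paper: your depth-$n$ types are the paper's \emph{full graded $\Pi$-types}, your total-degree trick $\Diamond_{=|c|}\top$ is exactly the conjunct $\Diamond_{=\abs{\cN(w)}}\top$ the paper uses to keep each type finitary, your fact (d) is the paper's lemma that the full type of depth $n$ determines the $\CMPA$ state in every round up to $n$, and your canonical type-computing automaton is the paper's counting type automaton. The only cosmetic difference is that you accept a type of depth $n$ whenever it entails some disjunct of modal depth at most $n$, whereas the paper accepts only types of exactly the disjunct's depth; both choices work.
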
 
%
%
%
\begin{proof}
    (Sketch) Details in Appendix \ref{sec: omega-GML CMPA}.
    To construct a $\CMPA$ for each $\VGML$-formula~$\chi$, we define a $\GML$-formula called the ``\emph{full graded type of modal depth $n$}'' for each pointed graph $(G,w)$, which expresses all the local information of the neighbourhood of $w$ up to depth $n$ 
    (with the maximum out-degree of $G$ plus 1 sufficing for width). 
    We show that each $\VGML$-formula $\chi$ is equivalent to an infinite disjunction $\psi_{\chi}$ of types. We then define a CMPA that computes the type of modal depth $n$ of each node in round $n$. Its accepting states are the types in the type-disjunction  
    $\psi_{\chi}$.
    For the converse, to show that each $\CMPA$ has an equivalent $\VGML$-formula, we first show that two pointed graphs satisfying the same full graded type of modal depth $n$ have identical states in each round $\ell \leq n$ in each $\CMPA$. The $\VGML$-formula is the disjunction containing every type $T$ such that some $(G,w)$ 
    satisfying $T$ is accepted by the automaton in round $n$, where $n$ is the depth of $T$. 
\end{proof}

Next we characterize $\GNN[\R]$s with $\VGML$ \emph{without relativising to a background logic}. As the theorems above and below imply that $\GNN[\R]$s and $\CMPA$s are equally expressive, it follows that $\GNN[\R]$s can be discretized to $\CMPA$s 
having---by definition---an only countable set of states.
\begin{theorem}\label{omega-GML = GNN = CMPAs}
    $\GNN[\R]$s have the same expressive power as  $\VGML$.
\end{theorem}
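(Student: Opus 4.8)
The plan is to derive the statement from Theorem~\ref{thrm: equi omega-GML CMPA}: since that result already gives $\CMPA$s $\equiv \VGML$, it suffices to establish that $\GNN[\R]$s and $\CMPA$s have the same expressive power and then appeal to transitivity of equiexpressivity. This also yields the discretization remark made just before the statement. I would prove the equiexpressivity of $\GNN[\R]$s and $\CMPA$s in two directions.

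For the direction from $\CMPA$s to $\GNN[\R]$s, let $\cA = (Q, \pi, \delta, F)$ be a $\CMPA$ with $Q$ countable. I fix an injection $\iota \colon Q \to \R$ and work with feature dimension $d = 1$, encoding each state $q$ as $\iota(q)$. The only point needing care is the aggregation. In a finite graph every node sees a \emph{finite} multiset of out-neighbour states, and because distinct states receive distinct encodings, the multiset of out-neighbour feature vectors seen by the GNN is in bijection with the underlying multiset $M$ of out-neighbour $\CMPA$-states. Finite multisets over a countable set form a countable set, so there is an injection $\mathrm{enc}$ from them into $\R$. Since $\GNN[\R]$s impose no restriction on $\AGG$ and $\COM$, I let $\AGG$ return $\mathrm{enc}(M)$ on such a multiset (arbitrary elsewhere) and define $\COM(\iota(q), \mathrm{enc}(M)) \colonequals \iota(\delta(q, M))$, which is well defined by injectivity of $\iota$ and $\mathrm{enc}$. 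With initialization $\iota \circ \pi$ and accepting set $\iota(F)$, a round-by-round induction shows the resulting $\GNN[\R]$ reaches $\iota(q)$ at a node exactly when $\cA$ reaches $q$ there, so the two accept the same pointed graphs.

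For the converse (the \emph{discretization} direction), let $\cG = (\R^d, \pi, \delta, F)$ be a $\GNN[\R]$ over the finite set $\Pi$. The crux is that only countably many states are ever reachable. I make this precise through the cumulative closure
\[
R_0 \colonequals \pi(\cP(\Pi)), \qquad R_{n+1} \colonequals R_n \cup \{\, \delta(q, M) \mid q \in R_n,\ M \text{ a finite multiset over } R_n \,\},
\]
and set $R \colonequals \bigcup_{n} R_n$. As $\Pi$ is finite, $R_0$ is finite; and since the finite multisets over a countable set are countable, each $R_n$ is countable and hence so is $R$. Moreover $R$ is closed under $\delta$: given $q \in R_i$ and a finite multiset $M$ with support in $R_j$, taking $m = \max(i,j)$ yields $q, \mathrm{supp}(M) \subseteq R_m$, so $\delta(q,M) \in R_{m+1} \subseteq R$. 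I then define a $\CMPA$ with state set $Q = R$, initialization $\pi$, transition $\delta$ restricted to $R \times \cM(R)$ (extended arbitrarily to infinite multisets, which never arise in finite graphs), and accepting set $F \cap R$. A straightforward induction on the round number shows that on every pointed finite graph this $\CMPA$ and $\cG$ assign the same state to every node in every round, giving equivalence.

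The main obstacle is this discretization direction: the argument hinges on verifying that $R$ is countable \emph{and} closed under the transition function, even though $\delta$ is an arbitrary, possibly uncomputable, function on $\R^d \times \cM(\R^d)$. The countability rests entirely on the finiteness of $\Pi$ and of each node's out-neighbourhood, which guarantees that only finite multisets over previously reachable states are ever aggregated; the closure check in turn requires the levels $R_n$ to be nested, which is exactly why I take the cumulative union in the definition above. A secondary technical point, in the $\CMPA$-to-$\GNN[\R]$ direction, is that a GNN transition is forced to factor as $\COM(x, \AGG(y))$ with $\AGG$ blind to the node's own state $x$; this causes no difficulty here precisely because I let $\AGG$ injectively record the \emph{entire} neighbour multiset and defer all computation to $\COM$.
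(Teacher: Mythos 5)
Your proof is correct, but it reaches the theorem by a different route than the paper. The paper proves both directions through the machinery of full graded types: for $\GNN[\R] \Rightarrow \VGML$ it shows (Lemma~\ref{lem: full types and GNNs}) that nodes with the same full graded type of depth $n$ carry the same feature vector up to round $n$, builds an equivalent counting type automaton, and then invokes Theorem~\ref{thrm: equi omega-GML CMPA}; for the converse it goes $\VGML \to \CMPA \to$ counting type automaton $\to \GNN[\N]$, encoding each type as an integer via an explicit tree encoding. You instead prove $\GNN[\R] \equiv \CMPA$ directly and then cite Theorem~\ref{thrm: equi omega-GML CMPA}: your discretization via the cumulative closure $R = \bigcup_n R_n$ shows that an arbitrary $\GNN[\R]$ \emph{is already} a $\CMPA$ once restricted to its countably many reachable states, with no type analysis needed, and your converse injectively encodes the whole neighbour multiset in $\AGG$ and defers all computation to $\COM$, which removes the paper's detour through the canonical type automaton. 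Both arguments are sound; yours is more elementary and makes the "discretization" remark preceding the theorem literal. What the paper's type-based route buys is reusability: the same types yield the width-bounded variant (Remark~\ref{remark: additional results}) and the constant-iteration characterization (Theorem~\ref{constant iteration GNN reals}), and in the bounded setting your multiset-encoding trick would not survive, since injectively recording multiplicities beyond the bound $k$ violates $k$-boundedness of the aggregation. For the unrestricted statement at hand, however, your argument is complete.
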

%
%
\begin{proof}
    (Sketch) Details in Appendix \ref{sec appendix: omega-GML = GNN = GMPAs}.
    For any $\GNN[\R]$, we build an equivalent $\VGML$-formula using the same method as in the proof of Theorem \ref{thrm: equi omega-GML CMPA}, where we show that for each $\CMPA$, we can find an equivalent $\VGML$-formula.
    For the converse, we first translate an $\VGML$-formula to a $\CMPA$ $\cA$ by Theorem \ref{thrm: equi omega-GML CMPA} and then translate $\cA$ to an equivalent $\CMPA$ $\cA'$ with maximal ability to distinguish nodes. Then we build an equivalent $\GNN[\R]$ for $\cA'$ by encoding states into integers. 
\end{proof}

\begin{remark}\label{remark: additional results}
    It is easy to show that unrestricted $\GNNF$s have the same
    expressive power as $\FCMPA$s.
    Moreover, the proof of Theorem \ref{omega-GML = GNN = CMPAs} is easily modified to show that bounded $\GNN[\R]$s have the same expressive power as width-bounded $\VGML$. See Appendix \ref{appendix: remark} for the proofs.
\end{remark}

In Appendix \ref{appendix: constant iteration equivalence} we show that our model of constant-iteration $\GNN[\R]$s is equivalent to the one in Barceló et al. \cite{DBLP:conf/iclr/BarceloKM0RS20}.
Thus Theorem \ref{constant iteration GNN reals} (proven in Appendix \ref{appendix: constant-iteration GNNs to depth-bounded VGML}) generalizes the result in
\cite{DBLP:conf/iclr/BarceloKM0RS20} saying that any property $\cP$
expressible by $\FO$ is expressible as a constant-iteration
$\GNN[\R]$ iff it is expressible in $\GML$. Furthermore, any such $\cP$ is expressible in $\GML$ iff it is expressible in $\VGML$.

\begin{theorem}\label{constant iteration GNN reals}
    Constant-iteration $\GNN[\R]$s have the same expressive power as depth-bounded $\VGML$.
\end{theorem}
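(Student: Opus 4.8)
The plan is to reduce both inclusions to a single structural fact: for a constant-iteration $\GNN[\R]$ run for $N$ rounds, and for a depth-bounded $\VGML$-formula of modal depth at most $N$, membership in the defined property depends only on the \emph{full graded type of modal depth $N$} of the pointed input graph, the notion already introduced in the proof of Theorem~\ref{thrm: equi omega-GML CMPA}. The central observation, proved by a routine induction on $t$, is that for any $\GNN[\R]$ the round-$t$ state $x^t_v$ of a node is determined by the modal-depth-$t$ type of $(G,v)$: at round $0$ the state is $\pi(\lambda(v))$, fixed by the depth-$0$ type (the label set), and the round-$(t+1)$ state is a function of $x^t_v$ together with the multiset $\{\!\{\, x^t_u \mid (v,u)\in E \,\}\!\}$, both of which are fixed by the multiset of depth-$t$ types of out-neighbours and hence by the depth-$(t+1)$ type of $(G,v)$. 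This is the same statement established for $\CMPA$s in the proof of Theorem~\ref{thrm: equi omega-GML CMPA}, now read off directly for real-valued states.

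For the direction from constant-iteration $\GNN[\R]$s to depth-bounded $\VGML$, I would take $(\cG,N)$ and use the observation above to conclude that acceptance, i.e.\ whether $x^N_v\in F$, is constant on each class of pointed graphs sharing the same full graded type of modal depth $N$. Over a fixed finite $\Pi$ there are only countably many finite pointed $\Pi$-labeled graphs, hence only countably many such types, and each type is definable by a single $\GML$-formula of modal depth $N$ (of finite but possibly large width, with maximum out-degree plus one sufficing), exactly as in Theorem~\ref{thrm: equi omega-GML CMPA}. Disjoining the type-formulas of all accepted types yields an at-most-countable disjunction of $\GML$-formulas of modal depth $N$, that is, a depth-bounded $\VGML$-formula expressing the same property.

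For the converse, given a depth-bounded $\VGML$-formula $\varphi$ of modal depth at most $N$, I would reuse the type-computing automaton from the proof of Theorem~\ref{thrm: equi omega-GML CMPA}: a $\CMPA$ whose state at round $n$ at a node is (an encoding of) its full graded type of modal depth $n$, since the depth-$(n{+}1)$ type is a function of a node's label and the multiset of depth-$n$ types of its out-neighbours. Because $\varphi$ has modal depth at most $N$, its truth at $(G,w)$ is fixed by the depth-$N$ type, so I set $\varphi \equiv \bigvee\{\,T \mid T \text{ a depth-}N \text{ type with } T\models\varphi\,\}$ and take the accepting set to consist of exactly those depth-$N$ types, accepting at round $N$. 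Finally I encode this automaton as a $\GNN[\R]$ as in Theorem~\ref{omega-GML = GNN = CMPAs}: the countable state set is represented by integers in $\R$, the aggregation function injectively encodes the (finitely-supported) multiset of neighbour-types into a single real, and the combination function reconstructs the depth-$(n{+}1)$ type. The pair $(\cG,N)$ then accepts $(G,w)$ precisely when $\varphi$ holds.

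The main obstacle is the bookkeeping that links the three indices — the round count $N$, the modal depth of the type, and the modal depth of the formula — and in particular checking that a single finite $N$ suffices in the converse; this is exactly where \emph{depth}-boundedness (rather than width-boundedness) is essential, since a depth-$N$ but unbounded-width formula still only probes the depth-$N$ type, whereas an unbounded-depth formula could never be captured by any fixed number of rounds. The encoding step in the converse also relies on the relevant multisets of states being finitely supported with finite multiplicities — which holds because input graphs are finite — so that they embed injectively into $\R$; this is the only place where the unrestricted (non-bounded) nature of $\GNN[\R]$ aggregation is used, consistent with Remark~\ref{GNN uncomputable}.
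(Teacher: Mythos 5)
Your proposal is correct and follows essentially the same route as the paper's proof: both directions go through full graded types of modal depth $N$ and the counting type automaton, with the GNN determinacy lemma (the paper's Lemma~\ref{lem: full types and GNNs}) and the integer-encoding of types into $\R$ exactly as in Theorem~\ref{omega-GML = GNN = CMPAs}. The only (harmless) difference is in the converse: you normalize all disjuncts to depth-$N$ types so acceptance is checked exactly at round $N$, whereas the paper keeps types of mixed depths $\le N$ and instead modifies the GNN so that acceptance persists to all later rounds.
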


\section{Characterizing GNNs over MSO-expressible properties}
\label{sect:MSO}

In this section we consider properties expressible in $\MSO$. 
The first main result is Theorem \ref{MSO GNN reals and GMSC}, where we show that for properties expressible in $\MSO$, the expressive power of $\GNN[\R]$s is captured
by a \emph{finitary} logic. In fact, this logic is $\GMSC$ and by Theorem~\ref{theorem: k-GNN[F] = k-FCMPA = GMSC}, it follows that relative to $\MSO$, $\GNN[\R]$s have the same expressive power as $\GNNF$s (Theorem \ref{theorem: under MSO: GNN[R] = GNN[F]} below). 
Our arguments in this section work uniformly for any finite set $\Sigma_{N}$ of node label symbols.

\begin{theorem}\label{MSO GNN reals and GMSC}
     Let $\cP$ be a property expressible in $\MSO$. Then $\cP$ is expressible
     as a $\GNN[\R]$ if and only if it is expressible in $\GMSC$.
\end{theorem}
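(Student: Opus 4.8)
The plan is to prove the two directions separately; only the forward direction uses the hypothesis that $\cP$ is $\MSO$-expressible.

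\emph{The ``if'' direction.} Suppose $\cP$ is expressible in $\GMSC$. By Proposition~\ref{proposition: GMSC < GML} it is then expressible in $\VGML$, and by Theorem~\ref{omega-GML = GNN = CMPAs} every $\VGML$-property is expressible as a $\GNN[\R]$. Hence $\cP$ is expressible as a $\GNN[\R]$. I note that this half needs no assumption on $\cP$ at all.

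\emph{The ``only if'' direction: setup.} Assume $\cP$ is $\MSO$-expressible, say by a formula of quantifier rank $q$, and also expressible as a $\GNN[\R]$. By Theorem~\ref{omega-GML = GNN = CMPAs}, $\cP$ is $\VGML$-expressible, say by $\bigvee_i \psi_i$ with each $\psi_i$ a $\GML$-formula of modal depth $d_i$. Two consequences matter: $\cP$ is invariant under graded bisimulation, and each membership $(G,w)\in\cP$ is \emph{witnessed at a finite modal depth}, namely by the disjunct $\psi_i$ holding at $(G,w)$, which depends only on the depth-$d_i$ neighbourhood of $w$ (a finite tree after unravelling). By Proposition~\ref{theorem: GMSC = k-FCMPA} it suffices to build a bounded $\FCMPA$ for $\cP$, i.e. to bound both (a)~the counting width and (b)~the number of states.

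\emph{Width bound and finite-state collapse.} For (a) I would invoke the counting threshold of $\MSO$: by the composition (Feferman--Vaught) method there is a threshold $T=T(q)$ such that the $\MSO$ $q$-type of a node is unchanged when the multiplicity of out-neighbours of any fixed type is altered above $T$. If $\cP$ were not width-bounded, some $\Diamond_{\geq m}\psi$ with $m>T$ would be needed to separate two witnesses that, by unravelling into trees and applying the threshold, share all $\MSO$ $q$-types and hence agree on $\cP$ --- a contradiction with the $\VGML$-formula separating them. Thus $\cP$ is width-bounded $\VGML$-expressible, equivalently expressible by a bounded but possibly infinite-state $\CMPA$. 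For (b) I would let the automaton compute, in round $n$ at node $w$, the $\MSO$ $q$-type of the depth-$n$ truncation $U_n(G,w)$ of the graded unravelling: there are only finitely many $q$-types, and by the composition theorem for finite trees the $q$-type of $U_{n+1}(G,w)$ is determined by the label of $w$ together with the multiset, counted only up to $T$, of the $q$-types of the $U_n(G,u)$ over out-neighbours $u$. This is exactly a bounded $\FCMPA$ transition, and one designates as accepting those $q$-types that certify membership in $\cP$, so that the recurrent ``visit an accepting state at least once'' condition fires precisely when the finite-depth witness for $\cP$ appears.

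\emph{Main obstacle.} The crux is reconciling the single fixed rank $q$ with the fact that the defining disjunction $\bigvee_i\psi_i$ uses $\GML$-formulas of unbounded modal depth, hence unbounded $\MSO$-rank: no fixed $q$ can capture the individual disjuncts, so acceptance must be read off from the evolving $q$-types of the truncations $U_n(G,w)$ rather than from the $\psi_i$ directly. Making this precise --- showing that on each finite graph the sequence of $q$-types is eventually regular and that the designated accepting pattern coincides exactly with $\cP$ --- is the heart of the proof, and it is here that $\MSO$-finiteness and the existential, reachability-like character of $\VGML$ (which rules out $\nu$-type properties such as non-reachability) must be used jointly.
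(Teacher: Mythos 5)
Your ``if'' direction is exactly the paper's argument (Proposition~\ref{proposition: GMSC < GML} followed by Theorem~\ref{omega-GML = GNN = CMPAs}) and is fine. The ``only if'' direction, however, has a genuine gap at precisely the point you yourself defer as ``the heart of the proof'': the acceptance condition. The $\MSO$ $q$-type of the depth-$n$ truncation $U_n(G,w)$, read as a plain labeled tree, cannot certify membership in $\cP$, because $\cP$ is unravelling-invariant and hence depends on the whole (generally infinite) unravelling, i.e., on what lies below level $n$; moreover a fixed rank $q$ cannot tell which leaves of $U_n$ are genuine dead ends and which are frontier nodes about to acquire arbitrary subtrees, and $q$-equivalent truncations taken at different depths admit completions with opposite behaviour. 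So the proposed set $F$ of ``certifying'' $q$-types is either unsound (some realizer of a type in $F$ has a completion outside $\cP$) or, if defined as ``every realizer of $\tau$ lies in $\cP$,'' incomplete, since the $q$-type is strictly coarser than the graded type that actually witnesses membership. What is needed is a certificate that quantifies over \emph{all} possible extensions below the frontier; this is exactly what the paper supplies via the Janin--Walukiewicz parity tree automaton (Theorem~\ref{theorem: msotoapta}) and the $k$-prefix decorations of Lemma~\ref{lem:deco}, whose condition that the frontier carry the \emph{universal set} of states is the missing mechanism. You identify the right difficulty but do not resolve it.

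Two further steps are also not established as written. First, your width-bounding argument needs a single pair of pointed finite graphs that agree on all width-$T$ graded types at \emph{every} depth yet disagree on $\cP$; failure of width-$T$ $\VGML$-expressibility only yields, for each depth $n$, a pair agreeing up to depth $n$, and extracting a uniform pair requires a compactness limit that in general leaves the class of finite graphs. Second, the claim that the $q$-type of $U_{n+1}(G,w)$ is a function of the root label and the $T$-truncated multiset of the $q$-types of the children's truncations is a Feferman--Vaught-style composition theorem for tree sums that must be stated and proved (or cited) with care, since the number of children is unbounded. These are repairable, but together with the acceptance gap they mean the proposal does not yet constitute a proof; the paper's route through tree automata and prefix decorations is a genuinely different and complete way of packaging the same underlying finiteness of $\MSO$.
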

Theorem~\ref{MSO GNN reals and GMSC} is proved later in this section.
The recent work \cite{Pfluger_Tena_Cucala_Kostylev_2024} relates to Theorem \ref{MSO GNN reals and GMSC}; see the introduction for a
discussion.   
The proof of Theorem~\ref{MSO GNN reals and GMSC} can easily be adapted to show that a property $\cP$ expressible in $\MSO$ is expressible as a constant-iteration $\GNN[\R]$ iff it is expressible in $\GML$.
This relates to \cite{DBLP:conf/iclr/BarceloKM0RS20} which shows the same for $\FO$ in place of $\MSO$. 
However, in 
this particular case the $\MSO$ version is not an actual generalization as based on \cite{DBLP:journals/tocl/ElberfeldGT16}, we  
show the following in  Appendix~\ref{appendix:MSOplusGMLequalsFO}.  
\begin{restatable}{lemma}{lemMSOplusGMLequalsFO}
\label{lem:MSOplusGMLequalsFO}
  Any property expressible in $\MSO$ and as a constant-iteration $\GNN[\R]$ is also $\mathrm{FO}$-expressible.  
\end{restatable}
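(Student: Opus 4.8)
The plan is to pass through the equivalence with infinitary graded modal logic and then exploit the collapse of $\MSO$ to $\FO$ on structures of bounded tree-depth. By Theorem~\ref{constant iteration GNN reals}, being expressible as a constant-iteration $\GNN[\R]$ is the same as being expressible in depth-bounded $\VGML$; so $\cP$ is defined by a (possibly infinite) disjunction of $\GML$-formulae, all of modal depth at most some fixed $d$. Every such disjunct, and hence $\cP$ itself, is invariant under graded bisimulation up to depth $d$, so whether $(G,w)\in\cP$ depends only on the graded modal $d$-type of $(G,w)$. The goal is then to show that only finitely much counting information in this type is relevant, i.e.\ that $\cP$ is in fact \emph{width-bounded}: a depth-bounded and width-bounded $\VGML$-property collapses to a finite disjunction of $\GML$-formulae (there are only finitely many such formulae up to equivalence), and every $\GML$-formula is $\FO$-definable since each $\Diamond_{\geq k}$ is; so width-boundedness yields $\FO$-expressibility.

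To prove width-boundedness I would reduce to trees. Each finite pointed graph $(G,w)$ is graded-$d$-bisimilar to its depth-$d$ out-unraveling $(T,r)$, which is a finite directed tree of height at most $d$ and realizes the \emph{same} graded $d$-type (the unraveling preserves, bijectively, the number of out-neighbours of each subtype). The underlying structures of such trees have tree-depth at most $d+1$. Restricting the $\MSO$-definition of $\cP$ to this class $\cT_d$ of height-$\leq d$ trees still defines $\cP$ there, and by the Elberfeld--Grohe--Tantau theorem~\cite{DBLP:journals/tocl/ElberfeldGT16}, which in particular states that on any class of structures of bounded tree-depth $\MSO$ and $\FO$ have the same expressive power, $\cP$ is $\FO$-definable on $\cT_d$. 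An $\FO$-definition on $\cT_d$ has some fixed quantifier rank $q$, and a routine composition / Ehrenfeucht--Fra\"iss\'e argument for trees shows that the rank-$q$ $\FO$-type of a height-$\leq d$ tree is already determined by counting children of each subtype up to a threshold $k=k(q,d)$. Hence on $\cT_d$ the property $\cP$ depends only on the width-$k$ truncation of the graded $d$-type.

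It remains to transfer this bound from trees to arbitrary graphs, where graded-bisimulation invariance does the work. Given two finite graphs with the same width-$k$, depth-$d$ graded type, their depth-$d$ unravelings are trees in $\cT_d$ with the same width-$k$, depth-$d$ graded type, hence agree on $\cP$ by the previous paragraph; by graded-bisimulation invariance each graph agrees with its unraveling, so the two graphs agree on $\cP$. Thus globally $\cP$ depends only on the width-$k$, depth-$d$ graded type. There are only finitely many such types, each captured by a characteristic $\GML$-formula of depth $\leq d$ and width $\leq k$, so $\cP$ is the finite disjunction of the characteristic formulae of its accepting types and is therefore $\FO$-expressible. The main obstacle is the width-boundedness step and transporting it correctly across the unraveling: the whole argument hinges on depth-bounded $\VGML$ being graded-bisimulation-invariant, which is exactly what licenses both the replacement of graphs by bounded-height (hence bounded-tree-depth) trees so that \cite{DBLP:journals/tocl/ElberfeldGT16} applies, and the lifting of the resulting finite counting bound back to all finite graphs.
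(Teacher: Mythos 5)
Your proof is correct, and it takes a route that differs from both of the arguments the paper actually gives. The paper offers two proofs. The first, like yours, starts from the Elberfeld--Grohe--Tantau collapse of $\MSO$ to $\FO$ on the bounded-tree-depth class of height-$\leq k$ trees, but then finishes \emph{syntactically}: it relativizes the quantifiers of the resulting $\FO$-formula $\vartheta(x)$ to the $k$-neighbourhood of $x$ and evaluates the relativized formula directly in the input graph. The second proof avoids Ehrenfeucht--Fra\"iss\'e reasoning entirely by importing width-boundedness from the much heavier Theorem~\ref{MSO GNN reals and GMSC} (via Janin--Walukiewicz automata): $\MSO$- plus $\GNN[\R]$-expressibility gives $\GMSC$-expressibility, hence width-bounded $\VGML$-expressibility, and a depth- and width-bounded $\VGML$-property collapses to a $\GML$-formula by the type machinery of Proposition~\ref{Formula_to_types_GML}. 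Your argument is a hybrid: you use the tree-depth collapse as in the first proof, but then extract the width bound $k(q,d)$ directly from the quantifier rank of $\vartheta$ by a composition argument on bounded-height trees, and transfer it back to arbitrary graphs through preservation of the full graded $d$-type under depth-$d$ unraveling, concluding exactly as the second proof does. This buys independence from Theorem~\ref{MSO GNN reals and GMSC}, and it also replaces the relativization step of the first proof (which identifies the relativized evaluation in $G$ with evaluation in the truncated unraveling $U_k$, a point that requires care when $G$ has cycles or merging paths) with a transfer via graded types that is manifestly sound. The only place where you owe more detail is the ``routine'' EF/composition step: you should spell out the induction on height showing that agreement of child multiplicities per equivalence class up to a threshold depending on $q$ suffices for $\equiv_q$ of the rooted trees, and that these equivalence classes are refined by width-$k$ graded types; this is standard and does go through, so the gap is expository rather than substantive.
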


Uniting Theorems~\ref{MSO GNN reals and GMSC} and~\ref{theorem: k-GNN[F] = k-FCMPA = GMSC},
we see that (recurrent) $\GNN[\R]$s and $\GNNF$s coincide
relative to $\MSO$. 
It is easy to get a similar result for constant-iteration $\GNN$s (the details are in Appendix \ref{appendix: under MSO: GNN[R] = GNN[F]}).

\begin{theorem}\label{theorem: under MSO: GNN[R] = GNN[F]}
    Let $\cP$ be a property expressible in $\MSO$. Then $\cP$ is expressible
    as a $\GNN[\R]$ if and only if it is expressible as a $\GNNF$.
    The same is true for constant-iteration $\GNN$s.
\end{theorem}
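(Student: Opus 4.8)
The plan is to obtain the recurrent biconditional as an immediate corollary of Theorems~\ref{theorem: k-GNN[F] = k-FCMPA = GMSC} and~\ref{MSO GNN reals and GMSC}, together with the trivial embedding of floats into reals, and then to treat the constant-iteration case by re-running the same three-step skeleton with the constant-iteration analogues of the two theorems. For the recurrent ``only if'' direction, suppose $\cP$ is expressible in $\MSO$ and as a $\GNN[\R]$. Theorem~\ref{MSO GNN reals and GMSC} yields a $\GMSC$-program expressing $\cP$, and Theorem~\ref{theorem: k-GNN[F] = k-FCMPA = GMSC} turns this program into an (R-simple aggregate-combine) $\GNNF$ expressing $\cP$. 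For the ``if'' direction I would first record that, \emph{without any assumption on $\cP$}, every $\GNNF$ has an equivalent $\GNN[\R]$: a $\GNNF$ $\cG = (S^d,\pi,\delta,F)$ keeps all its states, initial values, and accepting vectors inside the finite set $S^d \subseteq \Q^d \subseteq \R^d$, so one reuses the same $\pi$ and $F$ and extends the bounded aggregation and combination functions from $S^d$ to all of $\R^d$ arbitrarily, say by a fixed non-accepting default off $S^d$. Since $\GNN[\R]$s place \emph{no} restriction on these functions, the extension is an admissible $\GNN[\R]$, and because every run started from $\pi$ never leaves $S^d$, it computes exactly as $\cG$. Hence $\GNNF$-expressibility implies $\GNN[\R]$-expressibility, closing the biconditional; note the $\MSO$ hypothesis is used only in the ``only if'' direction.

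For the constant-iteration case I would reuse the identical skeleton. The ``if'' direction is again unconditional: the embedding above carries the iteration bound $N$ through unchanged, so every constant-iteration $\GNNF$ has an equivalent constant-iteration $\GNN[\R]$. For the ``only if'' direction I would chain two facts. The first is the constant-iteration, $\MSO$-relative analogue of Theorem~\ref{MSO GNN reals and GMSC}, namely that a property expressible in $\MSO$ is expressible as a constant-iteration $\GNN[\R]$ iff it is expressible in $\GML$; this follows from the adaptation of the proof of Theorem~\ref{MSO GNN reals and GMSC} already noted in the text. The second is that every $\GML$-expressible property is expressible as a constant-iteration $\GNNF$. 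Composing these gives: $\cP$ in $\MSO$ and as a constant-iteration $\GNN[\R]$ $\Rightarrow$ expressible in $\GML$ $\Rightarrow$ expressible as a constant-iteration $\GNNF$.

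The one genuinely new ingredient, and the step I expect to need the most care, is the translation from $\GML$ to constant-iteration $\GNNF$s. Here I would follow the bottom-up evaluation of Barceló et al.~\cite{DBLP:conf/iclr/BarceloKM0RS20}: a $\GML$-formula $\varphi$ of modal depth $m$ stabilizes after $m$ rounds, so running for $N = m$ iterations suffices, with one coordinate per subformula holding its truth value. The only place where floats rather than reals matter is the counting inside $\Diamond_{\ge k}$, but since $\varphi$ has finite width $w$ (the largest threshold occurring in it), a \emph{bounded} aggregation with bound at least $w$ already recovers everything needed to evaluate every $\Diamond_{\ge k}$, and all intermediate quantities are small integers and Boolean values. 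One then fixes a float system $S$ whose precision represents these finitely many values exactly, so the $\GNNF$ over $S$ reproduces the evaluation with no rounding error. This is precisely where the boundedness built into $\GNNF$s is harmless, because $\GML$ counts only up to fixed thresholds. Assembling these pieces yields the constant-iteration biconditional and completes the proof.
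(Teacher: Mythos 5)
Your proof is correct, and for the recurrent case it is essentially the paper's argument: the paper also obtains the biconditional by chaining Theorem~\ref{MSO GNN reals and GMSC} ($\GNN[\R]$ iff $\GMSC$, under $\MSO$) with Theorem~\ref{theorem: k-GNN[F] = k-FCMPA = GMSC} ($\GMSC$ iff $\GNNF$). Your extra observation---that the ``if'' direction holds unconditionally via a direct embedding of a $\GNNF$ into a $\GNN[\R]$ (same $\pi$ and $F$, arbitrary extension of the transition function off $S^d$, runs never leave $S^d$)---is sound and slightly sharper than the paper's symmetric use of the two equivalences, though the paper's route also yields this direction without really needing $\MSO$. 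For the constant-iteration ``only if'' direction you diverge: the paper goes $\MSO$ $\cap$ constant-iteration $\GNN[\R]$ $\Rightarrow$ $\FO$ via Lemma~\ref{lem:MSOplusGMLequalsFO} (which rests on the bounded-treedepth collapse of $\MSO$ to $\FO$), then invokes Theorem~4.2 of \cite{DBLP:conf/iclr/BarceloKM0RS20} to reach $\GML$, whereas you invoke the $\MSO$-relative constant-iteration analogue of Theorem~\ref{MSO GNN reals and GMSC} (stated but not proved in Section~\ref{sect:MSO}) to land at $\GML$ directly. Both routes are legitimate given what the paper establishes; the paper's is self-contained modulo the cited external results, while yours leans on an ``easily adapted'' claim. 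The final $\GML$-to-constant-iteration-$\GNNF$ step matches the paper's (Barcel\'o-style coordinates per subformula, a float system exact on the finitely many small integers that arise, boundedness harmless because thresholds are fixed). One small imprecision: the number of iterations needed for the bottom-up evaluation to stabilize is the \emph{formula depth} of $\varphi$ (nesting of all operators, as in Lemma~\ref{lemma: GMSC to simple GNN}), not the modal depth $m$; e.g.\ $\neg\neg\neg p$ has modal depth $0$ but needs three rounds. Taking $N$ to be the formula depth fixes this trivially.
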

To put Theorem~\ref{theorem: under MSO: GNN[R] = GNN[F]} into perspective, we note that by Example~\ref{example: centre-point property}, the centre-point property is expressible in $\GMSC$, but not in $\MSO$. From Theorem~\ref{theorem: k-GNN[F] = k-FCMPA = GMSC}, we thus obtain the
following.
\begin{proposition}
    There is a property $\cP$ that is expressible as a $\GNNF$ but not in $\MSO$.
\end{proposition}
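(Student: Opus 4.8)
The plan is to exhibit the \emph{centre-point property} from Example~\ref{example: centre-point property} as the witnessing property $\cP$. Only two ingredients are needed, and both are already available: that the centre-point property is expressible in $\GMSC$, and that it is not expressible in $\MSO$.

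First I would recall from Example~\ref{example: centre-point property} that the $\Pi$-program $X(0) \colonminus \Box \bot$, $X \colonminus \Diamond X \land \Box X$, with $X$ appointed, expresses the centre-point property; hence $\cP$ is expressible in $\GMSC$. Applying the equiexpressivity of $\GMSC$ and $\GNNF$s established in Theorem~\ref{theorem: k-GNN[F] = k-FCMPA = GMSC}, this immediately yields a $\GNNF$ expressing $\cP$ (indeed even an R-simple aggregate-combine $\GNNF$, though the bare $\GNNF$ claim is all the proposition requires). Then I would invoke the inexpressibility half, also recorded in Example~\ref{example: centre-point property}: by \cite{Kuusisto13}, the centre-point property is not expressible in $\MSO$. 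Combining the two gives a property expressible as a $\GNNF$ but not in $\MSO$, as required.

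The only genuine content is the $\MSO$-inexpressibility of the centre-point property, and this is not re-proved here but imported from \cite{Kuusisto13}; the argument given there for the non-graded $\MSC$ establishes $\MSO$-inexpressibility of the property itself, which is a statement about the property and is therefore independent of the defining formalism. Intuitively this rests on the fact that the property enforces a global, unbounded synchronization of path lengths across all branches emanating from the root, which cannot be captured by a monadic second-order sentence (e.g.\ via a standard argument on families of trees with deliberately mismatched branch depths). Everything else is a direct application of Theorem~\ref{theorem: k-GNN[F] = k-FCMPA = GMSC}, so I expect no further obstacle.
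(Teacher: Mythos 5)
Your proof matches the paper's argument exactly: the paper also takes the centre-point property from Example~\ref{example: centre-point property} (expressible in $\GMSC$, not in $\MSO$ by \cite{Kuusisto13}) and applies Theorem~\ref{theorem: k-GNN[F] = k-FCMPA = GMSC} to obtain a $\GNNF$ expressing it. No gaps; the approach is the same.
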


Theorem \ref{constant iteration GNN reals} shows that
constant-iteration $\GNN[\R]$s and depth-bounded $\VGML$ have the same expressive power. The proof of Proposition \ref{proposition: GMSC <
  GML} shows that already depth-bounded $\VGML$ can express properties
that $\GMSC$ cannot, in particular undecidable ones. Thus, by Theorem~\ref{theorem: k-GNN[F] = k-FCMPA = GMSC} we obtain the following.
\begin{proposition}
    There is a property expressible as a constant-iteration $\GNN[\R]$
    but not as a $\GNNF$.
\end{proposition}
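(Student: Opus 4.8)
The plan is to exploit the chain of equivalences already established together with the undecidability phenomenon from Remark~\ref{GNN uncomputable}. First I would recall that by Theorem~\ref{constant iteration GNN reals}, constant-iteration $\GNN[\R]$s have the same expressive power as depth-bounded $\VGML$, and by Theorem~\ref{theorem: k-GNN[F] = k-FCMPA = GMSC}, $\GNNF$s have the same expressive power as $\GMSC$. Hence it suffices to exhibit a single property that is expressible in depth-bounded $\VGML$ but not in $\GMSC$.

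For the witnessing property I would fix any undecidable set $U \subseteq \N$ and take $\cP$ to be the property ``the number of immediate out-neighbours lies in $U$'', exactly as in Remark~\ref{GNN uncomputable}. I would express $\cP$ in $\VGML$ by the disjunction $\bigvee_{l \in U} \Diamond_{=l}\top$, recalling that $\Diamond_{=l}\top = \Diamond_{\geq l}\top \land \neg\Diamond_{\geq l+1}\top$ holds at a node precisely when it has exactly $l$ out-neighbours. The crucial observation is that each disjunct has modal depth $1$, so the whole formula has modal depth $1$ and is therefore \emph{depth-bounded}; by Theorem~\ref{constant iteration GNN reals} it is thus expressible as a constant-iteration $\GNN[\R]$. (Its width is unbounded, but that is irrelevant, since Theorem~\ref{constant iteration GNN reals} asks only for depth-boundedness.)

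To see that $\cP$ is not expressible in $\GMSC$, I would reuse the argument from the proof of Proposition~\ref{proposition: GMSC < GML}: for any $\GMSC$-program $\Lambda$ run on a finite graph, the truth values of the iteration formulae eventually repeat cyclically, so the node property defined by $\Lambda$ is decidable. Since $U$ is undecidable, $\cP$ is undecidable and hence inexpressible in $\GMSC$; by Theorem~\ref{theorem: k-GNN[F] = k-FCMPA = GMSC} it is therefore not expressible as a $\GNNF$ either. Combining the two halves yields the claimed separation.

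The main (and essentially only) subtle point is verifying that the undecidable witness can be chosen to be \emph{depth-bounded}, since constant-iteration $\GNN[\R]$s correspond to depth-bounded $\VGML$ rather than to all of $\VGML$. This is handled precisely because the out-neighbour-counting property needs only depth-$1$ formulae; an undecidable property demanding unbounded modal depth would not, through Theorem~\ref{constant iteration GNN reals}, immediately land inside the constant-iteration fragment, so the choice of $\cP$ here is what makes the argument go through.
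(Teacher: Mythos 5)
Your proposal is correct and follows essentially the same route as the paper: the paper likewise chains Theorem~\ref{constant iteration GNN reals} (constant-iteration $\GNN[\R]$s $=$ depth-bounded $\VGML$) with Theorem~\ref{theorem: k-GNN[F] = k-FCMPA = GMSC} ($\GNNF$s $=$ $\GMSC$) and invokes the undecidable, depth-$1$ out-neighbour-counting witness from the proof of Proposition~\ref{proposition: GMSC < GML}. You merely spell out explicitly the detail the paper leaves implicit, namely that the undecidable witness has modal depth $1$ and is therefore depth-bounded.
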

\newcommand{\Amc}{\ensuremath{\mathcal{A}}\xspace}
\newcommand{\Fmc}{\ensuremath{\mathcal{F}}\xspace}
\newcommand{\Qmc}{\ensuremath{\mathcal{Q}}\xspace}
\newcommand{\Smc}{\ensuremath{\mathcal{S}}\xspace}
\newcommand{\mn}[1]{\ensuremath{\mathsf{#1}}}

We next discuss the proof of Theorem~\ref{MSO GNN reals and GMSC}.
We build upon results due to Janin and 
Walukiewicz \cite{DBLP:conf/stacs/Walukiewicz96,DBLP:conf/concur/JaninW96}, reusing an
automaton model from~\cite{DBLP:conf/stacs/Walukiewicz96} that
captures the expressivity of $\MSO$
on tree-shaped graphs. With a tree-shaped graph we mean a \emph{potentially infinite} graph that is a directed tree.
The out-degree of nodes is unrestricted (but nevertheless finite), different nodes may have different degree, and both leaves and infinite paths are
admitted in the same tree. 

We next introduce the mentioned automaton model. Although we are only going to use it on tree-shaped $\Sigma_{N}$-labeled graphs, in its full generality it is actually defined on unrestricted such graphs.
We nevertheless
call them \textbf{tree automata} as they belong to the tradition of more
classical forms of such automata. In particular, a run of an automaton
is tree-shaped, even if the input graph is not.
Formally, a \textbf{parity tree automaton (PTA)} is a tuple $\Amc
= (Q,\Sigma_N,q_0, \Delta, \Omega)$, where $Q$ is a finite
set of \textbf{states}, $\Sigma_N \subseteq \mathrm{LAB}$ is a finite
set of node label symbols, 
$q_0 \in Q$ is an \textbf{initial state},
  $
    \Delta \colon Q \times \cP(\Sigma_N) \to \Fmc
  $
  is a transition relation with \Fmc being the set of all \textbf{transition
  formulas} for \Amc defined below, and $\Omega:Q
  \rightarrow \mathbb{N}$ is a \textbf{priority function}.
  A transition formula for \Amc is a disjunction
  of FO-formulas of the form 
  \[
  \begin{array}{rl}
    \exists x_1 \cdots \exists x_k \, \big ( \mn{diff}(x_1,\dots,x_k)
    \wedge q_1(x_1) \wedge \cdots \wedge q_k(x_k) \wedge 
    \forall z (\, \mn{diff}(z,x_1,\dots,x_k) \rightarrow \psi) \big)
  \end{array} 
  \]
  where $k \geq 0$, $\mn{diff}(y_1,\dots,y_n)$ shortens an $\FO$-formula declaring $y_1,\dots,y_n$ as 
  pairwise distinct, $q_i \in Q$ are states used as unary predicates and 
  $\psi$ is a disjunction of conjunctions of atoms  $q(z)$, with $q \in Q$.
A PTA \Amc accepts a language $L(\Amc)$ consisting of (possibly infinite) graphs $G$. We have $G \in L(\Amc)$ if there is an accepting \textbf{run} of \Amc on $G$, and runs are defined in the spirit of alternating automata. While details are in Appendix \ref{appendix: MSO GNN reals and GMSC},
we mention that transition formulas govern transitions in the run: a
transition of a PTA currently visiting node $v$ in state $q$
consists of sending
copies of itself to out-neighbours of $v$, potentially in states other than $q$. 
It is not required that a copy is sent to every out-neighbour, and multiple copies (in different states) can be sent to the same out-neighbour. 
However, we must find some $(q,\lambda(v),\vartheta) \in \Delta$ such that the transition satisfies $\vartheta$ in the sense that $\vartheta$ is satisfied in the graph with one element for each out-neighbour of $v$ and unary predicates
(states of~\Amc) are interpreted according to the transition. This specific form of PTAs is interesting to us due to the following.

\begin{theorem}
\label{theorem: msotoapta}
Let $\cP$ be a property expressible in $\MSO$ and in $\VGML$. Then there is a PTA $\Amc$ such that for any
 graph $G$:\,  
 $(G, w) \in \cP$ iff
the unraveling of $G$ at $w$ is in $L(\Amc)$.
\end{theorem}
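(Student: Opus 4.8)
The plan is to connect three known ingredients: the characterization of $\MSO$ on tree-shaped graphs by alternating tree automata (Janin--Walukiewicz), the fact that $\cP$ is already expressible in $\VGML$, and the specific syntactic restriction imposed on the transition formulas of the PTA model defined above. The goal is to produce a PTA $\Amc$ whose transition formulas have the particular shape displayed in the excerpt, and which accepts exactly the unravelings at $w$ of graphs satisfying $\cP$.

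\textbf{Step 1: Start from a generic $\MSO$-automaton.} First I would invoke the Janin--Walukiewicz result that $\MSO$ over tree-shaped graphs is captured by an alternating parity automaton model. Since $\cP$ is expressible in $\MSO$, there is some such automaton $\Amc_0$ accepting, over tree-shaped inputs, exactly the tree models of $\cP$ (read at the root). Because $\cP$ is a node property and unraveling preserves $\MSO$-truth on the bisimulation class, it suffices to work with the unraveling of $G$ at $w$, which is tree-shaped; this is what licenses moving from the arbitrary input $G$ to its unraveling.

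\textbf{Step 2: Impose the bisimulation-invariant / counting-restricted transition shape.} The heart of the argument is that $\cP$ is \emph{also} expressible in $\VGML$, hence is invariant under graded bisimulation (i.e.\ under the counting bisimulation that $\GML$ and $\VGML$ respect). I would use this invariance to normalize the transition formulas of $\Amc_0$ into the restricted form shown: a disjunction of formulas
\[
\exists x_1\cdots\exists x_k\,\big(\mn{diff}(x_1,\dots,x_k)\wedge q_1(x_1)\wedge\cdots\wedge q_k(x_k)\wedge\forall z(\mn{diff}(z,x_1,\dots,x_k)\rightarrow\psi)\big).
\]
Intuitively, such a transition only asserts the \emph{existence of $k$ distinct out-neighbours} in prescribed states together with a \emph{uniform constraint $\psi$ on all remaining out-neighbours}; it cannot distinguish finer multiplicities beyond counting up to $k$ and cannot refer to any order or identity of children. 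This is exactly the expressive ceiling of graded-bisimulation-invariant behaviour. The plan is to show that, for a property that is graded-bisimulation invariant, one may replace each transition relation of $\Amc_0$ by transition formulas of this shape without changing the accepted tree language. The cleanest route is to appeal to a Janin--Walukiewicz-style \emph{bisimulation-invariance theorem for automata}: an $\MSO$-automaton recognizing a bisimulation-closed language can be converted to one whose transitions are ``modal'' (here, graded-modal because $\VGML$ counts), and the displayed form is precisely the first-order rendering of a graded-modal one-step condition. I would carry out this conversion by a normal-form argument on $\Delta$, pushing each FO transition condition through the counting bisimulation quotient and reading off the resulting counting/uniformity constraints as the displayed $\exists^{=k}$-plus-$\forall$ pattern.

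\textbf{Step 3: Verify correctness and identify the obstacle.} Finally I would check that the resulting PTA $\Amc$ satisfies the stated equivalence: for every graph $G$, $(G,w)\in\cP$ iff the unraveling of $G$ at $w$ lies in $L(\Amc)$. One direction follows because $\Amc$ recognizes (on tree inputs) the same language as $\Amc_0$, which recognizes the tree models of $\cP$; the other uses that unraveling preserves $\MSO$-satisfaction and graded bisimulation, so membership of the unraveling in $L(\Amc)$ reflects back to $(G,w)\in\cP$. The main obstacle I anticipate is Step 2: justifying that graded-bisimulation invariance of the \emph{language} can be transferred to a \emph{syntactic normalization of the automaton's transitions} into exactly this $\exists x_1\cdots\exists x_k \wedge \forall z$ form. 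Getting the counting threshold $k$ right (so that it matches the width needed to separate the relevant graded types) and handling the uniform $\forall z$ tail---ensuring $\psi$ genuinely captures the ``all other children'' constraint without reintroducing $\MSO$-strength---is the delicate part, and I expect the bulk of the real work (deferred here to the appendix) to live there.
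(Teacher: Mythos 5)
Your proposal has the right outer shell (cite Janin--Walukiewicz, use $\VGML$-expressibility somewhere), but it mislocates where that hypothesis is actually needed and leaves the step you yourself call ``the heart of the argument'' undone. Concretely, your Step 2 attacks a non-problem: the PTA model with transition formulas of the displayed $\exists x_1\cdots\exists x_k(\mathsf{diff}\wedge\cdots\wedge\forall z(\cdots))$ shape is not a graded-modal weakening that must be reached by quotienting an ``unrestricted'' $\MSO$-automaton through graded bisimulation --- it is the automaton model of Walukiewicz and Janin--Walukiewicz itself, and Theorem~9 of \cite{DBLP:conf/concur/JaninW96} already states that these automata capture \emph{all} of $\MSO$ on tree-shaped graphs. (No power is lost because a one-step transition condition only ever constrains the colored \emph{set} of children, and over pure colored sets $\MSO$-definable conditions reduce to Boolean combinations of threshold-counting statements, which is exactly what the displayed formulas express.) So no normalization via graded-bisimulation invariance is required, and $\VGML$-expressibility plays no role in obtaining the transition shape; the paper's proof is essentially a direct citation.

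Where the $\VGML$ hypothesis \emph{is} indispensable is the step you justify with a false claim: in Step 1 you assert that ``unraveling preserves $\MSO$-truth'', which fails in general ($\MSO$, indeed already $\FO$, can count the nodes of $G$, while the unraveling may be infinite). The paper uses $\VGML$-expressibility of $\cP$ exactly and only here: $\VGML$-formulas are invariant under unraveling, hence $(G,w)\in\cP$ iff $U\models\varphi(w)$ for the unraveling $U$ of $G$ at $w$. You gesture at graded bisimulation in Step 3, but as written the passage from $G$ to $U$ rests on the incorrect general preservation claim. Finally, you do not address how to pass from the $\MSO$-formula $\varphi(x)$ with a free variable to a \emph{sentence} to which the cited theorem applies; the paper does this by taking $\psi\colonequals\exists x\,(\varphi(x)\wedge\neg\exists y\,E(y,x))$, exploiting that the root of an unraveling is its unique node without an in-neighbour. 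With these repairs the argument closes, and the ``bulk of the real work'' you anticipated in Step 2 disappears entirely.
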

%
%
\begin{proof}
  Let $\varphi(x)$ be the $\MSO$-formula over $\Sigma_{N}$ that expresses $\cP$.
  Theorem~9 in \cite{DBLP:conf/concur/JaninW96} states that for every $\MSO$-sentence $\psi$,
  there is a PTA \Amc such that for every
  tree-shaped $\Sigma_{N}$-labeled graph $G$ we have that $G \in L(\Amc)$ iff $G \models \psi$. 
  To obtain a PTA for $\varphi(x)$, we start from the $\MSO$-sentence $\psi \colonequals \exists x \, (\varphi(x) \wedge \neg \exists y E(y,x))$
  and build the corresponding PTA \Amc. Then \Amc is as desired. In fact, $(G,w) \in \cP$ iff $G \models \varphi(w)$
  iff   $U \models \varphi(w)$ with $U$ the unraveling of $G$ at $w$ since $\cP$ is
  expressible by an $\VGML$-formula which is invariant under unraveling (defined in Appendix \ref{appendix:MSOplusGMLequalsFO}). Moreover,  $U \models \varphi(w)$ iff
   $U \models \psi$ iff $U \in L(\Amc)$. 
\end{proof}
Since $\GMSC$-programs are invariant under unraveling,
we may now prove Theorem~\ref{MSO GNN reals and GMSC} by considering PTAs obtained by Theorem \ref{theorem: msotoapta}
and constructing a 
$\Sigma_{N}$-program $\Lambda$ of $\GMSC$ for each such PTA $\Amc= (Q,\Sigma_N,q_0, \Delta, \Omega)$ so that the following holds: 
For every tree-shaped $\Sigma_N$-labeled graph $G$ with root~$w$, we have $G \in L(\Amc)$ iff $G, w \models \Lambda$. We then say that \Amc and~$\Lambda$ are \textbf{tree equivalent}.

For a state $q \in Q$, let $\Amc_q$ be the PTA defined
like \Amc but with $q$ as its 
initial 
state. For a tree-shaped
graph $T$, set $Q_T= \{ q \in Q \mid T \in L(\Amc_q) \}$.
We say that $\Qmc \subseteq \cP(Q)$ is \textbf{the universal set for} $P\subseteq \Sigma_N$  
 if $\Qmc$ consists precisely of the sets $Q_T$, for all
    tree-shaped graphs~$T$, whose root
        is labeled  exactly with the node label symbols in $P$.
Let $T=(V,E,\lambda)$ be a tree-shaped graph with root $w$ and 
 $k \in \N$, and let $V_k$ be the restriction of $V$ to the nodes on level at most $k$ (the root being on level~0). A \textbf{$k$-prefix decoration} of $T$ 
is a mapping $\mu:V_k \rightarrow \cP(\cP(Q))$ such that the following conditions hold: 
\begin{enumerate}
\item 
for each $S \in \mu(w)$: $q_0 \in S$;
\item 
for all $v \in V$ on the level $k$, $\mu(v)$ is  the universal set for $\lambda(v)$;
\item 
for each $v \in V$ on some level smaller than $k$
that has out-neighbours $u_1,\dots,u_n$ and all
$S_1 \in \mu(u_1),\dots,S_n \in \mu(u_n)$, $\mu(v)$ 
contains the set $S$ that 
contains a state $q \in Q$ iff 
we have $\Delta(q, \lambda(v)) = \vartheta$ such that $\vartheta$ is satisfied
in the following graph:
  \begin{itemize}
  \item 
the universe is $\{u_1,\dots, u_n\}$ and
  \item 
each unary predicate $q' \in Q$ is interpreted as
  $\{ u_i \mid q' \in S_i \}.$
  \end{itemize}
\end{enumerate}
Intuitively, a $k$-prefix decoration of $T$ represents a set of accepting runs of
\Amc on the prefix $T_k$ of $T$. As these runs start in universal sets, for each extension of $T_k$ obtained by attaching trees to nodes on level $k$, we can
find a run among the represented ones that can be extended to an
accepting run of \Amc on that extension. 
In fact $T$ is such an extension. 
The following crucial lemma is 
proved in Appendix \ref{appendix: MSO GNN reals and GMSC}.

\begin{lemma}
\label{lem:deco}
  For every tree-shaped $\Sigma_N$-labeled graph $T$:
  $T \in L(\Amc)$  if and only if there is a $k$-prefix decoration of $T$, for some $k \in \N$.
\end{lemma}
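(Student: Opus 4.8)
The plan is to route both directions through one auxiliary fact, the \emph{type recursion} for acceptance. For a tree-shaped graph with root $v$, out-neighbours $u_1,\dots,u_n$ and associated subtrees $T_1,\dots,T_n$, I would show that for every $q\in Q$,
\[
  q\in Q_T \iff \Delta(q,\lambda(v))\text{ is satisfied in }\fS,
\]
where $\fS$ has universe $\{u_1,\dots,u_n\}$ and interprets each $q'\in Q$ as $\{u_i\mid q'\in Q_{T_i}\}$. The right-hand side is exactly the set $S$ constructed in clause~3 of the definition when each parameter $S_i$ is the \emph{actual} subtree type $Q_{T_i}$. For ``$\Rightarrow$'' I would take an accepting run of $\Amc_q$ on $T$, let $\fS_0$ record which states it sends to which out-neighbour, note $\fS_0\models\Delta(q,\lambda(v))$ and that every state sent to $u_i$ lies in $Q_{T_i}$ (so $\fS_0\subseteq\fS$ predicate-wise), and conclude $\fS\models\Delta(q,\lambda(v))$ because transition formulas are \emph{positive} in the state predicates, hence monotone. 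For ``$\Leftarrow$'' I would read off from a satisfying assignment which states to send to which child, glue under each such copy in state $q'$ at $u_i$ an accepting run of $\Amc_{q'}$ on $T_i$, and observe that every infinite path of the glued run is a tail of an accepting sub-run, so the parity condition is inherited.

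Next I would introduce the \emph{minimal decoration} $\mu^{*}$ on $V_k$: on level $k$ set $\mu^{*}(v)$ to the universal set for $\lambda(v)$, and on each lower level define $\mu^{*}(v)$ to be exactly the collection of sets $S$ produced by clause~3 from tuples $(S_1,\dots,S_n)\in\mu^{*}(u_1)\times\cdots\times\mu^{*}(u_n)$. A bottom-up induction, using the type recursion together with the fact that the universal set for $P$ is precisely $\{Q_{T'}\mid T'\text{ tree-shaped with root label }P\}$, then yields
\[
  \mu^{*}(v)=\{\,Q_{T'}\mid T'\text{ extends the level-}k\text{ prefix of }T\text{ rooted at }v\,\},
\]
where an \emph{extension} attaches arbitrary tree-shaped graphs below the level-$k$ nodes (keeping their labels). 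In particular $\mu^{*}(w)$ is the set of types of all extensions of the level-$k$ prefix of $T$, and $T$ is one such extension. I would then show that any $\mu$ obeying clauses 2 and 3 satisfies $\mu(v)\supseteq\mu^{*}(v)$ everywhere (clause~2 fixes the base exactly, and clause~3 is a monotone lower bound propagating upward), while enlarging $\mu$ only makes clause~1 harder. Hence a $k$-prefix decoration exists \emph{iff} $\mu^{*}$ itself satisfies clause~1, i.e.\ iff $q_0\in S$ for every $S\in\mu^{*}(w)$, i.e.\ iff \emph{every} extension of the level-$k$ prefix of $T$ is accepted by $\Amc$.

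With this reduction the two directions fall out. For ``$\Leftarrow$'' a $k$-prefix decoration for some $k$ forces all extensions of the level-$k$ prefix to be accepted, and $T$ is one of them. For ``$\Rightarrow$'' I would invoke that $\Amc$ is the automaton from Theorem~\ref{theorem: msotoapta}: for a tree-shaped $T'$, which is its own unraveling at its root $w'$, we have $T'\in L(\Amc)$ iff $(T',w')\in\cP$ iff $T',w'\models\chi$, where $\chi=\bigvee_{\psi\in\Psi}\psi$ is the $\VGML$-formula expressing $\cP$. If $T\in L(\Amc)$ then $T,w\models\psi$ for some disjunct $\psi$ of modal depth $d$. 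Since the truth of a $\GML$-formula of modal depth $d$ at $w$ depends only on the depth-$d$ neighbourhood of $w$, taking $k:=d$ makes every extension of the level-$k$ prefix of $T$ agree with $T$ on that neighbourhood, hence satisfy $\psi$, hence $\chi$, and therefore be accepted; so $\mu^{*}$ satisfies clause~1 and a $k$-prefix decoration exists.

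The step I expect to be the main obstacle is precisely this forward direction, and it is the place where the hypothesis really bites: the equivalence is \emph{false} for general PTAs. For instance a PTA checking for an infinite monochromatic path accepts such a path although no finite prefix forces all of its extensions to be accepted, so no decoration exists; it is exactly the $\VGML$-definability (equivalently, the locality) of $L(\Amc)$ on tree-shaped graphs that rules this out. The second delicate point is the parity bookkeeping in the gluing step of the type recursion, which I would dispatch by the standard observation that prepending a finite prefix to a path never changes which priorities occur infinitely often; and throughout I rely on $Q$ being finite, so that all sets $\mu(v)$ and all products of the $\mu(u_i)$ are finite and $\mu^{*}$ is well defined.
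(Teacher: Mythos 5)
Your proof is correct, and while it lands on the same essential ingredients as the paper's, it organizes them into a genuinely different decomposition. The paper treats the two directions separately: for ``$\Leftarrow$'' it explicitly assembles an accepting run bottom-up by choosing, via universality, a set $S_v$ at each level-$k$ node with $q\in S_v$ iff $T_v\in L(\Amc_q)$ and gluing the corresponding accepting semi-runs; for ``$\Rightarrow$'' it builds the same bottom-up decoration you call $\mu^*$ but verifies Condition~1 by contradiction, tracing witnessing sets $S^m_i$ down to level $k$, attaching trees $T''$ with $Q_{T''}=S^k_i$, and using positivity of the transition formulas to show the resulting extension admits no accepting run, contradicting that every extension of $T_k$ satisfies the chosen finite-modal-depth disjunct. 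You instead factor out two reusable statements --- the type recursion $q\in Q_T\iff \fS\models\Delta(q,\lambda(v))$ with $\fS$ built from the actual subtree types, and the exact identity that $\mu^*(v)$ is precisely the set of types $Q_{T'}$ of extensions of the prefix below $v$ --- together with the minimality observation that every decoration contains $\mu^*$ while Condition~1 is antitone in $\mu(w)$. From these, both directions follow without any argument by contradiction, and the semantic meaning of a decoration (the type sets of all extensions of $T_k$) becomes explicit where the paper leaves it implicit. The shared substance is the monotonicity of transition formulas, the gluing of accepting semi-runs with the remark that a finite prefix does not disturb the parity condition, universal sets as exact type sets, and the modal depth of a satisfied $\VGML$-disjunct supplying $k$; what your version buys is modularity and a cleaner forward direction, at the price of having to state and prove the type recursion (including its parity bookkeeping) as a standalone lemma. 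Your closing diagnosis is also accurate: the forward direction is exactly where $\VGML$-definability of $L(\Amc)$ on trees is indispensable, and the equivalence fails for arbitrary PTAs.
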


Using the above, we sketch the proof of Theorem \ref{MSO GNN reals and GMSC};
the full proof is in Appendix \ref{appendix: MSO GNN reals and GMSC}. 
\begin{proof}[Proof of Theorem \ref{MSO GNN reals and GMSC}]
By Lemma \ref{lem:deco}, given a PTA \Amc obtained from Lemma \ref{theorem: msotoapta}, we get a tree equivalent $\GMSC$-program $\Lambda$ by building $\Lambda$ to accept the root of a tree-shaped graph $G$ iff $G$ has a $k$-prefix decoration for some $k$. This requires care, but is possible; the details are in Appendix \ref{appendix: MSO GNN reals and GMSC}. 
A crucial part of the proof is transferring the set of transition formulas of \Amc into the rules of $\Lambda$.
\end{proof}

\section{Conclusion}

We have characterized the expressivity of recurrent graph neural networks with floats and reals in terms of modal logics, both in general and relative to $\MSO$. 
In the general scenario, the characterization with floats was done via a finitary logic $\GMSC$ and the one with reals via an infinitary logic consisting of infinite disjunctions of $\GML$-formulae. 
In restriction to $\MSO$, GNNs with floats and reals have the same expressive power and are both characterized via $\GMSC$.
In the above results, we have not yet discussed global readouts and other acceptance conditions; these are discussed in Appendix~\ref{appendix: On accepting}, where we show that the characterization for float GNNs can be generalized for basically any acceptance condition for GNNs, so long as the acceptance condition of the logic is adjusted analogously. In particular, the characterization extends for the acceptance conditions based on node-specific fixed-points and graph size as also explored in \cite{Pfluger_Tena_Cucala_Kostylev_2024}, and another condition---similar to global fixed-points---from the inaugural papers \cite{scarcelli, gori}, where feature vectors must converge in every node. We also show that the characterization extends for global readouts by adding a counting global modality to the logic.

\textbf{Limitations:} 
For our acceptance condition (and for many others such as those based on fixed
points), training and applying a recurrent GNN brings questions of termination.
These are very important from a practical perspective, but not studied in this
paper. A particularly interesting question is whether termination can be 
learned in a natural way during the training phase.

\section*{Acknowledgements}

The list of authors on the first page is given based on the alphabetical order.
Veeti Ahvonen was supported by the Vilho, Yrjö and Kalle Väisälä Foundation of the Finnish Academy of Science and Letters. 
Damian Heiman was supported by the Magnus Ehrnrooth Foundation. 
Antti Kuusisto was supported by the Research Council of Finland consortium project Explaining AI via Logic (XAILOG), grant number 345612, and the Research Council of Finland project Theory of computational logics, grant numbers 352419, 352420, 353027, 324435, 328987; Damian Heiman was also supported by the same project, grant number 353027. 
Carsten Lutz was supported by the
Federal Ministry of Education and Research of Germany (BMBF)
and by S\"achsisches Staatsministerium f\"ur Wissenschaft,
Kultur und Tourismus in the program Center of Excellence
for AI-research ``Center for Scalable Data Analytics and Artificial Intelligence Dresden/Leipzig'', project identification
number: ScaDS.AI. Lutz is also supported by BMBF in DAAD project 57616814
(SECAI, School of Embedded Composite AI) as part of the
program Konrad Zuse Schools of Excellence in Artificial Intelligence.

\bibliography{literature}

\newpage
\appendix

\section{Appendix: Preliminaries}\label{appendix: preliminaries}

\subsection{Notions of size}\label{appendix: sizes}

We start this appendix by presenting definitions for the sizes of $\GMSC$-programs, $\GNNF$s and bounded $\FCMPA$s.

The \textbf{size} of a $\Pi$-program of $\GMSC$ is here defined as the size of $\Pi$ plus the number of occurrences of node label symbols, head predicates and logical operators in the program, with each $\Diamond_{\geq k}$ adding $k$ to the sum instead of just $1$.

The \textbf{size} of a $\GNNF$ or bounded $\FCMPA$ over $\Pi$ is defined here as follows. For $\FCMPA$s define $U \colonequals Q$, and for $\GNNF$s define $U \colonequals S^{d}$ where $S$ is the floating-point system and $d$ the dimension of the $\GNNF$. In both cases, let $n \colonequals \abs{U}$ and let $k$ be the bound. Note that the cardinality of a multiset $\cM_k(U)$ is the number $(k+1)^{n}$ of different functions of type $U \to [0;k]$. Then the size of the $\GNNF$ or bounded $\FCMPA$ is $\abs{\cP(\Pi)}$ plus $n \cdot (k+1)^{n}$, i.e., the sum of the cardinality $\abs{\cP(\Pi)}$ of its initialization function $\cP(\Pi) \to U$ and the cardinality $n \cdot (k+1)^{n}$ of its transition function $U \times \cM_{k}(U) \to U$ as look-up tables.

We note that there is no single way to define the sizes of the objects above. 
For example, counting each node label symbol, head predicate or state into the size just once may be naïve, as there may not be enough distinct symbols in practice. To account for this, we could also define that the size of each node label symbol, head predicate and state is instead $\log(k)$, where $k$ is the number of node label symbols, head predicates or states respectively. Likewise, the size of a diamond $\Diamond_{\geq \ell}$ could be considered $\log(k)$, where $k$ is the maximum $\ell$ appearing in a diamond in the program. However, this happens to not affect the size of our translation from $\GNNF$s to $\GMSC$.
As another example, defining the size of a single floating-point number as one may be naïve; one could define that the size of a single floating-point number is the length of the string that represents it in the floating-point system instead of one.

\subsection{Proof of Proposition \ref{floating-point sum bound}}\label{appendix: floating-point sum bound}

We recall Proposition \ref{floating-point sum bound} and give a more detailed proof.

\textbf{Proposition \ref{floating-point sum bound}.}
\emph{For all floating-point systems $S$, there exists a $k \in \N$ such that for all multisets $M$ over floats in $S$, we have $\mathrm{SUM}_{S}(M) = \mathrm{SUM}_{S}(M_{|k})$.}
\begin{proof}
    Consider a floating-point system $S = ((p, n, \beta), +, \cdot)$ and a multiset $M \in \cM(S)$.
    We assume $f = 0.d_{1} \cdots d_{p} \times \beta^{e} \in S$ (the case where $f = -0.d_{1} \cdots d_{p} \times \beta^{e}$ is symmetric). Assume that the sum of all numbers smaller than $f$ in $M$ in increasing order is $s \in S$. Let $s + f^{0}$ denote $s$ and let $s+f^{\ell+1}$ denote $(s + f^{\ell}) + f$ for all $\ell \in \N$. It is clear that $s+f^{\ell+1} \geq s+f^{\ell}$, and furthermore, if $s+f^{\ell+1} = s+f^{\ell}$, then $s+f^{\ell'} = s+f^{\ell}$ for all $\ell' \geq \ell$. If $s+f^{\ell+1} = s+f^{\ell}$, then we say that the sum $s + f^{\ell}$ has \emph{stabilized}.

    Let $\beta' \colonequals \beta - 1$. We may assume that $s \geq -0.\beta' \cdots \beta' \times \beta^{n}$. Since $s+f^{\ell+1} \geq s+f^{\ell}$ for each $\ell \in \N$ and $S$ is finite, there must exist some $k \in \Z_{+}$ such that either $s+f^{k} = s+f^{k-1}$ or $s+f^{k} = 0.\beta' \cdots \beta' \times \beta^{n}$ but then $s + f^{k+1} = s + f^{k}$. Clearly, $k = \abs{S}$ is sufficient for each $s$ and $f$, and thus satisfies the condition of the proposition.

    A smaller bound can be found as follows.
    Assume without loss of generality that $d_{1} \neq 0$ (almost all floating-point numbers in $S$ can be represented in this form and the representation does not affect the outcome of a sum; we consider the case for floats not representable in this way separately). 
    Now, consider the case where $s = -0.\beta' \cdots \beta' \times \beta^{e'}$ for some $e' \in [-n, n]$, as some choice of $e'$ clearly maximizes the number of times $f$ can be added. 
    Now it is quite easy to see that that $s + f^{\beta^{p} - \beta^{p-1}} \geq -0.10 \cdots 0 \times \beta^{e'}$, unless the sum stabilizes sooner. This is because there are exactly $\beta^{p} - \beta^{p-1}$ numbers in $S$ from $-0.\beta' \cdots \beta' \times \beta^{e'}$ to $-0.10 \cdots 0 \times \beta^{e'}$. After this another $2 \beta^{p-1}$ additions ensures that $s + f^{\beta^{p} + \beta^{p-1}} \geq 0$. (Consider the case where $p = 3$ and $\beta = 10$ and we add the number $f = 0.501 \times 10^{0}$ to $-0.100 \times 10^{3}$ repeatedly. We first get $-0.995 \times 10^{2}$, then $-0.990 \times 10^{2}$, etc. Each two additions of $f$ is more than adding $f' = 0.001 \times 10^{3}$ once, and $f'$ can be added exactly $\beta^{p-1}$ times to $-0.100 \times 10^{3}$ until the sum surpasses $0$, meaning that $f$ can be added at most $2 \beta^{p-1}$ times. On the other hand, if $f$ were $0.500 \times 10^{0}$ or smaller, then the sum would have never reached $-0.100 \times 10^{3}$ in the first place.)

    Let $\ell \colonequals \beta^{p} + \beta^{p-1}$.
    Let $s' \colonequals s+f^{\ell}$ and assume the sum $s'$ has not yet stabilized.
    Next, it is clear that for any $i \in [p]$ we have $s' + f^{\beta^{i}} \geq 0.10 \cdots 0 \times \beta^{e+i}$ because $d_{1} \geq 1$. 
    For $d_{1} \leq \frac{\beta}{2}$, we have that $s' + f^{\beta^{p}} = 0.10 \cdots 0 \times \beta^{e+p}$ which is where the sum stabilizes. 
    If on the other hand $d_{1} > \frac{\beta}{2}$, we have that $s' + f^{\beta^{p+1}} = 0.10 \cdots 0 \times \beta^{e+p+1}$, which is where the sum stabilizes (this is because $d_{1} > \frac{\beta}{2}$ causes the sum to round upwards until the exponent reaches the value $e+p+1$).
    Thus, we get a threshold of $\beta^{p+1} + \beta^{p} + \beta^{p-1}$.
    
    The case where $d_{1} = 0$ for all representations of $f$ is analogous; we simply choose one representation and shift our focus to the first $i$ such that $d_{i} \neq 0$ and perform the same examination. The position of $i$ does not affect the overall analysis.

    We conjecture that a smaller bound than $\beta^{p+1} + \beta^{p} + \beta^{p-1}$ is possible, but involves a closer analysis.
\end{proof}

\section{Appendix: Connecting GNNs and logics via automata}

\subsection{An extended definition for distributed automata}\label{appendix: An extended definition for distributed automata}

We give a more detailed definition for distributed automata.
Given $\Pi \subseteq \mathrm{LAB}$, a \textbf{counting message passing automaton} (or $\mathrm{CMPA}$) over $\Pi$ is a tuple $(Q, \pi, \delta, F)$, where
\begin{itemize}
    \item 
    $Q$ is an at most countable non-empty set of \textbf{states},
    \item 
    $\pi \colon \cP(\Pi) \to Q$ is an \textbf{initialization function},
    \item 
    $\delta \colon Q \times \cM(Q) \to Q$ is a \textbf{transition function}, and
    \item 
    $F \subseteq Q$ is a set of \textbf{accepting states}.
\end{itemize}
If the set of states is finite, then we say that the counting message passing automaton is finite and call it an $\mathrm{FCMPA}$. Note that in the context of $\FCMPA$s, finiteness refers specifically to the number of states. We also define a subclass of counting message passing automata: a \textbf{$k$-counting message passing automaton} ($k$-$\CMPA$) is a tuple $(Q, \pi, \delta, F)$, where $Q$, $\pi$ and $F$ are defined analogously to $\CMPA$s and the transition function $\delta$ can be written in the form $Q \times \cM_k(Q) \to Q$, i.e., for all multisets $M \in \cM(Q)$ we have $\delta(q,M) = \delta(q,M_{|k})$ for all $q \in Q$. A $k$-$\FCMPA$ naturally refers to a $k$-$\CMPA$ whose set of states is finite; this is truly a finite automaton. For any $k \in \N$, each $k$-$\CMPA$ (respectively, each $k$-$\FCMPA$) is called a \textbf{bounded} $\CMPA$ (resp., a \textbf{bounded} $\FCMPA$).

Let $G = (V, E, \lambda)$ be a $\Pi$-labeled graph. 
We define the computation of a $\CMPA$ formally. A $\mathrm{CMPA}$ (or resp. $k$-$\CMPA$) $(Q, \pi, \delta, F)$ over $\Pi$ and a $\Pi$-labeled graph $G = (V, E, \lambda)$ define a distributed system, which executes in $\omega$-rounds as follows. Each \textbf{round} $n \in \N$ defines a \textbf{global configuration} $g_n \colon V \to Q$ which essentially tells in which state each node is in round $n$. If $n = 0$ and $w \in V$, then $g_0(w) = \pi(\lambda(w))$. Now assume that we have defined $g_n$. Informally, a new state for $w$ in round $n+1$ is computed by $\delta$ based on the previous state of $w$ and the multiset of previous states of its immediate out-neighbours. More formally, let $w$ be a node and $v_1, \ldots, v_{m}$ its immediate out-neighbours (i.e., the nodes $v \in V$ s.t. $(w, v) \in E$). Let $S$ denote the corresponding multiset 
of the states $g_n(v_1), \ldots, g_n(v_{m})$. 
We define $g_{n+1}(w) = \delta(g_n(w), S)$. We say that $g_n(w)$ is the \textbf{state of $w$ at round $n$}.

A $\mathrm{CMPA}$ (or resp. $k$-$\CMPA$) \textbf{accepts} a pointed graph $(G, w)$ if and only if $g_n(w) \in F$ for some $n \in \N$. If $g_{n}(w) \in F$, 
then we say that the $\CMPA$ \textbf{accepts $(G, w)$ in round $n$}. The computation for $\FCMPA$s is defined analogously (since each $\FCMPA$ is a $\CMPA$).

\subsection{Proof of Proposition \ref{theorem: GMSC = k-FCMPA}}\label{appendix: GMSC = k-FCMPA}

Note that extended definitions for automata are in Appendix~\ref{appendix: An extended definition for distributed automata}.

First we recall Proposition \ref{theorem: GMSC = k-FCMPA}.

\textbf{Proposition \ref{theorem: GMSC = k-FCMPA}.}
\emph{Bounded $\FCMPA$s have the same expressive power as $\GMSC$.}

To prove Proposition \ref{theorem: GMSC = k-FCMPA} (in the end of this subsection), we first prove an auxiliary Lemma \ref{lemma: GMSC to GMSC[1]} that informally shows that we can translate any $\GMSC$-program into an equivalent program, where the modal depth of the terminal clauses is zero and the modal depth of the iteration clauses is at most one. Then in Lemma \ref{lemma: GMSC to k-FCMPA} we show with the help of Lemma \ref{lemma: GMSC to GMSC[1]} that for each $\GMSC$-program we can construct an equivalent bounded $\FCMPA$. Finally, we show in Lemma \ref{lemma: k-FCMPA to GMSC} that for each bounded $\FCMPA$ we can construct an equivalent $\GMSC$-program.

We start with an auxiliary result.
\begin{lemma}\label{lemma: GMSC to GMSC[1]}
    For every $\Pi$-program of $\GMSC$, we can construct an equivalent $\Pi$-program $\Gamma$ of $\GMSC$ such that the modal depth of each terminal clause of $\Gamma$ is $0$, and the modal depth of each iteration clause of $\Gamma$ is at most $1$.
\end{lemma}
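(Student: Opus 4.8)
The plan is to eliminate nesting of diamonds by introducing, for each body of the input program $\Lambda$, fresh head predicates that compute the body one modal layer per round, and then to compensate for the resulting slowdown with a modular clock so that acceptance still fires in exactly the right rounds. Concretely, I first normalise the iteration clauses. Scanning each iteration body $\psi_i$, I replace every subschema of the form $\Diamond_{\geq k}\beta$ in which $\beta$ still contains a diamond by $\Diamond_{\geq k} Y_\beta$ for a fresh head predicate $Y_\beta$, add an iteration clause $Y_\beta \colonminus \beta$, and recurse into $\beta$. Carrying this down to the innermost diamonds leaves every iteration body a Boolean combination of node label symbols, head predicates, and formulas $\Diamond_{\geq k}\alpha$ with $\alpha$ diamond-free; hence every iteration clause now has modal depth at most $1$. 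The cost is that each layer of naming incurs a one-round delay, so the predicate computing a body reflects the values of the original head predicates delayed by the modal depth of that body.

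Second, I would make these delays uniform. Let $D$ be the maximal modal depth occurring among the bodies of $\Lambda$. Since a conjunction may combine branches of different modal depth, I pad the shallower branches with ``copy'' predicates $P' \colonminus P$ (modal depth $0$) until every named subschema is evaluated with the common delay $D$. With uniform delay, one transition $g \mapsto \Psi(g)$ of the configuration of $\Lambda$ is faithfully simulated by $D+1$ consecutive rounds of the new program $\Gamma$ (the extra round being the write-back of the updated head predicates). In particular the genuine configurations $g(0), g(1), g(2), \dots$ of $\Lambda$ reappear as configurations of $\Gamma$ at the checkpoint rounds $\equiv 0 \pmod{D+1}$, after a fixed initial offset.

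Third, I would handle the terminal clauses and acceptance. The terminal bodies $\varphi_i$ are pure $\GML$-formulas, so the same peeling computes them: their base atoms (node label symbols) are supplied by modal-depth-$0$ terminal clauses, and the depth-$\leq 1$ iteration clauses assemble $\varphi_i$ over the first $D$ rounds, seeding the checkpoint that represents $g(0)$. To prevent spurious acceptance at the intermediate (non-checkpoint) rounds, I build a modulo-$(D+1)$ counter from depth-$0$ predicates $C_0,\dots,C_D$ (with $C_0(0)\colonminus\top$, $C_j(0)\colonminus\bot$ for $j\neq 0$, and $C_{(j+1)\bmod (D+1)}\colonminus C_j$) whose phase is aligned to the checkpoint rounds, and I take as the single appointed predicate a fresh $A$ with iteration clause $A \colonminus C_0 \wedge \bigvee_{X_i\in\cA} X_i$. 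Thus $A$ can become true only at a checkpoint and only when some originally appointed predicate holds there, and all clauses of $\Gamma$ respect the required depth bounds.

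The main obstacle is the timing bookkeeping. One must prove, by induction on the layer structure, that after padding every named subschema is evaluated with exactly delay $D$; that the terminal phase seeds precisely the congruence class of checkpoint rounds carrying the true $\Lambda$-run (the other residue classes evolve as independent, irrelevant runs and must be fenced off by the clock); and that the offsets are chosen so that $A$ is true at some round of $\Gamma$ if and only if some $X_i\in\cA$ holds in some genuine configuration $g(s)$ of $\Lambda$, i.e. if and only if $\Lambda$ accepts. Getting these constants and phase alignments consistent is the delicate part; once they are fixed, equivalence of $\Gamma$ and $\Lambda$ follows by a straightforward induction over rounds.
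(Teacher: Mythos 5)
Your construction is correct in outline but takes a genuinely different route from the paper's. The paper also slows the program down by a factor of the depth and uses a modular clock $T_1,\dots,T_D$, but it \emph{gates every update}: each fresh predicate for a depth-$j$ subschema has an iteration clause of the form $X_{j} \colonminus (T_j \land \theta) \lor (\neg T_j \land X_{j})$, so it recomputes only in its designated phase and \emph{latches} its value otherwise. Because all leaves (original head predicates, node labels) are then stable throughout a period, branches of unequal modal depth need no padding, and the outermost predicate holds the correct value of $X^n$ at \emph{every} round of the $n$th period, so acceptance is preserved without any extra guard. Your version instead runs a free pipeline ($Y_\beta \colonminus \beta$ updated every round), which forces you to (i) pad unequal branches with copy predicates so that all leaf-to-root delays equal a common $D$, (ii) observe that the configurations then split into $D{+}1$ independent residue classes of which only one carries the true run, and (iii) gate only the appointed predicate by the clock to fence off the spurious classes. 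Both mechanisms work; the paper's latching buys freedom from padding and from the residue-class analysis, while yours avoids the self-loop disjuncts and is closer to a hardware-pipelining picture. The one place where your sketch needs a further ingredient you do not name explicitly is the terminal phase: since the fresh predicates' iteration clauses are fixed, the same pipeline cannot compute the terminal bodies $\varphi_i$ during the initial rounds and the iteration bodies $\psi_i$ afterwards; you need separate pipeline predicates for the subschemata of the $\varphi_i$ plus an initialization flag (as in the paper's later Lemma on formula-depth balancing, via $I(0)\colonminus\bot$, $I\colonminus\top$) so that the head predicates read from the terminal pipeline at the first checkpoint and from the iteration pipeline thereafter; you also need $A(0)\colonminus\bot$ and a clause latching or recomputing the original head predicates consistently with the chosen period. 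These are exactly the ``constants and phase alignments'' you flag as delicate, and they are fixable, so I regard this as a correct alternative proof sketch at roughly the same level of detail as the paper's own example-based argument.
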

\begin{proof}
    The proof is similar to the proof of Theorem $11$ in \cite{dist_circ_mfcs} (or the proof of Theorem 5.4 in \cite{ahvonen2023descriptive}). We consider an example and conclude that the strategy mentioned in the cited papers can be generalized to our framework.

    For simplicity, in the example below we consider a program where the modal depth of each terminal clause is zero, since it is easy to translate any program of $\GMSC$ into an equivalent $\GMSC$-program, where the modal depth of the terminal clauses is zero. Then from the program of $\GMSC$, where the modal depth of each terminal clause is zero, it is easy to obtain an equivalent $\GMSC$-program where the modal depth of each terminal clause is zero and the modal depth of each iteration clause is at most one.
    
    Consider a $\{q\}$-program $\Lambda$, with the rules $X(0) \colonminus \bot$, $X \colonminus \Diamond_{\geq 3}( \neg \Diamond_{\geq 2} \Diamond_{\geq 1} X \land \Diamond_{\geq 3} q)$. The modal depth of the iteration clause of $\Lambda$ is $3$. The program $\Lambda$ can be translated into an equivalent program of $\GMSC$ where the modal depth of the iteration clauses is at most one as follows. 
    First, we define the following subprogram called a ``clock'':
    \[
    \begin{aligned}
        &T_1 (0) \colonminus \top \quad&&T_1 \colonminus T_3 \\
        &T_2 (0) \colonminus \bot \quad&&T_2 \colonminus T_1 \\
        &T_3 (0) \colonminus \bot \quad&&T_3 \colonminus T_2. \\
    \end{aligned}
    \]
    We split the evaluation of each subschema between corresponding head predicates $X_{1,1}$, $X_{1,2}$, $X_2$ and $X_3$ and define their terminal clauses and iteration clauses as follows. The body for each terminal clause is $\bot$ and the iteration clauses are defined by
    \begin{itemize}
        \item $X_{1,1} \colonminus (T_1 \land \Diamond_{\geq 1} X_3) \lor (\neg T_1 \land X_{1,1})$,
        \item $X_{1,2} \colonminus (T_1 \land \Diamond_{\geq 3} q) \lor (\neg T_1 \land X_{1,2})$,
        \item $X_{2} \colonminus (T_2 \land \Diamond_{\geq 2} X_{1,1}) \lor (\neg T_2 \land X_{2})$,
        \item $X_3 \colonminus (T_3 \land \Diamond_{\geq 3}(\neg X_{2} \land X_{1,1})) \lor (\neg T_3 \land X_3)$.
    \end{itemize}
    The appointed predicate of the program is $X_3$.

    Let us analyze how the obtained program works. The program works in a periodic fashion: a single iteration round of $\Lambda$ is simulated in a $3$ step period by the obtained program. The clock (i.e. the head predicates $T_1$, $T_2$ and $T_3$) makes sure that each level of the modal depth is evaluated once during each period in the correct order.
    For example, if $T_2$ is true, then the truth of $X_2$ depends on the truth of $\Diamond_{\geq 2} X_{1,1}$ and when $T_2$ is false, then the truth of $X_2$ stays the same. The head predicate $X_3$ essentially simulates the truth of $X$ in $\Lambda$ in the last round of each period. It is easy to show by induction on $n \in \N$ that for all pointed $\{q\}$-labeled graphs $(G, w)$ and for all $k \in [0;2]$: 
    $G, w \models X^n$ iff $G, w \models X_3^{n3 + k}$.
\end{proof}

With Lemma \ref{lemma: GMSC to GMSC[1]}, it is easy to construct an equivalent bounded $\FCMPA$ for any $\GMSC$-program.
\begin{lemma}\label{lemma: GMSC to k-FCMPA}
    For each $\Pi$-program of $\GMSC$, we can construct an equivalent bounded $\FCMPA$ over $\Pi$. 
\end{lemma}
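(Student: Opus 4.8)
The plan is to combine Lemma~\ref{lemma: GMSC to GMSC[1]} with the construction already sketched for Proposition~\ref{theorem: GMSC = k-FCMPA}. First I would apply Lemma~\ref{lemma: GMSC to GMSC[1]} to replace the given $\GMSC$-program by an equivalent program $\Gamma$, over some head predicate set $\cT'$ with appointed predicates $\cA$, whose terminal clauses have modal depth $0$ and whose iteration clauses have modal depth at most $1$. The point of this normal form is twofold: the truth of any modal-depth-$0$ schema at a node depends only on which symbols of $\Pi \cup \cT'$ hold there, and the body $\psi_i$ of each iteration clause is a Boolean combination of node label symbols, head predicates, and subschemata of the shape $\Diamond_{\geq k}\theta$ with $\theta$ of modal depth $0$.

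The states of the $\FCMPA$ will be $Q = \cP(\Pi \cup \cT')$, which is finite because $\Pi$ and $\cT'$ are finite. A state $S$ is intended to record, at a node $w$ in round $n$, exactly the node label symbols true at $w$ together with the head predicates $X$ with $G,w \models X^n$. I would define the initialization function by sending $P \subseteq \Pi$ to $P \cup \{\,X_i \mid \varphi_i \text{ holds given } P\,\}$; this is well-defined precisely because the terminal bodies $\varphi_i$ have modal depth $0$. For the transition function I evaluate each iteration body $\psi_i$ at $w$ from the current state of $w$ and the multiset $M$ of states of its out-neighbours: node label symbols and head predicates are read off the current state, while each subschema $\Diamond_{\geq k}\theta$ is evaluated by counting, in $M$, the states that satisfy the modal-depth-$0$ schema $\theta$ and testing whether this count is at least $k$. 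The new state is the (unchanged) node label part of $w$ together with the set of $X_i$ whose body $\psi_i$ evaluates to true. Finally I set the accepting states to be those $S \in Q$ with $S \cap \cA \neq \emptyset$.

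Boundedness is the crux, and is where the modal-depth-$1$ normal form pays off. Let $K$ be the width of $\Gamma$, i.e.\ the largest $k$ occurring in a diamond $\Diamond_{\geq k}$ in $\Gamma$. For any $k \leq K$ and any modal-depth-$0$ schema $\theta$, the count $\sum_{S \models \theta} M(S)$ is at least $k$ if and only if $\sum_{S \models \theta} M_{|K}(S)$ is at least $k$: if some $S \models \theta$ already has $M(S) \geq K \geq k$ then both sides are $\geq k$, and otherwise the two sums agree exactly. Hence the value of $\delta$ depends on $M$ only through its $K$-projection $M_{|K}$, so $\delta$ can be written as a function $Q \times \cM_{K}(Q) \to Q$ and the automaton is a bounded $\FCMPA$.

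It then remains to verify correctness, which I would do by induction on the round $n$: the state of the $\FCMPA$ at any node $w$ in round $n$ equals $\lambda(w) \cup \{\,X_i \mid G,w \models X_i^n\,\}$. The base case is immediate from the definition of the initialization function and the terminal clauses of $\Gamma$, and the inductive step uses the modal-depth-$\leq 1$ structure of the iteration bodies together with the fact that the truth of each $\Diamond_{\geq k}\theta$ at $w$ depends only on the multiset of round-$n$ states of the out-neighbours of $w$, which by the induction hypothesis is exactly $M$. Acceptance of $\Gamma$ then matches acceptance of the $\FCMPA$ by the choice of $F$. The main obstacle is precisely the boundedness argument above, namely making explicit that capping neighbour multiplicities at the finite width $K$ loses no information when evaluating the diamonds; the correctness induction itself is routine once the normal form from Lemma~\ref{lemma: GMSC to GMSC[1]} is in hand.
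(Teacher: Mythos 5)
Your proposal is correct and follows essentially the same route as the paper: normalize via Lemma~\ref{lemma: GMSC to GMSC[1]}, take states $\cP(\Pi \cup \cT')$, read off terminal clauses for initialization, evaluate the depth-$\leq 1$ iteration bodies from the current state and the neighbour multiset, accept on states meeting an appointed predicate, and verify by induction on rounds. Your explicit argument that the transition depends only on the $K$-projection of the multiset (with $K$ the width of $\Gamma$) is a welcome clarification of a point the paper leaves implicit in declaring the automaton a $k$-$\FCMPA$.
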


\begin{proof}
    Let $\Lambda$ be a $(\Pi, \cT)$-program of $\GMSC$. By Lemma \ref{lemma: GMSC to GMSC[1]} we obtain an equivalent $(\Pi, \cT')$-program $\Gamma$ where the modal depth is $0$ for each terminal clause and at most $1$ for each iteration clause.  Let $k$ be the width of $\Gamma$. We create an equivalent $k$-$\FCMPA$ $\cA_{\Gamma}$ as follows.
    
    We create a state for each subset $q \subseteq \Pi \cup \cT'$; these form the state set $Q = \cP(\Pi \cup \cT')$.
    
    We formulate the initialization function $\pi$ as follows: $\pi(P) = q$ if $\Pi \cap q = P$ and for each head predicate $X \in \cT'$ (with the terminal clause $X(0) \colonminus \varphi$) we have that $X \in q$ if and only if $\varphi$ is true when exactly the node label symbols in $P$ are true. (Here it is vital that the terminal clauses have modal depth $0$.)

    Before we define the transition function, we define an auxiliary relation $\Vdash$ for schemata with modal depth at most $1$. 
    Let $N \in \cM(Q)$ be a multiset of states and let $q$ be a state. 
    Given a $(\Pi, \cT)$-schema $\psi$ with modal depth at most $1$, we define the relation $(q, N) \Vdash \psi$ recursively as follows: $(q, N) \Vdash \top$ always, 
    $(q, N) \Vdash p$ iff $p \in q$, $(q, N) \Vdash X$ iff $X \in q$,
    $(q, N) \Vdash \neg \varphi$ iff $(q, N) \not\Vdash \varphi$, $(q, N) \Vdash \varphi \land \theta$ iff $(q, N) \Vdash \varphi$ and $(q, N) \Vdash \theta$, and $(q, N) \Vdash \Diamond_{\geq \ell} \varphi$ iff there is a set $Q' \subseteq Q$ of states such that $\sum_{q' \in Q'} N(q') \geq \ell$ and $(q', \emptyset) \Vdash \varphi$ for all $q' \in Q'$ (note that $\varphi$ has modal depth $0$).

    Now we shall formulate the transition function $\delta$ as follows. Let $N \in \cM(Q)$ be a multiset of states and let $q$ be a state. 
    For each node label symbol $p \in \Pi$, $p \in \delta(q, N)$ iff $(q, N) \Vdash p$ (i.e. $p \in q$).
    For each head predicate $X \in \cT'$ with the iteration clause $X \colonminus \psi$ we have that $X \in \delta(q, N)$ if and only if $(q, N) \Vdash \psi$. (Here it is crucial that the iteration clauses have modal depth at most $1$.)

    The set of accepting states is defined as follows. If $X \in q$ for some appointed predicate $X$ of $\Gamma$ then $q \in F$; otherwise $q \notin F$.

    If $G$ is $\Pi$-labeled graph and $v$ is a node in $G$, then we let $v^n$ denote the state of $\cA_{\Gamma}$ at $v$ in round $n$. Moreover, if $N$ is the set of out-neighbours of $v$, we let $N^n_v$ denote the multiset $\{\{ u^n \mid u \in N \}\}$.
    Then we prove by induction on $n \in \N$ that for any $(\Pi, \cT')$-schema $\varphi$ of modal depth $1$ and for each pointed $\Pi$-labeled graph $(G, w)$, we have $(w^{n}, N^{n}_w) \Vdash \varphi$ iff $G, w \models \varphi^{n}$.

    If $n = 0$ we prove by induction on the structure of $\varphi$ that $(w^{0}, N^{0}_w) \Vdash \varphi$ iff $G, w \models \varphi^{0}$.
    \begin{itemize}
        \item Case $\varphi = p \in \Pi$: Trivial, since $(w^{0}, N^{0}_w) \Vdash p$ iff $p \in w^0$ iff $G, w \models p$ iff $G, w \models p^0$.
        \item Case $\varphi = X \in \cT'$: Let $\psi_X$ be the body of the terminal clause of $X$. Now $(w^{0}, N^{0}_w) \Vdash X$ iff $X \in w^0$ iff $G, w \models \psi_X$ iff $G, w \models X^0$.
    \end{itemize}
    Now, assume that the induction hypothesis holds for $(\Pi, \cT')$-schemata $\psi$ and $\theta$ with modal depth at most $1$.
    \begin{itemize}
        \item Case $\varphi \colonequals \neg \psi$: Now, $(w^{0}, N^{0}_w) \Vdash \neg \psi$ iff $(w^{0}, N^{0}_w) \not\Vdash \psi$. By the induction hypothesis, this is equivalent to $G, w \not\models \psi^0$ which is equivalent to $G, w \models \neg \psi^0$. 
        \item Case $\varphi \colonequals \psi \land \theta$: Now, $(w^{0}, N^{0}_w) \Vdash \psi \land \theta$ iff $(w^{0}, N^{0}_w) \Vdash \psi$ and $(w^{0}, N^{0}_w) \Vdash \theta$. By the induction hypothesis, this is equivalent to $G, w \models \psi^0$ and $G, w \models \theta^0$, which is equivalent to $G, w \models \psi^0 \land \theta^0$. 
        \item Case $\varphi \colonequals \Diamond_{\geq \ell} \psi$:
        First it is easy to show that for every $(\Pi, \cT')$-schema $\psi$ of modal depth $0$ and for every state $q$ of $\cA_{\Gamma}$ that $(q, \emptyset) \Vdash \psi$ iff $(q, N) \Vdash \psi$ for every multiset $N$ of states of $\cA_{\Gamma}$.
        Now, $(w^{0}, N^{0}_w) \Vdash \Diamond_{\geq \ell} \psi$ iff there is a set $Q' \subseteq Q$ of states such that $\sum_{q' \in Q'} N^0_w(q') \geq \ell$ and $(q', \emptyset) \Vdash \psi$ for every $q' \in Q'$ iff there is a set $N$ of out-neighbours of $w$ such that $\abs{N} \geq \ell$ and $(u^0, N^0_u) \Vdash \psi$ for every $u \in N$. By the induction hypothesis this is true iff there is 
        a set $N$ of out-neighbours of $w$ such that $\abs{N} \geq \ell$ and $G, u \models \psi^0$ for every $u \in N$, which is equivalent to $G, w \models \Diamond_{\geq \ell} \psi^0$.    
    \end{itemize}

    Now, assume that the induction hypothesis holds for $n$ and let us prove the case for $n +1$. Similarly to the case $n = 0$, we prove by induction on structure of $\varphi$ that $(w^{n+1}, N^{n+1}_w) \Vdash \varphi$ iff $G, w \models \varphi^{n+1}$. The proof by induction is almost identical, except in the case where $\varphi = X \in \cT'$.
    Let $\chi_X$ be the body of the iteration clause of $X$. Then $(w^{n+1}, N^{n+1}_w) \Vdash X$ iff $X \in w^{n+1}$ iff $(w^n, N^n_w) \Vdash \chi_X$. By the induction hypothesis, this is equivalent to $G, w \models \chi_X^{n}$, which is equivalent to $G, w \models X^{n+1}$.

    Thus, the result above implies that for all pointed $\Pi$-labeled graphs $(G, w)$, the automaton $\cA_{\Gamma}$ is in the state $q$ at $w$ in round $n$ iff $G, w \models X^n$ and $G, w \models p$ for every $X, p \in q$, and $G, w \not\models X^n$ and $G, w \not\models p$ for every $p, X \notin q$. Therefore, by the definition of the set of accepting states we see that $\cA_{\Gamma}$ is equivalent to $\Gamma$.
\end{proof}

The converse direction is easier. Note that the definitions for sizes of $\GMSC$-programs and bounded $\FCMPA$s can be found in Appendix \ref{appendix: sizes}.
\begin{lemma}\label{lemma: k-FCMPA to GMSC}
    For each bounded $\FCMPA$ over $\Pi$, we can construct an equivalent $\Pi$-program of $\GMSC$. The size of the constructed $\GMSC$-program is polynomial in the size of the $\FCMPA$.
\end{lemma}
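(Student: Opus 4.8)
The plan is to build a $\GMSC$-program $\Lambda$ that directly simulates the given bounded $\FCMPA$ $\cB = (Q, \pi, \delta, F)$, where by boundedness $\delta$ can be written as $\delta \colon Q \times \cM_k(Q) \to Q$ for some $k$. I would introduce one head predicate $X_q$ for each state $q \in Q$, aiming for the invariant that $X_q$ is true at a node $w$ in round $n$ exactly when $\cB$ is in state $q$ at $w$ in round $n$ (i.e.\ $g_n(w) = q$), and let the appointed predicates be $\cA = \{\, X_q \mid q \in F \,\}$, so that $\Lambda$ accepts $(G,w)$ iff $\cB$ visits an accepting state at $w$. Since $\pi \colon \cP(\Pi) \to Q$ is a function, I set the terminal clause $X_q(0) \colonminus \varphi_q$ with
\[
\varphi_q \colonequals \bigvee_{P \subseteq \Pi,\ \pi(P) = q} \Big( \bigwedge_{p \in P} p \ \land \bigwedge_{p \in \Pi \setminus P} \neg p \Big),
\]
so that $G,w \models \varphi_q$ iff $\pi(\lambda(w)) = q$. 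For the iteration clauses I exploit that each out-neighbour lies in exactly one state, so the number of out-neighbours in state $r$ is counted by graded diamonds on $X_r$. For each target $k$-multiset $M \in \cM_k(Q)$ I write the subschema
\[
\theta_M \colonequals \bigwedge_{r \in Q,\ M(r) < k} \Diamond_{= M(r)} X_r \ \land \bigwedge_{r \in Q,\ M(r) = k} \Diamond_{\geq k} X_r,
\]
which captures the $k$-projection semantics exactly: $\Diamond_{=j} X_r$ forces precisely $j$ out-neighbours in state $r$ when $j < k$, while $\Diamond_{\geq k} X_r$ forces ``$k$ or more''. I then set $X_q \colonminus \psi_q$ with
\[
\psi_q \colonequals \bigvee_{(q', M) \in Q \times \cM_k(Q),\ \delta(q', M) = q} \big( X_{q'} \land \theta_M \big).
\]

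Correctness would follow by a routine induction on the round $n$, showing that for every pointed $\Pi$-labeled graph $(G,w)$ and every $q \in Q$ we have $G,w \models X_q^n$ iff $g_n(w) = q$. The base case is immediate from the definition of $\varphi_q$ and the fact that $\pi$ is a function, so exactly one $X_q^0$ holds at each node. In the inductive step the hypothesis gives that for each out-neighbour $u$ the unique $r$ with $G,u \models X_r^n$ is $g_n(u)$; hence the capped multiset $\{\!\{\, g_n(u) \mid (w,u) \in E \,\}\!\}_{|k}$ equals the unique $M$ with $G,w \models \theta_M^n$, and $G,w \models X_{q'}^n$ iff $g_n(w) = q'$. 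Since $\delta$ is a function, $\psi_q^n$ holds at $w$ iff $\delta\big(g_n(w), \{\!\{\, g_n(u) \mid (w,u) \in E \,\}\!\}_{|k}\big) = q$, i.e.\ iff $g_{n+1}(w) = q$, closing the induction and yielding equivalence of $\Lambda$ and $\cB$.

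For the size bound, write $m = |Q|$ and recall from Appendix~\ref{appendix: sizes} that the size of $\cB$ is $|\cP(\Pi)| + m \cdot (k+1)^m$. The terminal clauses contribute $O(|\cP(\Pi)| \cdot |\Pi|)$, which is polynomial in $|\cP(\Pi)|$ because each label-set $P$ occurs in exactly one $\varphi_q$ and $|\Pi| = \log_2 |\cP(\Pi)|$. For the iteration clauses, the disjuncts of all the $\psi_q$ together range over the $m \cdot (k+1)^m$ triples $(q', M, \delta(q',M))$, and each subschema $\theta_M$ has size $O(m k)$, since each conjunct $\Diamond_{= j} X_r$ or $\Diamond_{\geq k} X_r$ costs $O(k)$ under the convention that $\Diamond_{\geq k}$ adds $k$ to the size. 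Hence the iteration clauses have total size $O(m^2 k (k+1)^m)$; as $k \leq (k+1)^m$ for $m \geq 1$, this is at most $\big(m (k+1)^m\big)^2$, i.e.\ quadratic in the size of $\cB$.

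The correctness induction is essentially mechanical, so I expect the only point demanding genuine care to be the size bookkeeping: matching the counting subschemata $\theta_M$ to the $k$-projection without overcounting, and verifying that the enumeration of $\cM_k(Q)$ by the $(k+1)^m$ factor---which is precisely the factor appearing in the $\FCMPA$ size definition---keeps the blow-up polynomial rather than exponential. This is the main (if modest) obstacle, and the computation above shows it resolves to an at most quadratic blow-up.
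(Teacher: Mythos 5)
Your proposal is correct and follows essentially the same route as the paper's proof: one head predicate per state, terminal clauses enumerating the label sets mapped to each state by $\pi$, graded diamonds $\Diamond_{=j}$ and $\Diamond_{\geq k}$ encoding the $k$-projected multiset, and the same quadratic size accounting. The only difference is cosmetic --- you write the iteration clause as a disjunction $\bigvee_{(q',M)}(X_{q'}\land\theta_M)$ where the paper uses the equivalent implication form $\bigwedge_{q'}(X_{q'}\rightarrow\bigvee_S\varphi_S)$.
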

\begin{proof}
    The proof below is analogous to the proof of Theorem $1$ in \cite{Kuusisto13}.
    Let $\Pi$ be a 
    set of node label symbols, and let $\cA = (Q, \pi, \delta, F)$ be a $k$-$\FCMPA$ over $\Pi$. 
    For each state $q \in Q$, we define a head predicate $X_q$ and the corresponding rules as follows. The terminal clause for $X_q$ is defined by
    \[
    X_q (0) \colonminus \bigvee_{P \subseteq \Pi,\; \pi(P) = q} \Big( \bigwedge_{p \in P} p \land \bigwedge_{p \in \Pi\setminus P} \neg p \Big).
    \]
    To define the iteration clause for each $X_q$ we first define some auxiliary formulae. Given $q, q' \in Q$, we let $M_k(q, q') = \{\, S \in \cM_k(Q) \mid \delta(q, S) = q' \,\}$. Notice that the number of multisets $S$ specified by the set $M_k(q, q')$ is finite. Now, for each $S \in M_k(q, q')$ we define
    \[
    \varphi_S \colonequals \bigwedge_{q \in Q,\, S(q) = n,\, n < k} \Diamond_{=n} X_{q} \land \bigwedge_{q \in Q,\, S(q) = k} \Diamond_{\geq k} X_{q}.
    \]
    Then the iteration clause for $X_q$ is defined by
    $
    X_q \colonminus \bigwedge_{q' \in Q} \Big( X_{q'} \rightarrow \bigvee_{S \in M_k(q', q)} \varphi_S \Big). 
    $
    It is easy to show by induction on $n \in \N$ that for every pointed $\Pi$-labeled graph $(G, w)$ it holds that $G, w \models X_q^n$ if and only if
    $w$ is in the state $q$ in round $n$ (see \cite{Kuusisto13} for the details; the only difference is swapping sets for multisets). 
    The size of $\cA$ is $\abs{\cP(\Pi)} + \abs{Q} (k+1)^{\abs{Q}}$ by definition. The size of all terminal clauses of the constructed program is altogether $\abs{Q} + \ordo(\abs{\Pi} \cdot \abs{\cP(\Pi)})$, as there are $\abs{Q}$ terminal clauses that altogether encode each element of $\cP(\Pi)$ and each encoding is of size $\ordo(\abs{\Pi})$. The size of all iteration clauses of the constructed program is altogether $\ordo(\abs{Q}^{2} + k \abs{Q}^{2} (k+1)^{\abs{Q}}) = \ordo(k \abs{Q}^{2} (k+1)^{\abs{Q}})$, as there are $\abs{Q}$ iteration clauses with $\abs{Q}$ conjuncts each, and they altogether encode every element of $\cM_{k}(Q)$ exactly $\abs{Q}$ times and each encoding is of size $\ordo(k \abs{Q})$. Note that the cardinality of $\cM_k(Q)$ is $(k+1)^{\abs{Q}}$, i.e., the number of functions of type $Q \to [0;k]$. Thus, the size of the program is 
    \[
        \abs{\Pi} + \abs{Q} + \ordo(\abs{\Pi} \cdot \abs{\cP(\Pi)}) + \ordo(k \abs{Q}^{2} (k+1)^{\abs{Q}}) = \ordo(\abs{\Pi} \cdot \abs{\cP(\Pi)} + k \abs{Q}^{2} (k+1)^{\abs{Q}})
    \]
    which is clearly less than $\ordo((\abs{\cP(\Pi)} + \abs{Q} (k+1)^{\abs{Q}})^2)$ and therefore polynomial in the size of $\cA$. If head predicates and states were encoded in binary, the result would still be polynomial, as it would simply add a factor of $\log(\abs{\Pi})$ and $\log(\abs{Q})$ respectively to the sizes of terminal and iteration clauses.
\end{proof}

\begin{proof}[Proof of Proposition \ref{theorem: GMSC = k-FCMPA}]
    Note that the proof uses auxiliary results that are introduced in this subsection.
    Lemma \ref{lemma: GMSC to k-FCMPA} shows that for every $\Pi$-program of $\GMSC$ we can construct an equivalent bounded $\FCMPA$ over $\Pi$. Lemma \ref{lemma: k-FCMPA to GMSC} shows that for every bounded $\FCMPA$ over $\Pi$ we can construct an equivalent $\Pi$-program of $\GMSC$. Thus, we conclude that $\GMSC$ has the same expressive power as bounded $\FCMPA$s.
\end{proof}

\subsection{Proof of Theorem \ref{theorem: k-GNN[F] = k-FCMPA = GMSC}}\label{sec appendix: k-GNNF = k-FCMPA = GMSC}

First we recall Theorem \ref{theorem: k-GNN[F] = k-FCMPA = GMSC}.

\textbf{Theorem \ref{theorem: k-GNN[F] = k-FCMPA = GMSC}.} 
\emph{The following have the same expressive power: $\GNNF$s, $\GMSC$, and R-simple aggregate-combine $\GNNF$s.}

In the end of this appendix section, we formally prove Theorem \ref{theorem: k-GNN[F] = k-FCMPA = GMSC}. Informally, this is done in the following steps. First, we note that it is easy to obtain an equivalent bounded $\FCMPA$ for a given $\GNNF$ and by Proposition \ref{theorem: GMSC = k-FCMPA} we can translate the bounded $\FCMPA$ into an equivalent $\GMSC$-program. The converse direction is much more interesting. In Lemma \ref{lemma: GMSC to simple GNN} we show that for each $\GMSC$-program we can construct an equivalent R-simple aggregate-combine $\GNNF$.

Before we show how to obtain an equivalent R-simple $\GNNF$ for each $\GMSC$-program, we first want to modify the programs so that they are easier to handle via R-simple $\GNNF$s. In particular, the terminal clauses of a program may involve diamonds, the rules of a program may have differing amounts of nested logical operators, and so may the two conjuncts of each conjunction. Thus we prove a lemma which intuitively shows how to ``balance'' $\GMSC$-programs such that this is not the case. We let $\mathrm{fdepth}(\varphi)$ denote the formula depth of a schema $\varphi$ (see the definition of formula depth in Section \ref{sec: logics}).
The formula depth of a terminal clause (resp., of an iteration clause) refers to the formula depth of the body of the clause. The formula depth of a $\GMSC$-program is the maximum formula depth of its clauses.

\begin{lemma}\label{lem: simplify GMSC-program}
    For each $\Pi$-program $\Lambda$ of $\GMSC$ with formula depth $D$, we can construct an equivalent $\Pi$-program $\Gamma$ of $\GMSC$ which has the following properties.
    \begin{enumerate}
    \item Each terminal clause is of the form $X (0) \colonminus \bot$.
    \item Each iteration clause has the same formula depth $\max(3, D + 2)$.
    \item If $\varphi \land \theta$ is a subschema of $\Lambda$ such that neither $\varphi$ nor $\theta$ is $\top$, then $\varphi$ and $\theta$ have the same formula depth.
\end{enumerate}
\end{lemma}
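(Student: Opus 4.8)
The plan is to reach the three properties by a short sequence of meaning-preserving rewrites built from only two gadgets: a \emph{start predicate} that marks the initial round, and a padding operation $\chi \mapsto \top \land \chi$ that raises formula depth by exactly one while preserving truth and (crucially) leaving a $\top$-conjunct, so that it never violates property~3. Raising depth is cheap and always available, whereas lowering it would require restructuring, so the whole construction is organized around overshooting as little as possible and then padding up to a common value.

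First I would secure property~1 and fix the target depth at the same time. Introduce one fresh head predicate $S$ with rules $S(0)\colonminus\bot$ and $S\colonminus\top$, so that its $n$-th iteration formula is $\bot$ for $n=0$ and $\top$ for $n\geq 1$; thus $S$ detects ``past the initial round''. For each original predicate $X$ with terminal body $\varphi_X$ (a $\GML$-formula, hence free of head predicates) and iteration body $\psi_X$, replace the terminal clause by $X(0)\colonminus\bot$ and the iteration clause by $X \colonminus (\neg S \land \varphi_X)\lor(S\land\psi_X)$, keeping the original appointed predicates and never appointing $S$. A straightforward induction on $n$ shows $\hat X^{n+1}\equiv X^{n}$ for every original $X$, where $\hat{\cdot}$ denotes iteration formulas of the new program: at $n=0$ the guard $\neg S$ selects $\varphi_X=X^0$, and for $n\geq 1$ the guard $S$ selects $\psi_X$ evaluated on the delayed predicates, which by the induction hypothesis reproduce $X^{n}$. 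Since $\hat X^{0}=\bot$ and acceptance is existential over rounds, this one-round shift preserves the expressed property.

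Next I would enforce properties~3 and~2 on the resulting program. For property~3, traverse every iteration body bottom-up and, at each genuine conjunction $\alpha\land\beta$ with neither conjunct $\top$ (including the two conjunctions $\neg S\land\varphi_X$ and $S\land\psi_X$ just created), pad the shallower conjunct with $\top\land\cdots$ until both have equal formula depth; processing from the leaves guarantees termination and ensures the newly inserted $\top$-conjunctions are exempt from the constraint. Balancing leaves the depth of each body unchanged, as it only lifts the shorter branch of a conjunction up to the longer one. A direct count then gives that each gated body has formula depth $\max(3,D+2)$: the deeper of $\varphi_X,\psi_X$ has depth at most $D$ and lies beneath exactly one conjunction and the outer disjunction, contributing the ``$+2$'', while the ``$3$'' absorbs small $D$. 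Finally, for property~2, pad every body (including $S$'s body $\top$ and any clause whose gated depth is still below the bound) up to exactly $\max(3,D+2)$ with outer $\top\land\cdots$; these outer conjunctions again carry a $\top$-conjunct, so property~3 stays intact and padding preserves semantics. Taking $\Gamma$ to be the resulting program then yields all three properties.

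I expect the main obstacle to be the bookkeeping that makes the three properties hold \emph{simultaneously} without interfering: the round-shift correctness of the $S$-gating must be argued in the presence of the fresh predicate $S$ and the delayed substitution, and one must verify that balancing (property~3) and uniform padding (property~2) are performed in an order that neither disturbs the established equivalence nor reintroduces unbalanced genuine conjunctions. The one genuinely delicate point is checking that the gating construction lands exactly on the stated depth $\max(3,D+2)$ rather than a larger constant, which hinges on how the guard $\neg S$ and the combining connectives are counted toward formula depth.
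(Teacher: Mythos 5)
Your proof is correct and takes essentially the same route as the paper's: the same start-detector predicate (your $S$ is the paper's $I$) with the gated iteration clause $(\neg I \land \varphi)\lor(I\land\psi)$, the same one-round-shift equivalence argument $\hat X^{n+1}\equiv X^n$, and the same two-stage depth balancing. The only cosmetic difference is that you pad uniformly with $\top\land\cdot$, whereas the paper pads with double negations $(\neg)^n$ and uses a single $\cdot\land\top$ only to fix parity; both gadgets preserve truth, raise formula depth by the required amount, and are exempt from property~3.
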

\begin{proof}
    First we define a fresh auxiliary predicate $I$ with the rules $I (0) \colonminus \bot $ and $I \colonminus \top$.
    Informally, we will modify the rules of $\Lambda$ such that the terminal clauses are simulated by the iteration clauses with the help of $I$.
    Then for each head predicate of $\Lambda$ with terminal clause $X (0) \colonminus \varphi$ and iteration clause $X \colonminus \psi$ we define new rules in $\Gamma$ as follows: $X (0) \colonminus \bot$ and $X \colonminus ( \neg I \land \varphi) \lor (I \land \psi)$. Now, the formula depth of $\Gamma$ is $\max(3, D + 2)$.

    First, we ``balance'' the formula depth of each iteration clause in $\Gamma$ according to the second point in the statement of the lemma. Let $d$ be the formula depth of an iteration clause $X \colonminus \psi$, let $D' = \max(3, D + 2)$ and let $n = D'-d$. If $n$ is even, then the new iteration clause for $X$ is  $X \colonminus (\neg)^n \psi$, where $(\neg)^n = \neg \cdots \neg$ denotes $n$ nested negations. If $n$ is odd, then the new iteration clause for $X$ is $X \colonminus (\neg)^{n-1} (\psi \land \top)$. In either case, the new iteration clause for $X$ has formula depth $D'$.
    
    Then we show how to ``balance'' each subschema of $\Gamma$ according to the third point in the statement of the lemma as follows. Let $\varphi \land \psi$ be a subschema of $\Gamma$ such that neither $\varphi$ nor $\psi$ is $\top$, and w.l.o.g. assume that $\mathrm{fdepth}(\varphi) < \mathrm{fdepth}(\psi)$ and let $n = \mathrm{fdepth}(\psi) - \mathrm{fdepth}(\varphi)$. If $n$ is even, then we replace $\varphi \land \psi$ in $\Gamma$ with $(\neg)^n \varphi \land \psi$. If $n$ is odd, then we replace $\varphi \land \psi$ in $\Gamma$ with $(\neg)^{n-1} (\varphi \land \top) \land \psi$.

    We have now obtained the desired $\Gamma$. 
    It is easy to see that $\Gamma$ is equivalent to $\Lambda$ as follows. If $X$ is a head predicate that appears in both $\Lambda$ and $\Gamma$, then for each $n \in \N$ and for each pointed $\Pi$-labeled graph, we have $G, w \models X^n$ w.r.t. $\Lambda$ iff $G, w \models X^{n+1}$ w.r.t. $\Gamma$.
    In~$\Gamma$, the auxiliary predicate $I$ makes sure that each head predicate $X$ in $\Gamma$ that appears in $\Lambda$ simulates in round $1$ the corresponding terminal clause of $X$ in $\Lambda$ and in subsequent rounds simulates the corresponding iteration clause of $X$ in $\Lambda$.
    Furthermore, for each head predicate $X$ in $\Gamma$ and for each pointed $\Pi$-labeled graph $(G,w)$, we have $G, w \not\models X^0$ w.r.t.~$\Gamma$. Thus, $\Gamma$ and $\Lambda$ are equivalent.
\end{proof}

Now we are ready to show the translation from $\GMSC$ to R-simple aggregate-combine $\GNNF$s.

\begin{lemma}\label{lemma: GMSC to simple GNN}
    For each $\Pi$-program of $\GMSC$ we can construct an equivalent R-simple aggregate-combine $\GNNF$ over $\Pi$. 
\end{lemma}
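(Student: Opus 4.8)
The plan is to establish the nontrivial direction, constructing for a given $\GMSC$-program an equivalent R-simple aggregate-combine $\GNNF$. First I would apply Lemma~\ref{lem: simplify GMSC-program} to replace the input program by an equivalent ``balanced'' program $\Gamma$: every terminal clause has body $\bot$, every iteration-clause body has the same formula depth $D'$, and the two conjuncts of each non-trivial conjunction share a formula depth. This normalization is exactly what makes a \emph{uniform} depth-by-depth evaluation possible with a single recurrent transition. I would then introduce one ``value'' coordinate $s_\chi$ per subschema $\chi$ of $\Gamma$ (in particular one per head predicate and one per proposition), grouped by formula depth, together with a second block of coordinates holding a clock. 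Following the layered evaluation of Barcel\'o et al.~\cite{DBLP:conf/iclr/BarceloKM0RS20}, the GNN simulates one round of $\Gamma$ in $D'+1$ of its own rounds: in clock phase $\ell$ it evaluates precisely the subschemata of formula depth $\ell$ from those of depth $\ell-1$, and in the wrap-around phase it reloads the head-predicate coordinates from the depth-$D'$ bodies. Each logical operator becomes an R-simple gadget on Boolean values: a proposition is held, $\neg\varphi$ becomes $\mathrm{ReLU}^*(1-s_\varphi)$, a balanced conjunction becomes $\mathrm{ReLU}^*(s_\varphi+s_\theta-1)$, and $\Diamond_{\geq k}\varphi$ becomes $\mathrm{ReLU}^*\!\big(\sum_{u}s_\varphi(u)-(k-1)\big)$, where the neighbour-sum is exactly what the aggregation $\mathrm{SUM}_S$ supplies through the matrix $A$, while the local gadgets are routed through the matrix $C$ and the bias $\bb$.

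The heart of the argument, and the step I expect to be the main obstacle, is folding this layered computation into one \emph{recurrent} transition, i.e.\ implementing ``compute in my phase, otherwise hold my previous value'' using only $\mathrm{ReLU}^*$ and affine maps, which forbids the bilinear gate $c\cdot z+(1-c)\cdot o$. The plan is to resolve this by a saturation trick with a large constant $M$ (chosen representable in $S$) together with splitting each value coordinate $s_\chi$ into two subregisters $N_\chi$ and $H_\chi$ whose sum is the intended Boolean value. Writing $c_{\ell(\chi)}$ for the clock bit of $\chi$'s depth, I would set $N_\chi$ to $\mathrm{ReLU}^*(\text{gadget-preactivation}-M(1-c_{\ell(\chi)}))$ and $H_\chi$ to $\mathrm{ReLU}^*\big((N_\chi+H_\chi)-Mc_{\ell(\chi)}\big)$ on the previous-round values; in phase $\ell(\chi)$ the first equals the freshly computed value and the second is forced to $0$, while out of phase the first is forced to $0$ and the second reproduces the stored value, so the sum holds. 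Since every subregister is an affine read of the previous feature vectors passed through a single $\mathrm{ReLU}^*$, the whole update is genuinely R-simple, and downstream gadgets read a value simply as the affine sum $N_\chi+H_\chi$. The clock itself is a one-hot cyclic counter advanced by a permutation (hence affine) map; because its update is label- and neighbour-independent and all nodes start identically on the clock block, the clocks stay synchronized across the graph, which is essential for the modal gadget to read neighbours at the correct phase.

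The remaining ingredients are bookkeeping. For initialization I would set $s_p=1$ iff $p\in\lambda(w)$, set every head-predicate coordinate to $0$ (matching $X^0=\bot$ in $\Gamma$), initialize the clock to the starting phase, and let the other coordinates start at $0$; correctness of intermediate coordinates is only needed from the phase in which they are first read. For acceptance I would add a coordinate $\mathrm{acc}=\mathrm{ReLU}^*\big(\sum_{X\in\cA}s_X-M(1-c_{*})\big)$, where $c_{*}$ marks the phase in which head predicates are freshly correct, and take $F$ to be the (finite) set of feature vectors with $\mathrm{acc}=1$. A routine induction on the number of simulated rounds then shows that at the end of phase $c_{*}$ the coordinate $s_X$ equals the truth value of the iteration formula $X^{n}$ of $\Gamma$, so the GNN accepts $(G,w)$ iff some appointed $X^{n}$ holds, i.e.\ iff $\Gamma$ and hence the original program accepts. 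Two points require care and would be discharged while choosing $S$ depending on $\Lambda$: the float arithmetic for the affine maps and for the constant $M$ must preserve the Boolean distinctions exactly, and by Proposition~\ref{floating-point sum bound} the increasing-order float sum of $0/1$ values must count neighbours accurately at least up to the maximal grade $k$ occurring in $\Gamma$ (beyond which $\mathrm{ReLU}^*$ saturation makes the exact count irrelevant), which fixes the required precision of $S$.
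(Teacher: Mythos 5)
Your construction is correct and follows the paper's architecture almost exactly: balance the program via Lemma~\ref{lem: simplify GMSC-program}, allocate one coordinate per subschema plus a one-hot clock block, simulate one round of $\Gamma$ in $D'+1$ GNN rounds using the Barcel\'o-style $\mathrm{ReLU}^*$ gadgets for $\neg$, $\land$ and $\Diamond_{\geq k}$, and accept only when the clock is in the head-predicate phase. The one genuine divergence is the component you single out as ``the heart of the argument'': the clock-gated hold mechanism with split registers $N_\chi, H_\chi$ and a large saturation constant $M$. The paper's proof shows this is unnecessary. It applies the \emph{same} ungated affine-plus-$\mathrm{ReLU}^*$ update to every coordinate in every round and simply never claims that coordinate $\ell$ is meaningful outside round $n(D+1)+d$, where $d$ is the formula depth of $\varphi_\ell$; since (after balancing) a depth-$d$ subschema only reads depth-$(d-1)$ coordinates, and those are correct exactly one round earlier, the pipeline delivers correct values at the phases where they are consumed, and garbage elsewhere is harmless. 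Acceptance is then gated not by an arithmetic $\mathrm{acc}$ neuron but by defining $F$ directly as the set of vectors with an appointed-predicate bit \emph{and} the final clock bit equal to $1$. What your extra mechanism buys is that coordinates carry their intended Boolean values at all times, which is conceptually cleaner; what it costs is exactly the numerical delicacy you flag at the end: the constant $M$ must dominate the saturated float sum $\mathrm{SUM}_S$ of neighbour bits while remaining representable, and the affine combination $\mathrm{preact}+Mc-M$ must be evaluated exactly despite the mixed magnitudes, which forces a higher-precision choice of $S$. The paper's pipelined version avoids all of this by keeping every intermediate quantity in $\{0,1,\dots,K_{\max}\}$, so it suffices that $S$ represents the integers up to the width of $\Gamma$. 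Neither of these points is a gap in your argument, but you should be aware that the gating you treat as the main obstacle can simply be omitted.
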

\begin{proof}
Let $\Lambda$ be a $\Pi$-program of $\GMSC$.
Informally, an equivalent R-simple aggregate-combine $\GNNF$ for $\Lambda$ is constructed as follows.
First from $\Lambda$ we construct an equivalent $\Pi$-program $\Gamma$ of $\GMSC$ with Lemma \ref{lem: simplify GMSC-program}. Let $D$ be the formula depth of $\Gamma$.

Intuitively, we build an R-simple aggregate-combine $\GNNF$ $\cG_{\Gamma}$ for $\Gamma$ which periodically computes a single iteration round of $\Gamma$ in $D+1$ rounds. The feature vectors used by $\cG_{\Gamma}$ are \emph{binary} vectors $\bv = \bu\bw$ where: 
\begin{itemize}
    \item $\bu$ has one bit per each (distinct) subschema of a body of an iteration clause in $\Gamma$ as well as for each head predicate, and $\bv_{1}$ keeps track of their truth values, and
    \item $\bw$ has $D+1$ bits and it keeps track of the formula depth that is currently being evaluated.
\end{itemize}
Therefore, $\cG_{\Gamma}$ is a $\GNNF$ over $(\Pi, N + D+1)$, where $N$ is the number of (distinct) subschemata of the bodies of the iteration clauses of $\Gamma$ as well as head predicates, and $D$ is the maximum formula depth of the bodies of the iteration clauses.
In order to be able to compute values, the floating-point system for $\cG_{\Gamma}$ is chosen to be high enough. More precisely, we choose a floating-point system $S$ which can express all integers from $0$ at least up to $K_{\max}$, where $K_{\max}$ is the width of $\Gamma$. 
Note that $0$s and $1$s are also represented in the floating-point system $S$.
Note that although feature vectors exist that have elements other than $1$s and $0$s, they are not used.

Next we define the functions $\pi$ and $\COM$, and the set $F$ of accepting states for $\cG_{\Gamma}$. (Note that $\AGG$ is just the sum in increasing order.)
We assume an enumeration $\mathrm{SUB(\Gamma)} \colonequals (\varphi_{1}, \dots, \varphi_{N})$ of subschemata and head predicates in $\Gamma$ such that if $\varphi_k$ is a subschema of~$\varphi_\ell$, then $k \leq \ell$. 
The initialization function $\pi$ of $\cG_{\Gamma}$ with input $P \subseteq \Pi$ outputs a feature vector $\bv \in \{0,1\}^{N+D+1}$, where the value of each component corresponding to a node label symbol is defined as follows: the component for $p$ is $1$ iff $p \in P$. The other components are $0$s (excluding the possible subschema $\top$ which is assigned $1$, as well as the very last $(N + D + 1)$th bit which is also assigned $1$).

Recall that $S$ is the floating-point system for $\cG_\Gamma$.
The combination function (as per the definition of R-simple $\GNN$s) is $\COM(\bx, \by) = \sigma(\bx \cdot C + \by \cdot A + \bb)$ where $\sigma$ is the truncated $\mathrm{ReLU}$ ($\mathrm{ReLU}^*$) defined by $\mathrm{ReLU}^*(x) = \min(\max(0,x), 1)$, 
$\bb \in S^{N+D+1}$ and $C, A \in S^{(N+D+1) \times (N+D+1)}$, where $\bb, C$ and $A$ are defined as follows. For $k, \ell \leq N+D+1$, we let $C_{k, \ell}$ (resp. $A_{k, \ell}$) denote the element of $C$ (resp. $A$) at the $k$th row and $\ell$th column. Similarly for $\ell \leq N+D+1$ and for any vector $\bv \in S^{N+D+1}$ including $\bb$, 
we let $\bv_\ell$ denote the $\ell$th value of $\bv$. Now, we define the top-left $N \times N$ submatrices of $C$ and $A$ and the first $N$ elements of $\bb$ in the same way as Barceló et al. \cite{DBLP:conf/iclr/BarceloKM0RS20}. For all $\ell \leq N$ we define as follows. 
\begin{itemize}
    \item If $\varphi_{\ell} \in \Pi \cup \{\top\}$, then $C_{\ell, \ell} = 1$.
    \item If $\varphi_\ell$ is a head predicate $X$ with the iteration clause $X \colonminus \varphi_{k}$, then $C_{k, \ell} = 1$.
    \item If $\varphi_{\ell} = \varphi_{j} \land \varphi_{k}$, then $C_{j, \ell} = C_{k, \ell} = 1$ and $\bb_{\ell} = -1$.
    \item If $\varphi_{\ell} = \neg \varphi_{k}$, then $C_{k, \ell} = -1$ and $\bb_{\ell} = 1$.
    \item If $\varphi_{\ell} = \Diamond_{\geq K} \varphi_{k}$, then $A_{k, \ell} = 1$ and $\bb_{\ell} = -K + 1$.
\end{itemize}
Next, we define that the bottom-right $(D+1) \times (D+1)$ submatrix of $C$ is the $(D+1) \times (D+1)$ identity matrix, except that the rightmost column is moved to be the leftmost column. More formally, for all $N + 1 \leq \ell < N+D+1$ we have $C_{\ell, \ell+1} = 1$ and also $C_{N+D+1, N+1} = 1$.
Lastly, we define that all other elements in $C$, $A$ and $\bb$ are $0$s.
Finally, we define that $\bv \in F$ if and only if $\bv_\ell = 1$ (i.e., the $\ell$th value of $\bv$ is $1$) for some appointed predicate $\varphi_{\ell}$ and also $\bv_{N+D+1} = 1$. 

Recall that the formula depth of $\Gamma$ is $D$. Let $\bv(w)_i^t$ denote the value of the $i$th component of the feature vector in round $t$ at node $w$. 
It is easy to show by induction that for all $n \in \N$, for all pointed $\Pi$-labeled graphs $(G, w)$, and for every schema $\varphi_\ell$ in $\mathrm{SUB}(\Gamma)$ of formula depth $d$, we have
\[
\bv(w)_\ell^{n (D+1) + d} = 1 \text{ if } G, w \models \varphi_\ell^n \text{ and $\bv(w)_\ell^{n (D+1) + d} = 0$ if $G, w \not\models \varphi_\ell^n$}.
\]
Most of the work is already done by Barceló et al. (see \cite{DBLP:conf/iclr/BarceloKM0RS20} for the details), but we go over the proof as there are some additional considerations related to recurrence.

For the base case, let $n = 0$. We prove the case by induction over the formula depth $d$ of~$\varphi_{\ell}$. First, let $d = 0$.
\begin{itemize}
    \item Case 1: $\varphi_{\ell} \in \Pi \cup \{\top\}$. Now 
    $\bv_{\ell}^{0} = 1$ and $G,w \models \top^{0}$ if $\varphi_{\ell} = \top$. If $\varphi_{\ell} = p \in \Pi$, then by the definition of the initialization function $\pi$ we have $\bv(w)_{\ell}^{0} = 1$ iff $p \in \lambda(w)$ iff $G,w \models p^{0}$, and $\bv(w)_{\ell}^{0} = 0$ iff $p \notin \lambda(w)$ iff $G,w \not\models p^{0}$.
    \item Case 2: $\varphi_{\ell} = X$, where $X$ is a head predicate of $\Gamma$. Now $\bv(w)_{\ell}^{0} = 0$ due to the definition of the initialization function $\pi$, and $G,w \not\models X^{0}$ because each head predicate of $\Gamma$ has the terminal clause $\bot$.
\end{itemize}
Now, assume the claim holds for $n = 0$ for any formulae $\varphi_{j}, \varphi_{k}$ with formula depth $d-1$. We show that the claim holds for $n = 0$ for formulae $\varphi_{\ell}$ of formula depth $d$.
\begin{itemize}
    \item Case 3: $\varphi_{\ell} = \varphi_{j} \land \varphi_{k}$ for some $\varphi_{j}, \varphi_{k}$ (recall that $\Gamma$ only contains conjunctions where both conjuncts have the same formula depth if neither of them are $\top$). This means we have $C_{j, \ell} = C_{k, \ell} = 1$ and $\bb_{\ell} = -1$. Moreover, $C_{m, \ell} = 0$ for all $m \neq j, k$ and $A_{m, \ell} = 0$ for all $m$. Now 
    \[
        \bv(w)_{\ell}^{d} = \mathrm{ReLU^{*}}\left(\bv(w)_{j}^{d-1} + \bv(w)_{k}^{d-1} -1\right).
    \]
    Thus $\bv(w)_{\ell}^{d} = 1$ iff $\bv(w)_{j}^{d-1} = \bv(w)_{k}^{d-1} = 1$. By the induction hypothesis this is equivalent to $G,w \models \varphi_{j}^{0}$ and $G,w \models \varphi_{k}^{0}$ which is equivalent to $G,w \models (\varphi_{j} \land \varphi_{k})^{0}$. Likewise, $\bv(w)_{\ell}^{d} = 0$ iff $\bv(w)_{j}^{d-1} = 0$ or $\bv(w)_{k}^{d-1} = 0$. By the induction hypothesis this is equivalent to $G,w \not\models \varphi_{j}^{0}$ or $G,w \not\models \varphi_{k}^{0}$ which is equivalent to $G,w \not\models (\varphi_{j} \land \varphi_{k})^{0}$.
    \item Case 4: $\varphi_{\ell} = \neg \varphi_{k}$ for some $\varphi_{k}$. This means we have $C_{k, \ell} = -1$ and $\bb_{\ell} = 1$. Moreover, $C_{m, \ell} = 0$ for all $m \neq k$ and $A_{m, \ell} = 0$ for all $m$. Now 
    \[
        \bv(w)_{\ell}^{d} = \mathrm{ReLU^{*}}\left(-\bv(w)_{k}^{d-1} + 1\right).
    \]
    Thus $\bv(w)_{\ell}^{d} = 1$ iff $\bv(w)_{k}^{d-1} = 0$. By the induction hypothesis this is equivalent to $G,w \not\models \varphi_{k}^{0}$ which is further equivalent to $G,w \models (\neg\varphi_{k})^{0}$. Likewise, $\bv(w)_{\ell}^{d} = 0$ iff $\bv(w)_{k}^{d-1} = 1$. By the induction hypothesis this is equivalent to $G,w \models \varphi_{k}^{0}$ which is equivalent to $G,w \not\models (\neg\varphi_{k})^{0}$.
    \item Case 5: $\varphi_{\ell} = \Diamond_{\geq K} \varphi_{k}$ for some $\varphi_{k}$. This means we have $A_{k, \ell} = 1$ and $\bb_{\ell} = -K + 1$. Moreover, $C_{m, \ell} = 0$ for all $m$ and $A_{m, \ell} = 0$ for all $m \neq k$. Now 
    \[
        \bv(w)_{\ell}^{d} = \mathrm{ReLU^{*}}\left(\mathrm{SUM}_{S}\left(\{\{ \, \bv(v)_{k}^{d-1} \mid (w,v) \in E \, \}\}\right) -K + 1\right),
    \]
    where $\mathrm{SUM}_{S} \colon \cM(S) \to S$ is the sum of floating-point numbers in $S$ in increasing order (see Section \ref{gnndefsjooj} for more details) and $E$ is the set of edges of $G$. Thus $\bv(w)_{\ell}^{d} = 1$ iff there are at least $K$ out-neighbours $v$ of $w$ such that $\bv(v)_{k}^{d-1} = 1$. By the induction hypothesis, this is equivalent to there being at least $K$ out-neighbours $v$ of $w$ such that $G,v \models \varphi_{k}^{0}$ which is further equivalent to $G,w \models (\Diamond_{\geq K} \varphi_{k})^{0}$.
    Likewise, $\bv(w)_{\ell}^{d} = 0$ iff there are fewer than $K$ out-neighbours $v$ of $w$ such that $\bv(v)_{k}^{d-1} = 1$. By the induction hypothesis, this is equivalent to there being fewer than $K$ out-neighbours $v$ of $w$ such that $G,v \models \varphi_{k}^{0}$ which is equivalent to $G,w \not\models (\Diamond_{\geq K} \varphi_{k})^{0}$.
\end{itemize}
Next, assume the claim holds for $n$ for all formulae of any formula depth. We show that it also holds for $n+1$. We once again prove the claim by structure of $\varphi_{\ell}$. Cases 3, 4 and 5 are handled analogously to how they were handled in the case $n = 0$, so we only consider cases 1 and 2.
\begin{itemize}
    \item Case 1: $\varphi_{\ell} \in \Pi \cup \{\top\}$. This means we have $C_{\ell, \ell} = 1$ and $\bb_{\ell} = 0$. Moreover, $C_{m, \ell} = 0$ for all $m \neq \ell$ and $A_{m, \ell} = 0$ for all $m$. Now 
    \[
        \bv(w)_{\ell}^{(n+1)(D+1)} = \mathrm{ReLU^{*}}\left(\bv(w)_{\ell}^{n(D+1) + D}\right).
    \]
    Thus we see that $\bv(w)_{\ell}^{(n+1)(D+1)} = \bv(w)_{\ell}^{n(D+1) + D}$ and a trivial induction shows that also $\bv(w)_{\ell}^{(n+1)(D+1)} = \bv(w)_{\ell}^{n(D+1)}$. Now $\bv(w)_{\ell}^{(n+1)(D+1)} = 1$ iff $\bv(w)_{\ell}^{n(D+1)} = 1$. By the induction hypothesis this is equivalent to $G,w \models \varphi_{\ell}^{n}$ which is equivalent to $G,w \models \varphi_{\ell}^{n+1}$ because $\varphi_{\ell}^{n+1} = \varphi_{\ell}^{n}$. Likewise, $\bv(w)_{\ell}^{(n+1)(D+1)} = 0$ iff $\bv(w)_{\ell}^{n(D+1)} = 0$. By the induction hypothesis this is equivalent to $G,w \not\models \varphi_{\ell}^{n}$ which is equivalent to $G,w \not\models \varphi_{\ell}^{n+1}$.
    \item Case 2: $\varphi_{\ell} = X$, where $X$ is a head predicate of $\Gamma$ with the iteration clause $\varphi_{k}$ of formula depth $D$ (recall that in $\Gamma$, each iteration clause has formula depth $D$). This means we have $C_{k, \ell} = 1$ and $\bb_{\ell} = 0$. Moreover, $C_{m, \ell} = 0$ for all $m \neq k$ and $A_{m, \ell} = 0$ for all $m$. Now
    \[
        \bv(w)_{\ell}^{(n+1)(D+1)} = \mathrm{ReLU^{*}}\left(\bv(w)_{k}^{n(D+1) + D}\right).
    \]
    Thus $\bv(w)_{\ell}^{(n+1)(D+1)} = 1$ iff $\bv(w)_{k}^{n(D+1) + D} = 1$. By the induction hypothesis this is equivalent to $G,w \models \varphi_{k}^{n}$ which is equivalent to $G,w \models X^{n+1}$. Likewise, we see that $\bv(w)_{\ell}^{(n+1)(D+1)} = 0$ iff $\bv(w)_{k}^{n(D+1) + D} = 0$. By the induction hypothesis this is equivalent to $G,w \not\models \varphi_{k}^{n}$ which is equivalent to $G,w \not\models X^{n+1}$.
\end{itemize}
This concludes the induction.

We also know for all $n \geq 1$ and all $N+1 \leq \ell < N+D+1$ that $\bv(w)_{\ell+1}^{n} = \bv(w)_{\ell}^{n-1}$ and $\bv(w)_{N+1}^{n} = \bv(w)_{N + D+1}^{n-1}$. This is because $C_{\ell, \ell+1} = 1$, $C_{\ell', \ell+1} = 0$ for all $\ell' \neq \ell$ and $\bb_{\ell+1} = 0$ and thus
\[
    \bv(w)_{\ell+1}^{n} = \mathrm{ReLU^{*}}\left(\bv(w)_{\ell}^{n-1}\right),
\]
and also $C_{N+D+1, N+1} = 1$, $C_{\ell', N+1} = 0$ for all $\ell' \neq N+D+1$ and $\bb_{N+1} = 0$ and thus
\[
    \bv(w)_{N+1}^{n} = \mathrm{ReLU^{*}}\left(\bv(w)_{N + D+1}^{n-1}\right).
\]
By the initialization this means for all $1 \leq \ell, \ell' \leq D+1$ and all $n \in \N$ that $\bv(w)_{N+\ell}^{n(D+1) + \ell'} = 1$ iff $\ell' = \ell$ and $0$ otherwise. Specifically, this means that $\bv(w)_{N + D + 1}^{n(D+1) + \ell'} = 1$ iff $\ell' = D+1$ and $\bv(w)_{N + D + 1}^{n(D+1) + \ell'} = 0$ otherwise.

Thus if $\varphi_{\ell}$ is an appointed predicate $X$ of $\Gamma$, then we know for all $n \in \N$ that $G,w \models X^{n}$ iff $\bv(w)_{\ell}^{n(D+1)} = 1$ and we also know that $\bv(w)_{N + D + 1}^{n(D+1)} = 1$ and thereby $\bv(w)^{n(D+1)} \in F$. Thus $\cG_{\Gamma}$ is equivalent to $\Gamma$.

Since $\cG_\Gamma$ is equivalent to $\Gamma$ which is equivalent to $\Lambda$, we are done.
\end{proof}

We note that the proof of Lemma \ref{lemma: GMSC to simple GNN} generalizes for other types of $\GNNF$s, such as the type described below. Let $S = ((p, n, \beta), +, \cdot)$ be the floating-point system, where $p=1$, $n = 1$ and $\beta = K_{\max}+1$, where $K_{\max}$ is again the maximum width of the rules of $\Gamma$ in the proof above. We believe that the proof generalizes for other floating-point systems which can represent all non-negative integers up to $K_{\max}$. Recall that $K_{\max}$ is the width of $\Gamma$ in the proof of Lemma \ref{lemma: GMSC to simple GNN}. 
Consider a class of $\GNNF$s over $S$ with dimension $2(N+D+1)$ (twice that in the proof above) where the aggregation function is $\mathrm{SUM}_{S} \colon \cM(S) \to S$ (the sum of floating-point numbers in $S$ in increasing order applied separately to each element of feature vectors) and whose combination function $\mathrm{COM} \colon S^{2(N+D+1)} \times S^{2(N+D+1)} \to S^{2(N+D+1)}$ is defined by 
\[
    \COM \left( x, y\right) = \mathrm{ReLU}(x \cdot C + y \cdot A + \bb),
\]
where $C, A \in S^{2(N+D+1) \times 2(N+D+1)}$ are matrices, $\bb \in S^{2(N+D+1)}$ is a bias vector  
and $\mathrm{ReLU}$ is the rectified linear unit defined by $\mathrm{ReLU}(x) \colonequals \max(0, x)$ as opposed to $\mathrm{ReLU}^*$.
The last $2(D+1)$ elements of feature vectors function as they do above, counting the $2(D+1)$ steps required each time an iteration of $\Gamma$ is simulated (here the computation takes twice as long compared to above). The first $2N$ elements calculate the truth values of subschemata as before, but now each calculation takes two steps instead of one as each subschema is assigned two elements instead of one. The first of these elements calculates the truth value of the subschema as before. However, due to the use of $\mathrm{ReLU}$ instead of $\mathrm{ReLU^*}$ the value of the element might be more than $1$ if the subschema is of the form $\Diamond_{\geq K} \varphi_{k}$. Thus, the second element normalizes the values by assigning a weight of $K_{\max}$ ensuring that each positive value becomes $K_{\max}$ and then assigning the bias $-K_{\max}+1$ to bring them all down to $1$.
More formally, for each subschema $\varphi_{\ell}$, let $\ell$ be the first and $N + \ell$ the second element associated with $\varphi_{\ell}$. In the initialization step we define that if $\varphi_{\ell} \in P \subseteq \Pi$ 
or $\varphi_{\ell} = \top$, then $\pi(P)_{N + \ell} = 1$ and other elements of $\pi(P)$ are $0$s. Now we define as follows for all $1 \leq \ell \leq N$.
\begin{itemize}
    \item If $\varphi_{\ell} \in \Pi \cup \{\top\}$, then $C_{N + \ell, \ell} = 1$.
    \item If $\varphi_\ell$ is a head predicate $X$ with the iteration clause $X \colonminus \varphi_{k}$, then $C_{N + k, \ell} = 1$.
    \item If $\varphi_{\ell} = \varphi_{j} \land \varphi_{k}$, then $C_{N + j, \ell} = C_{N + k, \ell} = 1$ and $\bb_{\ell} = -1$.
    \item If $\varphi_{\ell} = \neg \varphi_{k}$, then $C_{N + k, \ell} = -1$ and $\bb_{\ell} = 1$.
    \item If $\varphi_{\ell} = \Diamond_{\geq K} \varphi_{k}$, then $A_{N + k, \ell} = 1$ and $\bb_{\ell} = -K + 1$.
\end{itemize}
In each of the above cases we also define that $C_{\ell, N + \ell} = K_{\max}$ and $\bb_{N + \ell} = -K_{\max}+1$. Lastly for all $2N+1 \leq \ell < 2(N+D+1)$ we define (as before) that $C_{\ell, \ell+1} = 1$ and also $C_{2(N+D+1), 2D+1} = 1$. All other elements of $C$, $A$ and $\bb$ are $0$s. The set $F$ of accepting feature vectors is defined such that $\bv \in F$ iff $\bv_{N + \ell} = 1$ for some appointed predicate $\varphi_{\ell}$ and also $\bv_{2(N+D+1)} = 1$.
It is then straightforward to prove for all $n \in \N$, for all pointed $\Pi$-labeled graphs $(G,w)$ and for every schema $\varphi_{\ell} \in \mathrm{SUB}(\Gamma)$ of formula depth $d$ that
\[
    \bv(w)_{\ell}^{2n(D+1) + 2d-1} \geq 1 \text{ and } \bv(w)_{N + \ell}^{2n(D+1) + 2d} = 1 \text{ if } G,w \models \varphi_{\ell}^{n}
\]
and
\[
    \bv(w)_{\ell}^{2n(D+1) + 2d-1} = \bv(w)_{N + \ell}^{2n(D+1) + 2d} = 0 \text{ if } G,w \not\models \varphi_{\ell}^{n}.
\] 
The proof is similar to the above, but we also need to show that normalization works as intended, i.e., $\bv(w)_{\ell}^{2n(D+1) + 2d-1} \geq 1$ implies $\bv(w)_{N + \ell}^{2n(D+1) + 2d} = 1$ and $\bv(w)_{\ell}^{2n(D+1) + 2d-1} = 0$ implies $\bv(w)_{N + \ell}^{2n(D+1) + 2d} = 0$. To see this, note that $C_{\ell, N + \ell} = K_{\max}$ and $\bb_{N + \ell} = -K_{\max}+1$ and also $C_{\ell', N + \ell} = 0$ for all $\ell' \neq \ell$ and $A_{\ell', N + \ell} = 0$ for all $\ell'$. Thus
\[
    \bv(w)_{N + \ell}^{2n(D+1) + 2d} = \mathrm{ReLU}\left(K_{\max} \cdot \bv(w)_{\ell}^{2n(D+1) + 2d-1} -K_{\max}+1\right),
\]
which is $0$ if $\bv(w)_{\ell}^{2n(D+1) + 2d-1} = 0$ and $1$ if $\bv(w)_{\ell}^{2n(D+1) + 2d-1} \geq 1$ (because by the choice of $S$ we have $K_{\max} \cdot x = K_{\max}$ for all $x \in S$, $x \geq 1$).

Having proved Lemma \ref{lemma: GMSC to simple GNN}, we are now ready to conclude the main theorem of this appendix section expressed below.
\begin{proof}[Proof of Theorem \ref{theorem: k-GNN[F] = k-FCMPA = GMSC}]
    Note that the proof uses auxiliary results introduced in this subsection.
    Each $\GNNF$ is trivial to translate into an equivalent bounded $\FCMPA$ with linear blow-up in size by the fact that the definitions of $\GNNF$s and bounded $\FCMPA$s are almost identical. Lemma \ref{lemma: GMSC to simple GNN} shows that each $\GMSC$-program can be translated into an equivalent R-simple aggregate-combine $\GNNF$.
    The translation from $\GNNF$s to $\GMSC$-programs causes only polynomial blow-up in size by Lemma \ref{lemma: k-FCMPA to GMSC}.
\end{proof}

\subsection{Proof of Theorem \ref{thrm: equi omega-GML CMPA}}\label{sec: omega-GML CMPA}

Note that $k$-$\CMPA$s are defined in Appendix \ref{appendix: An extended definition for distributed automata}.
First we recall Theorem \ref{thrm: equi omega-GML CMPA}.

\textbf{Theorem \ref{thrm: equi omega-GML CMPA}.}
\emph{$\CMPA$s have the same expressive power as $\omega$-$\GML$.}

The proof of Theorem \ref{thrm: equi omega-GML CMPA} is in the end of this subsection but we need some auxiliary results first.
We show in Lemma \ref{lemma: omega-GML to CMPA} that we can construct an equivalent 
counting type automaton over $\Pi$
for each $\Pi$-formula of $\VGML$.
Informally, to do this, we first define a $\Pi$-formula of $\GML$ called the ``full graded type of modal depth $n$'' for each pointed graph, which expresses all the local information of its neighbourhood up to depth $n$. We show in Proposition \ref{Formula_to_types_GML} that for each $\VGML$-formula there is a logically equivalent disjunction of types. We also define counting type automata that compute the type of modal depth $n$ of each node in every round $n$. The accepting states of the resulting automaton are exactly those types that appear in the disjunction of types.

Then we show in Lemma \ref{lemma: cmpa to omega-gml} that for each $\CMPA$ over $\Pi$ we can construct an equivalent $\Pi$-formula of $\VGML$. Informally, to do this we first prove in Lemma \ref{lem: Types_to_states_GML} that two pointed graphs that satisfy the same full graded type of modal depth $n$ also have identical states in each round $\ell \leq n$ in each $\CMPA$. For each $n \in \N$, we consider exactly those full graded types of modal depth $n$ which are satisfied by some pointed graph that is accepted by the automaton in round $n$. We obtain the desired $\VGML$-formula by taking the disjunction of all these types across all $n \in \N$.

By similar arguments, we also obtain Theorem \ref{theorem: equiv bounded CMPAs and bounded omega-GML} which is analogous to Theorem \ref{thrm: equi omega-GML CMPA} but restricted to bounded $\CMPA$s and width-bounded $\VGML$, which involves defining analogous width-bounded concepts.

Now, we start formalizing the proof of Theorem \ref{thrm: equi omega-GML CMPA}.
To show that for each $\VGML$-formula we can construct an equivalent $\CMPA$, we need to define the concepts of graded $\Pi$-types, full graded $\Pi$-types and counting type automata. The graded $\Pi$-type of modal depth $n$ and width $k$ of a pointed $\Pi$-labeled graph $(G, w)$ is a $\Pi$-formula of $\GML$ that contains all the information from the $n$-neighbourhood of $w$ (according to outgoing edges), with the exception that at each distance from $w$ we can only distinguish between at most $k$ identical branches. The full graded $\Pi$-type of modal depth $n$ lifts this limitation and contains all the information from the neighbourhood of depth $n$ of $w$. A counting type automaton of width $k$ is a $k$-$\CMPA$ that calculates the graded $\Pi$-type of modal depth $n$ and width $k$ of a node in each round $n$. Likewise, we define counting type automata which calculate the full graded $\Pi$-type of modal depth $n$ of a node in each round $n$.

Let $\Pi$ be a set of node label symbols, let $(G, w)$ be a pointed $\Pi$-labeled graph and let $k, n \in \N$. The \textbf{graded $\Pi$-type of width $k$ and modal depth $n$} of $(G, w)$ (denoted $\tau^{(G,w)}_{k, n}$) is defined recursively as follows. For $n = 0$ we define that
\[
    \tau^{(G, w)}_{k, 0} \colonequals \bigwedge_{p_{i} \in \lambda(w)} p_{i} \land \bigwedge_{p_{i} \notin \lambda(w)} \neg p_{i}.
\]
Assume we have defined the graded $\Pi$-type of width $k$ and modal depth $n$ of all pointed $\Pi$-labeled graphs, and let $T_{k, n}$ denote the set of such types. The graded $\Pi$-type of $(G, w)$ of width $k$ and modal depth $n + 1$ is defined as follows:
\[
\begin{aligned}
    \tau^{(G, w)}_{k, n+1} \colonequals \tau^{(G, w)}_{k, 0} &\land \bigwedge_{\ell = 0}^{k-1} \{\, \Diamond_{= \ell} \tau \mid \tau \in T_{k, n} \text{ and } G, w \models \Diamond_{= \ell} \tau \,\} \\
    &\land \{\, \Diamond_{\geq k} \tau \mid \tau \in T_{k, n} \text{ and } G, w \models \Diamond_{\geq k} \tau \,\}.
\end{aligned}
\]
Canonical bracketing and ordering is used to ensure that no two types are logically equivalent, and thus
each pointed $\Pi$-labeled graph has exactly one graded $\Pi$-type of each modal depth and width.

A \textbf{counting type automaton of width $k$} over $\Pi$ is a $k$-$\CMPA$ 
defined as follows. The set $Q$ of states is the set $\bigcup_{n \in \N} T_{k, n}$ of all graded $\Pi$-types of width $k$ (of any modal depth). The initialization function $\pi \colon \cP(\Pi) \to Q$ is defined such that $\pi(P) = \tau^{(G, w)}_{k, 0}$ where $(G, w)$ is any pointed $\Pi$-labeled graph satisfying exactly the node label symbols in $P \subseteq \Pi$. Let $N \colon T_{k, n} \to \N$ be a multiset of graded $\Pi$-types of width $k$ and modal depth $n$, and let $\tau$ be one such type (note that $T_{k, n}$ is finite for any $k, n \in \N$). 
Let $\tau_{0}$ be the unique type of modal depth $0$ in $T_{k, n}$ that does not contradict $\tau$.
We define the transition function $\delta \colon Q \times \cM(Q) \to Q$ such that
\[
        \delta(\tau, N) = \tau_{0} \land \bigwedge_{\ell = 0}^{k-1} \{\, \Diamond_{= \ell} \sigma \mid N(\sigma) = \ell \,\}         \land \{\, \Diamond_{\geq k} \sigma \mid N(\sigma) \geq k \,\},
\]
For other $N$ and $\tau$ that do not all share the same modal depth, we may define the transition as we please such that $\delta(q, N) = \delta(q, N_{|k})$ for all $q$ and $N$.

We similarly define the \textbf{full graded $\Pi$-type of modal depth $n$} of a pointed $\Pi$-labeled graph $(G, w)$ (denoted by $\tau^{(G, w)}_{n}$) which contains all the local information of the $n$-depth neighbourhood of $(G, w)$ with no bound on width. For $n = 0$, we define that $\tau^{(G, w)}_{0} = \tau^{(G, w)}_{k, 0}$ for any $k \in \N$. Assume that we have defined the full graded $\Pi$-type of modal depth $n$ of all pointed $\Pi$-labeled graphs, and let $T_{n}$ be the set of such full types. The full graded $\Pi$-type of modal depth $n+1$ of $(G, w)$ is defined as follows:
\[
    \tau^{(G, w)}_{n+1} \colonequals \tau^{(G, w)}_{0} \land \bigwedge_{\ell \geq 1} \{\, \Diamond_{= \ell} \tau \mid \tau \in T_{n}, (G, w) \models \Diamond_{= \ell} \tau \,\} \land \Diamond_{= \abs{\cN(w)}} \top,
\]
where $\cN(w)$ is the set of out-neighbours of $w$.
The formula tells exactly how many out-neighbours of a node satisfy each full graded type of the previous modal depth; to keep the formulae finite (over finite graphs), instead of containing conjuncts $\Diamond_{= 0} \tau$ it tells exactly how many out-neighbours a node has.

A \textbf{counting type automaton} over $\Pi$ is a $\CMPA$ defined as follows. The set of states is the set of all full graded $\Pi$-types. The initialization function $\pi \colon \cP(\Pi) \to Q$ is defined such that $\pi(P) = \tau^{(G, w)}_{0}$ where $(G, w)$ is any pointed $\Pi$-labeled graph satisfying exactly the node label symbols in $P$. Let $N \colon T_{n} \to \N$ be a multiset of full graded $\Pi$-types of some modal depth $n$, and let $\tau$ be one such type. Let $\tau_{0}$ be the unique full type of modal depth $0$ in $T_{n}$ that does not contradict $\tau$. We define the transition function $\delta \colon Q \times \cM(Q) \to Q$ such that
\[
    \delta(\tau, N) = \tau_{0} \land \bigwedge_{\ell \geq 1} \{\, \Diamond_{= \ell} \sigma \mid N(\sigma) = \ell \,\} \land \Diamond_{= \abs{N}} \top,
\]
For other $N$ and $\tau$ that do not all share the same modal depth, we may define the transition as we please.

We prove the following useful property. 
\begin{proposition}\label{Formula_to_types_GML}
    Each $\Pi$-formula $\varphi$ of modal depth $n$ and width $k$ of $\GML$ has 
    \begin{enumerate}
        \item a logically equivalent countably infinite disjunction of full graded $\Pi$-types of modal depth $n$ and
        \item a logically equivalent finite disjunction of graded $\Pi$-types of width $k$ and modal depth~$n$.
    \end{enumerate}
\end{proposition}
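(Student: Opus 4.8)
The plan is to first establish a Hintikka-style \emph{type lemma} and then read off the proposition as an immediate corollary. The type lemma I would prove states: if two pointed $\Pi$-labeled graphs $(G,w)$ and $(H,u)$ satisfy the same graded $\Pi$-type of width $k$ and modal depth $n$, then $G,w \models \psi \iff H,u \models \psi$ for every $\GML$-formula $\psi$ of modal depth at most $n$ and width at most $k$; and, analogously, if they satisfy the same full graded type of modal depth $n$, then they agree on every $\GML$-formula of modal depth at most $n$ of arbitrary width. Granting this, part~2 follows by letting $\Theta$ be the (necessarily finite) set of graded types of width $k$ and modal depth $n$ that are satisfied by some pointed graph satisfying $\varphi$, and claiming $\varphi \equiv \bigvee_{\tau \in \Theta}\tau$. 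The forward direction is immediate since every $(G,w) \models \varphi$ satisfies its own type $\tau^{(G,w)}_{k,n} \in \Theta$; the backward direction uses the type lemma: if $G,w \models \tau$ for some $\tau \in \Theta$, pick a witness $(H,u) \models \tau$ with $H,u \models \varphi$, and since $\varphi$ has modal depth at most $n$ and width at most $k$ while $(G,w)$ and $(H,u)$ share the type $\tau$, the lemma yields $G,w \models \varphi$. Part~1 is obtained in exactly the same way, taking $\Theta$ to be the set of full graded types of modal depth $n$ that are satisfiable together with $\varphi$; here $\Theta$ is at most countable, so the disjunction lives in $\VGML$.

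The type lemma itself I would prove by induction on $n$, with an inner induction on the structure of $\psi$. For $n=0$, a formula of modal depth $0$ is a Boolean combination of node label symbols, and $\tau^{(G,w)}_{k,0}$ pins down exactly which symbols hold at $w$, so agreement is clear. In the inductive step the Boolean cases are routine, and the only real work is the case $\psi = \Diamond_{\geq m}\chi$ with $m \leq k$ and $\chi$ of modal depth at most $n$ and width at most $k$. By the outer induction hypothesis, whether a node satisfies $\chi$ depends only on its graded type of width $k$ and modal depth $n$; since distinct such types are mutually contradictory and each node satisfies exactly one, the number of out-neighbours of $w$ satisfying $\chi$ equals $\sum_{\sigma} c_\sigma$, where $\sigma$ ranges over the types in $T_{k,n}$ satisfying $\chi$ and $c_\sigma$ is the number of out-neighbours of $w$ of type $\sigma$. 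The crucial point is that $\tau^{(G,w)}_{k,n+1}$ records precisely the thresholded counts $\min(c_\sigma,k)$ through its conjuncts $\Diamond_{=\ell}\sigma$ (for $\ell<k$) and $\Diamond_{\geq k}\sigma$. Since $m \leq k$, one checks the elementary identity $\sum_\sigma c_\sigma \geq m \iff \sum_\sigma \min(c_\sigma,k) \geq m$, so the thresholded data determine the truth of $\Diamond_{\geq m}\chi$; as $(G,w)$ and $(H,u)$ carry the same depth-$(n+1)$ type, they agree here. The full-type case is identical but simpler: the full type records every $c_\sigma$ exactly (and the out-degree), so no restriction on $m$ is required.

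Two auxiliary facts feed the above, and I would verify them in passing by induction on $n$: that $T_{k,n}$ is finite, bounded by the number of choices of a depth-$0$ type together with a value in $\{0,\dots,k-1\}\cup\{{\geq}k\}$ for each type in $T_{k,n-1}$, while each $T_n$ is at most countable; and that, for fixed modal depth and width, distinct types are logically contradictory and every pointed graph satisfies exactly one of them. The main obstacle is the graded-diamond case of the type lemma in the bounded-width setting, i.e.\ arguing that the thresholded multiplicities stored in a width-$k$ type suffice to decide graded modalities $\Diamond_{\geq m}$ with $m \leq k$; the threshold-sum identity above is exactly what makes this go through, and it is also the reason the width of $\varphi$ must not exceed $k$ in part~2.
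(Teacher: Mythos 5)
Your proof is correct and follows essentially the same route as the paper: both reduce the claim to the fact that two pointed graphs sharing the same (full or width-$k$) graded type of depth $n$ cannot be distinguished by any $\GML$-formula of depth at most $n$ (and width at most $k$), and then take the disjunction of the types compatible with $\varphi$. The only difference is that the paper asserts this indistinguishability fact without proof, whereas you supply the inductive argument for it (including the threshold-sum identity $\sum_\sigma c_\sigma \geq m \iff \sum_\sigma \min(c_\sigma,k) \geq m$ for $m \leq k$, which is exactly the point where the width bound is needed), so your write-up is a more self-contained version of the same proof.
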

\begin{proof}
    First we prove the case for graded types of width $k$, then we prove the case for full graded types. Let $T_{k, n}$, as above, denote the set of all graded $\Pi$-types of width $k$ and modal depth $n$. Let $\Phi = \{\, \tau \in T_{k,n} \mid \tau \models \varphi \,\}$ and $\neg \Phi= \{\, \tau \in T_{k,n} \mid \tau \models \neg\varphi \,\}$, and let $\bigvee \Phi$ denote the disjunction of the types in $\Phi$. Note that this disjunction is finite since the set $T_{k,n}$ is finite.
    Obviously we have that $\Phi \cap \neg \Phi = \emptyset$ and $\bigvee \Phi \models \varphi$. To show that $\varphi \models \bigvee \Phi$, it suffices to show that $\Phi \cup \neg \Phi = T_{k,n}$.
    Assume instead that $\tau \in T_{k,n} \setminus (\Phi \cup \neg \Phi)$. Then there exist pointed $\Pi$-labeled graphs $(G, w)$ and $(H, v)$ that satisfy $\tau$ such that $G, w \models \varphi$ and $H, v \models \neg \varphi$. Since $(G, w)$ and $(H, v)$ satisfy the same graded $\Pi$-type of modal depth $n$ and width $k$, there can be no $\Pi$-formula of $\GML$ of modal depth at most $n$ and width at most $k$ that distinguishes $(G, w)$ and $(H, v)$, but $\varphi$ is such a formula, which is a contradiction. Ergo, $\bigvee \Phi$ and $\varphi$ are logically equivalent.

    In the case of full graded types we first observe that the set $T_n$ of full graded $\Pi$-types of modal depth $n$ is countable. Thus the set $\Phi = \{\, \tau \in T_{n} \mid \tau \models \varphi \,\}$ is countable, as is the set $\neg \Phi= \{\, \tau \in T_{n} \mid \tau \models \neg\varphi \,\}$. Now, with the same proof as for graded types of width $k$, we can show that $\bigvee \Phi$ and $\varphi$ are logically equivalent. Therefore, $\varphi$ is also logically equivalent to a countably infinite disjunction of full graded $\Pi$-types of modal depth $n$.
\end{proof}

Now, we are ready to show the translation from $\VGML$ to $\CMPA$s.
\begin{lemma}\label{lemma: omega-GML to CMPA}
    For each $\Pi$-formula of $\VGML$, we can construct an equivalent counting type automaton over $\Pi$. If the formula has finite width $k$, we can also construct an equivalent counting type automaton of width $k$.
\end{lemma}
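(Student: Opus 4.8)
The plan is to regard an arbitrary $\VGML$-formula as a countable disjunction $\chi = \bigvee_{\psi \in \Psi} \psi$ of $\GML$-formulas (a lone $\GML$-formula being the singleton case $\Psi = \{\psi\}$), to take the counting type automaton defined above as the machine, and to choose its accepting set $F$ so that it accepts $(G,w)$ exactly when $G,w \models \chi$. The engine of the argument is a correctness lemma: for every pointed $\Pi$-labeled graph $(G,w)$ and every $n \in \N$, the state of the counting type automaton at $w$ in round $n$ is precisely the full graded type $\tau^{(G,w)}_n$. I would prove this by induction on $n$. The base case holds since $\pi(\lambda(w)) = \tau^{(G,w)}_0$. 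For the step, note that the $\tau_0$ occurring in the definition of $\delta$ is forced to be $\tau^{(G,w)}_0$, as it is the unique depth-$0$ full type not contradicting $\tau^{(G,w)}_n$ and $\tau^{(G,w)}_0$ is a conjunct of $\tau^{(G,w)}_n$; moreover, by the induction hypothesis the multiset received by $w$ in round $n$ is exactly $\{\!\{\, \tau^{(G,u)}_n \mid (w,u)\in E \,\}\!\}$. Comparing the displayed formula for $\delta(\tau^{(G,w)}_n, N)$ with the recursive definition of $\tau^{(G,w)}_{n+1}$ (noting $|N| = |\cN(w)|$) shows that they coincide.

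Next I would fix the accepting set. For each $\psi \in \Psi$ let $n_\psi$ be its modal depth and set $\Phi_\psi \colonequals \{\, \tau \in T_{n_\psi} \mid \tau \models \psi \,\}$; by Proposition~\ref{Formula_to_types_GML}(1) this is countable and $\psi$ is equivalent to $\bigvee \Phi_\psi$. Put $F \colonequals \bigcup_{\psi\in\Psi}\Phi_\psi$, a countable set of states. For the forward direction, if $G,w\models\chi$ then $G,w\models\psi$ for some $\psi$, hence $\tau^{(G,w)}_{n_\psi}\models\psi$, i.e.\ $\tau^{(G,w)}_{n_\psi}\in\Phi_\psi\subseteq F$; by the correctness lemma this type is $w$'s state in round $n_\psi$, so the automaton accepts. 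Conversely, if the automaton accepts then $\tau^{(G,w)}_m\in F$ for some round $m$, so $\tau^{(G,w)}_m\in\Phi_\psi$ for some $\psi$; since every $\tau^{(G,w)}_n$ has modal depth exactly $n$ and the elements of $\Phi_\psi$ have modal depth $n_\psi$, this forces $m=n_\psi$, whence $\tau^{(G,w)}_{n_\psi}\models\psi$ and $G,w\models\psi$, so $G,w\models\chi$.

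For the width-bounded claim I would run the identical argument with the counting type automaton of width $k$: its round-$n$ state at $w$ is $\tau^{(G,w)}_{k,n}$, by the same induction, now matching $\delta$ to the recursive definition of graded types of width $k$. This matching is legitimate precisely because that definition records multiplicities only up to $k$, which is what makes the machine a genuine $k$-$\CMPA$. Using Proposition~\ref{Formula_to_types_GML}(2), each $\psi$ of width at most $k$ is equivalent to the \emph{finite} disjunction $\bigvee\{\, \tau\in T_{k,n_\psi}\mid\tau\models\psi \,\}$, and taking $F$ to be the union of these finite sets yields the desired width-$k$ automaton by the same acceptance analysis.

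The main obstacle is the correctness lemma: it requires carefully reconciling the two separately given recursions, the definition of (full) graded types and the definition of the transition function $\delta$, showing in particular that the conjunct $\Diamond_{=|N|}\top$ together with the counted multiplicities reproduces $\tau^{(G,w)}_{n+1}$ verbatim. The second delicate point is in the converse acceptance direction, where one must exploit that a full graded type has modal depth equal to its defining index, so that a state in $\Phi_\psi$ can be reached only in round $n_\psi$; without this, a spurious match in the wrong round could invalidate the $(\Rightarrow)$ implication.
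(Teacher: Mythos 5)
Your proposal is correct and follows essentially the same route as the paper: apply Proposition~\ref{Formula_to_types_GML} to replace each disjunct by a disjunction of (full or width-$k$) graded types, take the counting type automaton, and let the accepting states be exactly the types occurring in those disjunctions. The only difference is that you explicitly prove by induction that the automaton's round-$n$ state is $\tau^{(G,w)}_n$ and that acceptance can only occur in round $n_\psi$, details the paper treats as immediate from the definition of counting type automata.
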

\begin{proof}
    Assume the class $\cK$ of pointed $\Pi$-labeled graphs is expressed by the countable disjunction $\psi \colonequals \bigvee_{\varphi \in S} \varphi$ of $\Pi$-formulae $\varphi$ of $\GML$.
    
    First, we prove the case without the width bound. 
    By Proposition \ref{Formula_to_types_GML}, each $\varphi \in S$ is logically equivalent with a countably infinite disjunction $\varphi^*$ of full graded $\Pi$-types of $\GML$ such that the modal depth of $\varphi^*$ is the same as the modal depth of $\varphi$.
    We define a counting type automaton $\cA$ whose set $F$ of accepting states is the set of types that appear as disjuncts of $\varphi^*$ for any $\varphi \in S$. Now $(G, w) \in \cK$ if and only if $G, w \models \psi$ if and only if $G, w \models \tau$ for some $\tau \in F$ if and only if the state of $(G,w)$ is $\tau$ in round $n$ in $\cA$, where $n$ is the modal depth of $\tau$. Ergo, $\cA$ accepts exactly the pointed $\Pi$-labeled graphs in $\cK$.

    The case for the width bound is analogous. First with Proposition \ref{Formula_to_types_GML} we transform each disjunct $\varphi \in S$ 
    to a finite disjunction $\varphi^+$ of graded $\Pi$-types with the same width and modal depth as $\varphi$. Then, for $\psi$ we construct an equivalent type automaton of width $k$, where the set of accepting states is the set of graded types of width $k$ that appear as disjuncts of $\varphi^+$ for any $\varphi \in S$.
\end{proof}

Before we prove Lemma \ref{lemma: cmpa to omega-gml} we prove another helpful (and quite obvious) lemma.
\begin{lemma}\label{lem: Types_to_states_GML}
    Two pointed $\Pi$-labeled graphs $(G, w)$ and $(H, v)$ satisfy exactly the same full graded $\Pi$-type of modal depth $n$ (respectively, graded $\Pi$-type of width $k$ and modal depth $n$) if and only if they share the same state in each round up to $n$ for each $\CMPA$ (resp., each $k$-$\CMPA$) over $\Pi$.
\end{lemma}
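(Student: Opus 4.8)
The plan is to prove both directions of the biconditional, carrying out the full graded type / $\CMPA$ case in detail and remarking that the width-$k$ / $k$-$\CMPA$ case is entirely analogous. Two auxiliary facts drive the argument. First, the full graded $\Pi$-type of modal depth $n$ determines the full graded $\Pi$-type of every smaller modal depth $m \leq n$; this is immediate from the recursive definition, since $\tau^{(G,w)}_{n+1}$ contains $\tau^{(G,w)}_0$ as a conjunct and its remaining conjuncts $\Diamond_{=\ell}\tau$ range over depth-$n$ types, each of which already encodes its lower-depth subtype. Second, the counting type automaton defined just above the lemma actually computes types: in round $n$, the state of $(G,w)$ is exactly $\tau^{(G,w)}_n$.

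For the direction from types to states, I would fix an arbitrary $\CMPA$ $\mathcal{A}$ and prove by induction on $\ell$ that, if $(G,w)$ and $(H,v)$ carry the same full graded type of modal depth $\ell$, then $w$ and $v$ occupy the same state in round $\ell$ of $\mathcal{A}$. The base case $\ell=0$ holds because the depth-$0$ type records exactly the label set, on which the initialization function $\pi$ depends. For the step, suppose the depth-$(\ell+1)$ types agree. Via its conjuncts $\Diamond_{=m}\sigma$ (together with $\Diamond_{=\abs{\cN(w)}}\top$, which fixes the total out-degree and renders the listing of positive multiplicities complete), the depth-$(\ell+1)$ type encodes the \emph{exact multiset} of depth-$\ell$ types of out-neighbours; hence there is a type-preserving bijection between the out-neighbours of $w$ and those of $v$. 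By the induction hypothesis, matched out-neighbours share their round-$\ell$ state, so the multisets of round-$\ell$ out-neighbour states of $w$ and of $v$ coincide; the first fact together with the induction hypothesis at depth $\ell$ also gives that $w$ and $v$ share their own round-$\ell$ state. Applying the transition function $\delta$ to equal arguments yields equal round-$(\ell+1)$ states. Since equal depth-$n$ types force equal depth-$\ell$ types for every $\ell \leq n$, agreement of states in all rounds up to $n$ follows for every $\CMPA$.

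For the converse, I would establish the second fact by a short induction on $n$: the initialization sets the round-$0$ state to the depth-$0$ type matching the labels, and $\delta(\tau, N)$ rebuilds $\tau^{(G,w)}_{n+1}$ from $w$'s own depth-$0$ subtype and the multiset $N$ of depth-$n$ types of out-neighbours, which by the inductive hypothesis is precisely their round-$n$ state multiset. Granting this, the converse is immediate: instantiating the hypothesis ``same state in each round up to $n$ for each $\CMPA$'' with the counting type automaton forces $w$ and $v$ to share the round-$n$ state, which by the second fact is exactly their full graded type of modal depth $n$.

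The width-bounded case runs identically with graded types of width $k$, $k$-$\CMPA$s, and the width-$k$ counting type automaton; the only change is that out-neighbour multiplicities are tracked only up to $k$ on both sides (the $\Diamond_{=\ell}$ conjuncts stop at $\ell=k-1$ and $\Diamond_{\geq k}$ absorbs the rest), which matches the $k$-projection used in $k$-$\CMPA$ transition functions. I expect the main obstacle to be the inductive step of the first direction: one must observe that the higher-depth type records not merely \emph{which} types occur among the out-neighbours but their \emph{exact multiplicities}, so that type agreement lifts to a multiset equality of round-$\ell$ out-neighbour states and therefore survives application of $\delta$.
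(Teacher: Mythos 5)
Your proof is correct, and its substance --- an induction on modal depth in which the depth-$(n{+}1)$ type is unpacked into the node's depth-$0$ type together with the exact multiset of depth-$n$ types of its out-neighbours, which is then matched against the arguments fed to the transition function $\delta$ --- is the same as the paper's. The difference is structural. The paper runs a single induction that proves the biconditional as one chain of equivalences, so the converse direction (same states for every $\CMPA$ implies same type) is extracted from the inductive hypothesis read as an ``iff''; you instead split the two directions and discharge the converse by instantiating the universal quantifier over $\CMPA$s with the counting type automaton, whose round-$n$ state is by construction the full graded type of modal depth $n$. Your route makes explicit a witness that the paper leaves implicit and is arguably more transparent about why the converse holds, at the cost of needing the (easy, and in the paper already implicit in the definition of counting type automata) correctness claim for that automaton as a separate auxiliary fact; the paper's version avoids a named witness but its equivalence chain is terser and leans on the biconditional induction hypothesis in a way that takes a moment to unwind. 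You are also right to isolate as the one delicate point that the higher-depth type records exact multiplicities (via $\Diamond_{=\ell}$ and $\Diamond_{=\abs{\cN(w)}}\top$) rather than mere occurrence, and your handling of the width-$k$ case --- capping multiplicities with $\Diamond_{\geq k}$ to match the $k$-projection in the transition function --- agrees with the paper's remark that the bounded case is analogous.
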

\begin{proof}
    We prove the claim by induction over $n$ without the width bound, since again the case for the width bound is analogous. Let $n = 0$. Two pointed $\Pi$-labeled graphs $(G, w)$ and $(H, v)$ share the same full graded $\Pi$-type of modal depth $0$ if and only if they satisfy the exact same node label symbols if and only if each initialization function $\pi$ assigns them the same initial state.
    
    Now assume the claim holds for $n$. Two pointed $\Pi$-labeled graphs $(G, w)$ and $(H, v)$ satisfy the same full graded $\Pi$-type of modal depth $n + 1$ if and only if \textbf{1)} they satisfy the same full graded $\Pi$-type of modal depth $0$ and \textbf{2)} for each full graded $\Pi$-type $\tau$ of modal depth $n$, they have the same number of neighbors that satisfy $\tau$. Now \textbf{1)} holds if and only if $(G, w)$ and $(H, v)$ satisfy the same node label symbols, and by the induction hypothesis \textbf{2)} is equivalent to the neighbors of $(G, w)$ and $(H, v)$ sharing (pair-wise) the same state in each round (up to $n$) for each $\CMPA$. This is equivalent to $(G, w)$ and $(H, v)$ satisfying the same full graded $\Pi$-type of modal depth $0$,
    and receiving the same multiset of states as messages in round $n$ in each $\CMPA$. By the definition of the transition function, this is equivalent to $(G, w)$ and $(H, v)$ sharing the same state in round $n + 1$ for each $\CMPA$.
\end{proof}

Now, we are able to show the translation from $\CMPA$s to $\VGML$.
\begin{lemma}\label{lemma: cmpa to omega-gml}
    For each $\CMPA$ (respectively, each $k$-$\CMPA$) over $\Pi$, we can construct an equivalent $\Pi$-formula of $\VGML$ (resp., of width $k$).
\end{lemma}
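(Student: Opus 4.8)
The plan is to leverage the fact, established in Lemma~\ref{lem: Types_to_states_GML}, that the state a $\CMPA$ assigns to a node $w$ in round $n$ depends only on the full graded $\Pi$-type $\tau^{(G,w)}_{n}$ of modal depth $n$ of $(G,w)$. Fix a $\CMPA$ $\cA = (Q, \pi, \delta, F)$. For each $n \in \N$ and each full graded type $T \in T_{n}$, I would pick any pointed $\Pi$-labeled graph $(G,w)$ with $\tau^{(G,w)}_{n} = T$ (one exists, since every member of $T_{n}$ is by definition realized by some pointed graph) and call $T$ \emph{$n$-accepting} if the state of $\cA$ at $w$ in round $n$ lies in $F$. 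By Lemma~\ref{lem: Types_to_states_GML} the round-$n$ state is the same for every witness of $T$, so this does not depend on the chosen $(G,w)$ and $n$-acceptance is a well-defined property of $T$. The candidate formula is then the disjunction of all $n$-accepting types across all rounds,
\[
    \chi \colonequals \bigvee_{n \in \N}\ \bigvee \{\, T \mid T \in T_{n} \text{ and } T \text{ is } n\text{-accepting} \,\}.
\]

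Next I would verify that $\chi$ is a genuine $\VGML$-formula. Each full graded type $\tau^{(G,w)}_{n}$ is, over a finite graph, a \emph{finite} conjunction: its depth-$0$ part is a conjunction of literals over the finite set $\Pi$, and the modal part consists of the conjuncts $\Diamond_{= \ell}\tau$ for the finitely many subtypes $\tau$ realized with positive multiplicity (together with $\Diamond_{=\abs{\cN(w)}}\top$), where each $\ell$ is bounded by the out-degree of $w$. Hence every disjunct of $\chi$ is an ordinary (finite) $\GML$-formula. Moreover, since $\Pi$ is finite, $T_{0}$ is finite and, inductively, each $T_{n}$ is countable (a full type of depth $n+1$ amounts to a finite-support assignment of multiplicities in $\N$ to the countably many depth-$n$ types), so $\bigcup_{n} T_{n}$ is countable and $\chi$ is a countable disjunction of $\GML$-formulae, i.e.\ an $\VGML$-formula.

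For correctness I would use that a pointed graph satisfies exactly one full graded type of each depth, namely its own: distinct full types of the same depth are mutually contradictory by the canonical construction, and $(G,w) \models \tau^{(G,w)}_{n}$ always holds. Consequently $(G,w) \models \chi$ iff for some $n$ the graph satisfies some $n$-accepting type $T \in T_{n}$, iff its own type $\tau^{(G,w)}_{n}$ is $n$-accepting (as that is the only depth-$n$ type it satisfies), iff the state of $\cA$ at $w$ in round $n$ is in $F$ for some $n$, iff $\cA$ accepts $(G,w)$. The width-bounded case is completely parallel: replace full graded types by graded types of width $k$ throughout, invoking the width-$k$ version of Lemma~\ref{lem: Types_to_states_GML} and the finiteness of each $T_{k,n}$; since every disjunct is then a graded type of width $k$, the resulting formula has width $k$.

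The main obstacle is really concentrated in the well-definedness of $n$-acceptance, which is exactly what Lemma~\ref{lem: Types_to_states_GML} delivers (the direction ``same depth-$n$ type $\Rightarrow$ same states in rounds $\leq n$''). The two remaining points that need care are that each type is expressible as a \emph{finite} $\GML$-formula, guaranteed by the finiteness of the graphs, and that the overall disjunction remains countable, guaranteed by finiteness of $\Pi$; the rest is routine bookkeeping on types.
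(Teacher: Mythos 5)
Your proposal is correct and follows essentially the same route as the paper: the paper also forms the countable disjunction of all full graded types $\tau^{(G,w)}_{n}$ of those pointed graphs accepted in round $n$, and uses Lemma~\ref{lem: Types_to_states_GML} to transfer acceptance between any two pointed graphs sharing a depth-$n$ type (your ``$n$-accepting'' types coincide with this set). Your additional checks that each type is a finite $\GML$-formula and that the disjunction is countable are just more explicit versions of remarks the paper makes in passing.
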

\begin{proof}
    We prove the claim without the width bound, since the case with the width bound is analogous.
    Assume that the class $\cK$ of pointed $\Pi$-labeled graphs is expressed by the counting message passing automaton $\cA$ over $\Pi$. Let $\cT$ be the set of all full graded $\Pi$-types and let 
    \[
    \Phi = \{\, \tau^{(G,w)}_{n} \in \cT \mid \text{$\cA$ accepts the pointed $\Pi$-labeled graph $(G, w) \in \cK$ in round $n$} \,\}.
    \]
    We define the countable disjunction $\bigvee_{\tau \in \Phi} \tau$ and show that $G, w \models \bigvee_{\tau \in \Phi} \tau$ if and only if $\cA$ accepts $(G, w)$.
    Note that $\Phi$ is countable since $\cT$ is countable.
    
    If $G, w \models \bigvee_{\tau \in \Phi} \tau$, then $G, w \models \tau^{(H,v)}_{n}$ for some pointed $\Pi$-labeled graph $(H, v)$ accepted by~$\cA$ in round $n$. This means that $(G,w)$ and $(H,v)$ satisfy the same full graded $\Pi$-type of modal depth $n$. By Lemma \ref{lem: Types_to_states_GML}, this means that $(G, w)$ and $(H, v)$ share the same state in~$\cA$ in each round $\ell \leq n$. Since $\cA$ accepts $(H, v)$ in round $n$, $\cA$ also accepts $(G, w)$ in round $n$. Conversely, if $\cA$ accepts $(G, w)$ in round $n$, then $\tau^{(G,w)}_{n} \in \Phi$ and thus $G, w \models \bigvee_{\tau \in \Phi} \tau$.
\end{proof}

Finally, we prove Theorem \ref{thrm: equi omega-GML CMPA}.
\begin{proof}[Proof of Theorem \ref{thrm: equi omega-GML CMPA}]
    Note that the proof uses auxiliary results that are introduced in this subsection.
    Lemma \ref{lemma: omega-GML to CMPA} shows that for each $\Pi$-formula of $\VGML$ 
    we can construct an equivalent $\CMPA$ 
    over $\Pi$ and Lemma \ref{lemma: cmpa to omega-gml} shows that for each $\CMPA$ 
    over $\Pi$ we can construct an equivalent $\Pi$-formula of $\VGML$. 
\end{proof}

By Lemma \ref{lemma: omega-GML to CMPA} and Lemma \ref{lemma: cmpa to omega-gml}, it is also straightforward to conclude a similar result for bounded $\CMPA$s and width-bounded $\VGML$-formulae.
\begin{theorem}\label{theorem: equiv bounded CMPAs and bounded omega-GML}
Bounded $\CMPA$s have the same expressive power as width-bounded formulae of $\VGML$.
\end{theorem}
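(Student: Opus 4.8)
The plan is to prove both inclusions by directly invoking the width-bounded refinements already built into Lemmas~\ref{lemma: omega-GML to CMPA} and~\ref{lemma: cmpa to omega-gml}, rather than redoing the type-automaton machinery from scratch. The key observation is that both of those lemmas were stated with a parenthetical ``width $k$'' clause precisely so that this corollary falls out by bookkeeping, using the correspondences ``bounded $\CMPA$'' $\equiv$ ``$k$-$\CMPA$ for some $k \in \N$'' and ``width-bounded $\VGML$'' $\equiv$ ``finite width $k$ for some $k \in \N$''.

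First I would handle the direction from width-bounded $\VGML$ to bounded $\CMPA$s. Suppose a node property is expressed by an $\VGML$-formula $\chi$ of finite width $k$. The second clause of Lemma~\ref{lemma: omega-GML to CMPA} then provides an equivalent counting type automaton \emph{of width $k$}: each disjunct $\varphi$ of $\chi$ is turned into a finite disjunction of graded $\Pi$-types of the same width $k$, and these types are collected as the accepting states of a width-$k$ type automaton. Since a counting type automaton of width $k$ is by definition a $k$-$\CMPA$, it is in particular a bounded $\CMPA$, so the property is expressible by a bounded $\CMPA$. The only thing to verify is that the width of $\chi$ controls the width of the resulting automaton, which is exactly what that clause delivers.

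For the converse, I would start from a bounded $\CMPA$ $\cA$, which by definition is a $k$-$\CMPA$ for some $k \in \N$. By the parenthetical width-bounded part of Lemma~\ref{lemma: cmpa to omega-gml}, $\cA$ has an equivalent $\Pi$-formula of $\VGML$ of width $k$, and such a formula is width-bounded, as required. Concretely, the formula is the disjunction $\bigvee_{\tau \in \Phi} \tau$ ranging over the graded $\Pi$-types (now of width $k$) that are satisfied by pointed graphs accepted by $\cA$; since each such type has width at most $k$, the whole disjunction has width at most $k$.

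I do not expect any genuine obstacle: the statement is a corollary that simply reads off the width-tracking already present in the two cited lemmas. The only point requiring minor care is to apply each lemma with matching bounds in the two directions, i.e.\ to confirm that the $k$ produced going one way is the same $k$ consumed going the other. This rests on the fact that Lemma~\ref{lem: Types_to_states_GML}, on which Lemma~\ref{lemma: cmpa to omega-gml} depends, was already shown to go through uniformly for $k$-$\CMPA$s and width-$k$ graded types (the ``analogous'' width-bounded case noted in its proof), so no new node-indistinguishability argument is needed.
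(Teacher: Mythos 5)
Your proposal is correct and matches the paper's proof, which likewise derives the theorem directly from the width-bounded clauses of Lemmas~\ref{lemma: omega-GML to CMPA} and~\ref{lemma: cmpa to omega-gml} (the paper states this in one sentence). Your worry about ``matching bounds'' in the two directions is unnecessary, since each direction only needs \emph{some} $k$, but this is harmless over-caution rather than an error.
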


\subsection{Proof of Theorem \ref{omega-GML = GNN = CMPAs}}\label{sec appendix: omega-GML = GNN = GMPAs}

First we recall Theorem \ref{omega-GML = GNN = CMPAs}.

\textbf{Theorem \ref{omega-GML = GNN = CMPAs}.}
\emph{$\GNN[\R]$s have the same expressive power as  $\VGML$.}

Informally, Theorem \ref{omega-GML = GNN = CMPAs} is proved as follows (in the end of this section we give a formal proof). We first prove Lemma \ref{lemma: GNN to omega-GML} which shows that we can translate any $\GNN[\R]$ into an equivalent $\VGML$-formula. To prove Lemma \ref{lemma: GNN to omega-GML}, we need to first prove an auxiliary result Lemma \ref{lem: full types and GNNs}, which shows that if two pointed graphs satisfy the same full graded type of modal depth $n$, then those two pointed graphs share the same feature vector in each round (up to $n$) with any $\GNN[\R]$. For these results, we need graded types which were introduced in Section \ref{sec: omega-GML CMPA}.  

Informally, the converse direction (Lemma \ref{lemma: omega-GML to GNN}) is proved as follows. We first translate the $\Pi$-formula of $\VGML$ into an equivalent $\CMPA$ over $\Pi$ by Theorem \ref{thrm: equi omega-GML CMPA}. Then we translate the equivalent $\CMPA$ into an equivalent counting type automaton (again, see Section \ref{sec: omega-GML CMPA} for the definition of counting type automata). Then we prove that we can construct an equivalent $\GNN[\R]$ for each counting type automaton. Informally, we encode each state of the counting type automaton into an integer and the $\GNN[\R]$ can use them to mimic the type automaton in every step. 

In both directions, we also consider the case where $\GNN[\R]$s are bounded and the $\VGML$-formulae are width-bounded.

Before proving Lemma \ref{lemma: GNN to omega-GML}, we first establish the following useful lemma. 
\begin{lemma}\label{lem: full types and GNNs}
    Two pointed $\Pi$-labeled graphs $(G, w)$ and $(H, v)$ satisfy exactly the same full graded $\Pi$-type of modal depth $n$ (respectively the same graded $\Pi$-type of width $k$ and modal depth $n$) if and only if they share the same state in each round (up to $n$) for each unbounded $\GNN[\R]$ (resp., $\GNN[\R]$ with bound $k$) over $\Pi$.
\end{lemma}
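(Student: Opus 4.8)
The plan is to replay the induction behind Lemma~\ref{lem: Types_to_states_GML} almost verbatim, replacing $\CMPA$s by $\GNN[\R]$s and running the unbounded case (full graded types, unbounded $\GNN[\R]$s) and the bounded case (graded types of width $k$, $\GNN[\R]$s with bound $k$) in parallel. I would induct on $n$ and prove both implications at once. For the base case $n=0$, two pointed graphs carry the same full graded type of modal depth $0$ exactly when $\lambda(w)=\lambda(v)$. Since every $\GNN[\R]$ has an initialization function $\pi$ depending only on the node labels, equal labels force equal initial feature vectors in every $\GNN[\R]$; conversely, choosing one $\GNN[\R]$ with $\pi$ injective on $\cP(\Pi)$ (possible as $\Pi$ is finite) shows that agreement of initial vectors across all $\GNN[\R]$s forces equal labels.

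For the inductive step I would use that the full graded type of modal depth $n+1$ of a node is equivalent to the conjunction of its type of modal depth $0$ together with the exact multiplicity, for each full graded type $\tau$ of modal depth $n$, of out-neighbours realizing $\tau$, and that this depth-$(n+1)$ type entails the depth-$n$ type. In the forward direction, equal depth-$(n+1)$ types give equal labels, equal own depth-$n$ types (hence, by the induction hypothesis, equal own states through round $n$), and equal multisets of out-neighbour depth-$n$ types (hence, again by the induction hypothesis, equal multisets of out-neighbour states at round $n$ for every $\GNN[\R]$). As the transition $\delta(x,y)=\COM(x,\AGG(y))$ is a function, the round-$(n+1)$ states then coincide.

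The converse of the inductive step is the step I expect to be the main obstacle: from ``the states through round $n+1$ agree for every $\GNN[\R]$'' I must recover equality of the out-neighbour multisets of depth-$n$ types. The induction hypothesis reduces this to finding a single machine whose round-$n$ behaviour separates distinct depth-$n$ types, and I would supply it by building a $\GNN[\R]$ that computes the full graded type of modal depth $n$ as (an encoding of) its round-$n$ state. Concretely, I would encode the countably many states of the counting type automaton of Section~\ref{sec: omega-GML CMPA} as distinct reals and, exploiting that an unbounded $\GNN[\R]$ imposes no constraint on $\AGG$, define $\AGG$ to injectively encode each finite multiset over the (countable) set of reachable encodings into one real and $\COM$ to apply the type-automaton transition; on multisets that never occur in a computation the functions are irrelevant and may be set arbitrarily. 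Feeding this type-computing $\GNN[\R]$ into the hypothesis forces the round-$n$ state-multisets of the out-neighbours to agree, whence their depth-$n$ type-multisets agree.

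Finally, the bounded case is identical once ``full graded type'' is replaced by ``graded type of width $k$'' and ``$\GNN[\R]$'' by ``$\GNN[\R]$ with bound $k$'': the width-$k$ type records out-neighbour counts only up to $k$, matching exactly the identity $\AGG(M)=\AGG(M_{|k})$, and the corresponding type-computing machine is the counting type automaton of width $k$, a $k$-$\CMPA$, which the same real-encoding yields as a bound-$k$ $\GNN[\R]$. Thus the only genuinely new ingredient beyond the $\CMPA$ argument of Lemma~\ref{lem: Types_to_states_GML} is verifying that reals with unrestricted aggregation can simulate the countable-state counting type automaton, which is the crux worth spelling out carefully.
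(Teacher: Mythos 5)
Your proposal is correct and follows essentially the same route as the paper, which simply declares the proof ``analogous to that of Lemma~\ref{lem: Types_to_states_GML}'' and relies on the same induction on $n$ over (full) graded types. The one point you rightly flag as the crux --- a single $\GNN[\R]$ that computes the depth-$n$ type as its round-$n$ state, obtained by injectively encoding the countable state set and the finite multisets of the counting type automaton (resp.\ its width-$k$ variant) into reals --- is exactly the construction the paper supplies in the proof of Lemma~\ref{lemma: omega-GML to GNN}, so your argument fills in the detail the paper leaves implicit rather than diverging from it.
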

\begin{proof}
    The proof is analogous to that of Lemma \ref{lem: Types_to_states_GML}.
\end{proof}

With the above lemma, we are ready to prove Lemma \ref{lemma: GNN to omega-GML}.
\begin{lemma}\label{lemma: GNN to omega-GML}
    For each $\GNN[\R]$ $\cG$ over $\Pi$, we can construct an equivalent $\Pi$-formula of $\VGML$. 
    Moreover, if $\cG$ is bounded with the bound $k$, we can construct an equivalent $\Pi$-formula of $\VGML$ of width $k$.
\end{lemma}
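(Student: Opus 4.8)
The plan is to mirror the argument of Lemma \ref{lemma: cmpa to omega-gml} almost verbatim, replacing the appeal to Lemma \ref{lem: Types_to_states_GML} by an appeal to Lemma \ref{lem: full types and GNNs}. First I would fix a $\GNN[\R]$ $\cG$ over $\Pi$ and say that $\cG$ \emph{accepts $(G,w)$ in round $n$} if the feature vector of $w$ in round $n$ lies in $F$. Let $\cT$ be the set of all full graded $\Pi$-types (introduced in Section \ref{sec: omega-GML CMPA}), and define
\[
\Phi = \{\, \tau^{(G,w)}_{n} \in \cT \mid \cG \text{ accepts } (G,w) \text{ in round } n \,\}.
\]
Since $\cT$ is countable (as observed in the proof of Proposition \ref{Formula_to_types_GML}), so is $\Phi$, and hence $\bigvee_{\tau \in \Phi} \tau$ is a legitimate $\Pi$-formula of $\VGML$. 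The claim is that this formula is equivalent to $\cG$.

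For the forward direction, suppose $G, w \models \bigvee_{\tau \in \Phi} \tau$. Then $G, w \models \tau^{(H,v)}_{n}$ for some $(H,v)$ that $\cG$ accepts in round $n$. Because each pointed graph satisfies a unique full graded type of each modal depth, $(G,w)$ and $(H,v)$ satisfy the same full graded $\Pi$-type of modal depth $n$; by Lemma \ref{lem: full types and GNNs} they therefore carry the same feature vector in each round $\ell \leq n$. In particular the feature vector of $w$ in round $n$ equals that of $v$, which lies in $F$, so $\cG$ accepts $(G,w)$. Conversely, if $\cG$ accepts $(G,w)$ in round $n$, then $\tau^{(G,w)}_{n} \in \Phi$, and since $G, w \models \tau^{(G,w)}_{n}$ holds by definition of the full graded type, we conclude $G, w \models \bigvee_{\tau \in \Phi} \tau$.

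For the bounded case I would run the identical argument with graded $\Pi$-types of width $k$ in place of full graded types: set $\Phi_{k} = \{\, \tau^{(G,w)}_{k,n} \mid \cG \text{ accepts } (G,w) \text{ in round } n \,\}$, take the disjunction $\bigvee_{\tau \in \Phi_{k}} \tau$, and invoke the width-$k$ half of Lemma \ref{lem: full types and GNNs} to transfer ``same graded type of width $k$ and modal depth $n$'' to ``same feature vector up to round $n$''. Each disjunct is a graded type of width $k$, so the resulting $\VGML$-formula is width-bounded by $k$, as required.

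I do not expect a genuine obstacle here: the only non-routine ingredient is the indistinguishability Lemma \ref{lem: full types and GNNs}, which is already in hand, so all remaining work is the bookkeeping of collecting the accepting types and using the trivial fact that a pointed graph satisfies its own (full) graded type.
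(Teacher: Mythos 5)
Your proof is correct and follows essentially the same route as the paper's: the paper defines the same set $\Phi$ of accepting full graded types, proves the same equivalence via Lemma \ref{lem: full types and GNNs}, and then packages $\Phi$ as the accepting states of a counting type automaton before invoking Theorem \ref{thrm: equi omega-GML CMPA}, whereas you simply write the disjunction $\bigvee_{\tau \in \Phi} \tau$ directly (which is exactly the formula that detour produces). The width-bounded case is handled identically in both versions.
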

\begin{proof}
    Assume that $\cG$ is a $\GNN[\R]$ over $(\Pi,d)$. Let $\cK$ be the class of pointed $\Pi$-labeled graphs expressed by $\cG$.
    Now let
    \[
    \Phi = \{\, \tau^{(G,w)}_{n} \mid \text{$\cG$ accepts the pointed $\Pi$-labeled graph $(G, w) \in \cK$ in round $n$} \,\},
    \]
    where $\tau^{(G,w)}_{n}$ is the full graded $\Pi$-type of modal depth $n$ of $(G,w)$ (see Section \ref{sec: omega-GML CMPA}). 
    Note that there are only countably many formulae in $\Phi$ since there are only countably many full graded $\Pi$-types.
    Consider the counting type automaton $\cA$ over $\Pi$, where $\Phi$ is the set of accepting states. We will show that $\cA$ accepts $(G, w)$ if and only if $\cG$ accepts $(G, w)$.
    
    If $(G, w)$ is accepted by $\cA$, then $G, w \models \tau^{(H,v)}_{n}$ for some pointed $\Pi$-labeled graph $(H, v)$ accepted by $\cG$ in round $n$. This means that $(G,w)$ and $(H,v)$ satisfy the same full graded $\Pi$-type of modal depth $n$. By Lemma \ref{lem: full types and GNNs}, this means that $(G, w)$ and $(H, v)$ share the same state in $\cG$ in each round $\ell \leq n$. Since $\cG$ accepts $(H, v)$ in round $n$, $\cG$ also accepts $(G, w)$ in round $n$. Conversely, if $\cG$ accepts $(G, w)$ in round $n$, then $\tau^{(G,w)}_{n} \in \Phi$ and thus $(G, w)$ is accepted by $\cA$ by the definition of counting type automata. Thus $\cA$ and $\cG$ are equivalent.
    By Theorem \ref{thrm: equi omega-GML CMPA} we can translate the type automaton $\cA$ into an equivalent $\Pi$-formula of $\VGML$. Therefore, for $\cG$ we can obtain an equivalent $\VGML$-formula.

    Next consider the case where $\cG$ is bounded. We follow the same steps as above with the following modification: instead of constructing a set of full graded types, we construct a set of graded types of width $k$: 
    \[
    \Phi = \{\, \tau^{(G, w)}_{k, n} \mid \text{$\cG$ accepts the pointed $\Pi$-labeled graph $(G, w) \in \cK$ in round $n$} \,\}. 
    \]
    The reasoning of the second paragraph is then modified to refer to graded types of width~$k$ and GNNs with the bound $k$ respectively using Lemma \ref{lem: full types and GNNs}. By Theorem \ref{theorem: equiv bounded CMPAs and bounded omega-GML} we obtain an equivalent width-bounded $\VGML$-formula for the counting type automaton of width~$k$.
\end{proof}

We then show the other direction of Theorem \ref{omega-GML = GNN = CMPAs} in the next lemma.
We first define a graph neural network model that is used as a tool in the proof that follows. A \textbf{recurrent graph neural network over natural numbers} $\GNN[\N]$ over $(\Pi, d)$, is a $\GNN[\R]$ over $(\Pi, d)$ where the feature vectors and the domains and co-domains of the functions are restricted to $\N^d$ instead of $\R^d$.
\begin{lemma}\label{lemma: omega-GML to GNN}
    For each $\Pi$-formula of $\VGML$, we can construct an equivalent $\GNN[\R]$ over $(\Pi, 1)$. 
    Moreover, for each $\Pi$-formula of $\VGML$ of width $k$, we can construct an equivalent bounded $\GNN[\R]$ over $(\Pi, 1)$ with the bound $k$.
\end{lemma}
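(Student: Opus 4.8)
The plan is to reduce the statement to simulating a counting type automaton by a one-dimensional $\GNN[\R]$, exploiting that $\GNN[\R]$s impose no restriction whatsoever on the aggregation and combination functions (cf.\ Remark \ref{GNN uncomputable}). Concretely, given a $\Pi$-formula $\chi$ of $\VGML$, I would first invoke Lemma \ref{lemma: omega-GML to CMPA} to obtain an equivalent counting type automaton $\cA = (Q, \pi, \delta, F)$ over $\Pi$; in the width-$k$ case I would instead take the counting type automaton of width $k$, which by construction is a $k$-$\CMPA$, i.e.\ $\delta(q,N) = \delta(q, N_{|k})$. The only facts I need about $\cA$ are that its state set $Q$ is at most countable and that its one-step transition is a genuine function of the current state and the received multiset of states.

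Next I would turn $\cA$ into a $\GNN[\R]$ $\cG$ over $(\Pi, 1)$ by encoding states as reals. Fix an injection $\mathrm{enc} \colon Q \to \N \subseteq \R$, set the initialization to $P \mapsto \mathrm{enc}(\pi(P))$, and take the accepting set to be $\mathrm{enc}(F)$. The crux is the aggregation: since every feature vector occurring in a run lies in $\mathrm{enc}(Q)$, the multiset received by a node is the $\mathrm{enc}$-image of a finite multiset over $Q$, so I only need an injective encoding of such multisets into a \emph{single} real. As $\cM(\N)$ is countable, any injection works; a concrete choice is $\mathrm{agg}(M) = \prod_{i} p_i^{M(i)}$, where $p_0, p_1, \dots$ enumerate the primes, which is injective by unique factorization. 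I would define $\AGG$ to apply $\mathrm{agg}$ on multisets supported by $\mathrm{enc}(Q)$ (and arbitrarily elsewhere, as such inputs never arise), while in the bounded case I would instead use $\mathrm{agg}(M_{|k})$, which makes $\cG$ a bounded $\GNN[\R]$ with bound $k$. Finally I set $\COM(x, z)$ to decode $q := \mathrm{enc}^{-1}(x)$ and the multiset $N$ from $z$ (possible since $\mathrm{agg}$ is injective), and to output $\mathrm{enc}(\delta(q, N))$; all of these functions are admissible precisely because no arithmetic constraints are placed on $\GNN[\R]$s.

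Correctness then follows by a routine induction on the round number $n$, showing that the feature vector of every node $w$ equals $\mathrm{enc}$ of the state of $w$ in $\cA$ at round $n$: the base case is the initialization, and the inductive step holds because $\AGG$ faithfully records the multiset of neighbour states and $\COM$ reproduces $\delta$. Hence $\cG$ accepts $(G,w)$ iff $\cA$ does iff $G, w \models \chi$, and in the width-$k$ case $\cG$ is bounded with bound $k$, as required. The main obstacle is purely the dimension-$1$ constraint: unlike the coordinate-wise encodings available in higher dimensions, here the entire multiset of neighbour states must be packed losslessly into a single real by $\AGG$, which is what forces the use of an injective multiset encoding and relies essentially on the countability of $Q$ together with the unrestricted arithmetic of $\GNN[\R]$s.
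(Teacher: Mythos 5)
Your proposal is correct and follows essentially the same route as the paper: both reduce to the counting type automaton of Lemma \ref{lemma: omega-GML to CMPA} and then simulate it in dimension $1$ by an injective encoding of states into $\N$, exploiting the unrestricted aggregation and combination functions (the paper encodes each full graded type as the binary encoding of its minimal tree, but explicitly remarks that any such embedding suffices, so your prime-factorization packing of the neighbour multiset is just a different concrete choice). The width-$k$ case is likewise handled identically, by capping the multiset at $k$ before encoding.
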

\begin{proof} 
    We first give an informal description, and then we give a formal proof. We show that each counting type automaton over $\Pi$ can be translated into an equivalent $\GNN[\N]$ over $(\Pi, 1)$, which suffices since counting type automata have the same expressive power as $\VGML$ and $\CMPA$s by Lemma \ref{lemma: cmpa to omega-gml} and by Lemma \ref{lemma: omega-GML to CMPA}.
    Informally, each counting type automaton can be simulated by a graph neural network as follows. Each full type has a minimal tree graph that satisfies the full type. Each such tree can be encoded into a binary string (or more precisely into an integer) in a standard way. These binary strings are essentially used to simulate the computation of the counting type automaton. At each node and in each round, the $\GNN[\N]$ simulates the counting type automaton by combining the multiset of binary strings obtained from its out-neighbours and the local binary string into a new binary string that corresponds to the minimal tree that satisfies the type that would be obtained by the counting type automaton in the same round at the same node.

    Now, we formally prove the statement.
    Let $\cA$ be a counting type automaton. Now we create an equivalent $\GNN[\N]$ $\cG$ over $(\Pi, 1)$ as follows. Each full graded type is converted into the unique finite rooted tree graph $(\cT, r)$ of the same depth as the modal depth of the full type. This graph is encoded into a binary string (in a standard way) as follows.
    \begin{itemize}
        \item The first $n$ bits of the string are $1$s, telling the number of nodes in the tree; we choose some ordering $v_{1}, \dots, v_{n}$ for the nodes. (A natural ordering would be to start with the root, then list its children, then its grandchildren, etc.. The children of each node can be ordered in increasing order of magnitude according to the encodings of their generated subtrees. Another option is to assume that the domain of the tree is always~$[n]$ for some $n \in \N$ and use the standard ordering of integers.) This is followed by a~$0$.
        \item The next $nk$ bits are a bit string $\bb$ that tells which node label symbol is true in which node; for each $\ell \in [n]$ and $i \in [k]$, the $((\ell-1)k + i)$:th bit of $\bb$ is $1$ if and only if $\cT, v_{\ell} \models p_{i}$.
        \item The last $n^{2}$ bits form a bit string $\bb'$, where for each $i, j \in [n]$, the $((i-1)n + j)$:th bit of $\bb'$ is $1$ if and only if $(v_{i}, v_{j}) \in E$, i.e., there is an edge from $v_{i}$ to $v_{j}$.
    \end{itemize}
    
    The GNN $\cG$ operates on these binary strings (we can either use the integers encoded in binary by the binary strings, or interpret the binary strings as decimal strings). In round $0$, the initialization function maps each set of node label symbols to the encoding of the unique one-node tree graph where the node satisfies exactly the node label symbols in question. In each subsequent round $n$, each node receives a multiset $N$ of binary encodings of full graded types of modal depth $n-1$. The aggregation function calculates a binary string that encodes the following graph:
    \begin{itemize}
        \item The graph contains a node $r$ where every node label symbol is false.
        \item For each copy of each element in the multiset $N$, the graph contains a unique copy of the rooted tree encoded by that element.
        \item There is an edge from $r$ to the root of each of these rooted trees; the result is itself a rooted tree where $r$ is the root.
    \end{itemize}
    The combination function receives a binary string corresponding to the node's full graded type $\tau$ of modal depth $n-1$, as well as the string constructed above. It takes the above string and modifies the bits corresponding to the node label symbols at the root; it changes them to be identical to how they were in the root in $\tau$. The obtained binary encoding is the feature of the node in round $n$.

    It is clear that the constructed $\GNN[\N]$ $\cG$ calculates a node's full graded type of modal depth $n$ in round $n$. 
    This is because our construction is an embedding from full graded types to natural numbers. Any other such embedding would also suffice, as the images of the aggregation and combination function are always full graded types and $\GNN[\R]$s do not limit the aggregation and combination functions.
    We can choose the accepting states to be the numbers that encode the accepting states in the corresponding counting type automaton $\cA$, in which case the GNN accepts exactly the same pointed graphs as $\cA$.

    Now consider the case where $\cA$ is a counting type automaton of width $k$. We follow similar steps as above with the following modifications.
    \begin{itemize}
        \item We convert each graded type of width $k$ and modal depth $n$ to a corresponding rooted tree. We choose the tree with depth $n$ and the least amount of branches; in other words, each node has at most $k$ identical out-neighbours.
        \item The aggregation function of $\cG$ constructs a graph in the same way as above, except that it creates at most $k$ copies of a graph encoded by an element in the multiset.
    \end{itemize}
    It is clear that the resulting GNN is bounded, as the aggregation function is bounded by $k$.

    We have now shown that for each counting type automaton we can construct an equivalent $\GNN[\R]$ (or more precisely an equivalent $\GNN[\N]$) and respectively for each counting type automaton of width $k$ we can construct an equivalent bounded $\GNN[\R]$ (or more precisely an equivalent bounded $\GNN[\N]$). Therefore, by Lemma \ref{lemma: cmpa to omega-gml} and by Lemma \ref{lemma: omega-GML to CMPA} for each formula of $\omega$-$\GML$ (and respectively for each $\CMPA$) we can construct an equivalent $\GNN[\R]$. Analogously, for each width-bounded formula of $\omega$-$\GML$ (and resp., for each bounded $\CMPA$) we can construct an equivalent bounded $\GNN[\R]$.
\end{proof}

We are now ready to formally prove Theorem \ref{omega-GML = GNN = CMPAs}.

\begin{proof}[Proof of Theorem \ref{omega-GML = GNN = CMPAs}]
    Note that the proof uses auxiliary results that are introduced in this subsection.
    By Lemma \ref{lemma: GNN to omega-GML} we can construct an equivalent $\Pi$-formula of $\VGML$ for each $\GNN[\R]$ over $\Pi$. By Lemma \ref{lemma: omega-GML to GNN} we can construct an equivalent $\GNN[\R]$ over $\Pi$ for each $\Pi$-formula of $\VGML$.
\end{proof}

\subsection{Proof of Remark \ref{remark: additional results}}\label{appendix: remark}

In this section we formally prove the claims in Remark \ref{remark: additional results}, expressed in Theorems \ref{theorem: FCMPA = unrestricted GNNF} and \ref{theorem: bounded GNNR = width-bounded VGML}.
Note that we do not require boundedness for $\FCMPA$s in the statement of the lemma below.

First we prove an auxiliary result.
\begin{lemma}\label{lemma: FCMPA to GNN[F]}
    For each $\FCMPA$ over $\Pi$ with state set $Q$, we can construct an equivalent unrestricted $\GNNF$ over $(\Pi, \abs{Q}^{2})$.
\end{lemma}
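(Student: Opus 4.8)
The plan is to exploit the fact that an unrestricted $\GNNF$ places no constraints on its aggregation and combination functions, so I can define them to directly carry out the transition function $\delta$ of the given $\FCMPA$ $\cA = (Q, \pi, \delta, F)$. Write $n \colonequals \abs{Q}$ and fix an enumeration $Q = \{q_1, \dots, q_n\}$. I pick any floating-point system $S$ containing the values $0$ and $1$ (such systems clearly exist), and build a $\GNNF$ $\cG$ over $(\Pi, n^2)$ whose feature vectors live in $S^{n^2}$. The guiding idea is to encode each state $q_i$ as the one-hot vector $\mathrm{enc}(q_i) \in S^{n^2}$ that has a $1$ in coordinate $i$ and $0$ elsewhere; this encoding is clearly injective on $Q$. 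I then set the initialization function of $\cG$ to $\mathrm{enc} \circ \pi$ and its set of accepting feature vectors to $\{\, \mathrm{enc}(q) \mid q \in F \,\}$, so that a feature vector is accepting exactly when it encodes an accepting state.

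The main obstacle is that the aggregation function only sees the multiset of feature vectors of the out-neighbours, and hence the multiset $M$ of their states, but \emph{not} the node's own state, whereas $\delta(q, M)$ depends on both. I resolve this by letting aggregation output the entire function $q \mapsto \delta(q, M)$ rather than a single state. Concretely, since $\mathrm{enc}$ is injective, a multiset of state-encodings determines a unique multiset $M \in \cM(Q)$, and I define $\AGG$ to map it to the flattened $n \times n$ transition matrix $T_M \in S^{n^2}$ whose entry in position $(i-1)n + j$ is $1$ iff $\delta(q_i, M) = q_j$ and $0$ otherwise (on multisets not of this form $\AGG$ may be defined arbitrarily). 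This is precisely where the dimension $n^2 = \abs{Q}^2$ is used, and it is exactly why unboundedness is harmless: although $M$ may carry arbitrarily large multiplicities that no finite float could store, the object $T_M$ produced by the aggregation ranges over a finite set and is all that is needed downstream. The combination function $\COM$ is then defined to take a state-encoding $\mathrm{enc}(q_i)$ together with a matrix $T$ and return the one-hot vector recorded in the $i$-th row of $T$; for $T = T_M$ this row is precisely $\mathrm{enc}(\delta(q_i, M))$.

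With these definitions the correctness argument is a routine induction on the round number $t$, showing that $x_v^t = \mathrm{enc}(g_t(v))$ for every node $v$, where $g_t$ denotes the global configuration of $\cA$ in round $t$. The base case $t = 0$ holds because the initialization function of $\cG$ is $\mathrm{enc} \circ \pi$. For the step, the induction hypothesis gives that the multiset aggregated at $v$ is $\{\!\{\, \mathrm{enc}(g_t(u)) \mid (v,u) \in E \,\}\!\}$, which encodes $M_v \colonequals \{\!\{\, g_t(u) \mid (v,u) \in E \,\}\!\}$; hence $\AGG$ returns $T_{M_v}$, and $\COM(\mathrm{enc}(g_t(v)), T_{M_v})$ returns $\mathrm{enc}(\delta(g_t(v), M_v)) = \mathrm{enc}(g_{t+1}(v))$, as required. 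Finally, since $\mathrm{enc}$ is injective and the accepting set of $\cG$ is $\{\, \mathrm{enc}(q) \mid q \in F \,\}$, node $v$ visits an accepting feature vector in round $t$ iff $g_t(v) \in F$, so $\cG$ and $\cA$ accept exactly the same pointed graphs, completing the construction.
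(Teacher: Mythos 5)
Your proposal is correct and follows essentially the same route as the paper's proof: one-hot encoding of states into the first $\abs{Q}$ coordinates, an aggregation function that outputs the flattened look-up table of the map $q \mapsto \delta(q,M)$ in position $(i-1)\abs{Q}+j$, and a combination function that reads off the row indexed by the current state, with correctness by induction on the round number. The paper's construction and your $T_M$-matrix idea coincide in every essential detail, including the observation that unrestricted aggregation makes the unbounded multiplicities harmless.
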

\begin{proof}
    We encode the $\FCMPA$ fully into the $\GNNF$. First we give an informal description. The feature vectors encode which state a node is in using one-hot encoding, i.e., only the bit corresponding to the occupied state is $1$ and others are $0$. 
    For a received multiset $M$ of feature vectors, the aggregation function encodes the function $\delta_{M} \colon Q \to Q, \delta_{M}(q) = \delta(q, M)$ into a vector, i.e., it tells for each state $q_{i}$ which state $q_{j}$ satisfies $\delta(q_{i}, M) = q_{j}$. To encode such a function we need $\abs{Q}^2$ components in the feature vector.
    Finally, the combination function receives (the encodings of) $q$ and $\delta_{M}$ and computes (the encoding of) $\delta_{M}(q)$.
    
    Now we define the construction formally. Let $\Pi$ be a 
    set of node label symbols. Let $\cA = (Q, \pi, \delta, F)$ be an $\FCMPA$ over $\Pi$. Assume some arbitrary ordering $<_Q$ between the states $Q$. Let $q_1, \ldots, q_{\abs{Q}}$ enumerate the states of $Q$ w.r.t. $<_Q$. 
    Given a multiset $M \in \cM(Q)$ the function $\delta_{M} \colon Q \to Q$, $\delta_{M}(q) = \delta(q, M)$ specified by $M$ is possible to encode to the binary string $\bd_M \in \{0,1\}^{\abs{Q}^2}$ as follows.
    If $\delta_M (q_i) = q_j$, then the $((i-1)\abs{Q} + j)$:th bit of~$\bd_M$ is $1$; the other bits are $0$. That is, $\bd_M$ encodes the function $\delta_M$ in binary.

    We construct a GNN over $(\Pi, \abs{Q}^2)$ over a floating-point system $S$ that includes at least $0$ and $1$ (that is to say, small floating-point systems suffice). 
    For all $i \leq \abs{Q}$, we let $\bd_i \in S^{\abs{Q}^2}$ denote the one-hot string where exactly the $i$th bit is $1$ and the other bits are $0$s. 
    \begin{itemize}
        \item Let $P \subseteq \Pi$. The initialization function $\pi'$ is defined as $\pi'(P) = \bd_i$ where $\pi(P) = q_i$. 
        \item The aggregation function $\AGG$ is defined as follows. Assume that the multiset $M$ contains only one-hot strings $\bd_i$, where $i \leq \abs{Q}$. Now $M$ corresponds to a multiset $M^*$, where each $\bd_i$ is replaced by $q_i$. Then $\AGG(M)$ is $\bd_{M^*}$.
        Otherwise, the aggregation function is defined in an arbitrary way. 
        \item The combination function $\COM$ is defined as follows. Let $\bd, \bd' \in S^{\abs{Q}^2}$ such that $\bd = \bd_i$ is a one-hot string for some $i \leq \abs{Q}$ and $\bd' = \bd_M$ for some multiset $M \in \cM(Q)$. Then we define that $\COM(\bd, \bd') = \COM(\bd_i, \bd_M) = \bd_j$ if and only if $\delta_M(q_i) = q_j$. Otherwise $\COM$ is defined in an arbitrary way.
        \item The set $F'$ of accepting states is defined as follows. If $\bd = \bd_j$ is a one-hot string, then $\bd_j \in F' $ if and only if $q_j \in F$. Otherwise $F'$ is defined in an arbitrary way.
    \end{itemize}
    It is easy to show by induction over $n \in \N$ that for any pointed $\Pi$-labeled graph $(G, w)$, the state of $\cA$ at $w$ in round $n$ is $q_i \in Q$ if and only if the state of $\cG$ at $w$ in round $n$ is $\bd_i$.
    
    We note that this proof is based only on providing two embeddings to $S^{d}$ where $S$ is the floating-point system and $d$ is the dimension of the $\GNNF$. The first is an embedding from the set $Q$ and the second is an embedding from the set of functions $Q \to Q$, and the two embeddings need not be related in any way. Thus, any choice of floating-point system $S$ and dimension $d$, such that there are at least $\abs{Q}^{2}$ expressible feature vectors, would suffice and we would be able to define such embeddings and the subsequent aggregation and combination functions. In particular, dimension $\abs{Q}^{2}$ suffices for any floating-point system $S$.
\end{proof}

The first claim in Remark \ref{remark: additional results} is stated as follows.
\begin{theorem}\label{theorem: FCMPA = unrestricted GNNF}
    $\FCMPA$s have the same expressive power as unrestricted $\GNNF$s.
\end{theorem}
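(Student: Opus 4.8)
The plan is to establish the two inclusions separately. One direction is already available: Lemma~\ref{lemma: FCMPA to GNN[F]} shows that every $\FCMPA$ over $\Pi$ with state set $Q$ has an equivalent unrestricted $\GNNF$ over $(\Pi, \abs{Q}^2)$, so I would simply invoke that lemma. All the remaining work therefore lies in the converse, namely building an equivalent $\FCMPA$ for each unrestricted $\GNNF$.

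For that direction, the key observation I would start from is that any floating-point system $S = ((p,n,\beta),+,\cdot)$ has a \emph{finite} carrier $D_S$, since there are only finitely many mantissa digit strings $d_1 \cdots d_p$ and finitely many admissible exponents $e \in \{-n,\dots,n\}$. Consequently the state space $S^d$ of an unrestricted $\GNNF$ over $(\Pi, d)$ is finite. This is exactly what lets the target automaton be an $\FCMPA$ rather than a general $\CMPA$.

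Given an unrestricted $\GNNF$ $\cG = (S^d, \pi, \delta, F)$ with $\delta(x,y) = \COM(x, \AGG(y))$, I would define an $\FCMPA$ $\cA = (Q, \pi', \delta', F')$ by taking $Q \colonequals S^d$, $\pi' \colonequals \pi$, $F' \colonequals F$, and the single transition function $\delta'(q, M) \colonequals \COM(q, \AGG(M))$ for $q \in Q$ and $M \in \cM(Q)$. Since $Q = S^d$ is finite, $\cA$ is indeed an $\FCMPA$; note that I impose no boundedness requirement on $\delta'$, which is fine because $\FCMPA$s need not be bounded. A trivial induction on the round number $n$ then shows that the global configuration of $\cA$ in a graph $G$ agrees with the sequence of feature vectors produced by $\cG$ in $G$: the initial round matches because $\pi' = \pi$, and the inductive step matches because $\delta'$ is literally the composite update rule of $\cG$. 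As the two models also share the same acceptance condition (reaching a state in $F = F'$ at least once), $\cA$ and $\cG$ accept exactly the same pointed graphs.

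I do not expect a genuine obstacle here: the result is essentially a matter of matching definitions, the crucial point being the finiteness of $S^d$ that turns the $\GNNF$ into a finite automaton. The only things to be careful about are that the converse lemma (Lemma~\ref{lemma: FCMPA to GNN[F]}) really does produce an \emph{unrestricted} $\GNNF$, and that the $\FCMPA$ I build carries no boundedness restriction, so that both inclusions stay within the intended unrestricted/finite (rather than bounded) classes.
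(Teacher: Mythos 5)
Your proposal is correct and follows the paper's own route exactly: the paper likewise dispatches the direction from $\FCMPA$s to unrestricted $\GNNF$s by invoking Lemma~\ref{lemma: FCMPA to GNN[F]}, and treats the converse as the straightforward identification of a $\GNNF$ over a finite float system $S$ with an $\FCMPA$ whose state set is $S^d$. Your explicit construction of $\cA$ with $Q = S^d$, $\pi' = \pi$, $\delta'(q,M) = \COM(q,\AGG(M))$ and $F' = F$ just spells out what the paper leaves as "straightforward."
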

\begin{proof}
    It is straightforward to translate an unrestricted $\GNNF$ into an equivalent $\FCMPA$. By Lemma \ref{lemma: FCMPA to GNN[F]}, we can construct an equivalent unrestricted $\GNNF$ for each $\FCMPA$. 
\end{proof}

The second claim in Remark \ref{remark: additional results} is stated as follows.
\begin{theorem}\label{theorem: bounded GNNR = width-bounded VGML}
    Bounded $\GNN[\R]$s have the same expressive power as width-bounded $\VGML$.
\end{theorem}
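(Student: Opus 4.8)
The plan is to mirror exactly the proof of Theorem~\ref{omega-GML = GNN = CMPAs}, but tracking the width/bound parameter throughout. Both translation lemmas used there---Lemma~\ref{lemma: GNN to omega-GML} and Lemma~\ref{lemma: omega-GML to GNN}---already come equipped with a \emph{``moreover''} clause that handles precisely the bounded case, so the argument reduces to invoking those two clauses in the two directions and observing that the resulting objects lie in the desired classes.

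For the direction from bounded $\GNN[\R]$s to width-bounded $\VGML$, I would take a bounded $\GNN[\R]$ $\cG$ with bound $k$ and apply the second assertion of Lemma~\ref{lemma: GNN to omega-GML} to obtain an equivalent $\Pi$-formula of $\VGML$ of width $k$, which is in particular width-bounded. The engine behind that assertion is the width-$k$ half of Lemma~\ref{lem: full types and GNNs}: two pointed graphs satisfying the same graded $\Pi$-type of width $k$ and modal depth $n$ agree on the state of $\cG$ in every round up to $n$. Hence the set $\Phi = \{\, \tau^{(G,w)}_{k,n} \mid \cG \text{ accepts } (G,w) \text{ in round } n \,\}$ of graded types of width $k$ describes exactly the accepted pointed graphs, and translating the associated counting type automaton of width $k$ back via Theorem~\ref{theorem: equiv bounded CMPAs and bounded omega-GML} yields a width-$k$ $\VGML$-formula.

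For the converse, I would take a width-bounded $\VGML$-formula, say of width $k$, and apply the second assertion of Lemma~\ref{lemma: omega-GML to GNN} to obtain an equivalent bounded $\GNN[\R]$ (indeed over $(\Pi,1)$) with bound $k$. Here the route passes through a counting type automaton of width $k$: each graded type of width $k$ is encoded into an integer via its minimal tree realization in which every node has at most $k$ identical out-neighbours, and the $\GNN[\N]$ aggregation function creates at most $k$ copies of each received subtree, so the resulting $\GNN[\R]$ is bounded by $k$ by construction. Combining the two directions gives the equivalence of the two classes.

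The two directions together establish the theorem; the only genuine content is already packaged inside the two lemmas, so there is no new obstacle to overcome here. The point requiring the most care---and the one I would double-check---is the alignment of parameters: that a bound $k$ on the aggregation function of the $\GNN[\R]$ corresponds to width $k$ in the logic and vice versa. This rests on the fact that graded types of width $k$ are exactly the invariants preserved by $k$-bounded message passing, as isolated in the width-$k$ case of Lemma~\ref{lem: full types and GNNs} and in Theorem~\ref{theorem: equiv bounded CMPAs and bounded omega-GML}.
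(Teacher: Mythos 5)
Your proposal is correct and follows exactly the paper's own proof: the paper likewise derives the theorem by invoking the width-bounded clauses of Lemma~\ref{lemma: GNN to omega-GML} and Lemma~\ref{lemma: omega-GML to GNN}, whose internals (graded types of width $k$, Lemma~\ref{lem: full types and GNNs}, and counting type automata of width $k$) you describe accurately. No gaps.
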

\begin{proof}
    Note that this proof uses auxiliary results that are in Appendix \ref{sec appendix: omega-GML = GNN = GMPAs}.  
    By Lemma~\ref{lemma: GNN to omega-GML} we can construct an equivalent width-bounded $\Pi$-formula of $\VGML$ for each bounded $\GNN[\R]$ over $\Pi$. By Lemma \ref{lemma: omega-GML to GNN} we can construct an equivalent bounded $\GNN[\R]$ over $\Pi$ for each width-bounded $\Pi$-formula of $\VGML$.
\end{proof}

\subsection{Proof that multiple GNN layers can be simulated using a single layer}\label{appendix: constant iteration equivalence}

In the literature, GNNs are often defined as running for a constant number of iterations unlike our recurrent GNN model, see for example \cite{DBLP:conf/iclr/BarceloKM0RS20, DBLP:conf/lics/Grohe23, DBLP:conf/lics/Grohe21}. Each iteration of the GNN is considered its own layer, and each layer has its own aggregation and combination function. More formally for any $N \in \N$, an \textbf{$N$-layer $\GNN[\R]$} $\cG_{N}$ over $(\Pi, d)$ is a tuple $(\R^{d}, \pi, (\delta^{i})_{i \in [N]}, F)$, 
where $\pi \colon \cP(\Pi) \to \R^{d}$ is the \emph{initialization function}, the function $\delta^{i} \colon \R^{d} \times \cM(\R^{d}) \to \R^{d}$ is the \textbf{transition function of layer $i$} of the form $\delta^{(i)}(q, M) = \COM^{i}(q,\AGG^{i}(M))$ (where $\AGG^{i} \colon \cM(\R^{d}) \to \R^{d}$ is the \textbf{aggregation function of layer $i$} and $\COM^{i} \colon \R^{d} \times \R^{d} \to \R^{d}$ is the \textbf{combination function of layer $i$}) and $F \subseteq \R^{d}$ is the set of \emph{accepting feature vectors}. We define the computation of $\cG_{N}$ in a $\Pi$-labeled graph $G = (V, E, \lambda)$ as follows. In round $0$, the feature vector of a node $v$ is $x^{0}_{v} = \pi(\lambda(v))$. In round $i \in [N]$,
the feature vector of a node is
\[
    x^{i}_{v} = \delta^{i}(x^{i-1}_{v}, \{\{\, x^{i-1}_{u} \mid (v,u) \in E \,\}\}) = \COM^{i}(x^{i-1}_{v}, \AGG^{i}(\{\{\, x^{i-1}_{u} \mid (v,u) \in E \,\}\})).
\]
We say that $\cG_{N}$ \textbf{accepts} a pointed graph $(G, w)$ if and only if $x^{N}_{w} \in F$. Concepts concerning node properties, equivalence and same expressive power are defined as for other models of $\GNN$s.

\begin{proposition}\label{proposition: barcelo model and ours}
    For each $N$-layer $\GNN[\R]$, we can construct an equivalent constant-iteration $\GNN[\R]$.
\end{proposition}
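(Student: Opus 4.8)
The plan is to simulate the per-layer transition functions of an $N$-layer $\GNN[\R]$ by a single recurrent transition function that carries an explicit round counter inside the feature vectors. Since $\GNN[\R]$s place no restriction whatsoever on their aggregation and combination functions, I can define these functions by an arbitrary case distinction on the counter, which makes the construction conceptually straightforward.

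Concretely, given $\cG_N = (\R^d, \pi, (\delta^i)_{i \in [N]}, F)$ with $\delta^i(x,M) = \COM^i(x, \AGG^i(M))$, I would build a recurrent $\GNN[\R]$ $\cG = (\R^{d+1}, \pi', \delta, F')$ over $(\Pi, d+1)$ together with the constant $N' = N$, where the extra coordinate holds a counter $c$. I set $\pi'(P) = (\pi(P), 0)$ and $F' = \{(x, N) \mid x \in F\}$, so that acceptance after $N'=N$ rounds matches the condition $x^N_w \in F$. The transition $\delta((x,c), M') = \COM'((x,c), \AGG'(M'))$ is designed so that, when $c \in \{0,\dots,N-1\}$, one round of $\cG$ reproduces layer $c+1$ of $\cG_N$ on the first $d$ coordinates and increments the counter to $c+1$. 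The aggregation $\AGG'$ first projects the received multiset $M'$ to its first-$d$-coordinate parts, obtaining $M$; if $M'$ is nonempty with a common counter value $c$, it outputs $(\AGG^{c+1}(M), c)$. The combination $\COM'((x,c), (a,b))$ then outputs $(\COM^{c+1}(x, a), c+1)$ for $c \in \{0, \dots, N-1\}$ (and leaves the vector unchanged once $c \geq N$, a case never reached within $N'$ rounds).

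The one point requiring care — and the main obstacle — is the behaviour at leaves, i.e.\ nodes with no out-neighbours, where $M'$ is the empty multiset and $\AGG'$ has no element from which to read the current counter. I would resolve this by letting $\AGG'(\emptyset)$ return a designated sentinel value (say $(\vec{0}, -1)$, whose last coordinate cannot arise from the nonempty case) and having $\COM'$ detect the sentinel: upon seeing it together with its own stored counter $c$, $\COM'$ directly computes $(\COM^{c+1}(x, \AGG^{c+1}(\emptyset)), c+1)$, which it can do because it knows $c$ and the fixed functions $\AGG^{c+1}, \COM^{c+1}$. This works precisely because the node's own feature always carries the correct counter, so the counter is recoverable even when the neighbourhood is empty.

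Finally I would verify correctness by a routine induction on the round $t \in \{0, \dots, N\}$, showing that every node $v$ has feature $(x^t_v, t)$ in $\cG$, where $x^t_v$ is its feature in $\cG_N$. The base case is immediate from $\pi'$. For the step one first observes that in round $t$ all nodes share the common counter value $t$, so the consistent-counter branch of $\AGG'$ applies at internal nodes and the sentinel branch at leaves; in either branch the first $d$ coordinates become $\COM^{t+1}(x^t_v, \AGG^{t+1}(\{\!\{\, x^t_u \mid (v,u) \in E \,\}\!\})) = x^{t+1}_v$ while the counter becomes $t+1$. At $t = N$ the feature is $(x^N_v, N)$, which lies in $F'$ iff $x^N_v \in F$, so $\cG$ run for $N' = N$ iterations accepts exactly the pointed graphs accepted by $\cG_N$.
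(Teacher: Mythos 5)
Your proposal is correct and takes essentially the same approach as the paper: append a round counter (the paper uses a one-hot clock of length $N$, you use a single integer coordinate) to the feature vectors and let the unrestricted transition function dispatch on it to simulate layer $c+1$ in round $c$. Your extra care about the $\COM/\AGG$ decomposition and the empty-neighbourhood sentinel addresses a point the paper handles implicitly by reading the clock from the node's own feature and defining the transition arbitrarily on all other inputs.
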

\begin{proof}
    Intuitively, we add a clock to the feature vectors of the $N$-layer $\GNN[\R]$ that tells the transition function of the constant-iteration $\GNN[\R]$ which layer to simulate.

    Let $\cG_{N} = (\R^{d}, \pi, (\delta^{i})_{i \in [N]}, F)$ be an $N$-layer $\GNN[\R]$ 
    over $(\Pi, d)$. We construct a constant-iteration $\GNN[\R]$ $(\cG, N) = ((\R^{d+N}, \pi', \delta', F'), N)$ over $(\Pi, d+N)$ as follows. In all feature vectors used, exactly one of the last $N$ elements is $1$ and others are $0$s. For the initialization function $\pi'$ we define for all $P \subseteq \Pi$ that $\pi'(P) = (\pi(P)_{1}, \dots, \pi(P)_{d},1,0,\dots,0) \in~\R^{d+N}$, where $\pi(P)_{i}$ denotes the $i$th element of $\pi(P) \in \R^{d}$. Before specifying the transition function, we define the following:
    \begin{itemize}
        \item For each feature vector $x = (x_{1}, \dots, x_{d+N}) \in \R^{d+N}$, we let $x' = (x_{1}, \dots, x_{d})$ and $x'' = (x_{d+1}, \dots, x_{d+N})$.
        \item Likewise, for each $M \in \cM(\R^{d+N})$, let $M'$ be $M$ where each $x \in M$ is replaced with $x'$.
        \item Let $f \colon \R^{N} \to \R^{N}$ be a function such that we have $f(1,0,\dots,0) = (0,1,0,\dots,0)$, $f(0,1,0,\dots,0) = (0,0,1,0,\dots,0)$, and so forth until $f(0,\dots,0,1) = (0,\dots,0,1)$.
    \end{itemize} 
    Now, assuming that $x \in \R^{d+N}$ and $M \in \cM(\R^{d+N})$ such that exactly the $i$th element of $x''$ and each $y''$ in $M$ is $1$ and others are $0$s, we define that $\delta'(x, M) = (\delta^{i}(x', M'), f(x''))$ (i.e., we concatenate $\delta^{i}(x', M')$ and $f(x'')$). For other inputs, we define $\delta'$ arbitrarily.
    The set $F'$ of accepting feature vectors is the set of feature vectors $(x_{1}, \dots, x_{d},0,\dots,0,1)$, where $(x_{1}, \dots, x_{d})$ is an accepting feature vector of $\cG_{N}$. It is easy to show that $(\cG, N)$ accepts a pointed $\Pi$-labeled graph $(G, w)$ if and only if $\cG_{N}$ accepts $(G, w)$ (note that for both $N$-layer $\GNN[\R]$s and constant-iteration $\GNN[\R]$s, only the feature vector of a node in round $N$ counts for acceptance).
\end{proof}

\subsection{Proof of Theorem \ref{constant iteration GNN reals}}\label{appendix: constant-iteration GNNs to depth-bounded VGML}

First we recall Theorem \ref{constant iteration GNN reals}.

\textbf{Theorem \ref{constant iteration GNN reals}.} \emph{Constant-iteration $\GNN[\R]$s have the same expressive power as depth-bounded $\VGML$.}

\begin{proof}
    Note that we \emph{heavily} use the proofs of Lemma \ref{lemma: GNN to omega-GML} and Lemma \ref{lemma: omega-GML to GNN}.

    Assume that $(\cG, N)$ is a constant-iteration $\GNN[\R]$ over $(\Pi, d)$. Let $\cK$ be the class of pointed $\Pi$-labeled graphs expressed by $(\cG, N)$. 
    Now let 
    \[
    \Phi = \{\, \tau^{(G,w)}_N \mid \text{$\cG$ accepts $(G, w) \in \cK$ in round $N$} \,\}
    \]
    where $\tau^{(G,w)}_{N}$ is the full graded $\Pi$-type of modal depth $N$ of $(G,w)$ (see Section \ref{sec: omega-GML CMPA}).
    Consider the counting type automaton $\cA$ over $\Pi$, where $\Phi$ is the set of accepting states. It is easy to show with an analogous argument as in the proof of Lemma \ref{lemma: GNN to omega-GML} that $\cA$ accepts $(G, w)$ if and only if $(\cG, N)$ accepts $(G, w)$.

    For the converse, assume that $\psi$ is a depth-bounded $\VGML$-formula over $\Pi$ of modal depth~$D$. First by Lemma \ref{Formula_to_types_GML} we translate $\psi$ into an equivalent formula $\psi^*$ which is a disjunction of full graded $\Pi$-types such that the modal depth of $\psi^*$ is the same as $\psi$. 
    By Lemma \ref{lemma: omega-GML to CMPA}, $\psi^*$ is equivalent to a counting type automaton $\cA$ over $\Pi$ whose accepting states are the types that appear as disjuncts of $\psi^*$.
    Since the depth of each type is bounded by~$D$, each pointed graph accepted by $\cA$ is accepted in some round $r \leq D$. By the proof of Lemma \ref{lemma: omega-GML to GNN} we can construct an equivalent $\GNN[\R]$ $\cG$ over $\Pi$ such that for all pointed $\Pi$-labeled graphs $(G, w)$ and for all $n \in \N$: $\cA$ accepts $(G, w)$ in round $n$ iff $\cG$ accepts $(G, w)$ in round $n$. 
    Now, it is easy to modify $\cG$ such that if $\cG$ accepts a pointed graph in round $m$ then it also accepts that pointed graph in every round $m' > m$. 
    Therefore, for all pointed graphs $(G,w)$ we have that $(\cG, D)$ accepts $(G, w)$ iff $G, w \models \psi^*$ iff $G, w \models \psi$.
\end{proof}

\section{Appendix: Characterizing GNNs over MSO-expressible properties}\label{Appendix: Characterizing GNNs over MSO-expressible properties}

\subsection{Proof of Lemma \ref{lem:MSOplusGMLequalsFO}}\label{appendix:MSOplusGMLequalsFO}

In the end of this subsection we give the formal proof of Lemma \ref{lem:MSOplusGMLequalsFO}. First we give some preliminary definitions.

Let $G=(V,E,\lambda)$ be a graph and let $w_0 \in V$ be a node in $G$. A
\textbf{walk in $G$ starting at $w_0$} is a sequence $p=w_0,\dots,w_n$
of elements of $V$ such that $(w_i,w_{i+1}) \in E$ for all $i \leq n-1$. We use $\mn{tail}(p)$ to denote $w_n$. Now, the \textbf{unraveling} of $G$ at $w_0$ is the graph $U=(V',E',\lambda')$ defined as follows:
\[
\begin{aligned}
  V' &=&& \text{the set of all walks in $G$ starting at $w_0$} \\
  E' &=&& \{\, (p,p') \in V' \times V' \mid p' = pw \text{ for some } w \in V \,\} \\
  \lambda'(p) &=&& \lambda(\mn{tail}(p)) \text{ for all } p \in V'.
\end{aligned}
\]
We say that a formula $\varphi(x)$ is \textbf{invariant under unraveling} if for every graph $G=(V,E,\lambda)$ and 
every $w \in V$, we have $G \models \varphi(w)$ iff $U \models \varphi(w)$, with $U$ the unraveling of $G$ at $w$. Invariance under unraveling is defined in the same way also for GNNs. The following
is easy to prove, see for example \cite{DBLP:books/el/07/BBW2007}.
\begin{lemma}\label{lem: unraveling objects}
   The following are invariant under unraveling: $\VGML$, $\GMSC$, $\GNNF$s, $\GNN[\R]$s, and their constant iteration depth versions. 
\end{lemma}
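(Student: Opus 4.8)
The plan is to reduce all five formalisms to a single structural fact about the unraveling map and then run one uniform induction. Write $h \colonequals \mn{tail} \colon V' \to V$ for the map sending each walk $p = w_0,\dots,w_n$ of $U$ to its last node $w_n$, and let $r \colonequals (w_0)$ be the root of $U$, so that $h(r) = w_0$. The two properties I would isolate are: \textbf{(i)} $\lambda'(p) = \lambda(h(p))$ for every $p \in V'$, which is immediate from the definition of $\lambda'$; and \textbf{(ii)} for every $p \in V'$, the map $h$ restricts to a \emph{bijection} from the out-neighbours of $p$ in $U$ onto the out-neighbours of $h(p)$ in $G$. Property (ii) holds because the out-neighbours of $p$ are exactly the walks $pw$ with $(h(p),w) \in E$, and $pw \mapsto h(pw) = w$ is a bijection onto $\{\, w \mid (h(p),w) \in E \,\}$. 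The key point is that (ii) is a \emph{local bijection} (a covering), so it preserves not merely the presence but the exact \emph{multiplicity} of out-neighbour values; this is what lets the argument survive the graded/counting semantics. Note also that each node of $U$ has the same (finite) out-degree as its $h$-image, so every multiset of out-neighbour values arising in the computation is finite and all operations are well defined even though $U$ may be infinite.

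For the computational models — $\CMPA$s, $\GNNF$s, and $\GNN[\R]$s, all of which produce a global configuration $g_n$ by initialising with $\pi$ and updating each node via $\delta$ applied to its own state and the multiset of out-neighbour states — I would prove by induction on the round $n$ that $g_n^U(p) = g_n^G(h(p))$ for every $p \in V'$, where $g_n^U, g_n^G$ denote the configurations on $U$ and $G$. The base case $g_0^U(p) = \pi(\lambda'(p)) = \pi(\lambda(h(p))) = g_0^G(h(p))$ uses (i). For the step, the induction hypothesis together with (ii) gives that the multiset of out-neighbour states of $p$ in round $n$ equals that of $h(p)$, so applying $\delta$ to the pair (own state, received multiset) yields $g_{n+1}^U(p) = g_{n+1}^G(h(p))$. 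Taking $p = r$ and using $h(r) = w_0$, acceptance of $(U,r)$ in some round $n$ is equivalent to acceptance of $(G,w_0)$ in round $n$; reading the same equality at the fixed round $N$ handles the constant-iteration variants.

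For the logics I would first record that ordinary $\GML$ is invariant along $h$: by induction on formula structure, $U, p \models \varphi$ iff $G, h(p) \models \varphi$ for every $p \in V'$ and every $\GML$-formula $\varphi$, where the $\Diamond_{\geq k}$-case is exactly where (ii) is used to transfer counts of satisfying out-neighbours in both directions. Invariance of $\VGML$ is then immediate, since $U, r \models \bigvee_{\psi \in \Psi} \psi$ iff some $\psi \in \Psi$ holds at $(U,r)$ iff that same $\psi$ holds at $(G,w_0)$. For $\GMSC$ I would use that each iteration formula $X_i^n$ is an ordinary (finitary) $\GML$-formula, so $\GML$-invariance gives $U, r \models X_i^n$ iff $G, w_0 \models X_i^n$; since $G, w_0 \models \Lambda$ means $G, w_0 \models X^n$ for some appointed $X$ and some $n$, acceptance transfers, and again the argument specialises to the constant-iteration reading. (Alternatively one could appeal to Proposition~\ref{theorem: GMSC = k-FCMPA} and Theorems~\ref{theorem: k-GNN[F] = k-FCMPA = GMSC}, \ref{thrm: equi omega-GML CMPA} and \ref{omega-GML = GNN = CMPAs} to fold the logics into the automaton case, but since those equivalences are phrased over all graphs I would rather argue the logics directly to sidestep any subtlety about infinite inputs.)

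The main obstacle is not conceptual depth but handling the counting semantics honestly: I must argue that $h$ preserves out-neighbour multisets \emph{with multiplicities}, which is precisely why I insist on the local bijection (ii) rather than a mere simulation, and I must confirm that the finite branching of $U$ keeps all multisets finite so that $\delta$ and the graded diamonds remain well defined on the possibly infinite unraveling. Everything else is a routine two-line induction repeated across the formalisms.
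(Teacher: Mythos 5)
Your proposal is correct and complete; the paper itself offers no proof of this lemma, merely remarking that it is easy and citing the modal-logic literature, and the argument you give---that the tail map $\mn{tail}\colon V'\to V$ preserves labels and restricts to a bijection on out-neighbours, followed by one induction on rounds for the automata/GNN models and one on formula structure for $\GML$ (hence $\VGML$ and, via iteration formulas, $\GMSC$)---is exactly the standard argument being invoked there. Your explicit insistence on the local \emph{bijection} (rather than a mere bisimulation) to preserve multiplicities under the graded/counting semantics, and your remark that finite out-degree keeps all multisets finite on the possibly infinite unraveling, are precisely the points that need care, so nothing is missing.
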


We recall Lemma \ref{lem:MSOplusGMLequalsFO} and prove it.

\noindent
{\bf Lemma~\ref{lem:MSOplusGMLequalsFO}.} 
\emph{Any property expressible in $\MSO$ and as a constant-iteration $\GNN[\R]$ is also $\mathrm{FO}$-expressible.} 
\begin{proof}
    We present two different proofs. The first one is independent from Theorem \ref{MSO GNN reals and GMSC} and the second one is not.

    Assume that the $\MSO$-formula $\varphi(x)$ over $\Sigma_N$ expresses the same node property as
    the constant iteration depth $\GNN[\R]$~$(\cG, k)$ over $\Sigma_N$, where $k \in \N$ is the iteration depth of $\cG$. It is shown in \cite{DBLP:journals/tocl/ElberfeldGT16} that on every class of graphs of bounded treedepth, $\MSO$ and $\FO$ have the same expressive power. 
    The class $\mathcal{C}$ of all tree-shaped $\Sigma_N$-labeled graphs of 
    depth at most $k$ has bounded treedepth. We thus find an $\FO$-formula $\vartheta(x)$ over $\Sigma_N$ that is logically equivalent to $\varphi(x)$ on $\mathcal{C}$, i.e., for all $T \in \mathcal{C}$ with root $w$ we have $T \models \vartheta(w)$ iff $T \models \varphi(w)$.

    We may manipulate $\vartheta(x)$ into an FO-formula
    $\widehat \vartheta(x)$ such that for any pointed
    graph $(G,w)$, we have $G \models \widehat \vartheta(w)$ if and only if $U_k \models \vartheta(w)$ 
    with $U_k$ the restriction of the unraveling of $G$ at $w$ to elements on level at most $k$. More precisely,
    to construct $\widehat \vartheta(x)$ we do the following:
    \begin{itemize}
        \item First we define an auxiliary formula 
        \[
        \psi_{\leq k}(x,y) \colonequals \bigvee_{0 \leq \ell \leq k} \exists y_0 \cdots \exists y_\ell \Big( y_0 = x \land y_\ell = y \land \bigwedge_{0 \leq m < \ell} E(y_m, y_{m+1}) \Big) 
        \]
        which intuitively states that $y$ lies at distance at most $k$ from $x$. 
        \item Then $\widehat \vartheta(x)$ is obtained from $\vartheta(x)$ by recursively replacing subformulae of type $\exists y \psi$ with $\exists y ( \psi_{\leq k}(x,y) \land \psi )$ as follows. 
        First, we simultaneously replace each subformula of quantifier depth $1$. Having replaced subformulae of quantifier depth $\ell$, we then simultaneously replace subformulae of quantifier depth $\ell + 1$.
    \end{itemize}
    
   We then have, for every graph $G$, the following where $U$ is the unraveling of $G$ at $w$ and $U_k$ denotes the restriction of $U$ to elements on level at most $k$ (the root being on level~0):
    \[
    G \models \varphi(w) \text{ iff }
    U_k \models \varphi(w) \text{ iff }
    U_k \models \vartheta(w) \text{ iff }
    G \models \widehat \vartheta(w) 
    \]
    The first equivalence holds because $\varphi$ is expresses the same property as $\cG$, the second one by choice of $\vartheta$, and the third one by construction of $\widehat \vartheta$.

    We also present an alternative proof which takes advantage of Theorem \ref{MSO GNN reals and GMSC}. 
    By Theorem \ref{constant iteration GNN reals} $(\cG, k)$ is equivalent to some depth-bounded $\Sigma_N$-formula $\psi$ of $\VGML$. 
    On the other hand, by Theorem \ref{MSO GNN reals and GMSC} $\cG$ is equivalent to some $\GMSC$-program $\Lambda$, since the property expressed by $\cG$ is expressible in $\MSO$. By the proof of Proposition \ref{proposition: GMSC < GML} $\Lambda$ is equivalent to some width-bounded $\Sigma_N$-formula $\psi'$ of $\VGML$. Now, it is easy to show that $\psi \land \psi'$ is equivalent to some $\Sigma_N$-formula $\psi^*$ of $\GML$, since $\psi$ is depth-bounded and $\psi'$ is width-bounded. 
    This can be seen by transforming $\psi$ and $\psi'$ into disjunctions of (non-full) graded types by applying Proposition \ref{Formula_to_types_GML} (see also Appendix \ref{sec: omega-GML CMPA} for the definition of graded types). 
    Since $(\cG, k)$ is equivalent to the $\Sigma_N$-formula $\psi^*$ of $\GML$, it expresses a node property also expressible in $\FO$, since $\GML$ is a fragment of $\FO$.
\end{proof}

\subsection{Proof of Theorem \ref{MSO GNN reals and GMSC}}\label{appendix: MSO GNN reals and GMSC}

First we recall Theorem \ref{MSO GNN reals and GMSC}.

\medskip
\noindent
\textbf{Theorem \ref{MSO GNN reals and GMSC}.}
\emph{Let $\cP$ be a property expressible in $\MSO$. Then $\cP$ is expressible as a $\GNN[\R]$ if and only if it is expressible in $\GMSC$.}
\medskip

We recall some details of the proof sketch of Theorem~\ref{MSO GNN reals and GMSC}.
We use an automaton model proposed in~\cite{DBLP:conf/stacs/Walukiewicz96} that captures the expressive power of $\MSO$ on tree-shaped graphs. (Note that the automaton model is defined in Section \ref{sect:MSO}.) 
We then show that the automaton for an $\MSO$-formula $\varphi$ that expresses the same property as a $\GNN[\R]$ (and thus as a formula of $\VGML$) can be translated into a $\GMSC$-program expressing the same property.
To do this, we prove the important Lemma \ref{lem:deco} which shows that for all tree-shaped graphs $T$: the automaton for $\varphi$ accepts $T$ iff there is a $k$-prefix decoration of $T$ for some $k \in \N$. Intuitively, a $k$-prefix decoration of $T$ represents a set of accepting runs of the automaton for $\varphi$ on the prefix $T_k$ of $T$ (the formal definitions are in Section \ref{sect:MSO}). Then we build a $\GMSC$-program that accepts a tree-shaped graph $T$ with root $w$ iff there is a $k$-prefix decoration of $T$ for some $k \in \N$.

We next define in a formal way the semantics of parity
tree automata (the definition of a parity tree automaton (PTA) is in Section \ref{sect:MSO}). For what follows, a \textbf{tree} $T$ is a 
subset of~$\N^*$, the set of all finite words over $\N$, that is closed under prefixes. We say that
$y \in T$ is a \textbf{successor} of $x$ in $T$ if 
$y=xn$ for some $n \in \N$. Henceforth we will call successors \textbf{out-neighbours}. Note that the empty word $\varepsilon$
is then the root of any tree $T$.
A \textbf{$\Sigma$-labeled tree} is a pair $(T,\ell)$
with $T$ a tree and $\ell:T \rightarrow \Sigma$ a node
labeling function. A \textbf{maximal path} $\pi$ in a tree $T$ is a subset of
$T$ such that $\varepsilon \in \pi$ and for each $x \in \pi$ that is
not a leaf in~$T$, $\pi$ contains one 
out-neighbour 
of $x$.
\begin{definition}[Run]
\label{def:altrun}
Let $G$ be a $\Sigma_N$-labeled graph with $G=(V,E,\lambda)$
and $\Amc$ a PTA with $\Amc=~(Q,\Sigma_N,q_0, \Delta, \Omega)$. A
\textbf{run}\footnote{Often semantics for parity tree automata are given with parity games, for the details see for example \cite{GradelErich2003ALaI}. Informally, parity games are played by two players called Eloise and Abelard, where Eloise tries to show that the PTA accepts a given graph. Informally, the semantics introduced here represents a winning strategy of Eloise in parity games and similar semantics are used for example in \cite{automata-theoretic-approarc-vardi}.} of \Amc on $G$ is a $Q \times V$-labeled tree
$(T,\ell)$ such that the following conditions are satisfied:
\begin{enumerate}
    \item $\ell(\varepsilon)=(q_0,v)$ for some $v \in V$;
    
    \item for each
$x \in T$ with $\ell(x)=(q,v)$, 
the following graph satisfies the formula 
$\Delta(q, \lambda(v))$:\footnote{Note that the graph is empty if and only if $v$ is a dead end in $G$.}
  \begin{itemize}

  \item the universe consists of all $u$ with $(v,u) \in E$;

  \item each unary predicate $q' \in Q$ is interpreted as the set
  \[\{\, u \mid \text{ there is 
  an out-neighbour
  $y$ of $x$ in $T$ such that }
  \ell(y)=(q',u) \,\}.\]

  \end{itemize}
\end{enumerate}
  A run $(T,\ell)$ is \textbf{accepting} if for every infinite maximal path $\pi$ of $T$,
  the maximal $i \in \mathbb{N}$, for which the set $\{ x \in \pi \mid \ell(x)=(q,d)
  \text{ with } \Omega(q)=i \}$ is infinite, is even.  We use $L(\Amc)$ to
  denote the language accepted by \Amc, i.e., the set of  $\Sigma_N$-labeled graphs $G$ such that there is an
  accepting run of \Amc on $G$.
\end{definition}

We remark that, in contrast to the standard semantics of $\FO$, the
graph defined in Point~2 of the above definition may be empty
and thus 
transition formulas may also be interpreted in the 
\emph{empty graph}. A transition formula is true in this graph if and only
if it does not contain any  existential
quantifiers, that is, $k=0$. 
Note that a transition formula $\vartheta$ without existential quantifiers is a formula of the form $\forall z (\, \mn{diff}(z) \rightarrow \psi )$, where $\psi$ is a disjunction of conjunctions of atoms $q(z)$ which are unary predicates for the states of the automaton. Such a formula may or may not be true in a non-empty graph. For example, if $\psi$ in  $\vartheta$ is a logical falsity (the empty disjunction), then $\vartheta$ is satisfied only in the empty graph.

Let $\cP$ be a node property over $\Sigma_N$ which is expressible in $\MSO$ and also in $\VGML$.
Let  $\Amc= (Q,\Sigma_N,q_0, \Delta, \Omega)$ be a PTA
that is obtained by Theorem \ref{theorem: msotoapta} from $\cP$. If $\psi$ is the $\VGML$-formula expressing $\cP$, then we may simply say that $\Amc$ and $\psi$ are equivalent.
 We identify a sequence $S_1,\dots,S_n$ of subsets of $Q$
 with a graph $\mn{struct}(S_1,\dots,S_n)$ defined as
 follows:
   \begin{itemize}

   \item the universe is $\{1,\dots, n\}$;

   \item each unary predicate $q' \in Q$ is interpreted as the set
   $\{ i \mid q' \in S_i \}.$

   \end{itemize}
   For a tree-shaped graph $T$, we let $T_k$
   denote the restriction of $T$ to the nodes 
   whose distance from the root is at most $k$.
   An \textbf{extension} of $T_k$ is then any tree-shaped
   graph $T'$ such that $T'_k =T_k$, that is, 
   $T'$ is obtained from $T_k$ by extending the tree from the nodes at distance~$k$ from the root by attaching subtrees, but not from any node at distance $\ell < k$ from the root. Note that \emph{$k$-prefix decorations of $T$} and \emph{universal sets of states} are defined in Section~\ref{sect:MSO}.

\noindent
{\bf Lemma~\ref{lem:deco}.}
  For every tree-shaped $\Sigma_N$-labeled graph $T$:
  $T \in L(\Amc)$  if and only if there is a $k$-prefix decoration of $T$, for some $k \in \N$.
\begin{proof}
    ``$\Rightarrow$''. Assume that $T \in L(\Amc)$. 
    Let $\bigvee_i \psi_i$
    be the $\Sigma_N$-formula of $\VGML$ that \Amc is equivalent to.
    Then  $T \models \psi_i$ for some
    $i$. Let $k$  be the modal depth of $\psi_i$. Then $T' \models \psi_i$ for every extension  $T'$ of $T_k$.

    Next we construct a mapping $\mu \colon V_k \to \cP(\cP(Q))$ and show that $\mu$ is a $k$-prefix decoration of $(T, w)$. We construct $\mu$ as follows. For all $v \in V$ on the level $k$, we define $\mu(v)$ as the universal set for $\lambda(v)$. Then analogously to the third condition in the definition of $k$-prefix decorations, we define $\mu(v)$ for each node $v$ on a level smaller than $k$. Now, all we have to do is show that the first condition in the definition of $k$-prefix decorations is satisfied, i.e., for each $S \in \mu(w)$, $q_0 \in S$. Then we may conclude that $\mu$ is a $k$-prefix decoration. 

    Assume by contradiction that there exists a set $S \in \mu(w)$ such that $q_{0} \notin S$. (Note that $\mu(v) \neq \emptyset$ for all $v \in V_{k}$ by the definition of $k$-prefix decorations.) By Condition~$3$ of $k$-prefix decorations, there must exist sets $S^{1}_{1} \in \mu(u^{1}_{1}), \dots, S^{1}_{n_{1}} \in \mu(u^{1}_{n_{1}})$ (where $u^{1}_{1}, \dots, u^{1}_{n_{1}}$ are the nodes on level $1$) such that $q \in S$ iff the transition formula $\Delta(q, (\lambda(w)))$ is satisfied in the graph where each $u^{1}_{i}$ is labeled $S^{1}_{i}$ for each $1 \leq i \leq n_{1}$. By the same logic, we find such sets $S^{m}_{1} \in \mu(u^{m}_{1}), \dots, S^{m}_{n_{m}} \in \mu(u^{m}_{n_{m}})$ (where $u^{m}_{1}, \dots, u^{m}_{n_{m}}$ are the nodes on level $m$) for each $m \leq k$ (note that these nodes do not necessarily share the same predecessor). Let $T'$ be an extension of $T_{k}$ that is obtained by attaching to each node $u^{k}_{i}$ of $T_{k}$ on level $k$ some rooted tree $T''$ that is accepted by $\Amc$ precisely when starting from one of the states in the set we chose for that node, i.e., 
    $Q_{T''} = S^{k}_{i}$.

    Now, we can demonstrate that there is no accepting run $r_{T'} = (T_{T'}, \ell)$ of $\cA$ on $T'$. Any such run has to begin with $\ell(\varepsilon) = (q_{0}, w)$. Note that the out-neighbours of $\varepsilon$ cannot be labeled with exactly the labels $(q, u^{1}_{i})$ such that $q \in S^{1}_{i}$ because $\Delta(q_{0}, \lambda(w))$ is not satisfied in the graph consisting of the out-neighbours $u^{1}_{i}$ of $w$ labeled with $S^{1}_{i}$ for each $1 \leq i \leq n_{1}$. In fact, the out-neighbours of $\varepsilon$ cannot be labeled with any subset of such labels either, because by definition all transition formulae $\vartheta$ are monotonic in the sense that 
    \[
    (V, E, \lambda) \not\models \vartheta \implies (V, E, \lambda') \not\models \vartheta \text{ for all } \lambda' \subseteq \lambda.
    \]
        Thus, there is 
        an out-neighbour
        $x_{1}$ of $\varepsilon$ in $T_{T'}$ such that $\ell(x_{1}) = (q, u^{1}_{i})$ where $q \notin S^{1}_{i}$, and we may continue this examination starting from $x_{1}$ in the same way. Inductively, we see that for any level $m$ 
    we find a son $x_{m}$ of $x_{m-1}$ such that $\ell(x_{m}) = (q, u^{m}_{i})$ where $q \notin S^{m}_{i}$, including the level $m = k$ where we let $\ell(x_{k}) = (q, u^k_i)$. 
    Thus $q \notin S^k_i$. However,
    we have
     $S^{k}_{i}=Q_{T''}$ 
     and it is witnessed
     by the run $r_{T'}$ that
     \Amc accepts the tree $T''$
     rooted at $u^k_i$ when started in state $q$. Thus, $q \in S^k_i$, a contradiction.
    
    ``$\Leftarrow$''. Let $\mu \colon V_k \to \cP(\cP(Q))$ be a $k$-prefix decoration of $T$, for some $k$. We may
    construct from $\mu$ an accepting run $r=(T',\ell)$ of $\Amc$ on $T$. For every node $v$ in $T_k$ on level $k$, let $T_v$ denote
    the subtree of $T$ rooted at~$v$. Let a {\bf semi-run} be defined like a run except that it needs not satisfy the first condition from the definition of runs (that is, it need not start in the initial state
    of the PTA). A semi-run being {\bf accepting} is defined exactly as for runs.

    Take any node $v$ in $T_k$ on level $k$. Since $\mu(v)$ is
    universal for $\lambda(v)$, we find an $S_v \in \mu(v)$ such that
    $q \in S_v$ if and only if $T_v \in L(\Amc_q)$. (Note that $\mu(v) \neq \emptyset$ for all $v \in V_{k}$ by the definition of $k$-prefix decorations.)
    Consequently, for
    each $q \in S_v$ we find an accepting semi-run $r_{v,q}=(T'_{v,q},\ell_{v,q})$ of \Amc on $T_v$ with $\ell_{v,q}(\varepsilon)=(q,v)$. We now proceed upwards across~$T_k$, assembling
    all these semi-runs into a run. In particular, we choose a set $S_u \in \mu(u)$ for every node $u$ in $T_k$
    and, for all $q \in S_u$,  a semi-run $r_{u,q}=(T'_{u,q},\ell_{u,q})$ with $\ell_{u,q}(\varepsilon)=(q,u)$.
    
    Let $v$ be a node in $T_k$ that has not yet been treated and
    such that its out-neighbours  $u_1,\dots,u_n$ have already been
    treated, that is, we have already selected a set $S_{u_i} \in \mu(u_i)$ for $1 \leq i \leq n$ along with the associated semi-runs. Due to Condition~3 of $k$-prefix decorations, we find a set $S_v \in \mu(v)$ such that
    for each $q \in S_v$: 
    $\Delta(q,\lambda(v))$ is satisfied by the graph
    $\mn{struct}(S_{u_1},\dots,S_{u_n})$.
    Choose this set $S_v$.\footnote{Note that if $v$ has no out-neighbours, then $S_v$ is the set of states $q$ such that $\Delta(q, \lambda(v))$ is satisfied in the empty graph.}
    As for the semi-runs, let $q \in S_v$.
    Then $\Delta(q,\lambda(v))$ is satisfied by the above graph. 
    We may thus choose as $r_{v,q}$ the semi-run that is obtained as follows:
    \begin{enumerate}
        \item start with a fresh root $\varepsilon$ and set
        $\ell_{v,q}(\varepsilon) = (q,v)$;

        \item for each $i \in \{1,\dots,n\}$ and each $q' \in S_{u_i}$, add the semi-run $r_{u_i,q'}$ as a subtree,
        making the root of $r_{u_i,q'}$ an out-neighbour of the 
        fresh root that we had chosen.
    \end{enumerate}
    Note in step $2$ that for each $1 \leq i \leq n$, if $S_{u_{i}} = \emptyset$, then no semi-run $r_{u_{i}, q'}$ is added as a subtree.
    Let $w$ be the root of $T$.
    In view of Condition~1 of $k$-prefix decorations, it is easy
    to verify that the semi-run $r_{w,q_0}$ is in fact a run of \Amc on $T$. Moreover, this run is accepting since we had started with accepting semi-runs at the nodes $v$ in $T_k$ on level $k$ and 
    the finite initial piece that $r_{w,q_0}$ adds on top of those 
    semi-runs has no impact on which states occur infinitely often in infinite paths. 
\end{proof}
By Lemma~\ref{lem:deco}, we may finish the proof of Theorem~\ref{MSO GNN reals and GMSC} by constructing a $\GMSC$-program $\Lambda$
such that for every tree-shaped $\Sigma_N$-labeled graph $T$ with root~$w$, we have  $T,w \models \Lambda$ iff there is a $k$-prefix decoration of $T$, for some $k$. The definition of $\Lambda$ follows the
definition of $k$-prefix decorations. This construction serves as the proof of Lemma \ref{lem: deco GMSC} below it.

\medskip

We define a fresh schema variable $X_S$ for all $S \subseteq Q$. 
First, a set $P \subseteq \Sigma_N$ of node label symbols can be specified with the formula
\[
\varphi_{P} \colonequals \bigwedge_{p \in P} p \wedge \bigwedge_{p \in \Sigma_N \setminus P} \neg p,
\]
which states that the node label symbols in $P$ are true and all others are false.
Let $\cQ_{P}$ denote the universal set for $P$.
For every $X_{S}$,
the program $\Lambda$ contains the following terminal clause, reflecting Condition~2 of $k$-prefix decorations:
\[
X_S(0) \colondash \bigvee_{S \in \cQ_{P}} \varphi_{P}.
\]
Note that if the disjunction is empty, we have $X_S(0) \colonminus \bot$.

We also define a special appointed predicate $A$ that is true when all the head predicates~$X_S$ that do not contain $q_0$ are false, reflecting Condition~1 of $k$-prefix decorations. More formally $A$ is the only appointed predicate of the program, and it is defined as follows: $A(0) \colonminus \bot$ and 
\[
A \colonminus \bigwedge_{q_{0} \notin S} \neg X_{S}.
\]
For the iteration clauses of head predicates $X_{S}$, we need some preliminaries. Let 
$K$ be the maximum over all $k$ such that $\Delta$ mentions
a transition formula 
  \[
    \vartheta \colonequals \exists x_1 \cdots \exists x_k \, \big ( \mn{diff}(x_1,\dots,x_k)
    \wedge q_1(x_1) \wedge \cdots \wedge q_k(x_k) \wedge 
    \forall z (\, \mn{diff}(z,x_1,\dots,x_k) \rightarrow \psi) \big).
  \]

A \emph{counting configuration} $c \in \cM_{K+1}(\cP(Q))$ is a multiset of sets of states that contains each set of states at most $K+1$ times. 
A sequence $S_1,\dots,S_n$ of subsets of $Q$ \textbf{realizes} 
the counting configuration
$c$ if for each $S \subseteq Q$, one of the following holds:
\begin{itemize}
    \item $c(S) \leq K$ and the number of sets $S_i$ among $S_1,\dots,S_n$ with $S_i=S$ is $c(S)$;

 \item $c(S) = K+1$ and the number of sets $S_i$ among $S_1,\dots,S_n$ with $S_i=S$ exceeds $K$.
\end{itemize}
It is easy to prove the following lemma. 
\begin{lemma}
    Let $\vartheta$ be a transition formula mentioned in $\Delta$
    and assume that $S_1,\dots,S_n$ and $S'_1,\dots,S'_{n'}$ realize
    the same counting configuration. Then
    $\mn{struct}(S_{1},\dots,S_{n})$ satisfies $\vartheta$ iff
    $\mn{struct}(S'_{1},\dots,S'_{n'})$ satisfies $\vartheta$.
\end{lemma}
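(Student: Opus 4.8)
The plan is to reduce to a single disjunct and then argue by a type-counting witness-transfer. Since $\vartheta$ is a disjunction of formulas of the displayed existential-universal shape and the relation ``$S_1,\dots,S_n$ and $S'_1,\dots,S'_{n'}$ realize the same counting configuration $c$'' is symmetric in the two sequences, it suffices to fix one such disjunct
\[
\chi \colonequals \exists x_1 \cdots \exists x_k \, \big( \mn{diff}(x_1,\dots,x_k) \wedge \textstyle\bigwedge_{j=1}^{k} q_j(x_j) \wedge \forall z (\mn{diff}(z,x_1,\dots,x_k) \rightarrow \psi) \big)
\]
and show that $\mn{struct}(S_1,\dots,S_n) \models \chi$ implies $\mn{struct}(S'_1,\dots,S'_{n'}) \models \chi$; the converse is then the same argument with the roles exchanged. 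Note $k \leq K$ by the choice of $K$. The guiding observation is that the only first-order-visible information about an element $i$ in $\mn{struct}(S_1,\dots,S_n)$ is its \emph{type} $S_i \subseteq Q$ (the set of unary predicates it satisfies), and that $\psi$, being a disjunction of conjunctions of positive atoms $q(z)$, holds at $i$ exactly when $S_i \models \psi$ for a purely propositional satisfaction relation depending only on $S_i$.

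First I would unpack $\mn{struct}(S_1,\dots,S_n) \models \chi$: there are pairwise distinct witnesses $i_1,\dots,i_k$ with $q_j \in S_{i_j}$ for each $j$, and \emph{every} non-witness index $i$ satisfies $S_i \models \psi$. For each type $S \subseteq Q$ let $m_S$ be the number of witnesses of type $S$, so $\sum_S m_S = k \leq K$ and hence $m_S \leq K$ for every $S$. The plan is to transfer this witnessing profile to the primed structure type by type: for each $S$, pick $m_S$ elements of type $S$ in $\mn{struct}(S'_1,\dots,S'_{n'})$. This is possible because both sequences realize $c$: if $c(S) \leq K$ then both structures contain exactly $c(S) \geq m_S$ elements of type $S$, and if $c(S) = K+1$ then the primed structure contains more than $K \geq m_S$ of them. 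Assigning the chosen elements to the variables $x_j$ whose original witness had type $S$ satisfies $\mn{diff}$ and every $q_j(x_j)$.

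The remaining and most delicate step is the universal conjunct $\forall z(\cdots \to \psi)$ in the primed structure, i.e.\ that every non-witness element there has a type satisfying $\psi$. The argument is that whenever the primed structure has a non-witness element of type $S$, so does the original, whence $S \models \psi$ is inherited. I would verify this by cases on $c(S)$. If $c(S) \leq K$, the number of type-$S$ elements coincides in both structures and equals $c(S)$, so a non-witness of type $S$ exists on one side iff $c(S) > m_S$ iff it exists on the other. If $c(S) = K+1$, then \emph{both} structures have more than $K \geq m_S$ elements of type $S$, so both possess a type-$S$ non-witness, and the original one supplies $S \models \psi$. This is precisely where the threshold $K+1$ (rather than $K$) is indispensable: with $m_S$ possibly as large as $K$, only counting up to $K+1$ detects the presence of one surplus element beyond the witnesses, which is exactly what is needed to preserve the universal clause.

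I expect this last case analysis to be the main obstacle, since it must simultaneously guarantee enough elements to place the existential witnesses and faithfully match the existence of the surplus elements forced to satisfy $\psi$; the interplay of the $k \leq K$ bound with the $K+1$ counting threshold is the crux. Everything else -- the reduction to a single disjunct, the propositional reading of $\psi$ as a function of the type $S$, and the appeal to symmetry for the converse direction and for reassembling the disjunction $\vartheta$ -- is routine bookkeeping.
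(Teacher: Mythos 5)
Your proof is correct: the reduction to a single disjunct, the propositional reading of $\psi$ as depending only on the type $S_i$, the type-by-type transfer of the $k \le K$ witnesses, and the case analysis on $c(S) \le K$ versus $c(S) = K+1$ for preserving the universal clause are all sound, and the role of the $K+1$ threshold is identified exactly right. The paper itself offers no proof of this lemma (it is stated as "easy to prove"), and your argument is precisely the intended one, so there is nothing to contrast it with.
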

By the above lemma, we may write $c \models \vartheta$, meaning
that $\mn{struct}(S_{1},\dots,S_{n})$ satisfies $\vartheta$ for
any (equivalently: all) $S_1,\dots,S_n$ that realize $c$.

It is easy to see that we can specify a counting configuration $c \in \cM_{K+1}(\cP(Q))$ in $\GMSC$ because we can count out-neighbours; for example, we can write the following formula, which states that for each set $S$ of states there are exactly $c(S)$ out-neighbours where the label includes $S$, unless $c(S) = K+1$, in which case it permits more such out-neighbours:
\[
    \psi_{c} \colonequals \bigwedge_{c(S) = \ell \leq K} \Diamond_{= k} X_{S} \land \bigwedge_{c(S) = K+1} \Diamond_{\geq K+1} X_{S}.
\]
For each set $P$ of node label symbols and set $S$ of states, we can specify the set of counting configurations in $\cM_{K+1}(\cP(Q))$ that satisfy a formula $\Delta(q, P)$ if and only if $q \in S$ with the following disjunction:
\[
\Psi_{S, P} \colonequals \bigvee_{\substack{c \in \cM_{K+1}(\cP(Q)) \\ c \models \Delta(q, P) \text{ iff } q \in S}} \psi_c.
\]
Recall also that the formula $\varphi_{P}$ specifies the set of node label symbols in $P$.
The iteration clause for $X_{S}$ must state that there is a set $P$ of node label symbols that are true and for which $\Psi_{S, P}$ is true. This can be expressed as below:
\[
  X_S \colondash \bigvee_{P \subseteq \Sigma_N} \Big( \varphi_{P} \wedge \Psi_{S, P} \Big).
\]

We prove that the $\GMSC$-program $\Lambda$ characterizes $k$-prefix decorations associated with $\cA$. 
A \textbf{pseudo $k$-prefix decoration} $\mu$ is defined like a $k$-prefix decoration except that it needs not satisfy the first condition of the definition of $k$-prefix decorations (that is, every set in the root given by $\mu$ does not need to contain the initial state). Now, given $\ell \in \N$ and a tree-shaped $\Sigma_N$-labeled graph $T$, let $\mu^\ell_{T}$ denote the pseudo $\ell$-prefix decoration of $T$. For each $\ell \in \N$ and each set $S$ of states of $\Lambda$, we show by induction on $n \in [0; \ell]$ that for every tree-shaped $\Sigma_N$-labeled graph $T$ and for each node $v$ in $T$ on level $\ell - n$, we have
\[
T, v \models X^n_S \iff S \in \mu^\ell_{T}(v).
\]

In the case $n = 0$, the claim holds trivially for the nodes $v$ on level $\ell$ by definition of $\Lambda$, since then $T, v \models X^0_S$ if and only if $S$ is in the universal set for $P$, where $P$ is the set of node label symbols that appear in $v$. 

Assume that the claim holds for $n < \ell$; we will show that it also holds for $n + 1$. 
Let $v$ be a node of $T$ on the level $\ell - (n+1)$.

First assume that $T, v \models X^{n+1}_S$, that is, for some $P \subseteq \Sigma_{N}$, we have $T, v \models \varphi^{n}_P$ and $T, v \models \Psi^{n}_{S, P}$. Therefore, there is a multiset $c \in \cM_{K+1}(\cP(Q))$ such that $c \models \Delta(q, P)$ iff $q \in S$ and $T, v \models \psi^{n}_c$. 
Thus, for every set $S'$ of states of $\cA$ we have that if $c(S') = m \leq K$, there are exactly $m$ out-neighbours $v_1, \ldots, v_m$ of $v$ such that $T, v_i \models X_{S'}^{n}$ for every $i \in [m]$, and if $c(S') = K+1$, there are at least $K+1$ out-neighbours $v_1, \ldots, v_{K+1}$ of $v$ such that $T, v_i \models X_{S'}^{n}$ for every $i \in [K+1]$.
By the induction hypothesis, for every set $S'$ of states of $\cA$ we have that if $c(S') = m \leq K$, there are exactly $m$ out-neighbours $v_1, \ldots, v_m$ of $v$ such that $S' \in \mu^{\ell}_T(v_i)$ for every $i \in [m]$, and if $c(S') = K+1$, there are at least $K+1$ out-neighbours $v_1, \ldots, v_{K+1}$ of $v$ such that $S' \in \mu^{\ell}_T(v_i)$ for every $i \in [K+1]$.
Since $c \models \Delta(q, P)$ iff $q \in S$ and by the definition of pseudo $k$-prefix decorations, we have $S \in \mu^{\ell}_T(v)$.

Then assume that $S \in \mu^\ell_T(v)$.  Let $\{ v_1, \ldots, v_m \}$ be the set of out-neighbours of $v$. By the induction hypothesis for each $i \in [m]$ and each set $S'$ of states of $\cA$, it holds that $S' \in \mu^\ell_T(v_i)$ iff $T, v_i \models X_{S'}^n$. By the definition of pseudo $k$-prefix decorations there are sets $S_1 \in \mu^\ell_T(v_1), \ldots, S_m \in \mu^\ell_T(v_m)$ such that $q \in S$ iff $\mathsf{struct}(S_1, \ldots, S_m) \models \Delta(q, P)$, where $P$ is the set of node label symbols that are true in $v$. Now, let $c \in \cM_{K+1}(\cP(Q))$ be a counting configuration realized by $S_1, \ldots, S_m$. Now since $K$ is the maximum over all $k$ such that $\Delta$ mentions a transition formula
\[
\vartheta \colonequals \exists x_1 \cdots \exists x_k \, \big ( \mn{diff}(x_1,\dots,x_k)
    \wedge q_1(x_1) \wedge \cdots \wedge q_k(x_k) \wedge 
    \forall z (\, \mn{diff}(z,x_1,\dots,x_k) \rightarrow \psi) \big),
\]
we have $c \models \Delta(q, P)$. Therefore, $T, v \models \psi^n_c$ and trivially $T, v \models \varphi^n_P$, and thus $T, v \models X^{n+1}_S$.

Now, if $\Lambda$ accepts a $\Sigma_N$-labeled tree-shaped graph $(T, w)$, then by the result above it means that there is a pseudo $k$-prefix decoration $\mu$ such that $T, w \models A^{k+1}$. Thus in round $k$ we have $T, w \not\models X_S^k$ for all $S$ where $q_0 \notin S$. Therefore $S \notin \mu(w)$ for all $S$ where $q_0 \notin S$, i.e., $\mu$ is actually a $k$-prefix decoration. Similarly if a tree-shaped graph $(T, w)$ has a $k$-prefix decoration, then $T, w \models A^{k+1}$ by the result above. Thus we have proved the lemma below.

\begin{lemma}\label{lem: deco GMSC}
    For every tree-shaped $\Sigma_N$-labeled graph $T$ with root~$w$: $T,w \models \Lambda$ iff there is a $k$-prefix decoration of $T$, for some $k$. 
\end{lemma}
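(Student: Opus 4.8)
The plan is to show that the head predicates $X_S$ of the constructed program $\Lambda$ compute, round by round and level by level, exactly the pseudo $k$-prefix decoration of the input tree, and then to read off from the appointed predicate $A$ whether that pseudo decoration is in fact genuine. Concretely, I would fix $\ell \in \N$, write $\mu^\ell_T$ for the pseudo $\ell$-prefix decoration of a tree-shaped graph $T$, and prove by induction on $n \in [0;\ell]$ the central claim that for every node $v$ at level $\ell - n$,
\[
T, v \models X_S^n \iff S \in \mu^\ell_T(v).
\]

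For the base case $n = 0$ the relevant nodes $v$ lie on level $\ell$, so the terminal clause $X_S(0) \colonminus \bigvee_{S \in \cQ_P}\varphi_P$ fires precisely when $S$ belongs to the universal set $\cQ_{\lambda(v)}$; this is exactly Condition~2 in the definition of (pseudo) prefix decorations. For the inductive step I would take $v$ on level $\ell - (n+1)$ with out-neighbours $v_1,\dots,v_m$ and unfold the iteration clause $X_S \colonminus \bigvee_P (\varphi_P \wedge \Psi_{S,P})$: the conjunct $\varphi_P$ forces $P = \lambda(v)$, while $\Psi_{S,P}$ is the disjunction, over all counting configurations $c$ with $c \models \Delta(q,P)\iff q \in S$, of a counting formula $\psi_c$ recording for each $S'$ how many out-neighbours satisfy $X_{S'}^n$. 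By the induction hypothesis these counts coincide with the numbers of out-neighbours $v_i$ with $S' \in \mu^\ell_T(v_i)$. Hence $T,v\models X_S^{n+1}$ holds iff one can choose decoration sets $S_i \in \mu^\ell_T(v_i)$ whose sequence realizes some configuration $c$ with $c \models \Delta(q,P)\iff q\in S$, which by Condition~3 is exactly the statement $S \in \mu^\ell_T(v)$.

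To finish, I would invoke the appointed predicate. Since $\Lambda$ accepts $(T,w)$ iff $T,w \models A^{k+1}$ for some $k$, and the iteration clause $A \colonminus \bigwedge_{q_0 \notin S}\neg X_S$ gives $T,w\models A^{k+1}$ iff $T,w \not\models X_S^k$ for every $S$ with $q_0 \notin S$, I would instantiate the central claim with $\ell = n = k$ (so $v = w$ sits at level $0$) to obtain $S \notin \mu^k_T(w)$ for all such $S$. This says precisely that the pseudo $k$-prefix decoration $\mu^k_T$ also satisfies Condition~1 and is therefore a genuine $k$-prefix decoration; conversely, the existence of a $k$-prefix decoration forces $X_S^k$ to be false at $w$ for every $S$ omitting $q_0$, hence $A^{k+1}$ to hold at $w$. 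This yields the desired biconditional.

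The main obstacle is the inductive step, and within it the faithfulness of the counting-configuration encoding. The delicate point is that the multiplicities in $c$ are capped at $K+1$, where $K$ bounds the number of existential witnesses in any transition formula of $\Delta$; here I would lean on the already-established lemma that two sequences realizing the same $c$ satisfy the same transition formula $\vartheta$, which guarantees that this finite cap discards no information, so that $\psi_c$ and $\Psi_{S,P}$ reproduce the first-order transition formulas inside the purely modal-counting syntax of $\GMSC$ (in particular translating the $\mn{struct}(\cdot)$ semantics into nested $\Diamond_{=\ell}$ and $\Diamond_{\geq K+1}$ conditions over the $X_{S'}$). A secondary point needing care is the dead-end case, where $v$ has no out-neighbours and $\Delta(q,\lambda(v))$ must be evaluated in the empty graph (equivalently $c$ is empty), which I would handle by noting that then only the witness-free transition formulas can be satisfied, matching the vacuous instance of Condition~3.
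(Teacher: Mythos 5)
Your proposal follows essentially the same route as the paper's own proof: the same induction on $n \in [0;\ell]$ relating $T,v \models X_S^n$ at level $\ell-n$ to membership of $S$ in the pseudo $\ell$-prefix decoration $\mu^\ell_T$, the same use of the realization lemma to justify the $K{+}1$ cap on counting configurations, and the same final step of reading Condition~1 off the appointed predicate $A$ to upgrade the pseudo decoration to a genuine one. The argument is correct at the same level of detail as the paper's, including the treatment of dead-end nodes via the empty-graph semantics of transition formulas.
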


Now we are ready to prove Theorem \ref{MSO GNN reals and GMSC}.
\begin{proof}[Proof of Theorem \ref{MSO GNN reals and GMSC}]
    Note that some of the results needed for this proof are given in this appendix section and some are given outside the appendix. Let $\cP$ be a node property expressible in $\MSO$ by an $\MSO$-formula $\varphi(x)$ over $\Sigma_N$. 
    
    Assume that $\cP$ is expressible by a $\Sigma_N$-program of $\GMSC$. By Proposition \ref{proposition: GMSC < GML} $\cP$ is also expressible $\Sigma_N$-formula of $\VGML$ and thus by Theorem \ref{omega-GML = GNN = CMPAs} as a $\GNN[\R]$ over $\Sigma_N$.
    For the converse, assume that $\cP$ is expressible as a $\GNN[\R]$ over $\Sigma_N$. Thus $\cP$ is expressible by $\Sigma_N$-formula of $\VGML$ by Theorem \ref{omega-GML = GNN = CMPAs}. Therefore, there is a PTA $\cA$ and a $\Sigma_N$-program $\Lambda$ of $\GMSC$ such that for any $\Sigma_N$-labeled graph $G$ with root $w$: 
    \[
    \begin{aligned}
        &G \models \varphi(w) &&\overset{\text{Theorem  \ref{theorem: msotoapta}}}{\iff} \text{the unraveling $U$ of $G$ at $w$ is in $L(\cA)$} \\
        &{} &&\overset{\text{Lemma \ref{lem:deco}}}{\iff} \text{there is a $k$-prefix decoration of $U$, for some $k \in \N$}\\
        &{} &&\overset{\text{Lemma \ref{lem: deco GMSC}}}{\iff} U, w \models \Lambda\\
        &{} &&\overset{\text{Lemma \ref{lem: unraveling objects}}}{\iff} G, w \models \Lambda.
    \end{aligned}
    \]
    Thus, we have proven Theorem \ref{MSO GNN reals and GMSC}.
\end{proof}

\subsection{Proof of Theorem \ref{theorem: under MSO: GNN[R] = GNN[F]}}\label{appendix: under MSO: GNN[R] = GNN[F]}

We recall and prove Theorem \ref{theorem: under MSO: GNN[R] = GNN[F]}.

\textbf{Theorem \ref{theorem: under MSO: GNN[R] = GNN[F]}.}
\emph{Let $\cP$ be a node property expressible in $\MSO$. Then $\cP$ is expressible as a $\GNN[\R]$ if and only if it is expressible as a $\GNNF$. The same is true for constant-iteration $\GNN$s.}

\begin{proof}
    Let $\cP$ be an $\MSO$-expressible property over $\Sigma_N$.
    By Theorem \ref{MSO GNN reals and GMSC}, $\cP$ is expressible as a $\GNN[\R]$ iff it is expressible in $\GMSC$. Thus by Theorem \ref{theorem: k-GNN[F] = k-FCMPA = GMSC} $\cP$ is expressible as a $\GNN[\R]$ iff it is expressible as a $\GNNF$.

    For constant-iteration $\GNN$s we work as follows.
    First assume that $\cP$ is expressible as a constant-iteration $\GNN[\R]$ over $\Sigma_N$. Then by Lemma \ref{lem:MSOplusGMLequalsFO}, $\cP$ is also expressible in $\FO$. Thus, by Theorem 4.2 in \cite{DBLP:conf/iclr/BarceloKM0RS20} 
    $\cP$ is expressible by a $\Sigma_{N}$-formula of $\GML$ (trivially, each constant-iteration $\GNN$ $(\cG, L)$ is trivial to translate to an $L$-layer $\GNN$; see Appendix~\ref{appendix: constant iteration equivalence} for the definition of $L$-layer $\GNN$s). For a $\GML$-formula it is easy to construct an equivalent constant-iteration $\GNNF$; the technique is essentially the same as in the proof of Lemma~\ref{lemma: GMSC to simple GNN} (omit the clock in the construction) and in the proof of Theorem 4.2 in \cite{DBLP:conf/iclr/BarceloKM0RS20}. The converse direction is trivial.
\end{proof}

\section{On accepting}\label{appendix: On accepting}

In this section, we consider run classification schemes for pointed graphs and generalizations of $\GNN$s and $\GMSC$ with global readout and counting global modality. A global readout in a $\GNN$ permits each node to scan all the nodes in each round on top of scanning itself and its neighbours. A counting global modality allows logic formulae of the form $\langle e \rangle_{\geq k} \varphi$, which states that $\varphi$ is true in at least $k$ nodes.

We consider classifications on three levels of generality, moving from most to least general. 
\begin{itemize}[itemsep=0.25em,topsep=0pt,parsep=0pt,partopsep=0pt]
\item 
In Section \ref{Graph classifications and semantics}, we examine the most general level with general graph-classifiers and classification systems with a semantics for labeled graphs. 
\item In Section \ref{Runs and run generators}, we examine the middle level with run-generators, which generate a run in each labeled graph, and run-classifications, which classify said runs; together, a run-generator and run classifier form a graph classifier. A run is simply a (possibly infinite) sequence of labelings over some domain and a run classifier maps each run to a class label.
\item In Section \ref{Extended GMSC, CMPAs and GNNs}, we examine the ground level, where we interpret $\GMSC$-programs, $\GNN$s and $\CMPA$s as run-generators in a natural way, which allows us to define general semantics for them.
\end{itemize}

When comparing classes of run-generators, we have two concepts of equivalence (see Section \ref{Runs and run generators} for the details): 
\begin{enumerate}[itemsep=0.25em,topsep=0pt,parsep=0pt,partopsep=0pt]
    \item run-equivalence, which means that the generated runs match one-to-one, and 
    \item attention-equivalence, which means that the significant rounds of the generated runs match one-to-one, i.e., runs may contain auxiliary computation rounds.
\end{enumerate}
In fact, run-equivalence and attention-equivalence are always defined w.r.t. a color similarity which simply is an equivalence relation between labels used in runs, telling which labels are ``similar''.
The number of auxiliary rounds in our translations is always modest and consistent, multiplying the slowness of the runs by some constant. The notion of attention-equivalence helps in translations from $\GMSC$ to $\GNN$s and $\CMPA$s, and is already used as such in, e.g., \cite{dist_circ_mfcs}. 

In this section, we consider the following modifications to the syntax of $\GMSC$-programs, $\GNN$s and $\CMPA$s which are run-generators. 
First, we let $\GMSCN$ refer to the class of $\GMSC$-programs in normal form, i.e., each terminal clause has modal depth $0$ and each iteration clause has modal depth at most~$1$.
Now, let $s$ be either the syntax of $\GMSC$, of $\GNN$s or of $\CMPA$s.
\begin{itemize}[itemsep=0.25em,topsep=0pt,parsep=0pt,partopsep=0pt]
    \item $s\mathrm{+A}$ refers to adding \textbf{attention} to $s$, i.e. the syntax obtained from $s$ by adding to each syntactical object (i.e. run-generator) of $s$ an additional syntactic component that marks the recorded significant rounds of each run generated by the object.
    \item $s\mathrm{+G}$ refers to adding \textbf{globality} to $s$, i.e. the syntax obtained from $s$ by adding
    \begin{itemize}[itemsep=0.25em,topsep=0pt,parsep=0pt,partopsep=0pt]
        \item a counting global modality, if $s$ is the syntax of $\GMSC$,
        \item a global readout, if $s$ is the syntax of $\GNN$s or of $\CMPA$s.
    \end{itemize}
    \item $s\mathrm{+AG}$ refers to adding both attention and globality to $s$.
\end{itemize}
For example, a program of $\GMSCAGN$ is a $\GMSC$-program in normal form with attention where a counting global modality can appear in the program. On the other hand, $\GNNFG$ refers to a $\GNNF$ that has global readout, but no attention.

\begin{table}[t]

\caption{A summary of the expressivity results in this section. Here $x \equiv y$ means that $x$ and $y$ have the same expressive power with respect to any equivalent run classifiers. Respectively, $x \equiv_a y$ is defined analogously to $x \equiv y$, but the equivalent run classifiers are attention-invariant in the following sense: the run classifiers cannot take into account the auxiliary computation rounds. ``AC'' stands for ``aggregate-combine''. ``Reduction'' means that the result can be proven simply by specifying the run classifiers that induce semantics. Graph-size semantics (as also studied in \cite{Pfluger_Tena_Cucala_Kostylev_2024}) means that a node is accepted if it is in an accepting state in a specific round determined by the size of the graph. Fixed-point semantics (as also studied in \cite{Pfluger_Tena_Cucala_Kostylev_2024}) means that a node is accepted if starting from some round, it is always in an accepting state. Büchi semantics means that a node is accepted if it visits an accepting state an infinite number of times. 
Convergence-based fixed-point semantics (analogous to the accepting condition in the seminal papers \cite{gori, scarcelli}) means that a node is accepted if it is in an accepting state after all nodes in the graph have stopped changing states, i.e., all nodes in the graph reach a fixed-point state.}
\resizebox{0.98\columnwidth}{!}{\begin{tabular}{|c|c|}
    \hline
    \multirow{4}{7em}{\textbf{Theorem~\ref{theorem: equal expressive power with run-similar-equivalent run classifiers}}} &$\GMSCAGN \equiv \GNNFAG$,
    \\
    & $\GMSCAN \equiv \GNNFA$, 
    \\ 
    & $\GMSCGN \equiv \GNNFG$, 
    \\
    &$\GMSCN \equiv \GNNF$  
    \\
    \hline
    \multirow{2}{7em}{\textbf{Theorem~\ref{thrm: equal expressive power with attention-equivalent}}} & $\GMSCAG \equiv_a \GNNFAG \equiv_a \text{R-simple\ AC-} \GNNFAG$,
    \\
    &$\GMSCA \equiv_a \GNNFA \equiv_a \text{R-simple\ AC-} \GNNFA$
    \\
    \hline
    \textbf{Corollary \ref{corollary: graph-size GMSC = graph-size GNN[F] = graph-size R-simple GNN[F]}} & Reduction of Theorem \ref{thrm: equal expressive power with attention-equivalent} to graph-size semantics
    \\
    \hline
    \textbf{Theorem \ref{theorem: büchi and fixed-point}} & Reduction of Theorem \ref{thrm: equal expressive power with attention-equivalent} to fixed-point or Büchi semantics
    \\
    \hline
    \textbf{Theorem \ref{theorem: convergence fixed-point}} & Theorem \ref{thrm: equal expressive power with attention-equivalent} in restriction to convergence-based fixed-point semantics
    \\
    \hline
\end{tabular}}
\label{table: results}
\end{table}

The main results of this section are as follows:
\begin{itemize}[itemsep=0.25em,topsep=0pt,parsep=0pt,partopsep=0pt]
    \item In Theorem \ref{theorem: GMSC[1] + global readout = GNN[F] + global readout}, we show that programs of $\GMSCAGN$ and $\GNNFAG$s are run-equivalent. We can simultaneously remove attention, globality or both, and the result still holds.
    \item In Theorem \ref{theorem: GMSCAG + global readout = GNNF + global readout = R-simple GNNF + global readout}, we show that programs of $\GMSCAG$, $\GNNFAG$s and R-simple aggregate-combine $\GNNFAG$s are attention-equivalent. We can simultaneously remove globality and the result still holds.
\end{itemize}
The main reason why Theorem \ref{theorem: GMSCAG + global readout = GNNF + global readout = R-simple GNNF + global readout} is only obtained w.r.t. attention-equivalence 
is that unlike $\GNNFAG$s, programs of $\GMSCAG$ can scan beyond immediate neighbours in a single round, while $\GNNFAG$s cannot; ergo, a $\GNNFAG$ might need auxiliary rounds to compute a single round of a program of $\GMSCAG$.
From the above theorems, we show that the mentioned classes have the same expressive power with respect to any equivalent run classifiers, including commonly used and natural classifiers; these results can be seen in Table \ref{table: results}.
In Remark \ref{remark:GNNR+Gs = VGMLG}, we also briefly explain how GNNs over reals with global readout can be characterized in terms of graded modal logic with counting global modality.

\subsection{Graph classifications and semantics}\label{Graph classifications and semantics}

In this section, we define highly general concepts of graph classification functions, classification systems and semantics. A graph classifier is a function classifying labeled graphs, and a class of graph classifiers creates a \emph{classification system}.

Let $\Pi$ be a set of node label symbols and let $T$ and $\cC$ be non-empty sets, and let $\fG(\Pi)$ and $\fG'(\Pi)$ denote the classes of $\Pi$-labeled graphs and pointed $\Pi$-labeled graphs respectively. Intuitively, semantics are given with respect to a (purely syntactic) set $\cC$ of graph classifiers, each of which classifies the same class of graphs. Two pointed labeled graphs $(G, v)$ and $(H,u)$ are \textbf{isomorphic} if $f$ is an isomorphism between $G$ and $H$, and $f(v) = u$.

Let $\fC$ be a class of non-empty distinct classes. A \textbf{syntax} w.r.t. $\fC$ is a function of the form $s \colon \cP(\mathrm{LAB}) \to \fC$. A \textbf{pointed semantics over $(\Pi,\cC)$} is a function $g \colon \fG'(\Pi) \times \cC \to T$ such that $g((G,v), C) = g((H,u), C)$ for all $C \in \cC$ when $(G,v)$ and $(H,u)$ are isomorphic. Let $s$ be a syntax w.r.t. $\fC$ and let $\cS$ denote the class of pointed semantics over $(\Pi, \cC)$ for any $\Pi \subseteq \mathrm{LAB}$ and $\cC \in \fC$. A \textbf{pointed semantics over the syntax $s$} is a function $h \colon \cP(\mathrm{LAB}) \to \cS$, that gives for $\Pi \subseteq \mathrm{LAB}$ a pointed semantics over $(\Pi, s(\Pi))$.

A \textbf{pointed graph classifier over $\Pi$} is a function\footnote{Strictly speaking a pointed graph classifier over $\Pi$ is not a function in the sense of being set, but is a class function (or sometimes called a mapping). However, for simplicity, class functions will be referred as functions.} of the form $C \colon \fG'(\Pi) \to T$ such that $C(G,v) = C(H,u)$ when $(G,v)$ and $(H,u)$ are isomorphic. A class $\cC$ of pointed graph classifiers over $\Pi$ is called a \textbf{pointed classification system over $\Pi$}, and it defines the pointed semantics $g$ over $(\Pi, \cC)$ such that $g((G,v), C) = C(G,v)$. Let $\fC$ denote the class of all pointed classification systems over $\Pi$ for all $\Pi$. A \textbf{pointed classification system} is the syntax $\cL \colon \cP(\mathrm{LAB}) \to \fC, \cL(\Pi) = \cC$, where $\cC$ is a pointed classification system over~$\Pi$, and it defines the pointed semantics $h$ over itself such that $h(\Pi) = g$, where $g$ is the pointed semantics defined by $\cL(\Pi)$.

We then define highly general concepts of equivalence between graph classifiers and classification systems. We say that two pointed graph classifiers $C$ and $C'$ over $\Pi$ are \textbf{equivalent} if $C(G,v) = C'(G,v)$ for all $(G,v) \in \fG'(\Pi)$. We say that two classification systems $\cC$ and~$\cC'$ over $\Pi$ \textbf{have the same expressive power} if for each pointed graph classifier $C \in \cC$ there is an equivalent pointed graph classifier $C' \in \cC'$ and vice versa. Finally, two pointed classification systems $\cL$ and $\cL'$ \textbf{have the same expressive power} if $\cL(\Pi)$ and $\cL'(\Pi)$ have the same expressive power for all $\Pi$.

\subsection{Runs and run generators}\label{Runs and run generators}

In this section, we define runs, run classifications and run-generators. Runs are sequences of labelings of the elements of some domain, and run classifications classify runs with respect to some set of classification values. Run-generators are objects that generate a run for each labeled graph over its domain. Intuitively, $\GMSC$-programs, $\CMPA$s and $\GNN$s can all be seen as run-generators; a $\GMSC$-program labels nodes with the sets of head predicates that are true in them, a $\CMPA$ labels nodes with states, and a $\GNN$ labels nodes with feature vectors. This gives us a flexible framework for defining different classifications.

\subsubsection*{Runs and run classifications}

A set of \textbf{configuration labels} is any non-empty set $S$. A \textbf{configuration labeling} over a domain $V$ with respect to $S$ is a partial function $V \rightharpoonup S$. A \textbf{run-skeleton} over $V$ w.r.t. $S$ is then a function $r \colon \N \to C(V, S)$ where $C(V, S)$ is the set of configuration labelings for~$V$ w.r.t. $S$, such that $\dom(r(i+1)) \subseteq \dom(r(i))$, i.e., once the run-skeleton stops labeling some node in the domain, it does not resume labeling that same node. We call $r(i)$ the \textbf{$i$th configuration labeling} of $r$ and write $r_{i}(v) = r(i)(v)$. Let $\cA$ be a set of \textbf{colors}. An \textbf{assignment} over $S$ w.r.t. $\cA$ is a function $f \colon S \to \cA$ that assigns a color to each configuration label in $S$.

We examine four different types of runs. Informally, a run assigns colors to the configuration labels of a run-skeleton. An attention run pairs a run up with a run-skeleton that indicates which rounds are important (labeled $1$) and which ones are unimportant (labeled $0$). A pointed run specifies a node in the domain of a run, and a pointed attention run does the same for an attention run.
More formally, let $r$ be a run-skeleton over $V$ w.r.t. $S$, $f$ an assignment over $S$ w.r.t. $\cA$, $v \in V$ a node and $\alpha$ a run-skeleton over $V$ w.r.t. $\{0,1\}$. We call $(r, f)$ a \textbf{run}, $(r, f, v)$ a \textbf{pointed run} (or a p-run), $(r, f, \alpha)$ an \textbf{attention run} (or an a-run) and $(r, f, \alpha, v)$ a \textbf{pointed attention run} (or a pa-run) over $V$ w.r.t. $(S, \cA)$.
We let $\textit{runs}(V, S, \cA)$ denote the set of all runs over $V$ w.r.t. $\cA$, and $\textit{runs}(S, \cA)$ denote the union of sets $\textit{runs}(V, S, \cA)$ over all domains $V$. We define analogous sets for p-runs, a-runs and pa-runs. For a-runs and pa-runs, we call each natural number for which $\alpha(u) = 1$ an \textbf{attention round} of the a-run or pa-run at $u$. 

While two runs over the same domain can use different sets of colors, the runs may still be identical when the colors are interpreted in a certain way. If, say, one run uses the color cyan and the other uses the colors turquoise and aquamarine, these can all be interpreted as blue and thus perceived to be similar. Thus, we introduce the concept of color similarity that pairs up configuration labels that are assigned similar colors. Let $f_{1} \colon S_{1} \to \cA_{1}$ and $f_{2} \colon S_{2} \to \cA_{2}$ be assignments where $S_{1}$ and $S_{2}$ are disjoint, as are $\cA_{1}$ and $\cA_{2}$.
Now, let $S \colonequals S_{1} \cup S_{2}$ and $f \colonequals f_{1} \cup f_{2}$.
Let $\fP_{1}$ and $\fP_{2}$ be partitions of $\cA_{1}$ and $\cA_{2}$ respectively, and let $p \colon \fP_{1} \to \fP_{2}$ be a bijection. 
Let $\fP \colonequals \{\, P \cup p(P) \mid P \in \fP_{1} \,\}$ be a set that unites all color sets connected by $p$.
Let $\sim_{p}$ be the equivalence relation defined by
\[
    \sim_{p} \colonequals \{\, (s, s') \in S \times S \mid \exists P \in \fP : f(s), f(s') \in P \,\}.
\]
We call $\sim_{p}$ a \textbf{color similarity} w.r.t. $\cA_{1}$ and $\cA_{2}$.

Now that we have a notion of similarity between configuration labels, we can define a notion of run-similarity, which means that two runs are essentially the same, where only the names of the domain elements and configuration labels are different according to a color similarity. More formally, we say that two pa-runs $(r, f, \alpha, v) \in \textit{pa-runs}(V, S, \cA)$ and $(r', f', \alpha', v') \in \textit{pa-runs}(V', S', \cA')$ are \textbf{run-similar w.r.t. $\sim_{p}$} if there is a bijection $g \colon V \to V'$ such that
\begin{enumerate}
    \item $g(v) = v'$,
    \item $r_{i}(u) \sim_{p} r'_{i}(g(u))$ for all $u \in V$ and $i \in \N$,
    \item $\alpha_{i}(u) = \alpha'_{i}(g(u))$ for all $u \in V$ and $i \in \N$.
\end{enumerate}
Moreover, they are \textbf{domain-isomorphic} if only the names of the domain elements differ, i.e., if $S = S'$, $\cA = \cA'$, $f = f'$ and $r_{i}(u) = r'_{i}(g(u))$ for all $u \in V$ and $i \in \N$.

For pa-runs, we define a method for removing the unimportant rounds of the run at each node, and a notion of similarity based on this process; the below concepts are defined analogously for a-runs. First, for any pa-run $(r, f, \alpha, v)$, if $\alpha_{i}(u) = 1$ and also $\abs{\{\, j \in \N \mid j < i, \alpha_{j}(u) = 1 \,\}} = k-1$, then we say that $i$ is the \textbf{$k$th attention round at $u$}. An \textbf{attention-processing function} for pa-runs w.r.t. $(S, \cA)$ is the function $h \colon \textit{pa-runs}(S, \cA) \to \textit{p-runs}(S, \cA)$ where for all $(r, f, \alpha, v) \in \textit{pa-runs}(S, \cA)$ we have that $h(r, f, \alpha, v) = (r', f, v)$ such that
\begin{itemize}
    \item $r'_{k}(v) = r_{i}(u)$ if $i$ is the $(k+1)$th attention round of $(r, f, \alpha, v)$ at $u$, and
    \item $u \notin \dom(r'_{k})$ if there is no $(k+1)$th attention round of $(r, f, \alpha, v)$ at $u$.
\end{itemize}
If $h$ and $h'$ are the attention-processing functions for $(S, \cA)$ and $(S', \cA')$ respectively, then we say that $\rho \in \textit{pa-runs}(V, S, \cA)$ and $\rho' \in \textit{pa-runs}(V', S', \cA')$ are \textbf{attention-similar w.r.t.~$\sim_{p}$} if $h(\rho)$ and $h'(\rho')$ are run-similar w.r.t.~$\sim_{p}$.

\begin{proposition}
    If two a-runs or pa-runs are run-similar, they are also attention-similar.
\end{proposition}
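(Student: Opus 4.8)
The plan is to show that a single bijection $g$ of the domains witnesses both run-similarity and attention-similarity simultaneously, so that no new combinatorial data needs to be produced. Fix two pa-runs $\rho = (r,f,\alpha,v)$ over $V$ w.r.t.\ $(S,\cA)$ and $\rho' = (r',f',\alpha',v')$ over $V'$ w.r.t.\ $(S',\cA')$, and suppose they are run-similar w.r.t.\ $\sim_{p}$ via a bijection $g \colon V \to V'$. Writing $h(\rho) = (\bar r, f, v)$ and $h'(\rho') = (\bar r', f', v')$ for the images under the respective attention-processing functions, it suffices to check that $(\bar r, f, v)$ and $(\bar r', f', v')$ are run-similar w.r.t.\ $\sim_{p}$, and I claim that the very same $g$ works. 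Condition~1, $g(v) = v'$, is inherited verbatim from the run-similarity of $\rho$ and $\rho'$, so only Condition~2 (including the matching of the partial domains of the configuration labelings) remains to be verified.

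The key step is to observe that, node by node, $\rho$ and $\rho'$ have literally the same attention rounds once the domains are identified by $g$. Indeed, Condition~3 of run-similarity gives $\alpha_{i}(u) = \alpha'_{i}(g(u))$ for all $u \in V$ and all $i \in \N$. Hence for every $u$ the sets $\{\, j \in \N \mid \alpha_{j}(u) = 1 \,\}$ and $\{\, j \in \N \mid \alpha'_{j}(g(u)) = 1 \,\}$ coincide, and a trivial count then shows that $i$ is the $m$th attention round of $\rho$ at $u$ if and only if $i$ is the $m$th attention round of $\rho'$ at $g(u)$, for every $m \in \Z_+$.

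From this alignment the result follows by unwinding the definition of the attention-processing functions. For each $k \in \N$ and $u \in V$ we have $u \in \dom(\bar r_{k})$ iff $\rho$ has a $(k+1)$th attention round at $u$ iff $\rho'$ has a $(k+1)$th attention round at $g(u)$ iff $g(u) \in \dom(\bar r'_{k})$, so the partial domains match under $g$; and when these are defined, writing $i$ for this common $(k+1)$th attention round, we have $\bar r_{k}(u) = r_{i}(u)$ and $\bar r'_{k}(g(u)) = r'_{i}(g(u))$, whence Condition~2 for $\rho,\rho'$ gives $r_{i}(u) \sim_{p} r'_{i}(g(u))$ and therefore $\bar r_{k}(u) \sim_{p} \bar r'_{k}(g(u))$. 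This establishes Condition~2 for the processed runs, hence attention-similarity. The a-run case is identical, simply dropping the pointed component $v$ and Condition~1. The only thing requiring care---and the nearest thing to an obstacle---is the bookkeeping around the off-by-one indexing in the definition of the $m$th attention round together with the shrinking partial domains $\dom(\bar r_{k+1}) \subseteq \dom(\bar r_{k})$; once the attention rounds are aligned via Condition~3 as above, everything else is purely definitional.
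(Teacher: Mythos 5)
Your proof is correct and is exactly the definitional unwinding the paper intends: the paper states this proposition without proof, treating it as immediate from the fact that Condition~3 of run-similarity forces the attention rounds at $u$ and $g(u)$ to coincide, so the attention-processing functions commute with the identification by $g$. Your handling of the partial domains and the alignment of the $m$th attention rounds fills in precisely the bookkeeping the paper leaves implicit.
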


Now that we have defined runs, we need a general way to classify them w.r.t. some non-empty set $T$ of classification labels, such as $\{0,1\}$. A \textbf{pa-run classification} w.r.t. $(S, \cA)$ is a function $C \colon \textit{pa-runs}(S, \cA) \to T$ such that $C(\rho) = C(\rho')$ whenever the pa-runs $\rho$ and $\rho'$ are domain-isomorphic. Classifications for p-runs, a-runs and runs are defined analogously. If $C(\rho) = C(\rho')$ whenever $h(\rho)$ and $h(\rho')$ are domain-isomorphic, where $h$ is the attention-processing function for pa-runs w.r.t. $(S, \cA)$, then we call $C$ \textbf{attention-invariant}.

Finally, we define notions of similarity for comparing pa-run classifications, which extend for a-runs, p-runs and runs analogously. Two pa-run classifications $C, K \colon \textit{pa-runs}(S, \cA) \to~T$ are \textbf{equivalent} if $C(\rho) = K(\rho)$ for all pa-runs $\rho$. Given a color similarity $\sim_{p}$ w.r.t. $\cA$ and~$\cA'$, then we say that $C \colon \textit{pa-runs}(S, \cA) \to T$ and $C' \colon \textit{pa-runs}(S', \cA') \to T$ are \textbf{equivalent w.r.t.~$\sim_{p}$} if $C(\rho) = C'(\rho')$ whenever $\rho$ and $\rho'$ are run-similar w.r.t. $\sim_{p}$.

\begin{proposition}
    If $C$ and $C'$ are attention-invariant pa-run classifications and equivalent w.r.t. $\sim_{p}$, then $C(\rho) = C'(\rho')$ when $\rho$ and $\rho'$ are attention-similar w.r.t. $\sim_{p}$.
\end{proposition}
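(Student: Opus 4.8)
The plan is to reduce attention-similarity to plain run-similarity by inserting a canonical \emph{lift} of each attention-processed p-run back into a pa-run, so that both hypotheses --- equivalence w.r.t.\ $\sim_p$ (which is phrased for run-similar pa-runs) and attention-invariance (which says a classifier depends only on the attention-processed p-run up to domain-isomorphism) --- can be triggered in turn. First I would unfold the hypothesis: writing $\rho=(r,f,\alpha,v)$ over $(S,\cA)$ and $\rho'=(r',f',\alpha',v')$ over $(S',\cA')$, attention-similarity w.r.t.\ $\sim_p$ means exactly that the p-runs $\pi\colonequals h(\rho)$ and $\pi'\colonequals h'(\rho')$ obtained from the two attention-processing functions are run-similar w.r.t.\ $\sim_p$, via some domain bijection $g$. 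The goal is then to replace $\rho$ and $\rho'$ by pa-runs that (i) process back to $\pi$ and $\pi'$ and (ii) are genuinely run-similar.

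Next I would define, for an arbitrary p-run $(s,f,u_0)$ over $(S,\cA)$, the canonical lift $(s,f,\alpha^\ast,u_0)$ in which every labeled round is declared an attention round: put $\dom(\alpha^\ast_i)=\dom(s_i)$ and $\alpha^\ast_i\equiv 1$. Because run-skeletons satisfy $\dom(s_{i+1})\subseteq\dom(s_i)$, at each node $u$ the rounds where $u$ is labeled form an initial segment $0,1,2,\dots$, so these are precisely the attention rounds of the lift, and the $(k+1)$th attention round at $u$ is round $k$. Consequently $\alpha^\ast$ is itself a legitimate run-skeleton and the attention-processing function returns the p-run unchanged, i.e.\ $h$ applied to the lift of $\pi$ equals $\pi$ (and analogously for $\pi'$ over $(S',\cA')$). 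Applying this to $\pi$ and $\pi'$ yields pa-runs $\rho_\pi\in\textit{pa-runs}(S,\cA)$ and $\rho_{\pi'}\in\textit{pa-runs}(S',\cA')$.

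I would then verify that $\rho_\pi$ and $\rho_{\pi'}$ are run-similar w.r.t.\ $\sim_p$ using the very bijection $g$: conditions~1 and~2 of run-similarity are inherited directly from the run-similarity of $\pi$ and $\pi'$, while condition~3, $\alpha^\ast_i(u)=\alpha'^{\ast}_i(g(u))$, holds because each attention skeleton is the constant $1$ exactly on the domain of its run-skeleton, and those domains correspond under $g$. Finally I would chain three equalities: $C(\rho)=C(\rho_\pi)$ by attention-invariance of $C$ (since $h(\rho_\pi)=\pi=h(\rho)$, trivially domain-isomorphic via the identity); $C(\rho_\pi)=C'(\rho_{\pi'})$ by equivalence w.r.t.\ $\sim_p$ applied to the run-similar lifts; and $C'(\rho_{\pi'})=C'(\rho')$ by attention-invariance of $C'$. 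Together these give $C(\rho)=C'(\rho')$. The main obstacle I anticipate is the bookkeeping for the partial configuration labelings: I must fix the convention that ``$s_i(u)\sim_p s'_i(g(u))$ for all $u,i$'' means $u\in\dom(s_i)$ iff $g(u)\in\dom(s'_i)$, with $\sim_p$-related values where defined, since it is exactly this reading that makes the matching of the two canonical attention skeletons (condition~3) fall out for free; the only other thing to be careful about is confirming that each lift is a bona fide pa-run, which is again immediate from domain monotonicity of the run-skeleton.
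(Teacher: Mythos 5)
Your proof is correct. Note that the paper states this proposition without any proof at all (it is presented as an easy observation), so there is nothing to compare against; your argument supplies exactly the missing bridge, namely the canonical lift of an attention-processed p-run back to a pa-run in which every labeled round is an attention round. This lift is well defined because $\dom(r(i+1))\subseteq\dom(r(i))$ makes the labeled rounds at each node an initial segment of $\N$, so $h$ returns the lift to the original p-run, and the three-step chain (attention-invariance of $C$, equivalence w.r.t.\ $\sim_p$ applied to the run-similar lifts, attention-invariance of $C'$) is precisely the intended argument. Your remark about reading condition~2 of run-similarity as forcing the domains of the partial labelings to correspond under the bijection is the right convention and is indeed what makes condition~3 hold for the two lifted attention skeletons.
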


\subsubsection*{Run-generators}

Next, we define run-generators which, as the name implies, generate runs over labeled graphs. We also define semantics for run-generators using run classifiers, and notions of equivalence for comparing run-generators using different configuration labels.

A \textbf{run-generator} \textbf{over $\Pi$} w.r.t. $(S, \cA)$ is a pair $(R, f)$ where $f \colon S \to \cA$ is an assignment and $R \colon \fG(\Pi) \to \textit{runs}(S, \cA)$ is a function that for each $\Pi$-labeled graph $G = (V, E, \lambda)$ gives a run $(r, f) \in \textit{runs}(V, S, \cA)$, and for any two isomorphic $\Pi$-labeled graphs $G$ and $H$, the runs $R(G)$ and $R(H)$ are domain-isomorphic. 
Each run-generator $(R, f)$ over $\Pi$ induces a pointed run $(R(G),v)$ in each $\Pi$-labeled graph $G$.
An \textbf{attention run-generator} is defined analogously such that $R$ is a function $\fG(\Pi) \to \textit{a-runs}(S, \cA)$ that likewise respects isomorphisms.
Note that run-generators are uniform in the sense that the assignment $f$ does not depend on the domain of the graph.
We could define non-uniform run-generators, where the assignment can depend on the graph and the pointed run can depend on the node, but we do not need them.

We call a class $\cR$ of run-generators (or attention run-generators) a \textbf{pre-classification system over $\Pi$}. Likewise, we call a function $\cS \colon \cP(\mathrm{LAB}) \to \fR$, $\cS(\Pi) = \cR$, where $\cR$ is a pre-classification system over $\Pi$, a \textbf{pre-classification system}.
We can pair a run-generator $(R,f)$ over~$\Pi$ up with a run-classifier $C$ that uses the same configuration labels, in which case we obtain a pointed graph classifier over $\Pi$ that maps each pointed $\Pi$-labeled graph $(G,v)$ to $C(R(G),v)$. When pairing up a run-generator and run-classifier, we naturally always assume that they use the same configuration labels. 
By pairing up each run-generator in a pre-classification system~$\cR$ up with a run classifier, we obtain a classification system. If the run classifier is the same~$C$ for all run-generators, we call the classification system \textbf{uniform} and denote it by $(\cR, C)$.

Next, we define similarity between run-generators based on the notion of color-similarity. Run-similar run-generators give similar runs in every graph. More formally, two run generators (or attention run-generators) $(R,f)$ and $(R',f')$ over $\Pi$ w.r.t. $(S, \cA)$ and $(S', \cA')$ are \textbf{run-similar w.r.t. $\sim_p$} if for all $\Pi$-labeled graphs $G$, $R(G)$ and $R'(G)$ are run-similar w.r.t.~$\sim_p$. We say that two attention run-generators $(R,f)$ and $(R',f')$ are \textbf{attention-similar w.r.t. $\sim_p$} if $R(G)$ and $R'(G)$ are attention-similar w.r.t.~$\sim_p$.

\begin{proposition}\label{proposition: run-similar run-generators are attention-similar}
    If two attention run-generators are run-similar, then they are also attention-similar.
\end{proposition}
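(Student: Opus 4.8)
The plan is to observe that both run-similarity and attention-similarity for attention run-generators are defined graph-by-graph in terms of the corresponding notions for a-runs, so that the statement reduces immediately to the earlier proposition asserting that run-similar a-runs are attention-similar.

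First I would fix a color similarity $\sim_{p}$ and two attention run-generators $(R, f)$ and $(R', f')$ over $\Pi$ (w.r.t. $(S, \cA)$ and $(S', \cA')$ respectively) that are run-similar w.r.t. $\sim_{p}$. Unfolding the definition of run-similarity for run-generators, this means precisely that for every $\Pi$-labeled graph $G$, the a-runs $R(G)$ and $R'(G)$ are run-similar w.r.t. $\sim_{p}$. I would then apply the earlier proposition (run-similar a-runs and pa-runs are attention-similar) to each such pair, obtaining that $R(G)$ and $R'(G)$ are attention-similar w.r.t. $\sim_{p}$ for every $G$. By the definition of attention-similarity for attention run-generators, this is exactly the assertion that $(R, f)$ and $(R', f')$ are attention-similar w.r.t. $\sim_{p}$, which completes the proof.

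I do not expect a genuine obstacle here, as all the real content sits in the already-established a-run proposition, and the isomorphism-respecting property of $R$ and $R'$ is needed only for the run-generators to be well-defined and plays no further role. The only point worth double-checking is that the lifted definitions really are the pointwise (over all $G$) versions of the a-run notions, which they are by construction. If one wanted a fully self-contained argument, the single place demanding slightly more care would be re-deriving the a-run proposition itself: given a witnessing bijection $g$ with $\alpha_{i}(u) = \alpha'_{i}(g(u))$ for all $u$ and $i$, one verifies that the $(k+1)$th attention round at $u$ occurs at the same index $i$ as the $(k+1)$th attention round at $g(u)$, and that the labels there satisfy $r_{i}(u) \sim_{p} r'_{i}(g(u))$; hence the same $g$ witnesses run-similarity of the attention-processed p-runs $h(R(G))$ and $h'(R'(G))$, i.e., their attention-similarity.
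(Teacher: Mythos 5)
Your proposal is correct: the paper states this proposition without any proof, treating it as an immediate consequence of the fact that run-similarity and attention-similarity of attention run-generators are both defined graph-by-graph via the corresponding notions for a-runs, which is exactly the reduction you carry out. Your additional verification of the underlying a-run proposition (the witnessing bijection $g$ preserves the indices of attention rounds, hence also witnesses run-similarity of the attention-processed runs) fills in the only step the paper leaves implicit, and it is the intended argument.
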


Let $\cR$ and $\cR'$ be classes of run-generators (or attention run generators) w.r.t. $(S, \cA)$ and $(S', \cA')$.
We say that $\cR$ and $\cR'$ are \textbf{run-equivalent w.r.t.} $\sim_p$ if for each $(R,f) \in \cR$ there exists a run-similar $(R', f') \in \cG'$ w.r.t. $\sim_p$ and vice versa. If $\cR$ and $\cR'$ are classes of attention run-generators, then we say that they are \textbf{attention-equivalent w.r.t. $\sim_p$} if for each $(R,f) \in \cR$ there exists an attention-similar $(R',f') \in \cR'$ w.r.t. $\sim_p$ and vice versa.

\begin{lemma}
    If two classes of attention run-generators are run-equivalent, then they are also attention-equivalent.
\end{lemma}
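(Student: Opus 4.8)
The plan is to derive the lemma directly from Proposition~\ref{proposition: run-similar run-generators are attention-similar}, which states that run-similar attention run-generators are attention-similar (both w.r.t.\ a fixed color similarity $\sim_{p}$). First I would unfold the definition of run-equivalence: assuming that the two classes $\cR$ and $\cR'$ of attention run-generators are run-equivalent w.r.t.\ $\sim_{p}$, for every $(R,f) \in \cR$ there is some $(R',f') \in \cR'$ that is run-similar to $(R,f)$ w.r.t.\ $\sim_{p}$, and, by the ``vice versa'' clause of the definition, symmetrically for every element of $\cR'$.

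Then I would apply Proposition~\ref{proposition: run-similar run-generators are attention-similar} to each such witness pair. Since $(R,f)$ and the chosen $(R',f')$ are run-similar w.r.t.\ $\sim_{p}$, they are attention-similar w.r.t.\ $\sim_{p}$. Hence for every $(R,f) \in \cR$ there exists an attention-similar $(R',f') \in \cR'$, and running the same argument in the other direction (using the witness supplied by the symmetric half of run-equivalence) yields, for every $(R',f') \in \cR'$, an attention-similar $(R,f) \in \cR$. By the definition of attention-equivalence w.r.t.\ $\sim_{p}$, these two conditions together say precisely that $\cR$ and $\cR'$ are attention-equivalent w.r.t.\ $\sim_{p}$, completing the proof.

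There is essentially no genuine obstacle here beyond bookkeeping the quantifier structure, since the entire content of the statement is carried by Proposition~\ref{proposition: run-similar run-generators are attention-similar}. That proposition in turn rests on the pointwise fact that a run-similarity witness $g\colon V \to V'$ already matches the attention skeletons ($\alpha_{i}(u) = \alpha'_{i}(g(u))$ for all $u$ and $i$), so that applying the attention-processing functions on both sides deletes the same non-attention rounds and leaves run-similar p-runs. The only points to verify carefully are that a single color similarity $\sim_{p}$ is used uniformly across both directions, and that run-equivalence is indeed symmetric by definition, so that both halves of attention-equivalence follow from the same invocation of the proposition rather than requiring a separate argument.
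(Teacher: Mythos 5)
Your proposal is correct and matches the intended argument: the paper states this lemma without proof precisely because it follows immediately from Proposition~\ref{proposition: run-similar run-generators are attention-similar} by unfolding the definitions of run-equivalence and attention-equivalence, exactly as you do. Your careful points about using a single color similarity $\sim_{p}$ uniformly and invoking the symmetric half of run-equivalence for the other direction are the right (and only) bookkeeping needed.
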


With uniform classification systems, run-equivalence between run-generators and run classifications is enough to guarantee equivalent expressive power. Let $\cR_{1}$ and $\cR_{2}$ be classes of run-generators or attention run-generators over $\Pi$, and let $(\cR_1, C_1)$ and $(\cR_2, C_2)$ be uniform (pointed) classification systems over $\Pi$. 
\begin{proposition}\label{proposition: run-similar + run-similar equivalent = equivalent expressive power}
    $(\cR_1, C_1)$ and $(\cR_2, C_2)$ have the same expressive power if there is a color similarity $\sim_p$ such that either of the following conditions holds:
    \begin{itemize}
        \item $\cR_1$ and $\cR_2$ are run-equivalent w.r.t. $\sim_p$, and $C_1$ and $C_2$ are equivalent w.r.t. $\sim_p$.
        \item $\cR_1$ and $\cR_2$ are attention-equivalent w.r.t. $\sim_p$, $C_1$ and $C_2$ are equivalent w.r.t. $\sim_p$ and $C_1$ and $C_2$ are attention invariant.
    \end{itemize}
\end{proposition}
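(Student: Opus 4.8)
The plan is to unfold the definition of \emph{same expressive power} for uniform classification systems and reduce the whole statement to a single pointwise comparison of the induced pointed graph classifiers, which I would then discharge using the similarity notions together with the classification-level propositions already established in this subsection. Recall that, as a classification system, a uniform system $(\cR_i, C_i)$ over $\Pi$ is precisely the class of pointed graph classifiers $D_{R} \colon (G,v) \mapsto C_i(R(G), v)$, ranging over the run-generators $(R, f) \in \cR_i$ paired with the fixed classifier $C_i$. Hence it suffices to show that every $D_R$ arising from $\cR_1$ has an equivalent $D_{R'}$ arising from $\cR_2$, and symmetrically; since both run-equivalence and attention-equivalence are defined symmetrically (``and vice versa''), I only need to argue one direction.

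First I would fix $(R,f) \in \cR_1$ and produce a partner $(R', f') \in \cR_2$: in Case~(i) run-equivalence w.r.t.\ $\sim_p$ supplies a \emph{run-similar} $(R',f')$, and in Case~(ii) attention-equivalence supplies an \emph{attention-similar} one. In either case the target is the equality $C_1(R(G), v) = C_2(R'(G), v)$ for every pointed $\Pi$-labeled graph $(G,v)$. For Case~(i), run-similarity of the run-generators gives that $R(G)$ and $R'(G)$ are run-similar (unpointed) runs over the common node set of $G$; passing to the designation of the common node $v$ yields run-similar p-runs $(R(G),v)$ and $(R'(G),v)$, whereupon the hypothesis that $C_1$ and $C_2$ are equivalent run classifications w.r.t.\ $\sim_p$ immediately gives the desired equality. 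For Case~(ii) the analogous pointing yields attention-similar pa-runs $(R(G),v)$ and $(R'(G),v)$, and the conclusion follows from the already-proved proposition stating that attention-invariant classifications that are equivalent w.r.t.\ $\sim_p$ agree on attention-similar runs (invoking precisely the extra hypothesis that $C_1$ and $C_2$ are attention-invariant). The symmetric direction is verbatim the same with the roles of $\cR_1$ and $\cR_2$ exchanged, so equal expressive power follows.

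The hard part will be the pointing step: the bijection witnessing (attention-)run-similarity of the run-generators acts on the shared domain $V_G$ of $G$ and a priori need not fix the designated node $v$, whereas run-similarity of p-runs and pa-runs demands a bijection sending $v$ to $v$. I would handle this by observing that here the two similar runs are defined over literally the same node set, so the natural witness is the identity bijection, which fixes $v$ trivially; more carefully, one verifies that the bijection witnessing similarity of $R(G)$ and $R'(G)$ may be chosen compatibly with the point, appealing to the isomorphism-invariance built into the definitions of run-generators and run classifications. Once the pointed runs are known to be similar, everything else is bookkeeping, so I expect this transfer from unpointed to pointed similarity to be the only genuinely delicate point; Proposition~\ref{proposition: run-similar run-generators are attention-similar} and the run-similar-implies-attention-similar facts are available to bridge the two similarity notions should the argument need to route through attention-similarity explicitly.
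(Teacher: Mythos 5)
The paper states this proposition without any proof, treating it as an immediate consequence of the definitions, and your unfolding---reducing \emph{same expressive power} to the pointwise equality $C_1(R(G),v)=C_2(R'(G),v)$, discharging case (i) via run-similarity of the induced pointed runs plus equivalence of the classifiers w.r.t.\ $\sim_p$, and case (ii) via the preceding proposition that attention-invariant, $\sim_p$-equivalent classifiers agree on attention-similar runs---is exactly the intended argument. The delicate point you flag (that the bijection witnessing similarity of $R(G)$ and $R'(G)$ must send the designated node to itself before the pointed-run classifiers can be compared) is genuine, and resolving it by taking the identity on the common node set is consistent with the paper's intent, since every concrete translation in this appendix produces runs over the same graph with the identity as the witnessing bijection.
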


\subsection{Extended GMSC, CMPAs and GNNs}\label{Extended GMSC, CMPAs and GNNs}

Now we take a step further down in generality and examine $\GMSC$-programs, $\CMPA$s and $\GNN$s as run-generators. We introduce a generalization of $\GMSC$ with a counting global modality $\langle e \rangle_{\geq k}$ and a generalization of $\CMPA$s and $\GNN$s with global readout, which allows nodes to receive messages from all nodes in the graph. We also generalize all three by giving them attention functions that turn them into attention run-generators.

In the literature, constant-iteration graph neural networks with global readout have been studied before in, e.g., \cite{DBLP:conf/iclr/BarceloKM0RS20, DBLP:conf/lics/Grohe21, DBLP:conf/lics/Grohe23}.
For example, Barceló et al. \cite{DBLP:conf/iclr/BarceloKM0RS20} showed that constant-iteration $\GNN[\R]$s with global readout can define any property definable in two-variable first-order logic with counting quantifiers. In Section \ref{section: Capturing semantics} we will expand our logical characterizations and show that $\GMSC$ with counting global modality has the same expressive power as $\GNNF$s with global readout, which should not be too surprising. 

$\MSC$ and distributed computing based on circuits with attention was first studied in \cite{dist_circ_mfcs}. Also, the diamond-free fragment of $\MSC$ and recurrent neural networks with attention were studied in \cite{ahvonen_et_al:LIPIcs.CSL.2024.9}.

\subsubsection*{Extending GMSC, CMPAs and GNNs with global diamonds, global readout and attention}

A \textbf{$\Pi$-formula of $\GML$ with counting global modality} (or $\GMLG$) is the same as an ordinary $\Pi$-formula of $\GML$, but we also allow construction of formulae of the type $\langle e \rangle_{\geq k} \varphi$, where the semantics are defined as follows:
\[
    G, v \models \langle e \rangle_{\geq k} \varphi \text{ if and only if } \abs{\{\, u \in V \mid G, u \models \varphi\,\}} \geq k.
\]
$(\Pi, \cT)$-schemata of $\GMSC$ with counting global modality ($\GMSCG$) are defined analogously. A $(\Pi, \cT)$-program of $\GMSCG$ is otherwise the same as that of $\GMSC$, but the terminal clauses are $\Pi$-formulae of $\GMLG$ and the iteration clauses are $(\Pi, \cT)$-schemata of $\GMSCG$. The \textbf{modal depth} of a $\GMLG$-formula or $\GMSCG$-schema is then the maximum number of nested modalities in the formula; for example, the formula $\Diamond_{\geq 2} \langle e \rangle_{\geq 3} p$ has modal depth $2$.

A \textbf{$\CMPA$ over $\Pi$ with global readout} ($\CMPAG$) is otherwise the same as a $\CMPA$, except that the transition function is of the form $\delta \colon Q \times \cM(Q) \times \cM(Q) \to Q$ where a node's new state depends also on the states of all nodes in the graph. More formally, if $x^{t}_{v}$ denotes the state of $v$ in round $t$ in the $\Pi$-labeled graph $G = (V, E, \lambda)$, then
\[
    x^{t+1}_{v} = \delta\left(x^{t}_{v}, \{\{\, x^{t}_{u} \mid (v,u) \in E \,\}\}, \{\{\, x^{t}_{u} \mid u \in V \,\}\}\right).
\]
We define $\FCMPAG$s and bounded $\CMPAG$s in the natural way.
Given $k \in \N$, we say that the transition function $\delta$ is \textbf{invariant w.r.t. $k$}, if $\delta(q, M, N) = \delta(q, M_{|k}, N_{|k})$ for all multisets $M, N \in \cM(Q)$ and $q \in Q$, i.e., $\delta$ can be equivalently defined as a function $Q \times \cM_k(Q) \times \cM_k(Q) \to Q$.
Therefore, a $k$-$\CMPAG$ is a $\CMPAG$, where the transition function is invariant w.r.t. $k$, and a bounded $\CMPAG$ is a $k$-$\CMPAG$ for some $k$.

A \textbf{graph neural network (for $(\Pi, d)$) with global readout} ($\GNNG$) is otherwise the same as a $\GNN$, but the transition function is $\delta \colon \R^{d} \times \cM(\R^{d}) \times \cM(\R^{d}) \to \R^{d}$ where $\delta(x, y, z) = \COM(x, \AGG(y), \READ(z))$ and $\READ \colon$ $\cM(\R^{d}) \to \R^{d}$ is a \textbf{readout function}. If $x^{t-1}_{u}$ denotes the feature vector of $u$ in round $t-1$ in the $\Pi$-labeled graph $(W, R, V)$, then
\[
    x^{t}_{v} = \COM\left(x^{t-1}_{v}, \AGG\left(\{\{\, x^{t-1}_{u} \mid (v,u) \in E \,\}\}\right), \READ\left(\{\{\, x^{t-1}_{u} \mid u \in V \,\}\}\right)\right).
\]
$\GNNFG$s, R-simple aggregate-combine $\GNNG$s and R-simple aggregate-combine $\GNNFG$s follow in the natural way.

Now, we define the generalization w.r.t.~attention. We call a function $A \colon \fG(\Pi) \to S_{\{0,1\}}$ where $S_{\{0,1\}}$ is the set of run-skeletons w.r.t.~$\{0,1\}$ an \textbf{attention function}. A $(\Pi, \cT)$-program of $\GMSC$ \textbf{with attention} ($\GMSCA$) is then a $\GMSC$-program with an attention function, and likewise for $\CMPAA$s and $\GNNA$s. We can combine counting global modality or global readout with attention and obtain $\GMSCAG$, $\CMPAAG$ and $\GNNAG$.

We also define a fragment of normal form programs for all variants of $\GMSC$. Intuitively, $\GMSC$-programs differ from $\GNN$s in their ability to obtain information from multiple steps away in a single iteration, so we place some restrictions on the programs to bring them closer to $\GNN$s. A $(\Pi, \cT)$-program of $\GMSC$ is in \textbf{normal form} ($\GMSCN$) if the modal depth of the body of each terminal clause is $0$, and the modal depth of the body of each iteration clause is at most $1$. We obtain the classes of $\GMSCGN$, $\GMSCAN$ and $\GMSCAGN$ analogously. Normal form programs of modal substitution calculus $\MSC$ ($\MSC$ being the fragment of $\GMSC$ where only diamonds $\Diamond_{\geq 1}$ are allowed) have been studied before in \cite{dist_circ_mfcs}.

\subsubsection*{Generalized semantics for extensions}

Now, we define a semantics for $\GMSCAG$-programs, $\CMPAAG$s and $\GNNAG$s. Informally, all three can be seen as attention-run generators $(R, f)$ (or run-generators in the case of $\GMSCA$, $\CMPAA$s and $\GNNA$s) that generate an a-run for every graph that tells for each round and for each node which head predicates are true in the case of $\GMSC$, and the node's state or feature vector in the case of $\CMPAAG$s and $\GNNAG$s. The assignment tells whether a predicate is appointed or whether a state or feature vector is accepting.

More formally,
let $\Lambda$ be a $(\Pi, \cT)$-program of standard $\GMSC$ and let $\cA \subseteq \cT$ be the set of appointed predicates of $\Lambda$. Now, $\Lambda$ is simply a run-generator $(R, f)$ defined as follows. For each $\Pi$-labeled graph $G = (V, E, \lambda)$, $R(G)$ is a run $(r, f) \in \textit{runs}(V, \cP(\cT), \{1\})$ such that for each $u \in V$
\[
    r_{i}(u) = \{\, X \in \cT \mid G,u \models X^{i} \,\}
\]
and
$f(\cX) = 1$ if $\cA \cap \cX \neq \emptyset$ and $\cX \notin \dom(f)$ otherwise. 
If $\Lambda$ is a $\GMSCAG$-program, then $\Lambda$ is an attention run-generator $(R, f)$ where $R(G) = (r, f, \alpha) \in \textit{a-runs}(V, \cP(\cT), \{1\})$ where $\alpha = A(G)$, where $A$ is the attention function of the program. 

Analogously $\CMPAAG$s (resp. $\GNNAG$s) are attention run-generators: in each round, the run skeleton $r$ labels each node with the state (resp., feature vector) where the automaton (resp., the $\GNN$) is at the node in that round, $f(q) = 1$ if $q$ is an accepting state (resp., accepting feature vector) and $\alpha= A(G)$ where $A$ is the attention function specified by the automaton (resp., the $\GNN$).

While $\GMSCAG$-programs, $\CMPAAG$s and $\GNNAG$s only use one color, it is possible to generalize them to use multiple colors instead by omitting appointed predicates, accepting states and feature vectors, and replacing them with an assignment $f \colon S \to \cA$ directly, where $\cA$ is a set of colors. More formally, a \textbf{generalized} $\GMSC$-program over $(\Pi, \cT, \cA)$ is a pair $(\Lambda, f)$ where $f \colon \cP(\cT) \to \cA$ is an assignment, and appointed predicates are not included in $\Lambda$. We identify $(\Lambda, f)$ with the run-generator $(R, f)$, where $R$ is induced by $\Lambda$ as before, but the assignment is $f$ in every generated run. Generalized $\CMPAAG$s and generalized $\GNNAG$s, and associated subclasses, are defined analogously. 
We also define \textbf{simple} programs of $\GMSCAG$ as those generalized $\GMSCAG$-programs $(\Lambda, f)$, where there is a bijection $b \colon \cA \to \cB$, where the elements of $\cB$ are sets, such that for each $\cX \in \cP(\cT)$, $b(f(\cX)) = \bigcup \{\, b(f(\{X\})) \mid X \in \cX \,\}$; intuitively, this means that we color head predicates, and the color of a set of head predicates is the union of the colors of its elements.

\begin{example}\label{example: GNN reals fixed point semantics}
    It is easy to see that the standard semantics for $\GMSC$ is definable via simple $\GMSC$-programs over $(\Pi, \{1\})$, where a set of head predicates is colored $1$ only if it contains an appointed predicate, and a pointed run is accepted if a node visits a set of head predicates colored $1$. Likewise, we can see that the standard semantics for $\GNN[\R]$s is definable via $\GNN[\R]$s for $(\Pi, d, \{1\})$, where accepting feature vectors are colored $1$ and a pointed run is accepted if the node visits a state colored $1$. The fixed-point semantics for $\GNN[\R]$s in \cite{Pfluger_Tena_Cucala_Kostylev_2024} is definable via a pointed run classifier that instead accepts a pointed run if there is a round $k \in \N$ such that the state of the node is colored $1$ in each round $k' \geq k$.
\end{example}

\begin{remark}
    As noted already in \cite{ahvonen_et_al:LIPIcs.CSL.2024.9} where different attention mechanisms were studied, instead of an attention function, we could define attention predicates for $\GMSC$, attention states for $\CMPA$s and attention feature vectors for $\GNN$s. The function $\alpha$ in the induced attention function $(r, f, \alpha)$ would then be defined such that $\alpha_{i}(v) = 1$ if and only if $r_{i}(v)$ contains an attention predicate, or is an attention state or attention feature vector.
\end{remark}

\subsection{Equivalences between extensions of GMSC and GNNs with run-based semantics}\label{section: Capturing semantics}

In this section, we show equivalence between $\GMSC$-programs and $\GNNF$s in the setting with global readout and run classifications. The main results of this section are theorems~\ref{theorem: GMSC[1] + global readout = GNN[F] + global readout} and \ref{theorem: equal expressive power with run-similar-equivalent run classifiers}, which show that the fragment of generalized normal form $\GMSCAG$ is run-equivalent to generalized $\GNNFAG$s and has the same expressive power with 
equivalent run classifications.
We obtain similar results in theorems \ref{theorem: GMSCAG + global readout = GNNF + global readout = R-simple GNNF + global readout} and \ref{thrm: equal expressive power with attention-equivalent} with \emph{attention}-equivalence which extends the result to include generalized R-simple aggregate-combine $\GNNFAG$s and $\GMSCAG$ not in normal form.
As reductions of Theorem \ref{thrm: equal expressive power with attention-equivalent} we show in Corollary \ref{corollary: graph-size GMSC = graph-size GNN[F] = graph-size R-simple GNN[F]} that the theorem also holds for so-called graph-size graph neural networks (as defined in \cite{Pfluger_Tena_Cucala_Kostylev_2024}) and in Theorem \ref{theorem: büchi and fixed-point} that $\GMSC$, $\GNNF$s and R-simple aggregate-combine $\GNNF$s (without global modality or global readout) have the same expressive power with fixed-point semantics (as defined in \cite{Pfluger_Tena_Cucala_Kostylev_2024}) as well as Büchi semantics. Finally in Theorem \ref{theorem: convergence fixed-point}, we show the same for a semantics based on global fixed-points, which in this case corresponds to the acceptance condition in \cite{scarcelli}.

Throughout this section, $\Pi$ is a set of node label symbols, $\cT$ is a finite set of head predicates, $\cB$ and $\cB'$ 
are sets of colors, $\fP$ and $\fP'$ are partitions of $\cB$ abd $\cB'$ respectively, and $p \colon \fP \to \fP'$ is a bijection.

\subsubsection{Normal form GMSC+AG and GNN[F]+AGs are run-equivalent}\label{Normal form GMSC+AG and floating-point GNN[F]+AGs are equivalent}

In this section, we show that the normal form fragment of generalized $\GMSCAG$ and generalized $\GNNFAG$s are run-equivalent with the same expressive power. The former is stated in Theorem \ref{theorem: GMSC[1] + global readout = GNN[F] + global readout} and the latter in Theorem \ref{theorem: equal expressive power with run-similar-equivalent run classifiers}. The translations we construct are robust, meaning that we can omit global modality/readout, attention, or both, and the characterization still holds.

First, we show how to translate a generalized normal form $\GMSCAG$-program to a run-similar generalized $\GNNFAG$.
\begin{proposition}\label{proposition: GMSC[1] + global modality < k-FCMPA + global readout}
    For each $(\Pi, \cB)$-program of $\GMSCAGN$, we can construct a run-similar $\GNNFAG$ over $(\Pi, \cB')$ w.r.t. any color similarity $\sim_p$.
\end{proposition}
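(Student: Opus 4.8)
The plan is to route the construction through a bounded counting message passing automaton with global readout and then encode that automaton as a floating-point GNN, reusing and extending the two translations already established: the construction of a bounded $\FCMPA$ from a normal-form $\GMSC$-program (Lemma~\ref{lemma: GMSC to k-FCMPA}) and the encoding of an $\FCMPA$ as a $\GNNF$ (Lemma~\ref{lemma: FCMPA to GNN[F]}). Write the generalized program as $(\Lambda,f_\Lambda)$ with assignment $f_\Lambda\colon\cP(\cT)\to\cB$. The crucial simplification is that $\Lambda$ is already in normal form, so one round of the program can be simulated by a single round of the automaton; there is no need for a clock, and in particular no need for the R-simple machinery of Lemma~\ref{lemma: GMSC to simple GNN}, since a $\GNNFAG$ places no constraint on its combination function beyond being a floating-point function with bounded aggregation and readout.

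First I would build a bounded $\FCMPAG$ $\cA$ simulating $\Lambda$. Let $k$ be the width of $\Lambda$ and take the state set to be $\cP(\Pi\cup\cT)$, so that the state of a node in round $n$ records exactly the node labels true at it together with the head predicates $X$ with $G,w\models X^n$; the run's configuration label is the $\cT$-projection of the state. The initialization function reads off the terminal clauses: since these have modal depth $0$, $\pi(P)$ is determined by evaluating each terminal body over the labels in $P$. For the transition I would extend the auxiliary satisfaction relation $\Vdash$ from the proof of Lemma~\ref{lemma: GMSC to k-FCMPA} to a relation $(q,N,N')\Vdash\psi$ carrying a second multiset $N'$ for the global modality, with the two relevant clauses $(q,N,N')\Vdash\Diamond_{\geq\ell}\theta$ iff enough states of $N$ satisfy the depth-$0$ formula $\theta$, and $(q,N,N')\Vdash\langle e\rangle_{\geq\ell}\theta$ iff enough states of $N'$ do. Because the iteration clauses have modal depth at most $1$, every inner $\theta$ has modal depth $0$ and hence its truth at a node depends only on that node's state; this is what makes a single round suffice and makes both counts bounded by $k$, so $\delta(q,N,N')=\delta(q,N_{|k},N'_{|k})$ and $\cA$ is a bounded $\FCMPAG$. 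A routine induction on $n$ (as in Lemma~\ref{lemma: GMSC to k-FCMPA}) shows that $\cA$ is in state $q$ at $w$ in round $n$ iff $q=\{\,p\in\Pi\mid G,w\models p\,\}\cup\{\,X\in\cT\mid G,w\models X^n\,\}$.

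Next I would encode $\cA$ as a $\GNNFAG$ following Lemma~\ref{lemma: FCMPA to GNN[F]}: one-hot encode the finitely many states as binary feature vectors over any floating-point system containing $0$ and $1$, let $\AGG$ and $\READ$ be the $k$-projected multiset operations producing the encodings of $N_{|k}$ and $N'_{|k}$, and let $\COM$ be the floating-point function that reads a state $q$ together with these two counts and outputs the one-hot encoding of $\delta(q,N,N')$. The attention function of $\Lambda$ has the same type $\fG(\Pi)\to S_{\{0,1\}}$ as that of a $\GNNFAG$, so I would copy it verbatim, which gives condition~(3) of run-similarity for free. Finally, to obtain run-similarity with respect to the prescribed correspondence $p\colon\fP\to\fP'$, I would define the GNN's assignment $f'$ so that the colour it gives to a vector encoding a state $q$ depends only on the $\cT$-projection $\cX=q\cap\cT$ and lies in the block $p(P)$, where $P\in\fP$ is the block of $\cB$ containing $f_\Lambda(\cX)$. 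Then at every node $u$ and round $i$ the program's configuration $r_i(u)=\cX$ and the GNN's configuration are $\sim_{p}$-related, which is exactly conditions~(1)--(2) of run-similarity with $g$ the identity.

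The main obstacle I expect is the clean integration of the counting global modality: one has to verify that threading $\langle e\rangle_{\geq k}$ through the readout multiset is correct and stays within the $k$-bounded, finite-valued regime, i.e.\ that the global count of nodes satisfying a depth-$0$ subschema is recoverable from the received feature vectors and never needs to be counted beyond $k$. This is precisely where the normal-form assumption does the work, since the inner formulae are depth-$0$ and thus functions of single states. The remaining effort is bookkeeping: checking that removing global readout, attention, or both leaves the argument intact (so that the robustness claimed for the surrounding theorems follows), and that the chosen $f'$ together with $f_\Lambda$ and $p$ induces exactly the stated color similarity.
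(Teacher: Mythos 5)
Your proposal is correct and follows essentially the same route as the paper: build a bounded $\FCMPAAG$ whose states record the true node labels and head predicates (the paper's $q_{\cX,P}$ is exactly your $\cP(\Pi\cup\cT)$ decomposition), define the transition by evaluating the depth-$\leq 1$ iteration clauses against the local and global multisets, copy the attention function, pick the assignment inside the $p$-corresponding block, and then pass trivially to a $\GNNFAG$. The only cosmetic difference is that you invoke the one-hot encoding of Lemma~\ref{lemma: FCMPA to GNN[F]} explicitly where the paper simply calls the final step trivial.
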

\begin{proof}
    Let $(\Lambda, f_{\Lambda})$ be a $(\Pi, \cT, \cB)$-program of $\GMSCAGN$. We construct a run-similar bounded $\FCMPAAG$ $(\cA, f_{\cA})$, which is trivial to translate to a run-similar $\GNNFAG$ over $(\Pi, \cB')$.

    The construction is similar to that in the proof of Lemma \ref{lemma: GMSC to k-FCMPA}. For the set $Q$ of states, we create a state $q_{\cX, P}$ for each set $\cX \subseteq \cT$ of head predicates and each set $P \subseteq \Pi$ of node label symbols that appear in $\Lambda$. For the initialization function $\pi$, we define that $\pi(P) = q_{\cX, P}$ with $\cX$ the set of head predicates whose terminal clause is true when exactly the node label symbols in $P$ are true. For the transition function $\delta$, we define that $\delta(q_{\cX, P}, N, N') = q_{\cX', P}$, where $\cX'$ is the set of head predicates that become true when exactly the head predicates $\cX$ and node label symbols $P$ are true in the node, the sets of true head predicates and node label symbols of neighbours are exactly as in $N$ and those of nodes in the graph in general are exactly as in $N'$. The attention function of $\cA$ is identical to that of $\Lambda$. Finally, for the assignment $f_{\cA}$, if $f_{\Lambda}(\cX) = b \in \cP \in \fP$, then we define for all $P \subseteq \Pi$ that $f_{\cA}(q_{\cX, P}) = b'$ for some $b' \in p(\cP) \in \fP'$. If $\cX \notin \dom(f_{\Lambda})$, then $q_{\cX, P} \notin \dom(f_{\cA})$ for all $P \subseteq \Pi$. It is easy to verify that $(\Lambda, f_{\Lambda})$ and $(\cA, f_{\cA})$ are run-similar w.r.t. $\sim_{p}$.
\end{proof}

We then show how to translate a generalized $\GNNFAG$ into a run-similar generalized normal form $\GMSCAG$-program that is simple.
\begin{proposition}\label{proposition: GMSC + global modality > k-FCMPA + global readout}
    For each $\GNNFAG$ over $(\Pi, \cB)$, we can construct a run-similar simple $(\Pi, \cB')$-program of $\GMSCAGN$ w.r.t. any color similarity $\sim_p$.
\end{proposition}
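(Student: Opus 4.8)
The plan is to mirror the argument of Proposition~\ref{proposition: GMSC[1] + global modality < k-FCMPA + global readout} in reverse, factoring the translation through a bounded $\FCMPAAG$. Given a $\GNNFAG$ $(\cG, f_{\cG})$ over $(\Pi, \cB)$ with bound $k$, I would first read it as a bounded $\FCMPAAG$ $(\cA, f_{\cA})$ whose state set $Q$ is the set of feature vectors that occur, since the two definitions coincide up to renaming: the global readout of $\cG$ becomes the global-multiset argument of the transition function $\delta \colon Q \times \cM_k(Q) \times \cM_k(Q) \to Q$, and the attention function is copied verbatim. It then suffices to translate $(\cA, f_{\cA})$ into a run-similar \emph{simple} program of $\GMSCAGN$, which is the genuine work and is the converse of the construction in Proposition~\ref{proposition: GMSC[1] + global modality < k-FCMPA + global readout}.

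For this I would extend the construction of Lemma~\ref{lemma: k-FCMPA to GMSC}. I introduce one head predicate $X_q$ per state $q \in Q$; the terminal clause $X_q(0) \colonminus \bigvee_{\pi(P)=q}\big(\bigwedge_{p\in P} p \land \bigwedge_{p \in \Pi \setminus P} \neg p\big)$ has modal depth $0$ exactly as before. For the iteration clause I exploit that $\delta$ is now invariant w.r.t.\ $k$ in \emph{both} its local and its global multiset argument, and write, for each triple $(q', M, N) \in Q \times \cM_k(Q) \times \cM_k(Q)$ with $\delta(q', M, N) = q$, the conjunction $X_{q'} \land \varphi_M \land \gamma_N$. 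Here $\varphi_M \colonequals \bigwedge_{M(r)=m<k} \Diamond_{=m} X_r \land \bigwedge_{M(r)=k} \Diamond_{\geq k} X_r$ describes the local multiset as in Lemma~\ref{lemma: k-FCMPA to GMSC}, while $\gamma_N \colonequals \bigwedge_{N(r)=m<k} \langle e \rangle_{=m} X_r \land \bigwedge_{N(r)=k} \langle e \rangle_{\geq k} X_r$ uses the counting global modality (with $\langle e \rangle_{=m}$ the obvious abbreviation $\langle e \rangle_{\geq m} \land \neg \langle e \rangle_{\geq m+1}$) to describe the global multiset; the iteration clause of $X_q$ is the finite disjunction of these conjunctions. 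Crucially, every diamond and every global modality is applied \emph{directly} to a head predicate, so all iteration clauses have modal depth at most $1$ and the program is in normal form. A routine induction on $n$, identical to that of Lemma~\ref{lemma: k-FCMPA to GMSC} except for reading off the number of nodes satisfying $X_r^n$ as the global multiplicity of $r$, then shows $G, u \models X_q^n$ iff $u$ is in state $q$ in round $n$.

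This invariant yields the structural fact that at every node and round exactly one head predicate $X_q$ holds, namely the one for the current state, so the only configuration labels that ever occur in a run are the singletons $\{X_q\}$. I would copy $\cA$'s attention function into $\Lambda$, so condition~3 of run-similarity (matching attention skeletons) holds trivially since the underlying graph is unchanged. For the assignment, given the target $\sim_p$ with partitions $\fP$ of $\cB$, $\fP'$ of $\cB'$ and bijection $p$, I would set $f_\Lambda(\{X_q\})$ to be any representative of the block $p(P_q) \in \fP'$, where $P_q \in \fP$ is the block of $f_{\cG}(q)$; this is precisely the freedom-within-blocks used in Proposition~\ref{proposition: GMSC[1] + global modality < k-FCMPA + global readout}, and it makes $r_i(u) \sim_p r'_i(u)$ hold on the only labels that occur. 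To obtain simplicity, I would pick distinct atoms $a_q$, declare the base set of $X_q$ to be $\{a_q\}$, and define $f_\Lambda$ on the remaining (non-occurring) subsets $\cX$ by the union rule $b(f_\Lambda(\cX)) = \bigcup_{X_q \in \cX}\{a_q\}$, which fixes a bijection $b$ witnessing simplicity without disturbing any singleton colour.

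I expect the global-readout encoding itself to be routine once the counting global modality is available; the point to get right is that describing the global multiset with $\langle e \rangle$ keeps the program in normal form, which it does exactly because the modalities sit directly on head predicates. The more delicate bookkeeping is reconciling the two side conditions on the colouring, namely simplicity (the union/bijection law on \emph{all} subsets of $\cT$) with run-similarity w.r.t.\ an arbitrary prescribed $\sim_p$; these do not conflict because only singleton label sets arise in genuine runs, so the union law only constrains colours that play no role in run-similarity. Together with Proposition~\ref{proposition: GMSC[1] + global modality < k-FCMPA + global readout}, this converse gives the run-equivalence asserted in Theorem~\ref{theorem: GMSC[1] + global readout = GNN[F] + global readout}, and the same construction specialises to the versions without global readout, without attention, or without both, by dropping the corresponding clauses ($\gamma_N$ and/or the copied attention function).
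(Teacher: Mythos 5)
Your proposal is correct and follows essentially the same route as the paper: pass through a bounded $\FCMPAAG$, reuse the head predicates and terminal clauses of Lemma~\ref{lemma: k-FCMPA to GMSC}, describe the pair of local and global $k$-multisets with $\Diamond_{=m}/\Diamond_{\geq k}$ and $\langle e\rangle_{=m}/\langle e\rangle_{\geq k}$ applied directly to head predicates to stay in normal form, copy the attention function, and colour the singleton labels through $p$ while fixing the remaining subsets to enforce simplicity. The only cosmetic difference is that you write each iteration clause as a disjunction of conjunctions $X_{q'}\land\varphi_M\land\gamma_N$ where the paper uses the equivalent conjunction of implications $\bigwedge_{q'}(X_{q'}\rightarrow\bigvee\varphi_{(S_1,S_2)})$, and your handling of the simplicity constraint is somewhat more explicit than the paper's.
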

\begin{proof}
    For each $\GNNFAG$ over $(\Pi, \cB)$, it is trivial to construct a run-similar bounded $\FCMPAAG$ $(\cA, f_{\cA})$ over $(\Pi, \cB)$. It is left to construct a run-similar simple $(\Pi, \cB')$-program $(\Lambda, f_{\Lambda})$ of $\GMSCAGN$.

    The construction is similar to the one in the proof of Lemma \ref{lemma: k-FCMPA to GMSC}, and the construction of head predicates and terminal clauses is in fact identical. For iteration clauses, we define for all $q, q' \in Q$ the finite set $M_k(q, q') = \{\, (S_{1}, S_{2}) \in \cM_k(Q) \times \cM_{k}(Q) \mid \delta(q, S_{1}, S_{2}) = q' \,\}$, where $k$ is the bound of $\cA$. For each $(S_{1}, S_{2}) \in M_k(q, q')$ we define
    \[
        \begin{aligned}
            \varphi_{(S_{1}, S_{2})} \colonequals &\bigwedge_{q \in Q,\, S_{1}(q) = n,\, n < k} \Diamond_{=n} X_{q} & &\land & &\bigwedge_{q \in Q,\, S_{1}(q) = k} \Diamond_{\geq k} X_{q} \\
            \land &\bigwedge_{q \in Q,\, S_{2}(q) = n,\, n < k} \langle e \rangle_{=n} X_{q} & &\land & &\bigwedge_{q \in Q,\, S_{2}(q) = k} \langle e \rangle_{\geq k} X_{q}.
        \end{aligned} 
    \]
    Then the iteration clause for $X_q$ is defined by
    \[
        X_q \colonminus \bigwedge_{q' \in Q} \Big( X_{q'} \rightarrow \bigvee_{(S_{1}, S_{2}) \in M_k(q', q)} \varphi_{(S_{1}, S_{2})} \Big).
    \]
    The attention function of $\Lambda$ is the attention function of $\cA$. For the assignment $f_{\Lambda}$, we define that if $f_{\cA}(q) = b \in \cP \in \fP$, then $f_{\Lambda}(\{X\}) = b' \in p(\cP) \in \fP'$, and if $q \notin \dom(f_{\Lambda})$, then $\{X_{q}\} \notin \dom(f_{\Lambda})$. For all other $\cX$, we define $f_{\Lambda}(\cX)$ arbitrarily such that the program is simple. It is easy to verify that $(\cA, f_{\cA})$ and $(\Lambda, f_{\Lambda})$ are run-similar w.r.t. $\sim_{p}$.
\end{proof}

In particular, Propositions \ref{proposition: GMSC[1] + global modality < k-FCMPA + global readout} and \ref{proposition: GMSC + global modality > k-FCMPA + global readout} also hold when we simultaneously drop attention, globality or both from each class.
Thus, we obtain the following theorems.

\begin{theorem}\label{theorem: GMSC[1] + global readout = GNN[F] + global readout}
    Over any $\Pi$, the following are run-equivalent w.r.t. any color similarity $\sim_p$:
    \begin{itemize}[itemsep=0.25em,topsep=0pt,parsep=0pt,partopsep=0pt]
        \item generalized $\GMSCAGN$ and generalized $\GNNFAG$s,
        \item generalized $\GMSCAN$ and generalized $\GNNFA$s,
        \item generalized $\GMSCGN$ and generalized $\GNNFG$s,
        \item generalized $\GMSCN$ and generalized $\GNNF$s.
    \end{itemize}
\end{theorem}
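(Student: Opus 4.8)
The plan is to read off the theorem as an immediate corollary of the two preceding propositions, since run-equivalence between two classes $\cR$ and $\cR'$ w.r.t.\ $\sim_p$ unfolds by definition into two existence statements: every object in $\cR$ has a run-similar counterpart in $\cR'$ w.r.t.\ $\sim_p$, and vice versa. First I would recall this definition explicitly, so that each of the four bullets reduces to verifying exactly these two directions with respect to the given color similarity.

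For the first bullet, concerning generalized $\GMSCAGN$ and generalized $\GNNFAG$s, I would treat the two directions separately. The direction from $\GMSCAGN$ to $\GNNFAG$s is supplied verbatim by Proposition~\ref{proposition: GMSC[1] + global modality < k-FCMPA + global readout}, which produces, for each $(\Pi,\cB)$-program of $\GMSCAGN$, a run-similar $\GNNFAG$ over $(\Pi,\cB')$ w.r.t.\ $\sim_p$. The converse direction is supplied by Proposition~\ref{proposition: GMSC + global modality > k-FCMPA + global readout}, which produces for each $\GNNFAG$ a run-similar (in fact simple) program of $\GMSCAGN$. Since both propositions are already stated w.r.t.\ an \emph{arbitrary} color similarity, the $\sim_p$ in the theorem is exactly the one we need, and no further manipulation of colors is required.

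For the remaining three bullets, I would invoke the observation stated just before the theorem, namely that both propositions continue to hold when attention, globality, or both are simultaneously dropped from each side. Thus dropping attention from both propositions yields the run-equivalence of generalized $\GMSCGN$ and $\GNNFG$s; dropping globality yields that of generalized $\GMSCAN$ and $\GNNFA$s; and dropping both yields that of generalized $\GMSCN$ and $\GNNF$s. In each case the two restricted propositions supply the two directions exactly as for the first bullet.

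I do not expect any genuine obstacle here, as all the substantive work---encoding a $\GMSC$-program into the states of a bounded $\FCMPAAG$ and hence a $\GNNFAG$, and conversely simulating the transition function of a $\GNNFAG$ by the iteration clauses of a normal-form program with the counting global modality---is already carried out inside the two propositions. The only point requiring care is bookkeeping: ensuring that when a feature is dropped (say globality), the corresponding syntactic device (the counting global modality $\langle e\rangle_{\geq k}$ on the logic side, the readout on the GNN side) is removed consistently from both constructions, so that the normal-form restriction and $\sim_p$-run-similarity are preserved in each of the four cases.
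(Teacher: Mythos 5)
Your proposal is correct and matches the paper's own treatment: the theorem is stated as an immediate consequence of Propositions~\ref{proposition: GMSC[1] + global modality < k-FCMPA + global readout} and~\ref{proposition: GMSC + global modality > k-FCMPA + global readout}, together with the remark that both propositions continue to hold when attention, globality, or both are dropped. Your unpacking of run-equivalence into the two directions and your assignment of the dropped features to the remaining bullets are exactly the intended argument.
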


\begin{theorem}\label{theorem: equal expressive power with run-similar-equivalent run classifiers}
    The following have the same expressive power with any pa-run classifiers that are equivalent w.r.t. any color similarity $\sim_p$:
    \begin{itemize}[itemsep=0.25em,topsep=0pt,parsep=0pt,partopsep=0pt]
        \item generalized $\GMSCAGN$ and generalized $\GNNFAG$s,
        \item generalized $\GMSCAN$ and generalized $\GNNFA$s,
        \item generalized $\GMSCGN$ and generalized $\GNNFG$s,
        \item generalized $\GMSCN$ and generalized $\GNNF$s.
    \end{itemize}
\end{theorem}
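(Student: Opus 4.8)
The plan is to derive the theorem as an immediate consequence of the run-equivalences already established in Theorem~\ref{theorem: GMSC[1] + global readout = GNN[F] + global readout}, combined with the first clause of Proposition~\ref{proposition: run-similar + run-similar equivalent = equivalent expressive power}. First I would fix an arbitrary color similarity $\sim_p$ and arbitrary (pa-)run classifiers $C_1$ and $C_2$ that are equivalent w.r.t.\ $\sim_p$, and then, for each of the four rows, consider the two uniform classification systems $(\cR_1,C_1)$ and $(\cR_2,C_2)$, where $\cR_1$ and $\cR_2$ are the two classes of (attention) run-generators named in that row. The goal is to show that these two systems have the same expressive power, and the whole argument is essentially a single invocation of the proposition repeated per row.

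For each row, Theorem~\ref{theorem: GMSC[1] + global readout = GNN[F] + global readout} supplies exactly the hypothesis needed: it asserts run-equivalence of $\cR_1$ and $\cR_2$ w.r.t.\ \emph{any} color similarity, hence in particular w.r.t.\ the fixed $\sim_p$. Since $C_1$ and $C_2$ were chosen equivalent w.r.t.\ $\sim_p$, the first bullet of Proposition~\ref{proposition: run-similar + run-similar equivalent = equivalent expressive power} applies verbatim and yields that $(\cR_1,C_1)$ and $(\cR_2,C_2)$ have the same expressive power. Running this argument once for each of the four rows completes the proof. I would present it as four parallel applications of the same two results rather than as one combined statement, since the generator classes differ from row to row.

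The only point requiring a little bookkeeping is matching the classifier types to the generator types: for the two attention rows ($\GMSCAGN$/$\GNNFAG$ and $\GMSCAN$/$\GNNFA$) the generators are attention run-generators and the relevant classifiers are genuinely pa-run classifiers, whereas for the two attention-free rows ($\GMSCGN$/$\GNNFG$ and $\GMSCN$/$\GNNF$) the generators produce ordinary runs and one instantiates the proposition with p-run classifiers; in each case the corresponding version of the proposition is the one used. I do not expect any real obstacle here. The one conceptual remark worth flagging is \emph{why} this theorem may use the first bullet of Proposition~\ref{proposition: run-similar + run-similar equivalent = equivalent expressive power} and thereby avoid imposing any attention-invariance condition on $C_1,C_2$: it is precisely because Theorem~\ref{theorem: GMSC[1] + global readout = GNN[F] + global readout} provides the stronger \emph{run}-equivalence rather than merely attention-equivalence. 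This is exactly the distinction that separates the present theorem from Theorem~\ref{thrm: equal expressive power with attention-equivalent}, where only attention-equivalence is available and attention-invariance of the classifiers must therefore be assumed.
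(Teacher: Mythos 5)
Your proposal is correct and matches the paper's own (implicit) argument: the paper derives this theorem directly from the run-equivalences of Theorem~\ref{theorem: GMSC[1] + global readout = GNN[F] + global readout} (i.e., Propositions~\ref{proposition: GMSC[1] + global modality < k-FCMPA + global readout} and~\ref{proposition: GMSC + global modality > k-FCMPA + global readout}, with attention/globality dropped as needed) together with the first clause of Proposition~\ref{proposition: run-similar + run-similar equivalent = equivalent expressive power}. Your remark on why run-equivalence lets one avoid the attention-invariance hypothesis needed in Theorem~\ref{thrm: equal expressive power with attention-equivalent} is exactly the right distinction.
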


\subsubsection{GMSC+AG and GNN[F]+AGs are attention-equivalent}\label{GMSC+AG and GNN[F]+AGs are equivalent}

In this section, we show that $\GMSCAG$, $\GNNFAG$s and R-simple aggregate-combine $\GNNFAG$s are attention-similar and have the same expressive power. The former is stated in Theorem \ref{theorem: GMSCAG + global readout = GNNF + global readout = R-simple GNNF + global readout} and the latter in Theorem \ref{thrm: equal expressive power with attention-equivalent}. We also obtain a reduction of Theorem \ref{thrm: equal expressive power with attention-equivalent} for the case where the number of iterations is determined by the size of the input graph. The translations are robust, meaning that counting global modality and global readout can be removed such that the characterization still holds.

First, we show that generalized $\GMSCAG$ and generalized $\GMSCAGN$ are attention-equivalent.
\begin{theorem}\label{theorem: GMSC + global modality = GMSC[1] + global readout}
    Over any $\Pi$, generalized $\GMSCAG$ and generalized $\GMSCAGN$ are attention-equivalent w.r.t. any color similarity $\sim_p$.
\end{theorem}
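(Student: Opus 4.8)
The plan is to establish the two inclusions that attention-equivalence w.r.t.\ $\sim_p$ requires, treating the direction from $\GMSCAGN$ into $\GMSCAG$ as the easy one and the reverse as the substantial one. Since every program of $\GMSCAGN$ is syntactically already a program of $\GMSCAG$ (normal form is merely a restriction on modal depth), given a generalized $\GMSCAGN$-program $(\Gamma, f_\Gamma)$ over $(\Pi, \cB')$ I would keep the underlying program and its attention function unchanged and only recolor: for each configuration $\cX$ with $f_\Gamma(\cX)$ in the class $\cP' \in \fP'$, pick a color in the class $p^{-1}(\cP') \in \fP$ (nonempty since $p$ is a bijection), leaving $\cX$ out of the domain when $f_\Gamma(\cX)$ is undefined. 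The resulting generalized $\GMSCAG$-program over $(\Pi,\cB)$ produces in every graph the same run-skeleton and attention skeleton, with colors $\sim_p$-related configuration by configuration; hence the two run-generators are run-similar w.r.t.\ $\sim_p$ and therefore attention-similar by Proposition~\ref{proposition: run-similar run-generators are attention-similar}.

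For the hard direction, let $(\Lambda, f_\Lambda)$ be a generalized $\GMSCAG$-program over $(\Pi,\cT,\cB)$ with attention function $A$. The core is the clock construction of Lemma~\ref{lemma: GMSC to GMSC[1]}, combined with the terminal-clause elimination via the auxiliary predicate $I$ from Lemma~\ref{lem: simplify GMSC-program}. I would observe that this construction applies verbatim in the presence of counting global modalities: each application of a modality, whether $\Diamond_{\geq k}$ or $\langle e \rangle_{\geq k}$, lowers modal depth by one and is evaluated in a single clock step against the previous step's stored subpredicate values. The one genuinely new point to verify is that $\langle e \rangle_{\geq k}$ reads globally correct values. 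This holds because the clock predicates carry no modalities, so their truth depends only on the round number and is \emph{synchronized} at every node; thus at each round all nodes occupy the same clock position and all subpredicates hold globally consistent values, which is exactly what a single step evaluating a global modality needs. Let $D$ be the resulting period. The construction yields a normal-form program $\Gamma$ that simulates one round of $\Lambda$ per period, with a designated readout position $c$ at which the predicates $\{X \mid X \in \cT\}$ of $\Gamma$ simultaneously reflect $\Lambda$'s head-predicate truth values, i.e.\ the $\Gamma$-predicate for $X$ is true at round $nD+c$ exactly when $G,u \models X^n$ in $\Lambda$ (the one-round shift introduced by eliminating terminal clauses being absorbed into $c$).

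It remains to fit the attention function and the assignment so that attention-processing realigns the runs. I would set the attention function $A'$ of $\Gamma$ by $A'(G)_i(u) = \alpha_n(u)$ when $i = nD + c$ and $0$ otherwise, where $\alpha = A(G)$; this respects isomorphism because $A$ does. I would define $f_\Gamma$ so that every $\Gamma$-configuration whose restriction to $\{X \mid X \in \cT\}$ encodes a set $\cY \subseteq \cT$ is sent into the class $p(\cP) \in \fP'$, where $f_\Lambda(\cY)$ lies in $\cP \in \fP$ (and left undefined when $f_\Lambda(\cY)$ is), with configurations that never occur at a readout round colored arbitrarily, since attention-processing discards them. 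To conclude, fix a graph $G$ and a node $u$: the attention rounds of $\Lambda$ at $u$ are the $n$ with $\alpha_n(u)=1$, while those of $\Gamma$ at $u$ are precisely the rounds $nD+c$ for those same $n$. Hence the $(k+1)$st attention round of $\Gamma$ lies in the period of the $(k+1)$st attention round of $\Lambda$, and by the readout property together with the choice of $f_\Gamma$ the attention-processed configurations satisfy $\hat r^{\Gamma}_k(u) \sim_p \hat r^{\Lambda}_k(u)$. Taking the identity bijection on the common node set shows that the processed runs are run-similar w.r.t.\ $\sim_p$, so $(\Gamma, f_\Gamma)$ and $(\Lambda, f_\Lambda)$ are attention-similar w.r.t.\ $\sim_p$.

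The main obstacle I anticipate is the bookkeeping of the clock construction when both graded and global modalities are present: one must confirm that a \emph{single} readout position $c$ exists at which \emph{all} predicates of $\cT$ are simultaneously correct, so that the entire configuration — not merely individual predicates — is reproduced faithfully, and that the one-round shift from terminal-clause elimination is folded consistently into $c$ and into $A'$. The conceptual crux, namely that global modalities survive the depth reduction, rests entirely on clock synchronization across nodes; the remainder is a routine adaptation of Lemmas~\ref{lemma: GMSC to GMSC[1]} and~\ref{lem: simplify GMSC-program}.
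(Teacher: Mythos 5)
Your proposal is correct and follows essentially the same route as the paper: the easy direction by recoloring only, and the hard direction via the clock construction of Lemma~\ref{lemma: GMSC to GMSC[1]}, with the attention function relocated to the periodic readout rounds and the assignment matched through $\sim_p$. The only cosmetic difference is that the paper forces the auxiliary head predicates to be false at readout rounds (via conjuncts $\neg T_{n-1}$) so that $f_\Gamma$ need only be specified on $\cP(\cT)$, whereas you color a $\Gamma$-configuration by its restriction to $\cT$; both work, and your explicit observation that clock synchronization across nodes is what lets $\langle e \rangle_{\geq k}$ survive the depth reduction is exactly the point the paper leaves implicit.
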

\begin{proof}
    It is trivial to translate a $(\Pi, \cT, \cB)$-program of $\GMSCAGN$ into an attention-similar $(\Pi, \cT, \cB')$-program of $\GMSCAG$ by simply changing the assignment. We show the other direction, which is similar to the proof of Lemma \ref{lemma: GMSC to GMSC[1]}. Let $(\Lambda, f_{\Lambda})$ be a $(\Pi, \cT, \cB)$-program of $\GMSCAG$. We construct an attention-similar $(\Pi, \cT', \cB)$-program $(\Gamma, f_{\Gamma})$ of $\GMSCAG$ (where $\cT \subseteq \cT'$) by modifying the proof of Lemma \ref{lemma: GMSC to GMSC[1]} as follows. We replace the clock in the construction with a similar clock $T_{1}, \dots, T_{n-1}$ with the terminal clauses $T_{\ell}(0) \colonminus \bot$ for all $\ell \in [n-1]$ and the iteration clauses $T_{1} \colonminus \bigwedge_{i = 1}^{n-1} \neg T_{i}$ and $T_{\ell} \colonminus T_{\ell-1}$ for all $\ell \in \{2, \dots, n-1\}$. Then we modify the iteration clauses $X \colonminus \psi$ of all other head predicates in $\cT' \setminus \cT$ into $X \colonminus \neg T_{n-1} \land \psi_{X}$. This ensures that all head predicates in $\cT' \setminus \cT$ are false in every round $n(i+1)$ where $i \in \N$, which happen to be the rounds where $\Gamma$ has finished simulating the $(i-1)$th iteration of $\Lambda$, meaning that if a node $u$ in a graph~$G$ has the label $\cX$ in round $i$ of $\Lambda$, then it is also the label of $u$ in round $n(i+1)$ of $\Gamma$. Thus for the attention function, we define $A_{\Gamma}$ such that for all graphs $G$ and nodes~$u$ of~$G$, $i$ is an attention round of $A_{\Lambda}(G)$ at $u$ if and only if $n(i+1)$ is an attention round of $A_{\Gamma}(G)$ at $u$. We define the assignment $f_{\Gamma} \colon \cP(\cT') \to \cB'$ as follows. 
    For all $\cX \in \cP(\cT)$, we define that if $f_{\Lambda}(\cX) = b \in \cP \in \fP$, then $f_{\Gamma}(\cX) = b'$ for some $b' \in p(\cP) \in \fP'$, and if $\cX \notin \dom(f_{\Lambda})$, then $\cX \notin \dom(f_{\Gamma})$. For all $\cX \in \cP(\cT') \setminus \cP(\cT)$, we define $f_{\Gamma}(\cX)$ arbitrarily (including the possibility that $\cX \notin \dom(f_{\Gamma})$).
    It is easy to verify that $(\Lambda, f_{\Lambda})$ and $(\Gamma, f_{\Gamma})$ are attention-equivalent w.r.t. $\sim_{p}$.
\end{proof}

Next, as a special case, we show how to translate a generalized $\GMSCAG$-program into an attention-similar generalized R-simple aggregate-combine $\GNNFAG$.
\begin{lemma}\label{lemma: GMSC + global modality < R-simple ACGNN[F] + global readout}
    For each $(\Pi, \cB)$-program of $\GMSCAG$, we can construct an attention-similar R-simple aggregate-combine $\GNNFAG$ over $(\Pi, \cB')$ w.r.t. any color similarity $\sim_p$.
\end{lemma}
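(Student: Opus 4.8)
The plan is to adapt the construction in the proof of Lemma~\ref{lemma: GMSC to simple GNN} (which translates a $\GMSC$-program into an R-simple aggregate-combine $\GNNF$) to the global-and-attention setting, handling the counting global modality $\langle e \rangle_{\geq K}$ by means of the global readout. First I would extend the balancing result of Lemma~\ref{lem: simplify GMSC-program} to $\GMSCAG$, treating $\langle e \rangle_{\geq K}$ as one further unary operator that contributes $1$ to formula depth. This yields an attention-similar $(\Pi,\cT',\cB)$-program $\Gamma$ of $\GMSCAG$ whose terminal clauses all have body $\bot$, whose iteration clauses share a common formula depth $D$, and whose conjunctions are balanced; the auxiliary predicate $I$ used in that construction shifts the simulation by a constant number of rounds, which is harmless for attention-similarity once the attention function is adjusted accordingly.

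Next I would build an R-simple aggregate-combine $\GNNFAG$ $\cG_\Gamma$ that simulates one iteration round of $\Gamma$ in $D+1$ of its own rounds, exactly as in Lemma~\ref{lemma: GMSC to simple GNN}. The binary feature vectors are split into a truth-value block (one coordinate per subschema of an iteration-clause body and per head predicate) and a $(D+1)$-bit clock block. The combination function has the R-simple aggregate-combine form with readout,
\[
\COM(x,y,z) = \mathrm{ReLU}^*(x\cdot C + y\cdot A + z\cdot R + b),
\]
where $y$ is the neighbour aggregate and $z$ the global readout, both computed by $\mathrm{SUM}_{S}$ in increasing order. The matrices $C$, $A$ and the bias $b$ are defined exactly as by Barceló et al.\ and in Lemma~\ref{lemma: GMSC to simple GNN} for the Boolean connectives and the graded diamonds, and the clock block of $C$ performs the cyclic shift. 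The only genuinely new clause is $\varphi_\ell = \langle e \rangle_{\geq K}\varphi_k$, for which I set $R_{k,\ell} = 1$ and $b_\ell = -K+1$ (all other entries of $R$ being $0$, in particular on the clock coordinates). The floating-point system $S$ is chosen to represent every integer from $0$ up to $K_{\max}$, the maximum width appearing in a graded or global modality of $\Gamma$; by Proposition~\ref{floating-point sum bound} both $\AGG = \mathrm{SUM}_{S}$ and $\READ = \mathrm{SUM}_{S}$ are then bounded, so $\cG_\Gamma$ is a bona fide R-simple aggregate-combine $\GNNFAG$.

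Correctness follows by the same induction as in Lemma~\ref{lemma: GMSC to simple GNN}, establishing that for every subschema $\varphi_\ell$ of formula depth $d$ we have $\bv(w)_\ell^{n(D+1)+d} = 1$ iff $G,w \models \varphi_\ell^n$ and $=0$ otherwise. The new inductive case computes
\[
\bv(w)_\ell^{n(D+1)+d} = \mathrm{ReLU}^*\!\left(\mathrm{SUM}_{S}\!\left(\{\!\{\, \bv(u)_k^{n(D+1)+d-1} \mid u \in V \,\}\!\}\right) - K + 1\right),
\]
which is $1$ exactly when at least $K$ nodes of $G$ satisfy $\varphi_k^n$, i.e.\ when $G,w \models (\langle e \rangle_{\geq K}\varphi_k)^n$. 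I would then set the attention function of $\cG_\Gamma$ to fire at node $u$ in the rounds $n(D+1)$ that carry the correct head-predicate values, matching (up to the constant shift introduced by the auxiliary predicate $I$) the attention rounds of $\Lambda$ at $u$, and inherit the colour assignment of $\Gamma$ by reading off the head-predicate bits of a feature vector. This makes $(\Lambda,f_\Lambda)$ and $\cG_\Gamma$ attention-similar w.r.t.\ $\sim_p$.

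The main obstacle is the interaction of the global readout with the $(D+1)$-round clock. The readout $\mathrm{SUM}_{S}(\{\!\{ x_u \mid u \in V \}\!\})$ aggregates over all nodes of the graph, and for the count to be meaningful every summand must be the $\varphi_k^n$-value for the same iteration $n$ and the same formula depth $d-1$. This is guaranteed because $\cG_\Gamma$ is synchronous and all nodes share one and the same clock block, so at round $n(D+1)+d-1$ every node is in the same formula-depth phase; keeping $R$ zero on the clock coordinates ensures the readout neither corrupts nor is corrupted by the clock. The remaining care is to verify that counting to $K_{\max}$ (rather than merely to the graded-modality widths) still yields a finite floating-point system for which $\mathrm{SUM}_{S}$ stabilises, which is exactly what Proposition~\ref{floating-point sum bound} provides.
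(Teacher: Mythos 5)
Your proposal is correct and follows essentially the same route as the paper: the paper likewise amends the construction of Lemma~\ref{lemma: GMSC to simple GNN} by adding a third (readout) matrix $R$ with the rule $R_{k,\ell}=1$, $b_\ell=-K+1$ for subschemata $\langle e\rangle_{\geq K}\varphi_k$, shifts the attention function to fire at the rounds that complete a simulated iteration, and inherits the colour assignment from the head-predicate bits. Your explicit remarks on extending the balancing lemma to treat $\langle e\rangle_{\geq K}$ as a depth-one operator and on the synchrony of the clock block across nodes are details the paper leaves implicit, but they match its intent.
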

\begin{proof}
    Let $(\Lambda, f_{\Lambda})$ be a $\GMSCAG$-program over $(\Pi, \cT, \cB)$. We amend $\Gamma$ and $\cG_{\Gamma}$ (henceforth just $\cG$) in the proof of Lemma \ref{lemma: GMSC to simple GNN} as follows. For $\Gamma$, we define $A_{\Gamma}$ such that for all graphs $G$, if $\alpha = A_{\Lambda}(G)$ and $\alpha' = A_{\Gamma}(G)$, then $\alpha'_{0}(u) = 0$ for all nodes $u$ and $\alpha'_{i+1}(u) = \alpha_{i}(u)$, because $\Gamma$ stalls $\Lambda$ by one round. The assignment $f_{\Gamma} \colon \cP(\cT) \to \cB$ is simply $f_{\Lambda}$. It is clear that $(\Lambda, f_{\Lambda})$ and $(\Gamma, f_{\Gamma})$ are run-equivalent w.r.t. $\sim_{p}$, including the case where the partitions consist of singletons and $p$ is the identity function. For $\cG$, we add a third matrix $R$ or global readouts, and the transition function is thus of the form
    \[
        \COM(x,y,z) = \sigma(x \cdot C + y \cdot A + z \cdot R + b).
    \]
    We introduce the following rule for subschemata with counting global modality:
    \begin{itemize}
        \item If $\varphi_{\ell} = \langle e \rangle_{\geq K} \varphi_{k}$, then $R_{k, \ell} = 1$ and $b_{\ell} = -K + 1$. 
    \end{itemize}
    For the attention function, we define $A_{\cG}$ such that for all graphs $G$ and nodes $u$ of $G$, $i$ is an attention round of $A_{\Gamma}(G)$ at $u$ if and only if $(L+1)i$ is an attention round of $A_{\cG}(G)$ at~$u$. Lastly, we define the assignment $f_{\cG}$ as follows. In restriction to feature vectors whose last element is $1$ (i.e., $\bv_{D+L+1} = 1$), we define that if $f_{\Gamma}(\cX) = b \in \cP \in \fP$, then $f_{\cG}(\bv) = b'$ for some $b' \in p(\cP) \in \fP'$ for all $\bv$ where $\bv_{\ell} = 1$ if $\varphi_{\ell}$ is a head predicate in $\cX$ and $\bv_{\ell} = 0$ if $\varphi_{\ell}$ is a head predicate not in $\cX$ (and if $\cX \notin \dom(f_{\Gamma})$, then $\bv \notin \dom(f_{\cG})$). In restriction to feature vectors whose last element is not $1$ (i.e., $\bv_{D+L+1} \neq 1$), we may define $f_{\cG}(\bv)$ arbitrarily (including the possibility that $\bv \notin \dom(f_{\cG})$). It is simple to verify that $(\Lambda, f_{\Lambda})$ and $(\cG, f_{\cG})$ are attention-equivalent w.r.t. $\sim_{p}$.
\end{proof}

In particular, Theorem \ref{theorem: GMSC + global modality = GMSC[1] + global readout} and Lemma \ref{lemma: GMSC + global modality < R-simple ACGNN[F] + global readout} also hold when we simultaneously drop globality from each class. Thus, we obtain the following theorems.

\begin{theorem}\label{theorem: GMSCAG + global readout = GNNF + global readout = R-simple GNNF + global readout}
    Over any $\Pi$, the following are attention-equivalent w.r.t. any color similarity $\sim_p$:
    \begin{itemize}[itemsep=0.25em,topsep=0pt,parsep=0pt,partopsep=0pt]
        \item generalized $\GNNFAG$s, generalized R-simple aggregate-combine $\GNNFAG$s and generalized $\GMSCAG$,
        \item generalized $\GNNFA$s, generalized R-simple aggregate-combine $\GNNFA$s and generalized $\GMSCA$.
    \end{itemize}
\end{theorem}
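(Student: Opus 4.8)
The plan is to derive the three-way attention-equivalence by assembling the translations already established in this subsection into a cycle, and then reading off both directions of each pairwise equivalence from that cycle. The ingredients are: Theorem~\ref{theorem: GMSC + global modality = GMSC[1] + global readout}, which gives that generalized $\GMSCAG$ and generalized $\GMSCAGN$ are attention-equivalent; Theorem~\ref{theorem: GMSC[1] + global readout = GNN[F] + global readout}, which gives that generalized $\GMSCAGN$ and generalized $\GNNFAG$s are run-equivalent (hence attention-equivalent, via the lemma stating that run-equivalence implies attention-equivalence, together with Proposition~\ref{proposition: run-similar run-generators are attention-similar}); Lemma~\ref{lemma: GMSC + global modality < R-simple ACGNN[F] + global readout}, which turns each $\GMSCAG$-program into an attention-similar generalized R-simple aggregate-combine $\GNNFAG$; and the syntactic inclusions, namely that every generalized R-simple aggregate-combine $\GNNFAG$ is in particular a generalized $\GNNFAG$, and every $\GMSCAGN$-program is in particular a $\GMSCAG$-program.

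First I would record the cycle of attention-similar translations
\[
  \GMSCAG \longrightarrow \text{R-simple AC-}\GNNFAG \longrightarrow \GNNFAG \longrightarrow \GMSCAGN \subseteq \GMSCAG,
\]
where the first arrow is Lemma~\ref{lemma: GMSC + global modality < R-simple ACGNN[F] + global readout}, the second is the trivial inclusion (the identity translation, which is run-similar, hence attention-similar, to itself), and the third is Theorem~\ref{theorem: GMSC[1] + global readout = GNN[F] + global readout} (each generalized $\GNNFAG$ has a run-similar, hence attention-similar, $\GMSCAGN$-program, which is a fortiori a $\GMSCAG$-program). Going around this cycle supplies, for each of the three target classes, an attention-similar object in each of the other two: for instance, to match a generalized $\GNNFAG$ with an R-simple one I translate it to a $\GMSCAGN$-program, view it as a $\GMSCAG$-program, and apply Lemma~\ref{lemma: GMSC + global modality < R-simple ACGNN[F] + global readout}, while conversely the R-simple object is already a $\GNNFAG$; the $\GMSCAG$/R-simple and $\GMSCAG$/$\GNNFAG$ pairs are handled the same way, in each case composing a bounded number of legs of the cycle together with the inclusions. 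The asymmetry of Lemma~\ref{lemma: GMSC + global modality < R-simple ACGNN[F] + global readout} (only the direction into R-simple $\GNNFAG$s) is precisely why the statement is phrased with attention-equivalence rather than run-equivalence, and it is absorbed here by routing the missing direction through the inclusion R-simple AC-$\GNNFAG \subseteq \GNNFAG$.

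The main obstacle is the bookkeeping of colors: attention-similarity is always stated \emph{with respect to} a color similarity $\sim_p$, so chaining translations requires a transitivity property asserting that if $\rho_1,\rho_2$ are attention-similar w.r.t.\ $\sim_{p_1}$ and $\rho_2,\rho_3$ are attention-similar w.r.t.\ $\sim_{p_2}$, then $\rho_1,\rho_3$ are attention-similar w.r.t.\ the color similarity induced by the composite partition bijection $p_2 \circ p_1$. I would justify this composition by noting that each cited result holds for \emph{any} color similarity, so I am free to choose the partition of each intermediate color set so that the successive partition bijections compose; the attention-processing functions commute with this relabelling, since they only delete rounds and never touch colors, and the domain bijections witnessing run-similarity compose to a domain bijection witnessing the composite similarity. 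Given a prescribed target $\sim_p$ between the two endpoint color sets, I would distribute it along the chain by placing $\sim_p$ on a single leg and identity color similarities on the remaining legs.

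Finally, for the second bullet I would invoke the robustness observations recorded just after Theorem~\ref{theorem: GMSC + global modality = GMSC[1] + global readout} and Lemma~\ref{lemma: GMSC + global modality < R-simple ACGNN[F] + global readout}: all constituent constructions go through verbatim when the counting global modality and the global readout are simultaneously removed, as none of them relies on those features. Hence the identical cycle, now running among generalized $\GMSCA$, generalized R-simple aggregate-combine $\GNNFA$s, and generalized $\GNNFA$s, yields their attention-equivalence by exactly the same argument.
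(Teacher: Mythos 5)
Your proposal is correct and follows essentially the same route as the paper: the paper likewise obtains the theorem by chaining Lemma~\ref{lemma: GMSC + global modality < R-simple ACGNN[F] + global readout} (into R-simple AC-$\GNNFAG$s), the trivial inclusion into $\GNNFAG$s, Theorem~\ref{theorem: GMSC[1] + global readout = GNN[F] + global readout} back to $\GMSCAGN$, and Theorem~\ref{theorem: GMSC + global modality = GMSC[1] + global readout}, with the no-globality case handled by the same robustness observation. Your explicit treatment of composing color similarities along the cycle is a detail the paper leaves implicit, but it does not change the argument.
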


\begin{theorem}\label{thrm: equal expressive power with attention-equivalent}
    The following have the same expressive power with any attention-invariant pa-run classifiers that are equivalent w.r.t. any color similarity $\sim_p$:
    \begin{itemize}[itemsep=0.25em,topsep=0pt,parsep=0pt,partopsep=0pt]
        \item generalized $\GNNFAG$s, generalized R-simple aggregate-combine $\GNNFAG$s and generalized $\GMSCAG$,
        \item generalized $\GNNFA$s, generalized R-simple aggregate-combine $\GNNFA$s and generalized $\GMSCA$.
    \end{itemize}
\end{theorem}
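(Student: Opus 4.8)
The plan is to obtain the statement as an immediate consequence of the attention-equivalence established in Theorem~\ref{theorem: GMSCAG + global readout = GNNF + global readout = R-simple GNNF + global readout}, combined with the transfer principle supplied by the second condition of Proposition~\ref{proposition: run-similar + run-similar equivalent = equivalent expressive power}. No new constructions are required; everything reduces to verifying that the hypotheses of that proposition are met, since all the real work (the translations producing attention-equivalent run-generators) has already been done.

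First I would fix an arbitrary color similarity $\sim_p$ together with an arbitrary pair of attention-invariant pa-run classifiers $C_1, C_2$ that are equivalent w.r.t.\ $\sim_p$. Then I would take any two of the three classes listed in one of the bullets---say, the class $\cR_1$ of generalized $\GMSCAG$-programs (viewed as attention run-generators via the semantics of Section~\ref{Extended GMSC, CMPAs and GNNs}) and the class $\cR_2$ of generalized $\GNNFAG$s. By Theorem~\ref{theorem: GMSCAG + global readout = GNNF + global readout = R-simple GNNF + global readout}, $\cR_1$ and $\cR_2$ are attention-equivalent w.r.t.\ $\sim_p$; the same holds for every pair among the three classes in that bullet, and likewise for the three classes of the second bullet (the case where globality is dropped).

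Next I would invoke the second condition of Proposition~\ref{proposition: run-similar + run-similar equivalent = equivalent expressive power}. Its three hypotheses hold: $\cR_1$ and $\cR_2$ are attention-equivalent w.r.t.\ $\sim_p$ (just established), $C_1$ and $C_2$ are equivalent w.r.t.\ $\sim_p$ (by assumption), and $C_1, C_2$ are attention-invariant (by assumption). Hence the uniform classification systems $(\cR_1, C_1)$ and $(\cR_2, C_2)$ have the same expressive power. Applying this to each pair among the three classes within each bullet, and composing the resulting equivalences of expressive power, yields the full three-way statement.

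The only point requiring care---rather than a genuine obstacle---is that what Theorem~\ref{theorem: GMSCAG + global readout = GNNF + global readout = R-simple GNNF + global readout} provides is \emph{attention}-equivalence, not plain run-equivalence, which is exactly why the attention-invariance requirement on $C_1, C_2$ cannot be dropped here. As explained before the theorem, a $\GNNFAG$ may need several auxiliary rounds to simulate a single round of a $\GMSCAG$-program that scans beyond immediate neighbours, so the generated runs differ by these padding rounds. Attention-invariance of the classifiers guarantees that such auxiliary rounds are disregarded, so that attention-equivalence of the generators suffices to conclude equal expressive power---matching precisely the second, rather than the first, branch of Proposition~\ref{proposition: run-similar + run-similar equivalent = equivalent expressive power}.
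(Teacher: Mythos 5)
Your proposal is correct and matches the paper's intended derivation: the paper states this theorem as an immediate consequence of the attention-equivalence in Theorem~\ref{theorem: GMSCAG + global readout = GNNF + global readout = R-simple GNNF + global readout} (established via Theorem~\ref{theorem: GMSC + global modality = GMSC[1] + global readout} and Lemma~\ref{lemma: GMSC + global modality < R-simple ACGNN[F] + global readout}, with globality optionally dropped) combined with the second condition of Proposition~\ref{proposition: run-similar + run-similar equivalent = equivalent expressive power}. Your observation about why attention-invariance of the classifiers is indispensable is exactly the point the paper makes before stating the theorem.
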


We next show a reduction of the above theorem for so-called graph-size $\GNN$s studied in~\cite{Pfluger_Tena_Cucala_Kostylev_2024}, where the $\GNN$ is given a function $\mathrm{Iter} \colon \N \to \N$ that tells how many times the $\GNN$ is iterated depending on the size of the graph. A \textbf{graph-size $\GNN[\R]$} over $\Pi$ is a generalized $\GNN[\R]$ over $(\Pi, \{1\})$ (where $\{1\}$ is the set of colors) where the attention function $A$ is defined such that there is a function $\mathrm{Iter} \colon \N \to \N$ such that $A(V, E, \lambda) = \alpha$ where for all nodes $v$, $\alpha_{i}(v) = 1$ if and only if $i = \mathrm{Iter}(\abs{V})$. \textbf{Graph-size $\GMSC$-programs} and \textbf{graph-size $\CMPA$s} are defined analogously, as well as variations such as graph-size $\GNNF$s.

The classification used for graph-size $\GNN$s in \cite{Pfluger_Tena_Cucala_Kostylev_2024} classifies a node according to its label after the last iteration. We define an equivalent pa-run classification that only looks at a node's color in the attention round in the case where there is only one attention round. A \textbf{one-round run classification} is a pa-run classification $C \colon \textit{pa-runs}(S, \{1\}) \to \{0,1\}$ such that for all pointed attention runs $(r, f, \alpha, v)$ where $\alpha$ has exactly one attention round for each node, we have that $C(r, f, \alpha, v) = 1$ if and only if $f(r_{i}(v)) = 1$ where $i$ is the lone attention round of $\alpha$ at $v$.

\begin{corollary}\label{corollary: graph-size GMSC = graph-size GNN[F] = graph-size R-simple GNN[F]}
    The following have the same expressive power with one-round run classifiers:
    \begin{itemize}[itemsep=0.25em,topsep=0pt,parsep=0pt,partopsep=0pt]
        \item generalized $\GNNFAG$s, generalized R-simple aggregate-combine $\GNNFAG$s and generalized $\GMSCAG$,
        \item generalized $\GNNFA$s, generalized R-simple aggregate-combine $\GNNFA$s and generalized $\GMSCA$.
    \end{itemize}
\end{corollary}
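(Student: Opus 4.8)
The plan is to obtain the corollary directly from Theorem~\ref{thrm: equal expressive power with attention-equivalent}, which already establishes equal expressive power of the three groups of run-generators (with and without global readout) for \emph{any} family of attention-invariant pa-run classifiers that are equivalent with respect to some color similarity $\sim_p$. Consequently, it suffices to check that one-round run classifications fit into this framework, i.e.\ that (i) each one-round run classification is attention-invariant, and (ii) for the color similarity $\sim_p$ that witnesses the attention-equivalence of the run-generators in Theorem~\ref{theorem: GMSCAG + global readout = GNNF + global readout = R-simple GNNF + global readout}, the one-round run classifications chosen for the compared classes are equivalent with respect to $\sim_p$. Once (i) and (ii) hold, the corollary is an instance of Theorem~\ref{thrm: equal expressive power with attention-equivalent}, and simultaneously dropping global readout from every class yields the second bullet by exactly the same argument.

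For step (i), recall that for a pointed attention run $\rho = (r, f, \alpha, v)$ in which $\alpha$ has exactly one attention round at every node, the lone attention round $i$ at $v$ is precisely its $1$st (equivalently $(0+1)$th) attention round, so $r_i(v) = r'_0(v)$ where $h(\rho) = (r', f, v)$ is the image of $\rho$ under the attention-processing function $h$. Hence $C(\rho) = 1$ iff $f(r'_0(v)) = 1$, which depends only on $h(\rho)$. I would therefore fix the values of $C$ on the remaining pa-runs (those not having exactly one attention round per node) so that $C(\rho)$ always depends only on $f$ and $h(\rho)$; this is consistent with the defining clause of one-round run classifications and makes $C$ attention-invariant, since if $h(\rho)$ and $h(\rho')$ are domain-isomorphic via a bijection sending $v$ to $v'$ then $f(r'_0(v)) = f(r''_0(v'))$ and so $C(\rho) = C(\rho')$.

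For step (ii), I would use that run-similarity with respect to $\sim_p$ preserves both the attention skeleton $\alpha$ (Condition~3 in the definition of run-similar) and the $\sim_p$-class of the configuration label at every round (Condition~2). Thus if $\rho_1$ and $\rho_2$ are run-similar with respect to $\sim_p$, their (matching) lone attention rounds carry configuration labels that are $\sim_p$-related, and since $\sim_p$ groups exactly the labels whose colors lie in the same block of the partition, the two one-round classifiers return the same value; hence they are equivalent with respect to $\sim_p$. Combining (i) and (ii) with the attention-equivalence of the run-generators given by Theorem~\ref{theorem: GMSCAG + global readout = GNNF + global readout = R-simple GNNF + global readout}, the second bullet of Proposition~\ref{proposition: run-similar + run-similar equivalent = equivalent expressive power} yields that the associated uniform classification systems have the same expressive power, which is exactly the corollary.

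The main obstacle I anticipate is the bookkeeping around the partiality of the assignments $f$ and the precise matching of color similarities: one must ensure that the one-round classifiers for classes using different configuration-label sets (sets of head predicates for $\GMSCAG$ versus feature vectors for $\GNNFAG$s) are compared through the same $\sim_p$ that Theorem~\ref{theorem: GMSCAG + global readout = GNNF + global readout = R-simple GNNF + global readout} uses, and that ``color~$1$'' is treated uniformly, a label counting as accepting precisely when it lies in $\dom(f)$ with value $1$. None of this is deep, but it is where the argument could go wrong if the color similarity and the processed runs are not aligned carefully.
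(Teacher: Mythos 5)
Your proposal is correct and follows exactly the route the paper intends: the corollary is presented as a direct reduction of Theorem~\ref{thrm: equal expressive power with attention-equivalent} obtained by checking that one-round run classifiers (suitably extended to all pa-runs) are attention-invariant and equivalent with respect to the relevant color similarity. Your explicit verification of these two conditions, including the observation that the definition only constrains the classifier on runs with exactly one attention round per node and must be extended consistently, supplies precisely the bookkeeping the paper leaves implicit.
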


\begin{remark}\label{remark:GNNR+Gs = VGMLG}
    In addition to Theorem \ref{theorem: k-GNN[F] = k-FCMPA = GMSC}, Theorems \ref{omega-GML = GNN = CMPAs} and \ref{constant iteration GNN reals} also trivially generalize for global readouts with simple modifications to the proofs by defining full types of $\GMLG$. In other words, $\GNNG$s and infinite disjunctions of $\GMLG$-formulae have the same expressive power. Analogously, constant-iteration $\GNNG$s and depth-bounded infinite disjunctions of $\GMLG$-formulae have the same expressive power.
\end{remark}

\subsubsection{Reductions for fixed-point and Büchi semantics}\label{Reductions for fixed-point and Büchi semantics}

In this section, we show in Theorem \ref{theorem: büchi and fixed-point} that $\GMSC$, $\GNNF$s and R-simple aggregate-combine $\GNNF$s with fixed point semantics (resp. Büchi semantics) have the same expressive power. To show this, we construct a reduction by applying Theorem \ref{thrm: equal expressive power with attention-equivalent}.  

A \textbf{fixed-point run classifier} is a Boolean p-run classifier $C \colon \textit{p-runs}(S, \{1\}) \to \{0,1\}$ defined as follows. For each $(r, f, v) \in \textit{p-runs}(S, \{1\})$ we have $C(r, f, v) = 1$, if there exists a $k \in \N$ such that $f(r_{k'}(v)) = 1$ for all $k' \geq k$, and otherwise $C(r, f , v) = 0$. 
Note that $\GNN[\R]$s with fixed-point run classifiers induce the same pointed fixed-point semantics as in the Example \ref{example: GNN reals fixed point semantics}.
Respectively, 
a \textbf{Büchi run classifier} is a Boolean p-run classifier $C \colon \textit{p-runs}(S, \{1\}) \to \{0,1\}$ defined as follows. For each $(r, f, v) \in \textit{p-runs}(S, \{1\})$ we have $C(r, f, v) = 1$ if the cardinality of the set $\{\, i \in \N \mid f(r_i(v)) = 1 \,\}$ is $\abs{\N}$, and otherwise $C(r, f, v) = 0$.

\begin{theorem}\label{theorem: büchi and fixed-point}
    $\GMSC$, $\GNNF$s and R-simple aggregate-combine $\GNNF$s have the same expressive power with fixed-point run classifiers and also with Büchi run classifiers.
\end{theorem}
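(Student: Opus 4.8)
The plan is to realize the fixed-point and Büchi semantics inside the attention framework of Theorem~\ref{thrm: equal expressive power with attention-equivalent} and then read the result back into plain objects. Recall that a fixed-point (resp.\ Büchi) classifier is a \emph{p-run} classifier, whereas Theorem~\ref{thrm: equal expressive power with attention-equivalent} requires \emph{attention-invariant pa-run} classifiers. First I would compose each with the attention-processing function $h$: set $\widehat{C}_{\mathrm{fp}} \colonequals C_{\mathrm{fp}} \circ h$ and $\widehat{C}_{\mathrm{B}} \colonequals C_{\mathrm{B}} \circ h$, where $C_{\mathrm{fp}}$ and $C_{\mathrm{B}}$ are the fixed-point and Büchi classifiers. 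By construction these factor through $h$ and are therefore attention-invariant, and they are trivially equivalent w.r.t.\ any color similarity $\sim_p$ that preserves the single accepting color. Intuitively, $\widehat{C}_{\mathrm{fp}}$ (resp.\ $\widehat{C}_{\mathrm{B}}$) applies the fixed-point (resp.\ Büchi) condition only to the \emph{attention rounds} of a run, ignoring any auxiliary computation rounds.

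Next I would view each plain $\GMSC$-program, $\GNNF$, or R-simple aggregate-combine $\GNNF$ as an attention object whose attention function marks \emph{every} round significant. For such an object $h$ acts as the identity on the generated run, so $\widehat{C}_{\mathrm{fp}}$ and the plain $C_{\mathrm{fp}}$ agree on it, and likewise for Büchi; hence the attention semantics coincides exactly with the intended plain fixed-point/Büchi semantics. Theorem~\ref{thrm: equal expressive power with attention-equivalent} (in its variant without globality) then yields, for each such object, an attention-equivalent object in each of the other two classes, and since $\widehat{C}_{\mathrm{fp}}, \widehat{C}_{\mathrm{B}}$ are attention-invariant and $\sim_p$-equivalent, the corresponding proposition on attention-invariant classifiers guarantees that they classify attention-similar runs identically. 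This already settles the directions whose translations are \emph{run}-similar rather than merely attention-similar---in particular the passage between $\GNNF$s and normal-form $\GMSC$ from Theorem~\ref{theorem: GMSC[1] + global readout = GNN[F] + global readout}---because there the raw runs coincide and the plain classifiers transfer verbatim.

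The remaining work is to convert the attention objects produced in the other direction---the translations of a $\GMSC$-program into a $\GNNF$ or an R-simple aggregate-combine $\GNNF$ (via normal form, Lemma~\ref{lemma: GMSC to GMSC[1]}, and Lemma~\ref{lemma: GMSC + global modality < R-simple ACGNN[F] + global readout})---back into plain objects carrying the plain classifier. These translations insert $O(1)$ auxiliary rounds per simulated step, during which the accepting status is meaningless, so the raw run would fail ``eventually always accepting'' and ``infinitely often accepting'' for spurious reasons. To repair this I would add a \emph{latch}: a component that, at each attention round, stores the disjunction of the appointed-predicate bits and holds this value constant throughout the following block of auxiliary rounds, declaring a round accepting exactly when the latch reads $1$. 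With the latch in place the raw run is accepting on a whole block iff that block's attention round was accepting, so ``eventually always accepting'' on the raw run matches it on the attention rounds (fixed point), and ``infinitely often accepting'' likewise matches (Büchi, since a non-accepting attention round resets the latch to $0$); thus plain $C_{\mathrm{fp}}$ and $C_{\mathrm{B}}$ on the modified plain object compute $\widehat{C}_{\mathrm{fp}}$ and $\widehat{C}_{\mathrm{B}}$ on the original.

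The main obstacle is implementing this latch within the \textbf{R-simple aggregate-combine} format, whose combination is a single affine map followed by $\mathrm{ReLU}^*$ and therefore cannot conditionally overwrite a coordinate at the block boundary while merely copying it elsewhere. I expect to resolve this by piggybacking on the clock already present in the construction of Lemma~\ref{lemma: GMSC to simple GNN}: I would maintain a clock-synchronized shift register that picks up the appointed-disjunction at its correct clock phase and rotates it through the block, exactly mirroring how the clock coordinates already rotate, which keeps the network R-simple. For the unrestricted $\GNNF$ case this step is immediate, since the transition function is arbitrary and the latch is a trivial extra coordinate. Care is also needed to verify that a single latch interacts correctly with both the liminf-type (fixed-point) and limsup-type (Büchi) conditions at once, and that the marking-every-round attention objects are legitimate instances of the classes appearing in Theorem~\ref{thrm: equal expressive power with attention-equivalent}.
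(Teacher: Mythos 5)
Your overall route is the same as the paper's: view plain programs/GNNs as attention objects in which every round is an attention round, push the fixed-point and B\"uchi conditions through the attention framework of Theorem~\ref{thrm: equal expressive power with attention-equivalent}, note that the $\GNNF \to \GMSC$ direction is already run-similar and needs nothing, and then repair the auxiliary rounds introduced by the $\GMSC \to$ R-simple $\GNNF$ translation. Where you diverge is in \emph{how} the auxiliary rounds are repaired, and here the paper is substantially simpler than your latch. You propose to propagate the \emph{actual} accepting status of the last attention round through the following block, which forces you to implement a conditional store inside the R-simple format---you correctly flag this as the main obstacle, and your clock-synchronized shift register is only a sketch (the load step wants an AND of the clock bit with an OR over appointed bits, computed from head-predicate coordinates that are garbage at non-boundary rounds, so it needs at least a two-phase normalization in the style of the $\mathrm{ReLU}$ variant after Lemma~\ref{lemma: GMSC to simple GNN}). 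The paper avoids all of this by observing that the accepting status on auxiliary rounds need not track anything: it suffices to pad with a \emph{constant} that is neutral for the temporal condition at hand. Concretely, it recolors the states/feature vectors occurring only in auxiliary rounds (those with $\cX \in \cP(\cT')\setminus\cP(\cT)$ in Theorem~\ref{theorem: GMSC + global modality = GMSC[1] + global readout}, and those with clock bit $\neq 1$ in Lemma~\ref{lemma: GMSC + global modality < R-simple ACGNN[F] + global readout}) as \emph{always accepting} for the fixed-point classifier and \emph{never accepting} for the B\"uchi classifier; then ``eventually always'' and ``infinitely often'' on the raw run coincide with the same conditions on the attention rounds, with no change whatsoever to the network's computation---only to the accepting set $F$. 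So your proof is not wrong, but the single uniform latch buys you nothing that two trivial, semantics-specific recolorings don't already give, and it carries a real (if surmountable) implementation burden in the R-simple setting that the paper's argument sidesteps entirely.
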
 
\begin{proof}
    Firstly, each $\GMSC$-program, $\GNNF$ and R-simple aggregate-combine $\GNNF$ is easy to translate into an equivalent generalized one with attention by defining that each round is an attention round and defining an assignment that is consistent with the set of appointed predicates, accepting states or accepting feature vectors. Now, we can apply Theorem \ref{thrm: equal expressive power with attention-equivalent} to these models.

    Let $\Lambda$ be a $\GMSC$-program over $\Pi$. Let $\Lambda^*$ be a generalized $\GMSCA$-program over $(\Pi, \cB)$ obtained from $\Lambda$ by associating it with an attention function that for every graph induces an attention run where every round is an attention round in every node. We assume that $\cB = \cB' = \{1\}$, $\fP = \fP' = \{\{1\}\}$ and $p \colon \fP \to \fP'$ is the identity function. We modify the proofs of Theorem \ref{theorem: GMSC + global modality = GMSC[1] + global readout} and Lemma \ref{lemma: GMSC + global modality < R-simple ACGNN[F] + global readout} by modifying the assignments as follows. In Theorem \ref{theorem: GMSC + global modality = GMSC[1] + global readout}, recall that we defined $f_{\Gamma}$ arbitrarily in restriction to $\cP(\cT') \setminus \cP(\cT)$. To capture fixed-point run classifiers, we define that $f_{\Gamma}(\cX) = 1$ if and only if $\cX \in \cP(\cT') \setminus \cP(\cT)$; this ensures that $\Gamma$ (and subsequently $\cA$) accepts in every auxiliary round. To capture Büchi run classifiers, we define instead that $\cX \notin \dom(f_{\Gamma})$ if $\cX \in \cP(\cT') \setminus \cP(\cT)$; this ensures that in each auxiliary round, $\Gamma$ (and subsequently the automaton $\cA$) does not accept (i.e., does not give the color $1$). In Lemma \ref{lemma: GMSC + global modality < R-simple ACGNN[F] + global readout}, recall that we defined $f_{\cG}$ arbitrarily for feature vectors with a non-zero last element (i.e., when $\bv_{D+L+1} \neq 1$). To capture fixed-point run classifiers, we define that $f_{\cG}(\bv) = 1$ if $\bv_{D+L+1} \neq 1$. To capture Büchi run classifiers, we define instead that $\bv \notin \dom(f_{\cG})$ if $\bv_{D+L+1} \neq 1$. The proofs of propositions \ref{proposition: GMSC[1] + global modality < k-FCMPA + global readout} and \ref{proposition: GMSC + global modality > k-FCMPA + global readout} do not require any modification. 
    
    Thus, by omitting the attention functions and considering instead fixed-point run classifiers (resp. Büchi run classifiers), we see that $\GMSC$, $\GNNF$s and R-simple aggregate-combine $\GNNF$s all have the same expressive power.
\end{proof}

\subsubsection{Characterization with convergence-based fixed-point semantics}\label{section:convergence-based_fixed-point_semantics}

A \textbf{convergence-based fixed-point run classifier} is a Boolean p-run classifier of the form $C \colon \textit{p-runs}(S, \{1\}) \to \{0,1\}$ defined as follows. For each $(r, f, v) \in \textit{p-runs}(S, \{1\})$ we have $C(r,f,v) = 1$ if there exists a $k \in \N$ such that $f(r_k(v)) = 1$ and for all $k' \geq k$ and for all $u \in V$ we have $r_{k'}(u) = r_k(u)$.

First of all it is easy to show that Lemma \ref{lem: simplify GMSC-program} holds also for programs with convergence-based fixed-point semantics.
\begin{lemma}\label{lem: simplify GMSC-program convergence-based}
    With respect to the convergence-based fixed-point semantics,
    for each $\Pi$-program $\Lambda$ of $\GMSC$ 
    with formula depth $D$, we can construct an equivalent $\Pi$-program $\Gamma$ of $\GMSC$ which has the following properties.
    \begin{enumerate}
    \item Each terminal clause is of the form $X (0) \colonminus \bot$.
    \item Each iteration clause has the same formula depth $\max(3, D + 2)$.
    \item If $\varphi \land \theta$ is a subschema of $\Lambda$ such that neither $\varphi$ nor $\theta$ is $\top$, then $\varphi$ and $\theta$ have the same formula depth.
\end{enumerate}
\end{lemma}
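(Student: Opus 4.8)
The plan is to reuse verbatim the program $\Gamma$ constructed in the proof of Lemma~\ref{lem: simplify GMSC-program}: introduce the fresh predicate $I$ with $I(0)\colonminus\bot$ and $I\colonminus\top$, replace each pair of clauses $X(0)\colonminus\varphi$, $X\colonminus\psi$ by $X(0)\colonminus\bot$ and $X\colonminus(\neg I\land\varphi)\lor(I\land\psi)$, and then balance formula depths by padding with nested negations and $\top$-conjuncts. The three structural properties are established exactly as before and require no change, because the balancing rewrites only alter the bodies of iteration clauses, introducing no new head predicates and preserving the truth value of every subschema. Hence the head predicate set of $\Gamma$ is $\cT\cup\{I\}$, and all that remains is to re-verify equivalence, now with respect to the convergence-based fixed-point run classifier rather than the standard visiting classifier.

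First I would reprove the round-shift invariant from the original lemma: for every shared head predicate $X\in\cT$, every pointed $\Pi$-labeled graph $(G,w)$ and every $n\in\N$, we have $G,w\models X^n$ w.r.t.\ $\Lambda$ iff $G,w\models X^{n+1}$ w.r.t.\ $\Gamma$, together with $G,w\not\models X^0$ w.r.t.\ $\Gamma$; and for the auxiliary predicate, $G,w\not\models I^0$ while $G,w\models I^n$ for all $n\geq 1$. Writing $c^{\Lambda}_m(u)$ and $c^{\Gamma}_m(u)$ for the run configurations (the sets of true head predicates) of $\Lambda$ and $\Gamma$ at node $u$ in round $m$, this yields the key identity $c^{\Gamma}_m(u)=c^{\Lambda}_{m-1}(u)\cup\{I\}$ for all $m\geq 1$ and all $u$, while $c^{\Gamma}_0(u)=\emptyset$.

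Using this identity I would transfer both halves of the convergence-based acceptance condition. For global stabilization, the $I$-component is constant from round $1$ onward and therefore never obstructs convergence, so the global configuration of $\Gamma$ is constant from some round $k\geq 1$ onward (at every node simultaneously) iff the global configuration of $\Lambda$ is constant from round $k-1$ onward; moreover $\Gamma$ cannot stabilize at round $0$, since $I$ flips from false to true between rounds $0$ and $1$, so the convergence round of $\Gamma$ is always $k\geq 1$ and matches that of $\Lambda$ under the shift. For acceptance, I keep the appointed set $\cA\subseteq\cT$ unchanged with $I\notin\cA$, so that the fixed-point configuration $c^{\Gamma}_k(v)$ contains an appointed predicate iff $c^{\Lambda}_{k-1}(v)$ does. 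Combining these two observations shows that $\Gamma$ accepts $(G,v)$ under the convergence-based fixed-point classifier iff $\Lambda$ does.

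The main obstacle --- indeed the only genuinely new point compared with Lemma~\ref{lem: simplify GMSC-program} --- is that convergence-based acceptance is sensitive to every head predicate, not just the appointed ones: any auxiliary predicate that kept oscillating would prevent the global configuration from ever stabilizing and would thus spuriously block all acceptance. The argument therefore rests on checking that the only predicate added, $I$, becomes constant after a single round, and that the depth-balancing rewrites are truth-value preserving and add no head predicates, so that no perpetually changing bookkeeping predicate is present. Once this is confirmed, the convergence round of $\Gamma$ tracks that of $\Lambda$ up to the $+1$ shift and the remaining verification is routine.
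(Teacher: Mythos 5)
Your proposal is correct and follows exactly the route the paper intends: the paper gives no explicit proof of this lemma, merely asserting that the construction of Lemma~\ref{lem: simplify GMSC-program} carries over, and your argument reuses that construction verbatim and supplies the missing verification. You correctly isolate the one genuinely new issue for convergence-based semantics --- that every head predicate, not just the appointed ones, must stabilize --- and your check that $I$ is constant from round $1$ onward and that the depth-balancing rewrites add no head predicates and preserve truth values is precisely what is needed to transfer the equivalence under the one-round shift.
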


Now, it is easy to modify the proof of Lemma \ref{lemma: GMSC to simple GNN} for programs and R-simple GNNs with convergence-based fixed-point semantics.

\begin{lemma}\label{lemma: convergence-based GMSC to simple GNN}
    With respect to convergence-based fixed-point semantics,
    for each $\Pi$-program of $\GMSC$ we can construct an equivalent R-simple aggregate-combine $\GNNF$ over $\Pi$. 
\end{lemma}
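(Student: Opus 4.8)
The plan is to adapt the proof of Lemma~\ref{lemma: GMSC to simple GNN} (the translation from $\GMSC$-programs to R-simple aggregate-combine $\GNNF$s) so that it respects the convergence-based fixed-point run classifier rather than the standard ``visit an accepting state once'' acceptance. First I would start from the $\Pi$-program $\Lambda$ of $\GMSC$ and apply Lemma~\ref{lem: simplify GMSC-program convergence-based} to obtain an equivalent balanced program $\Gamma$ of formula depth $D'=\max(3,D+2)$ in which every terminal clause has body $\bot$, every iteration clause has uniform formula depth $D'$, and every non-trivial conjunction has conjuncts of equal formula depth. The key point here is that equivalence is now taken with respect to convergence-based fixed-point semantics, but since Lemma~\ref{lem: simplify GMSC-program convergence-based} already guarantees this, the reduction to a balanced $\Gamma$ is available ``for free''.

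Next I would reuse verbatim the feature-vector construction of Lemma~\ref{lemma: GMSC to simple GNN}: binary feature vectors split into a first part $\bu$ tracking truth values of subschemata and head predicates, and a second part $\bw$ of $D'+1$ bits acting as a clock that records the formula depth currently being evaluated. The matrices $C$, $A$, the bias $\bb$, and the floating-point system $S$ (chosen large enough to represent all integers up to the width $K_{\max}$ of $\Gamma$) are defined exactly as in the proof of Lemma~\ref{lemma: GMSC to simple GNN}. The same induction then shows, for every schema $\varphi_\ell \in \mathrm{SUB}(\Gamma)$ of formula depth $d$, that $\bv(w)_\ell^{n(D'+1)+d}=1$ iff $G,w\models\varphi_\ell^n$ and $=0$ otherwise, and that the clock bit $\bv(w)_{N+D'+1}^{t}$ equals $1$ exactly when $t\equiv 0 \pmod{D'+1}$. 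Thus $\cG_\Gamma$ faithfully simulates one iteration of $\Gamma$ every $D'+1$ rounds, with the genuine head-predicate truth values appearing precisely in the rounds $n(D'+1)$.

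The part that needs real care is matching the \emph{convergence-based} acceptance condition, and I expect this to be the main obstacle. The classifier $C$ accepts $(r,f,v)$ when there is a round $k$ after which \emph{all} nodes' configuration labels are frozen and $f(r_k(v))=1$. Because the simulation interleaves genuine rounds with $D'$ auxiliary clock rounds, the $\GNNF$'s global configuration is never stationary during the auxiliary rounds even when $\Gamma$'s configuration has converged: the clock bit keeps cycling. The fix is to observe that once $\Gamma$ reaches a global fixed point (the truth values of all iteration formulae stop changing across iterations), the only component of $\cG_\Gamma$'s feature vectors that still changes is the clock $\bw$, so I would modify the clock so that it too stabilizes upon convergence --- for instance by detecting, within the combination function, that the first part $\bu$ of the feature vector is identical to its value one full period earlier, and if so halting the clock (keeping $\bw$ constant) instead of rotating it. This requires enlarging the feature vector to also remember the previous period's $\bu$-values and adding the comparison gadget, which is expressible with $\mathrm{ReLU}^*$ as in the existing cases; alternatively, since convergence of $\Gamma$ is a global condition, I would use a global readout to detect ``no node changed this period'' and freeze the clock synchronously. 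With the clock frozen, $\cG_\Gamma$'s global configuration stabilizes exactly when $\Gamma$'s does, and I would choose $F$ (the accepting set, i.e.\ $f^{-1}(1)$) to contain precisely the feature vectors whose $\bu$-part encodes an appointed predicate being true and whose frozen clock marks a genuine round; then convergence-based acceptance of $\cG_\Gamma$ coincides with convergence-based acceptance of $\Gamma$. Finally I would note that the converse direction (from R-simple $\GNNF$s back to $\GMSC$) follows by the same discretization-to-bounded-$\FCMPA$ argument as before, now read off against the convergence-based classifier, completing the equivalence.
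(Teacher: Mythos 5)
You correctly identify the crux: with the cyclic clock of Lemma~\ref{lemma: GMSC to simple GNN}, the $\GNNF$'s global configuration never freezes even when $\Gamma$'s does, so the convergence-based classifier cannot be matched. But both of your proposed repairs have genuine problems. The global-readout variant is simply not available: the lemma is about R-simple aggregate-combine $\GNNF$s \emph{without} readout, so freezing the clock via a global ``no node changed'' signal proves a different statement. The local variant --- store the $\bu$-part from one period earlier, compare, and permanently halt the clock on equality --- is unsound. A node's $\bu$-values (and even its entire depth-$D$ neighbourhood) can be constant across one full period and then change again, e.g.\ when a change propagates along a long path one step per iteration; local stability over a single period does not imply that $\Gamma$ has reached its (global) fixed point. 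A node that freezes its clock on such a false alarm is permanently desynchronized: its clock bit no longer marks the genuine rounds, so the accepting set $F$ is read off at the wrong phase for the rest of the run. Allowing the clock to restart would require a re-synchronization mechanism that you have not specified and that is not obviously expressible in the R-simple format.

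The paper's proof avoids convergence detection altogether. It discards the cyclic clock and replaces it with a \emph{one-shot} counter: the $\bw$-part counts the first $D+1$ rounds (an initial synchronization phase during which the round-$0$ values of the head predicates are held and the pipeline of subschema bits fills up) and then lands in a stable state, with the last clock bit permanently equal to $1$ thereafter. The subschema bits are updated continuously in pipelined fashion every round, not gated per period. With this design the $\GNNF$'s configuration at every node is eventually constant if and only if $\Gamma$'s iteration reaches a fixed point, and at that point the head-predicate bits carry the converged truth values, so the same accepting set as before (appointed-predicate bit $1$ and last clock bit $1$) matches the convergence-based classifier. Your final remark on the converse direction is fine but is not part of this lemma; the paper handles it separately via Theorem~\ref{theorem: equal expressive power with run-similar-equivalent run classifiers}.
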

\begin{proof}
Let $\Lambda$ be a $\Pi$-program of $\GMSC$ with convergence-based fixed-point semantics.
Informally, an equivalent R-simple aggregate-combine $\GNNF$ (with convergence-based fixed-point semantics) for $\Lambda$ is constructed as follows.
First, from $\Lambda$ we construct an equivalent $\Pi$-program $\Gamma$ of $\GMSC$ with Lemma \ref{lem: simplify GMSC-program convergence-based}. Let $D$ be the formula depth of $\Gamma$.

The rest of the construction is essentially the same as in the proof of Lemma \ref{lemma: GMSC to simple GNN}, but the feature vectors of the constructed GNN do not track the current formula depth that is being evaluated, since otherwise the feature vector would not reach a fixed-point. Instead, at the start of the computation, the GNN is synchronized correctly such that the feature vector reaches the corresponding fixed-point if $\Gamma$ does. 

Now, we explain construction formally.
An equivalent R-simple aggregate-combine $\GNNF$ $\cG_{\Gamma}$ for $\Gamma$ periodically computes a single iteration round of $\Gamma$ in $D+1$ rounds, except the round zero of $\Gamma$ which is simulated repeatedly for the first $D+1$ rounds to synchronize the feature vectors of the GNN. 
The feature vectors used by $\cG_{\Gamma}$ are \emph{binary} vectors $\bv = \bu\bw$ where: 
\begin{itemize}
    \item $\bu$ is the same as in the proof Lemma \ref{lemma: GMSC to simple GNN}, i.e., $\bu$ has one bit per each (distinct) subschema of a body of an iteration clause in $\Gamma$ as well as for each head predicate, and $\bv_{1}$ keeps track of their truth values, and
    \item $\bw$ has $D+1$ bits that keep track of how many rounds of the round zero of $\Gamma$ have been simulated.
\end{itemize}
Note that $\cG_{\Gamma}$ is a $\GNNF$ over $(\Pi, N + D+1)$, where $N$ is the number of (distinct) subschemata of the bodies of the iteration clauses of $\Gamma$ as well as head predicates, and $D$ is the maximum formula depth of the bodies of the iteration clauses. We assume an arbitrary enumeration $\varphi_1, \ldots, \varphi_N$ for the distinct subschemata and head predicates of $\Gamma$.
As in the proof of Lemma \ref{lemma: GMSC to simple GNN}, the floating-point system $S$ for $\cG_{\Gamma}$ is chosen to be large enough.

Recall that $\mathrm{ReLU}^*(x)$ denotes the truncated rectified linear unit.
Next, we define the functions $\pi$ and $\COM(x,y) = \mathrm{ReLU}^*(\bx \cdot C + \by \cdot A + \bb)$ and the set $F$ of accepting states for $\cG_{\Gamma}$. 
The initialization function $\pi$ of $\cG_{\Gamma}$ is the same as in the proof of Lemma \ref{lemma: GMSC to simple GNN}.
For $k, \ell \leq N+D+1$, $C_{k, \ell}$ and $A_{k, \ell}$
are defined in the same way as in the proof of Lemma~\ref{lemma: GMSC to simple GNN}, but we modify the matrix $C$ for head predicates as follows.
For all $\ell \leq N$, if $\varphi_\ell$ is a head predicate $X$ with the iteration clause $X \colonminus \varphi_{k}$, then $C_{k, \ell} = 1$ and $C_{N + D + 1, k} = 1$.
The bottom-right $(D+1) \times (D+1)$ submatrix of $C$ is the $(D+1) \times (D+1)$ matrix that is obtained from the identity matrix by replacing the zero in the second last column in the last row by one, removing the last column, and inserting a zero column in the front of the matrix. More formally, for all $N + 1 \leq \ell < N+D+1$ we have $C_{\ell, \ell+1} = 1$, and we also have $C_{N+D+1, N+D+1} = 1$.
Lastly, we define that all other elements in $C$, $A$ and $\bb$ are $0$s.
The set of accepting feature vectors is the same as in the proof of Lemma \ref{lemma: GMSC to simple GNN}, i.e., $\bv \in F$ if and only if $\bv_\ell = 1$ (i.e., the $\ell$th value of $\bv$ is $1$) for some appointed predicate $\varphi_{\ell}$ and also $\bv_{N+D+1} = 1$.

Recall that the formula depth of $\Gamma$ is $D$. Let $\bv(w)_i^t$ denote the value of the $i$th component of the feature vector in round $t$ at node $w$. 

It is easy to show by induction that for all pointed $\Pi$-labeled graphs $(G, w)$, for all $t \in [0;D]$, and for every schema $\psi_{\ell}$ in $\mathrm{SUB}(\Gamma)$ of formula depth $0$, we have
\[
\bv(w)_\ell^{t} = 1 \text{ if } G, w \models \psi_{\ell}^0 \text{ and $\bv(w)_\ell^{t} = 0$ if $G, w \not\models \psi_{\ell}^0$}.
\]
Also, for all $m > N$ it holds that
\[
\bv(w)_m^{t} = 1 \text{ if } m = N + t + 1 \text{ and } \bv(w)_m^{t} = 0 \text{ if } m \neq N + t + 1
\]
and for every schema $\theta_{\ell}$ in $\mathrm{SUB}(\Gamma)$ of formula depth $d \neq 0$ and for all $t' \in \{d, \ldots,  D \}$, we have
\[
\bv(w)_\ell^{t'} = 1 \text{ if } G, w \models \theta_{\ell}^1 \text{ and $\bv(w)_\ell^{t'} = 0$ if $G, w \not\models \theta_{\ell}^1$}.
\]
Intuitively, the claim shows that the feature vectors remember the initial truth values for the head predicates for the first $D + 1$ rounds. Moreover, during those rounds, the R-simple GNN updates the truth values of other subschemata of $\Gamma$ and those truth values correspond to the truth values in round $1$ w.r.t. $\Gamma$.

Then it is easy to show by the above that for all $n \in \N$, for all pointed $\Pi$-labeled graphs $(G, w)$, and for every schema $\varphi_\ell$ in $\mathrm{SUB}(\Gamma)$ of formula depth $0$ and for all $t \in [0; D]$, we have
\[
\bv(w)_\ell^{(n+1) (D+1) + t} = 1 \text{ if } G, w \models \varphi_\ell^{n+1} \text{ and $\bv(w)_\ell^{(n + 1) (D+1) + t} = 0$ if $G, w \not\models \varphi_\ell^{n+1}$}
\]
and for all $N + D + 1 > m > N$ and $t' > D$ it holds that $\bv(w)_m^{t'} = 0$ and $\bv(w)_{N+D+1}^{t'} = 1$.

Moreover, it is easy to show by induction that for all $n \in \N$, for all pointed $\Pi$-labeled graphs $(G, w)$, for every schema $\psi_\ell$ in $\mathrm{SUB}(\Gamma)$ of formula depth $d \neq 0$ and for all $t \in [0;d-1]$, we have
\[
\bv(w)_\ell^{(n+1) (D+1) + t} = 1 \text{ if } G, w \models \psi_\ell^{n+1} \text{ and $\bv(w)_\ell^{(n + 1) (D+1) + t} = 0$ if $G, w \not\models \psi_\ell^{n+1}$}
\]
and for all $t' \in \{d, \ldots,  D +1\}$, we have
\[
\bv(w)_\ell^{(n+1) (D+1) + t'} = 1 \text{ if } G, w \models \psi_\ell^{n+2} \text{ and $\bv(w)_\ell^{(n + 1) (D+1) + t'} = 0$ if $G, w \not\models \psi_\ell^{n+2}$}.
\]

Now, if $\Gamma$ reaches a fixed-point, then there is round $n \in \N$ such that for all pointed $\Pi$-labeled graphs $(G, w)$ and for all subschemata $\varphi_\ell$ of $\Gamma$, we have $G, w \models \varphi_\ell^n$ iff $G, w \models \varphi_\ell^{n+1}$, i.e., each subschema also reaches a fixed-point.
By the above, this means that if $\Gamma$ reaches a fixed-point then so does $\cG_\Gamma$. 
That is, there exists an $n \in \N$ such that for all pointed $\Pi$-labeled graphs $(G, w)$, and for every schema $\varphi_\ell$ in $\mathrm{SUB}(\Gamma)$, we have
\[
\bv(w)_\ell^{n} = 1 \text{ if } G, w \models \varphi_\ell^{n} \text{ and $\bv(w)_\ell^{n} = 0$ if $G, w \not\models \varphi_\ell^{n}$}
\]
and for all $N + D +1 > m > N$ it holds that
$\bv(w)_m^{n} = 0$ and $\bv(w)_{N + D + 1}^{n} = 1$. 
\end{proof}

With the above Lemma, we are now ready to prove our final result.

\begin{theorem}\label{theorem: convergence fixed-point}
    $\GMSC$, $\GNNF$s and R-simple aggregate-combine $\GNNF$s have the same expressive power with convergence-based fixed-point run classifiers.
\end{theorem}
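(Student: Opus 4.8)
The plan is to establish the three-way equivalence by a cyclic chain of translations, leveraging the convergence-based analogues of earlier lemmas that have already been set up. Since a convergence-based fixed-point run classifier accepts a node exactly when all nodes in the graph reach a global fixed-point and the node's local state is colored $1$ at that point, the key observation is that all three models must be shown to reach the \emph{same} fixed-point on the same input, so that the global convergence condition is preserved under translation. First I would handle the direction from $\GMSC$ to R-simple aggregate-combine $\GNNF$s, which is exactly the content of Lemma~\ref{lemma: convergence-based GMSC to simple GNN}: its construction is designed so that $\cG_\Gamma$ reaches a fixed-point precisely when $\Gamma$ does, and crucially the feature vectors are \emph{not} augmented with a running formula-depth counter (which would perpetually cycle and prevent convergence), but instead use the bottom-right submatrix trick to synchronize once during the first $D+1$ rounds and then settle.

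Next I would dispatch the trivial inclusions. Every R-simple aggregate-combine $\GNNF$ is by definition a $\GNNF$, so that direction is immediate. For the remaining arrow, from $\GNNF$s (equivalently bounded $\FCMPA$s) back to $\GMSC$, I would invoke Lemma~\ref{lemma: k-FCMPA to GMSC} and check that its construction respects convergence: the head predicate $X_q$ is true at a node in round $n$ iff the automaton is in state $q$ there, so the global configuration of the program tracks the global configuration of the automaton round-for-round. Hence the program reaches a fixed-point on all nodes simultaneously exactly when the automaton does, and the colored (appointed) predicate is true at the convergence point exactly when the automaton's accepting state is occupied. This makes the convergence-based run classifier transfer cleanly, closing the cycle $\GMSC \to \text{R-simple } \GNNF \to \GNNF \to \GMSC$.

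The main obstacle I anticipate is the $\GMSC$-to-$\GNNF$ direction, specifically verifying that the periodic simulation used in Lemma~\ref{lemma: convergence-based GMSC to simple GNN} genuinely converges rather than merely cycling. In the standard (non-fixed-point) setting of Lemma~\ref{lemma: GMSC to simple GNN}, $\cG_\Gamma$ simulates one round of $\Gamma$ every $D+1$ steps and uses a rotating clock in the last $D+1$ coordinates to flag the ``correct'' rounds; but that rotating clock never stabilizes, so it would destroy any global fixed-point. The fix, which I would carry out in detail, is to replace the rotating clock by the modified $(D+1)\times(D+1)$ submatrix that pushes the synchronization bit forward once and then locks it into the final coordinate $C_{N+D+1,N+D+1}=1$, together with the feedback edges $C_{N+D+1,k}=1$ routing the last clock bit back to refresh head-predicate truth values. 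I would then verify the three inductive claims stated in the lemma: that the initial truth values of head predicates persist for the first $D+1$ rounds, that subschemata of positive formula depth correctly compute round-$1$ values during that window, and that thereafter each block of $D+1$ rounds advances the $\Gamma$-iteration by one while the clock coordinates stay at $0$ except for the locked coordinate $N+D+1$ which holds value $1$. The punchline is that once $\Gamma$ stabilizes (so $G,w\models\varphi_\ell^n$ iff $G,w\models\varphi_\ell^{n+1}$ for every subschema), the feature vector of $\cG_\Gamma$ also stabilizes and sits at value $1$ in coordinate $N+D+1$, so the convergence-based classifier reads off the appointed-predicate bit correctly at the global fixed-point.

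With these pieces, the theorem follows by transitivity: the convergence-based run classifier is preserved along each arrow because each translation maps global fixed-points to global fixed-points and maps the colored (accepting/appointed) local configurations to one another via the identity color similarity. I would conclude by noting explicitly that no attention machinery is needed here since the convergence-based classifier inspects the entire run and the global fixed-point is an intrinsic, isomorphism-invariant property of the generated run, so all three classes have the same expressive power with respect to convergence-based fixed-point run classifiers.
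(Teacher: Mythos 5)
Your proposal is correct and follows essentially the same route as the paper: the cycle $\GMSC \to \text{R-simple AC-}\GNNF \to \GNNF \to \GMSC$, using Lemma~\ref{lemma: convergence-based GMSC to simple GNN} (with its locked, non-rotating clock) for the first arrow, the trivial inclusion for the second, and a round-for-round faithful translation back to $\GMSC$ for the third. The only cosmetic difference is that you verify convergence-preservation of the $\GNNF\to\GMSC$ construction directly via Lemma~\ref{lemma: k-FCMPA to GMSC}, whereas the paper cites Theorem~\ref{theorem: equal expressive power with run-similar-equivalent run classifiers}, which packages the same run-similarity argument.
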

\begin{proof}
    Trivially, each R-simple aggregate-combine $\GNNF$ is a $\GNNF$. 
    Theorem \ref{theorem: equal expressive power with run-similar-equivalent run classifiers} proves that $\GNNF$s and $\GMSC[1]$-programs have the same expressive power with any pa-run classifiers, i.e., for each $\GNNF$ we can construct an equivalent $\GMSC$-program with respect to convergence-based fixed point semantics. 
    For the translation from $\GMSC$ to R-simple aggregate-combine $\GNNF$s, we use Lemma \ref{lemma: convergence-based GMSC to simple GNN}.
\end{proof}

Theorem \ref{theorem: convergence fixed-point} also holds when restricted to the classes of $\GMSC$-programs, $\GNNF$s, and R-simple aggregate-combine $\GNNF$s that reach a fixed-point in each node with every input, though we do not provide a syntactic characterization for these classes here.
Note that, it is easy to define syntactical characterizations for some subclasses of such $\GMSC$-programs, $\GNNF$s and R-simple $\GNNF$s.
However, we leave the study of these syntactic restrictions for future work.

\end{document}